\def\UrlSpecials{\do\~{\kern -.15em\lower .7ex\hbox{~}\kern .04em}} \catcode`~=13 
\newcommand{\nn}{\nonumber}
\newcommand{\calC}{\mathcal{C}}
\newcommand{\calD}{\mathcal{D}}
\newcommand{\calH}{\mathcal{H}}
\newcommand{\calK}{\mathcal{K}}
\newcommand{\calL}{\mathcal{L}}
\newcommand{\calM}{\mathcal{M}}
\newcommand{\calN}{\mathcal{N}}
\newcommand{\calO}{\mathcal{O}}
\newcommand{\calP}{\mathcal{P}}
\newcommand{\calS}{\mathcal{S}}
\newcommand{\calT}{\mathcal{T}}
\newcommand{\calU}{\mathcal{U}}
\newcommand{\calV}{\mathcal{V}}
\newcommand{\calW}{\mathcal{W}}
\newcommand{\calX}{\mathcal{X}}
\newcommand{\ba}{\mathbf{a}}
\newcommand{\bA}{\mathbf{A}}
\newcommand{\bD}{\mathbf{D}}
\newcommand{\be}{\mathbf{e}}
\newcommand{\bF}{\mathbf{F}}
\newcommand{\bg}{\mathbf{g}}
\newcommand{\bG}{\mathbf{G}}
\newcommand{\bh}{\mathbf{h}}
\newcommand{\bI}{\mathbf{I}}
\newcommand{\bL}{\mathbf{L}}
\newcommand{\bM}{\mathbf{M}}
\newcommand{\bp}{\mathbf{p}}
\newcommand{\bP}{\mathbf{P}}
\newcommand{\bq}{\mathbf{q}}
\newcommand{\bR}{\mathbf{R}}
\newcommand{\bS}{\mathbf{S}}
\newcommand{\bu}{\mathbf{u}}
\newcommand{\bv}{\mathbf{v}}
\newcommand{\bw}{\mathbf{w}}
\newcommand{\bx}{\mathbf{x}}
\newcommand{\bX}{\mathbf{X}}
\newcommand{\by}{\mathbf{y}}
\newcommand{\bz}{\mathbf{z}}
\newcommand{\bZ}{\mathbf{Z}}
\newcommand{\bbeta}{\bm{\beta}}
\newcommand{\bdelta}{\bm{\delta}}
\newcommand{\btau}{\bm{\tau}}
\newcommand{\bTheta}{\bm{\Theta}}
\DeclareMathOperator{\diag}{diag}
\DeclareMathOperator{\supp}{supp}
\newtheorem{theorem}{Theorem}
\newtheorem{definition}{Definition} 
\newcommand{\qednew}{\nobreak \ifvmode \relax \else
      \ifdim\lastskip<1.5em \hskip-\lastskip
      \hskip1.5em plus0em minus0.5em \fi \nobreak
      \vrule height0.75em width0.5em depth0.25em\fi}
\newcommand{\scrH}{\mathscr{H}}
\newcommand{\scrN}{\mathscr{N}}
\newcommand{\scrP}{\mathscr{P}}
\numberwithin{theorem}{section}
\numberwithin{definition}{section}
\numberwithin{pro}{section}
\newtheorem{coro}{Corollary}
\newtheorem{lem}{Lemma}
\newtheorem{rem}{Remark}
\numberwithin{claim}{section}
\numberwithin{rem}{section}
\numberwithin{lem}{section}
\numberwithin{coro}{section}
\apptocmd{\thebibliography}{%
  \setlength{\itemsep}{0pt}%
  \setlength{\parskip}{0pt}%
  \setlength{\parsep}{0pt}%
}{}{}
\DeclareMathOperator{\sign}{sign}
\DeclareMathOperator{\rad}{rad}
\DeclareMathOperator{\sfP}{\mathsf{P}}
\DeclareMathOperator{\sfT}{\mathsf{T}}
\DeclareMathOperator{\sfI}{\mathsf{I}}
\DeclareMathOperator{\sfJ}{\mathsf{J}}
\DeclareMathOperator{\sfE}{\mathsf{E}}
\DeclareMathOperator{\dist}{dist}
\title{One-Bit Phase Retrieval: Optimal Rates and Efficient Algorithms}
\date{(\today)}
\author[$\ast$]{Junren Chen}
\author[$\dag$]{Ming Yuan}
\affil[$\ast$]{Department of Mathematics, University of Maryland}
\affil[$\dag$]{Department of Statistics, Columbia University}
\begin{document}
\maketitle
 
%\footnotetext[1]{This research was supported in part by NSF Grants DMS-2015285 and DMS-2052955. The work was done when J.~Chen was a visiting Ph.D. Student at the Department of Statistics, Columbia University. Email: {\hypersetup{urlcolor=black}{chenjr58@connect.hku.hk}; {ming.yuan@columbia.edu}}.}
    
\long\def\symbolfootnote[#1]#2{\begingroup\def\thefootnote{\fnsymbol{footnote}}\footnote[#1]{#2}\endgroup}

\symbolfootnote[0]{This research was supported by NSF Grants DMS-2015285 and DMS-2052955. The work was carried out while J.C. was a graduate student at the Department of Mathematics, University of Hong Kong and a visiting student at the Department of Statistics, Columbia University. (email: {\href{mailto:jchen58@umd.edu}{\texttt{jchen58@umd.edu}}; \href{mailto:ming.yuan@columbia.edu}{\texttt{ming.yuan@columbia.edu}}}.)}

\begin{abstract}
In this paper, we study the sample complexity and develop efficient algorithms for 1-bit phase retrieval, i.e., the recovery of a signal $\bx\in\mathbb{R}^n$ from $m$ phaseless bits $\{\sign(|\ba_i^\top\bx|-\tau)\}_{i=1}^m$ with standard Gaussian $\ba_i$. By investigating a phaseless version of random hyperplane tessellation, we show that (constrained) hamming distance minimization uniformly  recovers all unstructured signals with Euclidean norm bounded away from zero and infinity to the error $\calO(\frac{n}{m}\log(\frac{m}{n}))$, and $\calO(\frac{k}{m}\log(\frac{mn}{k^2}))$ when restricting to $k$-sparse signals. Both error rates are  information-theoretically optimal up to a logarithmic factor. Intriguingly, the optimal rate for sparse recovery matches that of 1-bit compressed sensing, suggesting that the phase information is non-essential for 1-bit compressed sensing. 
We also develop efficient and near-optimal algorithms for 1-bit (sparse) phase retrieval. Specifically, we prove that (thresholded) gradient descent with respect to the one-sided $\ell_1$-loss, when initialized via spectral methods, 
converges linearly and attains the near-optimal reconstruction error, with sample complexity $\calO(n)$ for unstructured signals and $\calO(k^2\log(n)\log^2(\frac{m}{k}))$ for $k$-sparse signals. 
Our proof is based upon the observation that a certain local approximate invertibility condition is respected by Gaussian measurements. Our results establish the major findings of (memoryless) 1-bit compressed sensing in a phaseless setting. 
\end{abstract}
%{\color{blue}Checklist:}
%\begin{itemize}
%	[leftmargin=5ex,topsep=0.25ex]
%		\setlength\itemsep{-0.1em}
%    \item {\color{blue}[absolute]} 
    %\item {\color{blue}[avoid saying IT limits]}
    %\item {\color{blue}[absolute , universal constants {\color{red}[numerical cons?]}, remove warm initialization]}
%\end{itemize}

 %\newpage
%% removing this when separating
%% ****************************
%\setcounter{tocdepth}{2} 
%% ****************************

%\pagenumbering{roman}
{% \hypersetup{linkcolor=black}\small 
%	\thispagestyle{empty}
%	\tableofcontents
%	\thispagestyle{empty}
}
 %\newpage

\section{Introduction}\label{sec:intro}
The problem of phase retrieval arises naturally in applications such as X-ray crystallography \cite{millane1990phase}, quantum mechanics \cite{reichenbach1998philosophic}, and astronomical imaging \cite{shechtman2015phase}, where capturing the phase information is prohibitively expensive or infeasible; see, e.g., \cite{candes2013phaselift,candes2015phase,zhang2017nonconvex,chen2017solving,wang2017solving,netrapalli2013phase,lecue2015minimax,eldar2014phase,huang2024performance}. To further increase the sampling rate and reduce energy consumption in phase retrieval, quantized measurements have been considered; see, e.g., \cite{kishore2020wirtinger,zhu2019phase,mukherjee2018phase,domel2022phase, eamaz2022one,mroueh2013quantization}. In spite of these recent progresses and the practical successes, still very little is known about either the information-theoretic aspect  or efficient algorithms for quantized phase retrieval, that is, phase retrieval from quantized magnitude-only measurements. The primary objective of this work is to provide a clear understanding on the optimal error rates of 1-bit phase retrieval, which concerns the recovery of a signal $\bx$ in ${\mathbb R}^n$ from the binary phaseless measurements\footnote{While phase retrieval is already challenging due to the loss of phase information, 1-bit phase retrieval is even more difficult, as it additionally discards almost all magnitude information, retaining only a single bit.}
\begin{align}\label{eq:1bpr_mea}
    y_i=\sign(|\ba_i^\top\bx|-\tau),\quad i=1,\ldots, m,
\end{align} 
where $\bA=[\ba_1,\ldots, \ba_m]^\top\in \mathbb{R}^{m\times n}$ is a sensing matrix and $\tau>0$ is a fixed threshold, and develop efficient and optimal algorithms for the recovery of unstructured or sparse signals.

In particular, we derive sample complexity bounds for 1-bit phase retrieval when $\bA$ is a Gaussian matrix, i.e., its entries are independent standard Gaussian random variables. Ignoring logarithmic factor, the optimal error rate for reconstructing all unstructured signals bounded away from zero and infinity is of the order $\tilde{\calO}(\frac{n}{m})$,  and of order $\tilde{\calO}(\frac{k}{m})$ if we restrict to $k$-sparse signals. Both rates can be achieved by (constrained) hamming distance minimization. The optimal rate for recovering sparse signals is identical to that of 1-bit compressed sensing (e.g., \cite{boufounos20081,jacques2013robust,plan2012robust,plan2013one,matsumoto2024binary,awasthi2016learning,domel2023iteratively}) where one observes 
\begin{align}\label{eq:1bcs_mea}
    y_i=\sign(\ba_i^\top\bx),\quad i=1,\ldots, m.
\end{align}
 This suggests an intriguing message that {\it phase information is  inessential for 1-bit compressed sensing}. There are, however, also important distinctions between the two problems \eqref{eq:1bpr_mea} and \eqref{eq:1bcs_mea}. 
Specifically, in 1-bit compressed sensing, the signal norm information is completely lost in  quantization, hence it is conventional to assume $\bx$ lives in the standard Euclidean sphere. In fact, the literature of 1-bit compressed sensing advocates the use of dithering to enable full signal reconstruction with norm (see, e.g.,  \cite{dirksen2021non,knudson2016one,dirksen2020one,baraniuk2017exponential,dirksen2023robust}). In contrast, a shift of the quantization threshold away from $0$, which is obviously necessary for quantizing the magnitude-only measurements, also overcomes the limitation and enables norm reconstruction. Moreover, establishing the attainability of the optimal error rate requires a new technical tool on the phaseless hyperplane tessellation of (a subset of) an annulus, which may be of independent interests.

While hamming distance minimization achieves the optimal error rates, it is not computationally tractable. To overcome this challenge, we also develop efficient algorithms for 1-bit phase retrieval. We show that under a near-optimal number of Gaussian measurements, gradient descent with respect to the one-sided $\ell_1$-loss converges linearly and attains the near-optimal error rate for recovering unstructured signals, provided that a good initialization is given. A similar near-optimal local convergence guarantee is established for recovering $k$-sparse signals via thresholded gradient descent, which incorporates an additional hard thresholding operation at each gradient descent step. To further complement these  results, we prove that the desired initialization can be obtained by spectral methods under appropriate sample sizes. The high-level architecture of our proof is inspired by the recent work \cite{matsumoto2024binary} who  first theoretically analyzed the optimality of a similar iterative thresholding algorithm for 1-bit compressed sensing. The main ideas are to derive tight concentration bounds for the directional gradients by conditioning on the index set of hyperplane separations and treat separately the ``large-distance regime'' and ``small-distance regime''. Yet our analysis for 1-bit phase retrieval is more delicate and poses significant new challenges. First note that the one-sided $\ell_1$-loss for 1-bit compressed sensing is only non-differentiable at $\bigcup_i\{\bz\in\mathbb{R}^n:\ba_i^\top\bz=0\}$. For 1-bit phase retrieval, the gradient and landscape of our loss function appear far  more intricate. To establish tight concentration bounds for the directional gradients, we need to take a close inspection on the two index sets associated with (phaseless) hyperplane separations, and decompose the gradient into a main term and a higher-order term. To deal with the main term in the large-distance regime, since the iterates are no longer restricted to the standard sphere, a more general orthogonal decomposition and substantially more technicalities are in order to cover all possible directional gradients. In controlling the additional higher-order term, we make use of a key observation that  ``double separation'' probability---the probability of two points being separated by both the hyperplane and the phaseless hyperplane---exponentially decays with the distance. {Another challenge occurs in the small-distance regime  where we seek to bound the directional gradients at points that are close  to the underlying signal. Our approach is to leverage the sharp bounds in phaseless hyperplane tessellation to show that the non-zero contributors to the   gradient cannot be too numerous, and then invoke a powerful concentration bound to uniformly control the directional gradient within the order of the optimal error. This line of reasoning substantially  simplifies the small-distance regime analysis in \cite{matsumoto2024binary}; see a more detailed comparison in Section \ref{app:matter}.} We believe the arguments and technical tools developed in the present paper will prove useful in many other related problems; indeed, some of them are already used to provide a unified treatment to quantized compressed sensing in our follow-up work \cite{chen2024optimal}.

The problem of 1-bit phase retrieval is closely related to 1-bit compressed sensing and phase retrieval, both of which have been extensively researched in recent years. To put our contribution in perspective, we shall now briefly review some of  the most relevant theoretical developments in these areas. 

%\subsection{1-Bit Compressed Sensing}
The optimal rate of 1-bit compressed sensing is due to   Jacques et al. \cite{jacques2013robust} who showed that the optimal error rate is at the order $\calO(\frac{k}{m})$, up to a logarithmic factor. On the computational side, \cite{jacques2013robust} also introduced the (normalized) binary iterative hard thresholding (NBIHT) that performs thresholded  gradient descent with respect the one-sided $\ell_1$-loss and empirically observed its optimal performance. There have been some attempts to provide theoretical justification for the optimality of NBIHT since the pioneering work of \cite{jacques2013robust}. For example,  Liu et at. \cite{liu2019one} showed that the later iterates of NBIHT  remain bounded with the same level of estimation error, and Friedlander et al. \cite{friedlander2021nbiht} proved the first non-asymptotic error rate for NBIHT that decays with $m$ almost optimally but exhibits highly sub-optimal dependence on the sparsity $k$. More recently, a near-optimal guarantee was presented by Matsumoto and Mazumdar \cite{matsumoto2024binary}, showing that NBIHT with a particular step size  achieves estimation error $\calO(\frac{k}{m}\log(\frac{mn}{k^2}))$. We also note in passing a few efficient   solvers with sub-optimal theoretical guarantees. The linear programming approach in \cite{plan2013one} is perhaps the first efficient and provable 1-bit compressed sensing solver, achieving an error rate of $\tilde{\calO}((\frac{k}{m})^{1/5})$. There are also efficient solvers robust to noise or for more general non-linear models, see, e.g., \cite{plan2012robust,plan2016generalized,plan2017high,awasthi2016learning}. These algorithms, however, achieve error rates no faster than $\calO(\sqrt{k/m})$. Readers can consult the surveys \cite{boufounos2015quantization,dirksen2019quantized} for an overview of quantized compressed sensing. More broadly, our work is related to the rapidly growing literature on the quantized versions of various estimation problems  and 
signal reconstruction from general non-linear observations. See, e.g., \cite{genzel2023unified,jung2021quantized,dirksen2022covariance,chen2022high,davenport20141,Cai2013AMC,xu2020quantized,thrampoulidis2020generalized,jacques2021importance}. {Table~\ref{table11} compares our results on 1-bit (sparse) phase retrieval with existing results on 1-bit compressed sensing, demonstrating that our work essentially extends the main theoretical findings of 1-bit compressed sensing to the phase retrieval setting.}

\begin{table}[ht!]
    \centering
    \begin{tabular}{|c|c|c|c|}
        \hline 
        ~ & \makecell{1-Bit Compressed \\ Sensing} &  \makecell{1-Bit Phase \\ Retrieval} & \makecell{1-Bit Sparse   Phase \\ Retrieval}
        \\
         \hline 
       Problem Setup & \makecell{$\sign(\bA\bx)$ \\ $\to \bx\in \Sigma^{n}_k\cap\mathbb{S}^{n-1}$} & \makecell{$\sign(|\bA\bx|-\tau) $ \\ $\to\bx\in \mathbbm{A}_{\alpha,\beta}$} & \makecell{$\sign(|\bA\bx|-\tau)$ \\ $\to \bx\in \mathbbm{A}_{\alpha,\beta}\cap \Sigma^n_k$} \\
         \hline 
        Optimal Rate & \makecell{$\|\hat{\bx}-\bx\|_2=\tilde{\calO}(k/m)$\\\cite[Thms 1--2]{jacques2013robust}} & \makecell{$\dist(\hat{\bx},\bx)=\tilde{\calO}(n/m)$\\ Our Thms \ref{thm:hdm_gua}--\ref{thm:lower}} &  \makecell{$\dist(\hat{\bx},\bx)=\tilde{\calO}(k/m)$\\ Our Thms \ref{thm:hdm_gua}--\ref{thm:lower}}
        \\
          \hline 
        \makecell{ Optimal and \\ 
        Efficient Algorithm} & \makecell{NBIHT with \\$m=\tilde{\calO}(k)$ \cite{matsumoto2024binary}}  & \makecell{Alg \ref{alg:SI_low} $+$ Alg \ref{alg:pgd} \\ with $m=\tilde{\calO}(n)$\\Our Thms \ref{thm:SI_low}, \ref{thm:pgd_low}}  & \makecell{Alg \ref{alg:SI_high} $+$ Alg \ref{alg:pgd_high} \\ with $m=\tilde{\calO}(k^2)$\\Our Thms \ref{thm:SI_high}, \ref{thm:pgd_high}} 
        \\
        \hline 
         \makecell{Other Algorithms\\ with guarantee} & \makecell{suboptimal and no \\ faster than $\tilde{\calO}(\sqrt{k/m})$} & \makecell{N/A \\see also Rem. \ref{rem:ca1bpr}} & N/A
        \\ 
        \hline 
    \end{tabular} 
    \caption{\small A comparison between 1-bit compressed sensing (Column 2)  and our results on 1-bit phase retrieval of unstructured signals (Column 3) and sparse signals (Column 4). Readers may consult \cite[Table 1]{chen2024optimal} for a review of other (suboptimal) 1-bit compressed sensing algorithms.\label{table11}} 
\end{table}

The problem of phase retrieval, mathematically formulated as the reconstruction of $\bx$ from $\{y_i=|\ba_i^\top\bx|\}_{i=1}^m$, is  known to be NP-hard in general (e.g., \cite{sahinoglou1991phase,candes2013phaselift}). Some of the earliest heuristic methods include the Gerchberg-Saxton and Fienup algorithms (see, e.g., \cite{fienup1982phase}). In the past decade, a number of algorithms with theoretical guarantees have been developed. Notable examples inlcude the convex Phaselift algorithm \cite{candes2013phaselift,candes2015phase1}, the non-convex Wirtinger flow and several variants \cite{candes2015phase,chen2017solving,wang2017solving,zhang2017nonconvex}, Phasemax \cite{goldstein2018phasemax,bahmani2019solving}, among others \cite{waldspurger2015phase,netrapalli2013phase,waldspurger2018phase}.
Specifically, note that most non-convex solvers rely on spectral methods to find a good initialization, and then execute gradient descent for further improvement. Some of these algorithms have also been adapted to solve sparse phase retrieval where the signal is $k$-sparse; see, e.g.,  \cite{li2013sparse,cai2016optimal,wang2017sparse,jaganathan2017sparse,netrapalli2013phase}. However, all these algorithms require $\tilde{\calO}(k^2)$ phaseless measurements  that are higher than the information-theoretic optimal sample complexity $\calO(k\log(\frac{en}{k}))$ \cite{eldar2014phase,lecue2015minimax}. It remains an open problem if there exists an efficient algorithm that can recover any $k$-sparse signal from a near-optimal number of measurements. See  \cite{shechtman2015phase,jaganathan2016phase} for more in-depth reviews of phase retrieval.

%Notable examples include dithered model, multi-bit model, non-Gaussian measurements and adaptive quantization (e.g., \cite{knudson2016one,xu2020quantized,thrampoulidis2020generalized,genzel2023unified,jung2021quantized,baraniuk2017exponential,dirksen2021non,huynh2020fast,ai2014one}); see the surveys \cite{boufounos2015quantization,dirksen2019quantized} for an overview of this area.
\paragraph{Notation:} We shall first introduce some notation that are used throughout the paper. We use boldface letters to denote vectors and matrices, regular letters to denote scalars.  We measure the difference of $\bu,\bv\in \mathbb{R}^n$  by the phaseless $\ell_2$-norm $\dist(\bu,\bv)=\min\{\|\bu-\bv\|_2,\|\bu+\bv\|_2\}$, the directional error $\dist_{\rm d}(\bu,\bv)=\dist\big(\frac{\bu}{\|\bu\|_2},\frac{\bv}{\|\bv\|_2}\big)$ and error in norm $\dist_{\rm n}(\bu,\bv) = |\|\bu\|_2-\|\bv\|_2|$. For a matrix $\bM$, we work with its Frobenius norm $\|\bM\|_{\rm F}$ and operator norm $\|\bM\|_{\rm op}$. 
Let the $L^p$-norm of a random variable $X$ be $\|X\|_{L^p}=(\mathbbm{E}|X|^p)^{1/p}$, then we define its sub-Gaussian norm by $\|X\|_{\psi_2}:= \sup_{p\ge 1}\frac{\|X\|_{L^p}}{\sqrt{p}}$, sub-exponential norm by $\|X\|_{\psi_1}:=\sup_{p\ge 1}\frac{\|X\|_{L^p}}{p}$. The sub-Gaussian norm extends to random vector $\bX\in \mathbb{R}^n$ by taking  supremum over all 1-dimensional margins  $\|\bX\|_{\psi_2} = \sup_{\bv\in\mathbb{S}^{n-1}}\|\bv^\top\bX\|_{\psi_2}$.  
We denote the cardinality of the set $\calS$ by $|\calS|$.
For $\calK\subset \mathbb{R}^n$, we let its radius   be $\rad(\calK)= \sup_{\bu\in\calK}\|\bu\|_2$ and the secant set by $\calK_-=\calK-\calK$; we also let $\calP_{\calK}$ be the projection onto $\calK$ under $\ell_2$-norm. We let $\mathbb{B}_2^n=\{\bu\in\mathbb{R}^n:\|\bu\|_2\le 1\}$ denote the standard Euclidean ball, and more generally write $\mathbb{B}_2^n(\bv;r)=\{\bu\in\mathbb{R}^n:\|\bu-\bv\|_2 \le r\}$. 
Throughout the work, $\bx,\bA,\tau,\by$ are reserved for the underlying signal, the sensing matrix, the fixed quantization threshold in 1-bit phase retrieval, and the 1-bit observations, respectively. We also use $C,C_1,C_2,c,c_1,c_2,...$ to denote constants whose values may differ in each appearance. It is important to note that {\it these constants may only depend on $(\alpha,\beta,\tau)$}, otherwise stated.  We write $T_1=\calO(T_2)$ or $T_1\lesssim T_2$ if $T_1\le C_1T_2$ for some  constant $C_1$, $T_1 =\Omega(T_2)$ or $T_1\gtrsim T_2$ if $T_1\ge c_2T_2$ for some   constant $c_2>0$, and $T_1\asymp T_2$ if $T_1=\calO(T_2)$ and $T_1 = \Omega(T_2)$ simultaneously hold. Note that $\tilde{\calO}(\cdot)$ is the less precise version of $\calO(\cdot)$ which ignores logarithmic factors. 
%More notation will be introduced in subsequent development when needed, and tables of recurring notation  are provided at the beginning of Appendix to improve the readability. 

\paragraph{Outline:} The rest of the paper is organized as follows. We first present the main theoretical findings in the next section.  Sections \ref{sec:ITproof} and \ref{sec:pgdproof} provide the general frameworks and main arguments for our results. Due to the highly technical nature of our proofs, further details are relegated to the Appendix.  Our theories are complemented by  experimental results in Section \ref{sec:numerics}. Section \ref{sec:conclusion} closes the paper with a few concluding remarks. The complete proofs  of our results appear in the appendices. 

\section{Main Results}\label{sec:main_res}
We shall begin with a roadmap of our main results. Information-theoretically,  Theorem~\ref{thm:local_embed} establishes a geometric result on phaseless random hyperplane tessellations, Theorem~\ref{thm:hdm_gua} provides performance guarantees for (constrained) Hamming distance minimization, and Theorem~\ref{thm:lower} gives matching lower bounds for the reconstruction of unstructured and sparse signals. Computationally,  Theorems~\ref{thm:pgd_low} and~\ref{thm:pgd_high} show that Algorithms~\ref{alg:pgd} and~\ref{alg:pgd_high} achieve near-optimal local recovery for unstructured and sparse signals, respectively.
    Finally, Theorems \ref{thm:SI_low} and \ref{thm:SI_high} provide guarantees on spectral initializations that complement the local convergence guarantees.

\subsection{Information-Theoretic Bounds}\label{sec:it}
The first question that we address in this paper is the information-theoretical aspect of 1-bit phase retrieval:
\begin{tcolorbox}[colback=white,
  colframe=black!80,
  boxrule=0.4pt,
  width=0.7\textwidth,
  center] \centering\textit{How well can we recover $\bx$ from $\by= \sign(|\bA\bx|-\tau)$?} \end{tcolorbox}   
  \noindent Note that the sign of $\bx$ is not identifiable, we shall therefore consider measuring the difference between  $\bu,\bv\in \mathbb{R}^n$ by
$
	\dist(\bu,\bv) = \min\{\|\bu-\bv\|_2,\|\bu+\bv\|_2\}.$
We assume throughout the paper that the underlying signals reside in a given annulus 
\begin{align}
	\label{eq:signal_range}
	\mathbbm{A}_{\alpha,\beta} := \{\bu\in \mathbb{R}^n:\alpha\le \|\bu\|_2\le\beta\},
\end{align}
for some constants $0<\alpha\le \beta<\infty$. In other words, we assume that the signals are bounded away from zero and infinity in Euclidean norm.

%\subsubsection{Upper Bound}
%Suppose that there exist constants $0<\alpha\le \beta<\infty$ so that the signals of interest live in the annulus 
We first derive an  upper bound for 1-bit phase retrieval of generally structured signals. More specifically, we assume that the signals reside in $\calK$, an arbitrary subset of $\mathbbm{A}_{\alpha,\beta}$ that encodes potential signal structure. 
 Of particular interest to us later on are the 1-bit phase retrieval of unstructured signals $\bx\in\calK:= \mathbbm{A}_{\alpha,\beta}$, and 1-bit sparse phase retrieval where  $\bx\in\calK:=\Sigma^n_k\cap\mathbbm{A}_{\alpha,\beta}$. Here, $\Sigma^n_k$ denotes the set of  $k$-sparse vectors in $\mathbb{R}^n$, and we also write $\Sigma^{n,*}_k = \Sigma^n_k\cap\mathbb{S}^{n-1}$. Moreover, we allow for a $\zeta$-fraction of adversarial corruption, under which we are only able to observe the corrupted bits $\hat{\by}$ obeying $
	d_H(\hat{\by},\by)\le \zeta m,$ 
where $d_H(\bu,\bv)=\sum_{i=1}^m \mathbbm{1}(u_i\ne v_i)$ denotes the hamming distance between $\bu=(u_i),\bv=(v_i)\in \{-1,1\}^m$. Our  upper bound is attained by constrained hamming distance minimization
\begin{align}\label{eq:hdm}
	\hat{\bx}_{\rm hdm}\in  \text{arg}\min~d_H\big(\sign(|\bA\bu|-\tau),\hat{\by}\big),~\text{subject to }\bu \in \calK,
\end{align}
and hence is of information-theoretic nature due to the computational infeasibility of (\ref{eq:hdm}). %due to the discrete hamming distance loss and typically non-convex constraint. The computational infeasibility renders the performance bound of (\ref{eq:hdm}) of information-theoretic nature.   

We pause to introduce some geometric notions. Recall that a hyperplane with a normal vector $\ba \in \mathbb{R}^n$ and a shift parameter $\tau$ is defined as $
	\calH_{\ba,\tau}=\{\bu \in \mathbb{R}^n:\ba^\top\bu=\tau\}.$ For some $\tau>0$, we will work with  a pair of symmetric hyperplanes 
	\begin{align}\label{eq:pl_hyperdef}
	\calH_{|\ba|}  = \{\bu \in \mathbb{R}^n:|\ba^\top\bu|=\tau\}   = \calH_{\ba,\tau}\cup \calH_{\ba,-\tau},  
\end{align}
referred to as a phaseless hyperplane and denoted by $\calH_{|\ba|}$ hereafter for convenience (we drop the dependence on the given $\tau$ for short). Note that $\calH_{|\ba|}$ separates $\mathbb{R}^n$ into two subsets (sides):
	\begin{align}\label{eq:twosides}
		\calH_{|\ba|}^+:=\{\bu \in \mathbb{R}^n:|\ba^\top \bu|\ge \tau\}~\text{ and }~\calH_{|\ba|}^-:=\{\bu \in \mathbb{R}^n:|\ba^\top\bu|<\tau\}.
	\end{align}
We refer to this as a phaseless separation in that $\bu$ and $-\bu$ always live in the same side of $\calH_{|\ba|}$. An arrangement of $m$ phaseless hyperplanes, say $\{\calH_{|\ba_1|},...,\calH_{|\ba_m|}\}$, separates the set $\calK$ into different cells, each taking the form $\calK \cap \calH_{|\ba_1|}^{*}\cap\calH_{|\ba_2|}^{*} \cap ...\cap \calH_{|\ba_m|}^{*}$ where $\calH_{|\ba_i|}^{*} = \calH_{|\ba_i|}^{+}$ or $\calH_{|\ba_i|}^{-}$. We shall refer to this as a phaseless hyperplane tessellation of $\calK$, see Figure \ref{fig:comp_embed} for an intuitive graphical illustration and  the comparison with the usual hyperplane tessellation.

The analysis  of (\ref{eq:hdm}) is closely related to phaseless hyperplane tessellation of the signal set $\calK$. The connection can be best illustrated in the absence of corruption: in this case, $\hat{\bx}_{\rm hdm}$ can be geometrically interpreted as any signal in $\calK$ that lives in the same cell with $\bx$ under the tessellation posed by $\{\calH_{|\ba_i|}\}_{i=1}^m$; as a result, achieving 
uniform recovery up to some accuracy $\rho$ via (\ref{eq:hdm}),
namely
$\dist(\hat{\bx}_{\rm hdm},\bx)\le \rho$ uniformly over all $\bx\in \calK$, is equivalent to all the cells in the tessellation of $\calK$ by $\{\calH_{|\ba_i|}\}_{i=1}^m$   having diameter (with respect to $\dist(\cdot,\cdot)$) no larger than $\rho$.
%\begin{figure}
%   \begin{centering}
	%      \includegraphics[width=0.80\columnwidth]{figs/1bcspr_Mar14.eps} 
	%     \par
	% \end{centering}

%\caption{\label{fig:existing_noiseless}}
%\end{figure}

\begin{figure}[ht!]
	\begin{centering}
		\includegraphics[width=0.35\columnwidth]{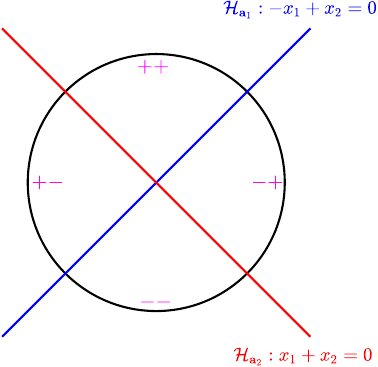} \quad \includegraphics[width=0.40\columnwidth]{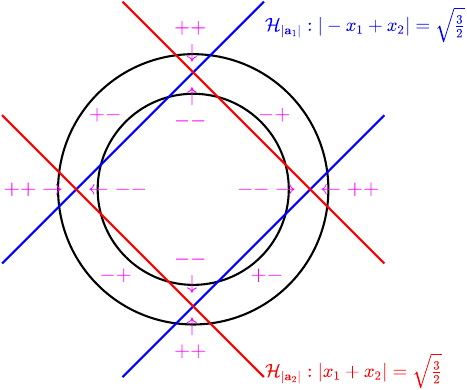}
		\par
	\end{centering}
	
	\caption{\label{fig:comp_embed} \small Graphical illustrations for the  classical hyperplane tessellation (Left) and our proposed phaseless hyperplane tessellation (Right). In the left figure, the standard sphere $\{x_1^2+x_2^2=1\}$ in $\mathbb{R}^2$ is separated into 4 cells by two hyperplanes that pass through $(0,0)^\top$, providing geometric understanding of  the standard 1-bit compressed sensing problem (e.g., \cite{plan2013one,plan2014dimension,oymak2015near}). In the right figure, the annulus $\mathbbm{A}_{1,\sqrt{2}}=\{1\le x_1^2+x_2^2\le 2\}$ in $\mathbb{R}^2$ is separated into $4$ cells by two phaseless hyperplanes with a common shift $\tau = \sqrt{3/2}$, which reflects the geometric aspect of   1-bit phase retrieval.}
\end{figure}

%The above connection directs our attention towards studying     phaseless hyperplane tessellation to understand (\ref{eq:hdm}). 

To ensure that the cells are uniformly small, our first main result shows that the required number of Gaussian phaseless hyperplanes is proportional to the ``complexity'' of $\calK$, which is jointly captured by   Gaussian width $\omega(\calK): = \mathbbm{E} _{\bg\sim \calN(0,\bI_n)}\sup _{\bu\in\calK}|\langle \bg, \bu\rangle|$ and metric entropy  $	\scrH(\calK,r): = \log \scrN(\calK,r)$; here, $\scrN(\calK,r)$ denotes  the minimal number of radius-$r$ $\ell_2$-balls $\mathbb{B}^n_2(r):=\{\bu\in \mathbb{R}^n:\|\bu\|_2\le r\}$ for covering $\calK$ (i.e., covering number). %Hence the metric entropy is simply a  notion equivalent to covering number. 
We will also work with the localization of $\calK$ given by $\calK_{(r)}:=(\calK-\calK)\cap \mathbb{B}^n_2(r).$
%will also be recurring. 

\begin{theorem}[Local Binary Phaseless Embedding]\label{thm:local_embed}
	Suppose $\ba_1,...,\ba_m$ are i.i.d. $\calN(0,\bI_n)$ vectors.
	Given a symmetric set $\calK\subset \mathbbm{A}_{\alpha,\beta}$ and some sufficiently small $r$ satisfying $rm\ge C_0$ for some  constant $C_0\ge1$, we let $r' = \frac{c_1r}{\log^{1/2}(r^{-1})}$ for some small enough $c_1$. For some  constants $(C_2,C_3,c_4,C_5,c_6)$, if 
	\begin{align}\label{eq:local_emb_sample}
		m\ge C_2  \Big(\frac{\omega^2(\calK_{(3r'/2)})}{r^3}+\frac{\scrH(\calK,r')}{r}\Big),
	\end{align}
	then  with probability at least $1-C_3\exp(-c_4rm)$ the following two events $E_{s}$ and $E_{l}$   hold:
	\begin{itemize}
		[leftmargin=2ex,topsep=0.25ex]
		\setlength\itemsep{-0.1em}
		\item Event $E_s$: For any $\bu,\bv\in\calK$ with $\dist(\bu,\bv)\le \frac{r'}{2}$ we have $$d_H\big(\sign(|\bA\bu|-\tau),\sign(|\bA\bv|-\tau)\big)\le C_5rm;$$ 
		\item Event $E_l$: For any $\bu,\bv\in\calK$ with $\dist(\bu,\bv)\ge 2r$ we have 
		$$d_H\big(\sign(|\bA\bu|-\tau),\sign(|\bA\bv|-\tau)\big)\ge c_6m\dist(\bu,\bv).$$ 
	\end{itemize}
\end{theorem}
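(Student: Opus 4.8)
I would establish the two events separately, building $E_l$ on top of $E_s$ together with a covering argument. Two pointwise facts are the backbone. First, for fixed $\bu,\bv\in\mathbbm{A}_{\alpha,\beta}$ the phaseless separation probability $p(\bu,\bv):=\Pr_{\ba\sim\calN(0,\bI_n)}\big[\sign(|\ba^\top\bu|-\tau)\ne\sign(|\ba^\top\bv|-\tau)\big]$ satisfies $p(\bu,\bv)\gtrsim\dist(\bu,\bv)$: since $(\ba^\top\bu,\ba^\top\bv)$ is a centered two-dimensional Gaussian this reduces to a planar computation in which the contribution of the directional error $\dist_{\rm d}(\bu,\bv)$ behaves as in ordinary random-hyperplane tessellation, while the contribution of the norm error $\dist_{\rm n}(\bu,\bv)$ is controlled by the density of $|\ba^\top\bu|$ near $\pm\tau$, which the annulus constraint bounds below by a constant depending only on $(\alpha,\beta,\tau)$ (here one also uses that $\dist(\cdot,\cdot)$ is bounded on $\mathbbm{A}_{\alpha,\beta}$). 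Second, if $\calH_{|\ba|}$ separates $\bu$ and $\bv$, then, choosing the sign that realizes $\dist(\bu,\bv)=\|\bu\mp\bv\|$ and using the symmetry of $\calK$ to take it to be the minus sign, one has $\big|\,|\ba^\top\bu|-\tau\,\big|\le|\ba^\top(\bu-\bv)|$; hence, writing $\bw:=\bu-\bv$,
\[ d_H\big(\sign(|\bA\bu|-\tau),\sign(|\bA\bv|-\tau)\big)\ \le\ N(\bu,\bw):=\sum_{i=1}^m\mathbbm{1}\Big[\,\big|\,|\ba_i^\top\bu|-\tau\,\big|\le|\ba_i^\top\bw|\,\Big],\qquad\bw\in\calK_{(\dist(\bu,\bv))}. \]

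\noindent\emph{Event $E_s$.} It suffices to prove $N(\bu,\bw)\le C_5rm$ uniformly over $\bu\in\calK$ and $\bw\in\calK_{(3r'/2)}$ (the enlargement from $r'/2$ to $3r'/2$ absorbing the discretization of $\bu$ below). For fixed $\bu,\bw$, conditioning on $\ba^\top\bw$ and invoking the density bound from the first fact gives $\mathbbm{E}\,N(\bu,\bw)\lesssim m\,\rad(\calK_{(3r'/2)})\lesssim mr'$; the logarithmic factor in the definition $r'=c_1r/\log^{1/2}(r^{-1})$ is calibrated precisely to absorb the logarithmic losses incurred in making this bound uniform, so that the expected count stays below $\tfrac12 C_5rm$. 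For the deviation, I would (i) discretize $\bu$ over a minimal $r'$-net of $\calK$, paying a union bound over its $\exp(\scrH(\calK,r'))$ points against a Chernoff tail of exponent of order $rm$ --- this is the origin of the term $m\gtrsim\scrH(\calK,r')/r$ in \eqref{eq:local_emb_sample} --- and (ii) for each fixed net point bound the empirical process $\bw\mapsto N(\cdot,\bw)-\mathbbm{E}\,N(\cdot,\bw)$ over $\bw\in\calK_{(3r'/2)}$ by replacing the slab indicators with Lipschitz surrogates, symmetrizing, and applying a Gaussian concentration / chaining bound in which the complexity enters through the localized Gaussian width $\omega(\calK_{(3r'/2)})$; tuning the smoothing scale against the target accuracy $rm$ yields the term $m\gtrsim\omega^2(\calK_{(3r'/2)})/r^3$. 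Collecting the pieces gives $N\le C_5rm$ uniformly, hence $E_s$, with failure probability $\lesssim\exp(-c\,rm)$.

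\noindent\emph{Event $E_l$.} Fix a minimal $(r'/2)$-net $\calN$ of $\calK$ with respect to $\dist$. For net points $\bu_0,\bv_0$ with $\dist(\bu_0,\bv_0)\ge r$, the quantity $d_H\big(\sign(|\bA\bu_0|-\tau),\sign(|\bA\bv_0|-\tau)\big)$ is a sum of $m$ independent $\mathrm{Bern}(p(\bu_0,\bv_0))$ variables, so the multiplicative Chernoff bound and the first fact give $d_H\ge\tfrac12 p(\bu_0,\bv_0)m\ge c'm\,\dist(\bu_0,\bv_0)$ except with probability $\exp(-c''m\,\dist(\bu_0,\bv_0))\le\exp(-c''rm)$; a union bound over all pairs of net points is again absorbed by $\scrH(\calK,r')/r$. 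To pass to arbitrary $\bu,\bv\in\calK$ with $\dist(\bu,\bv)\ge2r$, choose $\bu_0,\bv_0\in\calN$ with $\dist(\bu,\bu_0),\dist(\bv,\bv_0)\le r'/2$, so that $\dist(\bu_0,\bv_0)\ge\dist(\bu,\bv)-r'\ge\tfrac12\dist(\bu,\bv)\ge r$; then, on $E_s$ (applied to the pairs $(\bu,\bu_0)$ and $(\bv,\bv_0)$, for which $\dist\le r'/2$) and using $rm\le\tfrac12 m\,\dist(\bu,\bv)$,
\[ d_H\big(\sign(|\bA\bu|-\tau),\sign(|\bA\bv|-\tau)\big)\ \ge\ d_H\big(\sign(|\bA\bu_0|-\tau),\sign(|\bA\bv_0|-\tau)\big)-2C_5rm\ \ge\ \Big(\tfrac{c'}{2}-C_5\Big)m\,\dist(\bu,\bv). \]
Choosing the smoothing parameters in $E_s$ so that $C_5<c'/2$ then gives $E_l$ with $c_6=\tfrac{c'}{2}-C_5$, and intersecting the two high-probability events proves the theorem.

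\noindent\emph{Main obstacle.} The crux is the deviation bound for $N(\bu,\bw)$ underlying $E_s$. Because $\bu\mapsto\sign(|\bA\bu|-\tau)$ is piecewise constant, no Lipschitz or chaining estimate applies to it directly and one must work with smooth surrogates of slab indicators; moreover the process has a pronounced small-ball character --- each summand has mean only of order $r'$ --- so obtaining a deviation of the correct order $rm$, rather than $\sqrt{rm}$, forces a careful coupling between the smoothing scale and the accuracy, together with a two-level (net over $\bu$, chaining over $\bw$) decomposition; it is here that both the $r^{-3}$ exponent and the precise shape of $r'$ come from. A secondary, purely bookkeeping, difficulty is to keep the constant $C_5$ in $E_s$ small relative to the Chernoff constant $c'$ so that the transfer step closing $E_l$ goes through.
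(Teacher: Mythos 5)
Your proposal captures the right ingredients --- the pointwise fact $\sfP_{\bu,\bv}\asymp\dist(\bu,\bv)$ (the paper's Lemma~\ref{lem:Puv}), a net over $\calK$ paid against $\scrH(\calK,r')$, and a localized uniform control over the ``small direction'' $\bw$ yielding the $\omega^2(\calK_{(3r'/2)})/r^3$ term --- and your observation that any phaseless separating hyperplane for $(\bu,\bv)$ must satisfy $\bigl|\,|\ba_i^\top\bu|-\tau\,\bigr|\le|\ba_i^\top(\bu-\bv)|$ is correct and is indeed the starting point. There are, however, two genuine departures from the paper that matter, one of which is a real gap.

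The more substantive issue is your route to $E_l$. You derive $E_l$ from $E_s$ by the Hamming triangle inequality, arriving at $d_H(\bu,\bv)\ge c'm\,\dist(\bu_0,\bv_0)-2C_5rm\ge(c'/2-C_5)m\,\dist(\bu,\bv)$, and you note that one ``needs to keep $C_5$ small relative to $c'$''. This is not mere bookkeeping. In any decomposition of $E_s$ that separates a ``net point near threshold'' term and a ``perturbation'' term (as the paper's $\breve T_1$, $\breve T_2$), the first term already contributes an un-tunable multiple of $rm$, so $C_5$ cannot be pushed below the Chernoff constant $c'$ by shrinking $c_1$ alone. Whether your alternative $N(\bu,\bw)$-based formulation can produce a tunable $C_5$ is delicate --- the expected count is $\lesssim mr'$ but the uniform deviation is paid at scale $rm$, so in the Poissonian regime the constant you get depends on $\log(r/r')\asymp\log\log r^{-1}$ and is not uniformly small. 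The paper avoids the issue entirely by \emph{not} going through $E_s$: it introduces $\theta$-well-separation (Definition~\ref{def:well_separate}), shows it occurs with probability $\gtrsim\dist(\bp,\bq)$ (Lemma~\ref{lem:Pthetauv}), and proves stability under perturbation (Lemma~\ref{lem:sepa_im_sepa}). This gives a direct lower bound $d_H(\bu,\bv)\ge\breve T_3-2\breve T_4$, where $\breve T_4=\sup_{\bw\in\calK_{(r')}}\sum_i\mathbbm 1(|\ba_i^\top\bw|\ge\theta_0 r/2)$ has a \emph{free} constant (the $\ell$ in Lemma~\ref{lem:max_ell_sum}) that can be chosen once and for all to satisfy $2\breve T_4\le\breve T_3/2$. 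In other words, the ``error budget'' of $E_l$ is a different, tunable quantity, decoupled from $E_s$.

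The secondary difference is technical: for the uniform control of indicator sums like $\sup_{\bw\in\calK_{(3r'/2)}}\sum_i\mathbbm 1(|\ba_i^\top\bw|>\text{threshold})$, the paper does not smooth and chain but invokes the Dirksen--Mendelson bound on the $\ell$-th largest coordinate (Lemma~\ref{lem:max_ell_sum}), which gives a high-probability bound on $\sup_{\bw}\max_{|I|\le\ell}(\ell^{-1}\sum_{i\in I}|\ba_i^\top\bw|^2)^{1/2}$, hence directly on the number of coordinates exceeding a threshold. This sidesteps the small-ball / smoothing-scale tuning you correctly identify as the crux, and delivers the constants you need cleanly. If you pursue your smoothing route, you would still need to confront the failure of Lipschitz control at the indicator's discontinuity, and it is not clear the extra logarithmic losses can be absorbed into $r'$ while keeping the deviation genuinely of order $rm$ rather than $\sqrt{rm}$; Lemma~\ref{lem:max_ell_sum} is precisely the tool that sidesteps this.
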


While there are parallel results for random hyperplane tessellation  in the 1-bit compressed sensing literature \cite{oymak2015near,dirksen2021non}, we will need a novel probabilistic observation to prove Theorem \ref{thm:local_embed}. See Lemma \ref{lem:Puv}.   %Note that Theorem \ref{thm:local_embed} is a local binary embedding result, for instance, the  event $E_s$ only  controls  $d_H\big(\sign(|\bA\bu|-\tau),\sign(|\bA\bv|-\tau)\big)$ for $\bu,\bv$ being sufficiently close. As with \cite{oymak2015near,dirksen2022sharp}, a stronger global binary embedding can be achieved under a sample complexity exhibiting worse dependence on $r$. 
The performance bound for (constrained) hamming distance minimization readily follows from Theorem \ref{thm:local_embed}. 

\begin{theorem}
	[Recovery via Hamming Distance Minimization] \label{thm:hdm_gua}Suppose that the entries of $\bA$ are i.i.d. drawn from $\calN(0,1)$, $\calK$ is a given symmetric set contained in $\mathbbm{A}_{\alpha,\beta}$. Suppose $d_H(\hat{\by},\sign(|\bA\bx|-\tau))\le \zeta m$ and $\hat{\bx}_{\rm hdm}$ is defined by (\ref{eq:hdm}). Given some small $\epsilon$ such that $\epsilon m\ge C_0$ for some  constant $C_0$, we let $\epsilon' =  \frac{c_1\epsilon}{\log^{1/2}(\epsilon^{-1})}$ for some small enough $c_1$. Then for some  constants $(C_1,C_2,C_3,c_4)$, provided that \begin{align}\label{eq:gua_sample}
		m \ge C_1 \Big(\frac{\omega^2(\calK_{(3\epsilon'/2)})}{\epsilon^3}+\frac{\scrH(\calK,\epsilon')}{\epsilon}\Big),
	\end{align}
	the event $
		\dist(\hat{\bx}_{\rm hdm},\bx)\le 2\epsilon + C_2\zeta~(\forall\bx\in\calK)$ holds with probability at least $1-C_3\exp(-c_4 \epsilon m)$. 
\end{theorem}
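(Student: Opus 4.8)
The plan is to obtain this theorem as a short consequence of Theorem~\ref{thm:local_embed}, applied with $r=\epsilon$ (so that $r'=\epsilon'$ and the sample-size requirement \eqref{eq:local_emb_sample} becomes exactly \eqref{eq:gua_sample}), combined with the defining optimality of $\hat{\bx}_{\rm hdm}$ and the triangle inequality for the Hamming metric. First I would invoke Theorem~\ref{thm:local_embed} to get, with probability at least $1-C_3\exp(-c_4\epsilon m)$, the simultaneous validity of $E_s$ and $E_l$; only the lower-embedding event $E_l$ is needed here, namely that all $\bu,\bv\in\calK$ with $\dist(\bu,\bv)\ge 2\epsilon$ obey $d_H\big(\sign(|\bA\bu|-\tau),\sign(|\bA\bv|-\tau)\big)\ge c_6 m\,\dist(\bu,\bv)$. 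It is worth noting up front that the map $\bu\mapsto\sign(|\bA\bu|-\tau)$ is invariant under $\bu\mapsto-\bu$, so both the Hamming distance between such sign vectors and the quantity $\dist(\cdot,\cdot)$ are measured ``up to global sign'', which is precisely the compatibility the argument requires.

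Next I would bound $d_H\big(\sign(|\bA\hat{\bx}_{\rm hdm}|-\tau),\sign(|\bA\bx|-\tau)\big)$. Since $\bx\in\calK$ is feasible for \eqref{eq:hdm}, optimality of $\hat{\bx}_{\rm hdm}$ yields $d_H\big(\sign(|\bA\hat{\bx}_{\rm hdm}|-\tau),\hat{\by}\big)\le d_H\big(\sign(|\bA\bx|-\tau),\hat{\by}\big)\le\zeta m$; combining this with the corruption hypothesis $d_H\big(\hat{\by},\sign(|\bA\bx|-\tau)\big)\le\zeta m$ and the triangle inequality gives $d_H\big(\sign(|\bA\hat{\bx}_{\rm hdm}|-\tau),\sign(|\bA\bx|-\tau)\big)\le 2\zeta m$.

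Finally I would conclude by contradiction. Set $C_2:=2/c_6$ and suppose $\dist(\hat{\bx}_{\rm hdm},\bx)>2\epsilon+C_2\zeta$. Since $2\epsilon+C_2\zeta\ge 2\epsilon$ and $\hat{\bx}_{\rm hdm},\bx\in\calK$, event $E_l$ applies to the pair $(\hat{\bx}_{\rm hdm},\bx)$, giving $d_H\big(\sign(|\bA\hat{\bx}_{\rm hdm}|-\tau),\sign(|\bA\bx|-\tau)\big)\ge c_6 m\,\dist(\hat{\bx}_{\rm hdm},\bx)>c_6 m(2\epsilon+C_2\zeta)=2c_6\epsilon m+2\zeta m>2\zeta m$, contradicting the previous step. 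Hence $\dist(\hat{\bx}_{\rm hdm},\bx)\le 2\epsilon+C_2\zeta$, and since $\bx\in\calK$ was arbitrary this bound holds uniformly on the event $E_l$, whose probability is at least $1-C_3\exp(-c_4\epsilon m)$. I do not expect a genuine obstacle in this last step: essentially all the analytic difficulty is absorbed into Theorem~\ref{thm:local_embed}, and the only points requiring mild care are the constant bookkeeping (choosing $C_2$ from $c_6$ and checking $2\epsilon+C_2\zeta\ge 2\epsilon$ so that $E_l$ is applicable) together with the observation that the sign ambiguity is automatically consistent on both sides of the inequality.
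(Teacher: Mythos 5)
Your proof is correct and follows essentially the same route as the paper's: invoke Theorem~\ref{thm:local_embed} with $r=\epsilon$ to get the lower-embedding event $E_l$, derive $d_H(\sign(|\bA\hat{\bx}_{\rm hdm}|-\tau),\sign(|\bA\bx|-\tau))\le 2\zeta m$ from optimality plus the triangle inequality, and combine the two bounds. The only cosmetic difference is that you phrase the final step as a contradiction while the paper splits into the cases $\dist\le 2\epsilon$ (trivial) and $\dist>2\epsilon$ (use $E_l$ directly to get $\dist\le 2\zeta/c$); these are the same argument.
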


%Remarkably, Theorem \ref{thm:hdm_gua} applies to arbitrary $\calK\subset \mathbbm{A}_{1,2}$.
%Next, we specialize Theorem \ref{thm:hdm_gua} to 1-bit phase retrieval of unstructured $\bx\in\mathbbm{A}_{\alpha,\beta}$, and 1-bit sparse phase retrieval where $\bx\in\Sigma^n_k\cap\mathbbm{A}_{\alpha,\beta}$. 
%Let $\|\bu\|_0$ be the number of non-zero entries in $\bu$, then we write $\Sigma^n_k = \{\bu\in \mathbb{R}^n:\|\bu\|_0\le k\}.$

\begin{comment}
\begin{coro}
	[Hamming Distance Minimization with $\calK=\mathbbm{A}_{\alpha,\beta}$] \label{cor:low_upper}In the setting  of Theorem \ref{thm:hdm_gua} with $\calK = \mathbbm{A}_{\alpha,\beta}$, for some  constants $(C_1,C_2,c_3,C_4)$, if
 \begin{align*}
		m\ge \frac{C_1n}{\epsilon}\log\Big(\frac{1}{\epsilon}\Big), 
	\end{align*} then with probability at least $1-C_2\exp(-c_3 \epsilon m)$, we have $\dist(\hat{\bx}_{\rm hdm},\bx)\le \epsilon+C_4\zeta$
	holding for all $\bx\in \mathbbm{A}_{\alpha,\beta}$.  
\end{coro}
\begin{coro}
	[Hamming Distance Minimization with $\calK=\Sigma^n_k\cap \mathbbm{A}_{\alpha,\beta}$] \label{cor:high_upper}In the setting of Theorem \ref{thm:hdm_gua} with $\calK =  \Sigma^n_k\cap\mathbbm{A}_{\alpha,\beta}$, for some  constants $(C_1,C_2,c_3,C_4)$, if \begin{align*}
		m\ge \frac{C_1k}{\epsilon}\log\Big(\frac{n}{\epsilon k}\Big), 
	\end{align*} then with probability at least $1-C_2\exp(-c_3\epsilon m)$, we have $\dist(\hat{\bx}_{\rm hdm},\bx)\le \epsilon+C_4\zeta$ holding for all $\bx\in \Sigma^n_k\cap\mathbbm{A}_{\alpha,\beta}$. 
\end{coro}
\end{comment}
\begin{rem}
Note that in the corruption-free setting where $\hat{\by}=\by=\sign(|\bA\bx|-\tau)$, Theorem \ref{thm:hdm_gua} implies the attainability $$\sup_{\bx\in \mathbbm{A}_{\alpha,\beta}}\dist(\hat{\bx}_{\rm hdm},\bx) = \calO\Big(\frac{n}{m}\log\frac{m}{n}\Big)$$
for recovering unstructured signals, and the attainability 
	$$
		\sup_{\bx\in\Sigma^n_k\cap \mathbbm{A}_{\alpha,\beta}}\dist(\hat{\bx}_{\rm hdm},\bx) =\calO\Big(\frac{k}{m}\log \frac{mn}{k^2}\Big) $$
for recovering sparse signals. \label{rem:attain}
\end{rem}

In the following, both rates in Remark \ref{rem:attain} are  shown to be optimal up to a logarithmic factor.

%\subsubsection{Lower Bound}
%The lower bound is more restricted than Theorem \ref{thm:hdm_gua} in that it is useful only when $\calK$ contains $\calV\cap \mathbbm{A}_{\alpha,\beta}$ for some linear subspace $\calV$, but more general in that the bound is valid for {\it arbitrary} design matrix $\bA$ (rather than the Gaussian one only). 
%%Without loss of generality we consider $(\alpha,\beta)=(1,2)$. 

\begin{theorem}
	[Information-Theoretic Lower Bound]\label{thm:lower} Let $\calV$ be a $\nu$-dimensional subspace in $\mathbb{R}^n$. Suppose   we seek to recover $\bx \in  \calV\cap\mathbbm{A}_{\alpha,\beta}$ from $\by = \sign(|\bA\bx|-\tau)$ with given $\beta>\alpha$ and arbitrary measurement matrix $\bA\in \mathbb{R}^{m\times n}$. Suppose $m\ge \nu$, then for any decoder which, takes $\by$ as input and returns $\hat{\bx}$ as an estimate of $\bx$, we have 
	$$
		\sup_{\bx\in \calV\cap \mathbbm{A}_{\alpha,\beta}}\dist(\hat{\bx},\bx)\ge \frac{c\nu}{m}$$ 
	for some constant $c$ depending on $(\alpha,\beta,\tau)$.  
\end{theorem}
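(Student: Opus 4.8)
The plan is a counting/packing argument: the $m$ phaseless measurements partition $\mathbb{R}^n$ into at most some polynomial-in-$m$ number of cells, so if we can pack a large set of well-separated points inside $\calV\cap\mathbb{A}_{\alpha,\beta}$, two of them must land in the same cell and be indistinguishable, forcing any decoder to incur error at least half their separation. First I would reduce to the subspace: since $\bx\in\calV$ and $\calV$ is $\nu$-dimensional, restricting the linear functionals $\ba_i^\top(\cdot)$ to $\calV$ gives $m$ affine functionals on $\mathbb{R}^\nu$, so WLOG we may work in $\mathbb{R}^\nu$ with the signal set $\mathbb{A}_{\alpha,\beta}\cap\mathbb{R}^\nu$ (an annulus in $\nu$ dimensions) and $m$ phaseless hyperplanes $\calH_{|\ba_i|}$, each of which is a union of two genuine hyperplanes, hence $2m$ hyperplanes total.

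Next I would bound the number of cells. A classical result (Buck's / the Schläfli bound) says that $M$ hyperplanes in $\mathbb{R}^\nu$ partition it into at most $\sum_{j=0}^{\nu}\binom{M}{j}\le (eM/\nu)^\nu$-type many regions; with $M=2m$ and using $m\ge\nu$ this is at most $(Cm/\nu)^\nu$ for an absolute constant $C$. Call this number $N$. Then I would exhibit a packing: choose $P$ points $\bx_1,\dots,\bx_P$ in $\calV\cap\mathbb{A}_{\alpha,\beta}$ that are pairwise separated by at least $2\rho$ in the $\dist(\cdot,\cdot)$ metric — e.g. take them on the sphere of radius $\tfrac{\alpha+\beta}{2}$ (or, even more simply, pick a one-dimensional family $t\mapsto t\bv$ for a fixed unit $\bv\in\calV$ with $t$ ranging over an interval of length $\beta-\alpha$, spaced $2\rho$ apart, which gives $P\asymp (\beta-\alpha)/\rho$ points automatically separated in $\dist$ since they are positive multiples of a common direction). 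The $1$-D family is the cleanest: along the ray $\{t\bv:\alpha\le t\le\beta\}$ the phaseless measurements $\sign(|t\,\ba_i^\top\bv|-\tau)$ are determined by where $t$ falls relative to the at most one threshold $\tau/|\ba_i^\top\bv|$, so the ray is cut into at most $m+1$ intervals; picking $P = m+2$ equally spaced points on the ray, two of them share an interval, have identical $\by$, yet are $\dist$-separated by at least $\tfrac{\beta-\alpha}{2(m+1)}\gtrsim \tfrac{1}{m}$.

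Finally, the standard indistinguishability step: if $\bx_a,\bx_b$ produce the same $\by$ but $\dist(\bx_a,\bx_b)\ge 2\rho$, then for any decoder $\hat\bx$, by the triangle inequality for $\dist$ we have $\dist(\hat\bx,\bx_a)+\dist(\hat\bx,\bx_b)\ge \dist(\bx_a,\bx_b)\ge 2\rho$, so $\max\{\dist(\hat\bx,\bx_a),\dist(\hat\bx,\bx_b)\}\ge\rho$, and taking the sup over $\calV\cap\mathbb{A}_{\alpha,\beta}$ gives the claim with $\rho\gtrsim \nu/m$. To get the $\nu/m$ scaling (rather than just $1/m$) from the $1$-D argument one packs inside a $\nu$-dimensional ball instead: take a $2\rho$-packing of a radius-$\Theta(1)$ ball inside the annulus, of size $P\gtrsim(c/\rho)^\nu$, and force $P>N$, i.e. $(c/\rho)^\nu > (Cm/\nu)^\nu$, which holds as soon as $\rho \le c'\nu/m$ — this is where the linear-in-$\nu$ gain comes from, and it requires checking one can fit a ball of radius comparable to $\beta-\alpha$ (depending only on $\alpha,\beta$) inside the annulus, which is where the constant's dependence on $(\alpha,\beta)$ enters; the role of $\tau$ is only to ensure the measurements are well-defined and is otherwise inert in the lower bound.

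The main obstacle is making the packing genuinely $\dist$-separated (not just $\|\cdot\|_2$-separated) while simultaneously (i) staying inside the annulus $\mathbb{A}_{\alpha,\beta}$ and (ii) keeping the packing size $(c/\rho)^\nu$ with $c$ depending only on $\alpha,\beta$: placing the ball off-center so that it and its reflection $-\mathbb{B}$ are disjoint handles the $\dist$-versus-$\|\cdot\|_2$ issue cleanly, and one must verify such a ball of radius $\gtrsim(\beta-\alpha)\wedge\alpha$ fits in the annulus — a short geometric computation. Everything else is bookkeeping with the hyperplane-counting bound and Stirling-type estimates.
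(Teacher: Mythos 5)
Your proposal is correct and follows essentially the same counting/packing strategy as the paper. The only differences are cosmetic: you count cells of the arrangement of $2m$ affine hyperplanes in $\mathbb{R}^\nu$ via Buck/Schl\"afli, whereas the paper linearizes the phaseless sign pattern to $\sign(\widetilde{\bA}\bx+\widetilde{\btau})$ with $\widetilde{\bA}=[\bA;\bA]$ and $\widetilde{\btau}=(\tau\mathbf{1};-\tau\mathbf{1})$ and then bounds (Lemma~\ref{lem:intersect_affine}) the number of orthants of $\mathbb{R}^{2m}$ met by a $\nu$-dimensional affine space --- two dual views of the same combinatorics giving the same $(Cm/\nu)^\nu$ bound --- and your off-center-ball packing is interchangeable with the paper's direct volume argument $\scrN(\calK_\nu,\epsilon;\dist)\ge(\beta^\nu-\alpha^\nu)/(2\epsilon^\nu)$ under the $\dist$ metric.
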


\begin{rem}
Taking $\calV = \mathbb{R}^n$ in Theorem \ref{thm:lower} yields a lower bound 
$$
		\sup_{\bx\in \mathbbm{A}_{\alpha,\beta}}\dist(\hat{\bx},\bx)=\Omega\Big(\frac{n}{m}\Big)$$ for recovering unstructured signals. Similarly, by taking $\calV = \mathbb{R}^k\times \{\bm{0}^{n-k}\}$ that is a subset of $\Sigma^n_k$  we derive the lower bound  
$$
	  \sup_{\bx\in \Sigma^n_k\cap \mathbbm{A}_{\alpha,\beta}}\dist(\hat{\bx},\bx)= \Omega\Big(\frac{k}{m}\Big)%\Longrightarrow     \sup_{\bx\in \Sigma^n_k\cap \mathbbm{A}_{\alpha,\beta}}\dist(\hat{\bx},\bx)=\Omega\Big(\frac{k}{m}\Big)
	$$
  for recovering sparse signals. These bounds match the attainability bounds in Remark \ref{rem:attain}, up to a logarithmic factor. 
\end{rem}

%It is also worth noting that the lower bound given by Theorem \ref{thm:lower} applies to any sensing matrix $\bA$, thus indicating that Gaussian sensing matrix is near-optimal  since no other sensing matrix could attain essentially faster error rate.\footnote{We restrict our discussion to memoryless quantization where the sensing matrices is fixed before observing any 1-bit measurement. See, e.g.,  \cite{baraniuk2017exponential,boufounos2015quantization} for adaptive quantization.}

  Theorems \ref{thm:hdm_gua}--\ref{thm:lower} provide clear understanding on the information-theoretic aspect of 1-bit (sparse) phase retrieval. Since the error rate $\tilde{O}(\frac{k}{m})$ was also shown to be optimal for 1-bit compressed sensing in \cite{jacques2013robust}, we arrive at an intriguing takeaway message that {\it phases are non-essential for 1-bit compressed sensing}. More precisely, when losing all phase information, we can simply shift the quantization threshold to fully preserve the ability to recover sparse  signals from the 1-bit measurements; in fact, this  additionally enables norm reconstruction. While this message is concluded from an information-theoretic perspective, shortly it will be complemented by experimental results.

\subsection{Efficient Near-Optimal Algorithms}
%While the constrained hamming distance minimization is near-optimal, it should be noted that it is computationally intractable in terms of the discrete objective and non-convex constraint, rendering the upper bounds in Corollaries \ref{cor:low_upper}--\ref{cor:high_upper} of information-theoretic nature. 
Note that $\hat{\bx}_{\rm hdm}$ is infeasible to compute, hence the following computational question remains unaddressed: 
\begin{tcolorbox}[colback=white,
  colframe=black!80,
  boxrule=0.4pt,
  width=0.7\textwidth,
  center] \centering\textit{Can we achieve the optimal error rates for recovering}\\
    \textit{unstructured or sparse signals via efficient algorithms?} \end{tcolorbox}
%\begin{align*}
%    &\textit{Can we achieve the optimal error rates for recovering}\\
%    &\textit{unstructured or sparse signals via efficient algorithms?}
%\end{align*}  
\noindent We shall show that (thresholded) gradient descent with respect to the one-sided $\ell_1$-loss can achieve the near-optimal error rates. 
 
%answering the question in affirmative.
%,\footnote{In our algorithmic developments, when no confusion could arise, we use ``1-bit phase retrieval'' to refer to the specific ``low-dimensional case'' where $\bx\in\mathbbm{A}_{\alpha,\beta}$ is unstructured, and ``1-bit sparse phase retrieval'' for the ``high-dimensional case'' where $\bx\in\Sigma^n_k\cap\mathbbm{A}_{\alpha,\beta}$.} answering the following question in affirmative: 
%\begin{question}
%	\label{question2}
%	Can we use efficient decoders to achieve the near-optimal error rates $\tilde{\calO}(\frac{n}{m})$ for 1-bit phase retrieval of unstructured signals, and $\tilde{\calO}(\frac{k}{m})$ for 1-bit sparse phase retrieval?
%\end{question}

\subsubsection{Recovering Unstructured Signals}\label{sec:alg_unstructure}
%Our algorithmic works are presented  in a noiseless case (where $\hat{\by}=\by$) for succinctness.
In general, the computational difficulty of \eqref{eq:hdm} can be attributed to the two sources of nonconvexity: the hamming distance loss function  and the feasible set $\calK$. Interestingly, for unstructured signals, it suffices to convexify the loss function. A common way to do so is to replace the hamming distance loss with the one-sided $\ell_1$-loss
	\begin{align}\label{eq:l1loss}
		\calL(\bu) = \frac{1}{m}\sum_{i=1}^m \max\big\{0,-y_i(|\ba_i^\top\bu|-\tau)\big\} = \frac{1}{2m}\sum_{i=1}^m \Big[||\ba_i^\top\bu|-\tau| - y_i(|\ba_i^\top\bu|-\tau)\Big].
	\end{align}
The (sub-)gradient of $\calL$ at $\bu$ can be computed explicitly:
\begin{subequations}\label{eq:subgra}
	\begin{align} 
		&\partial \calL(\bu) = \frac{1}{2m}\sum_{i=1}^m \big(\sign(|\ba_i^\top\bu|-\tau)-y_i\big)\sign(\ba_i^\top\bu)\ba_i\\
		&= \frac{1}{2m}\sum_{i=1}^m \big(\sign(|\ba_i^\top\bu|-\tau)-\sign(|\ba_i^\top\bx|-\tau)\big)\sign(\ba_i^\top\bu)\ba_i.
	\end{align}
\end{subequations}
As an attempt to minimize  $\calL(\bu)$, we propose to perform gradient descent with step size $\eta$ in the $t$-th iteration, as formalized in Algorithm \ref{alg:pgd}.

\begin{algorithm}
	\caption{Gradient Descent for 1-bit Phase Retrieval ({GD-1bPR})	\label{alg:pgd}}
	\textbf{Input}: model data $(\bA,\by,\tau)$, initialization $\bx^{(0)}$,  step size $\eta$ ($\eta=\sqrt{\frac{\pi e}{2}}\tau$ by default)

	\textbf{For}
	$t = 1, 2, 3,\cdots$  \textbf{do}
	\begin{gather}
		\bx^{(t)} = \bx^{(t-1)} - \eta\cdot \partial\calL(\bx^{(t-1)})  
		.\label{eq:gd}
	\end{gather}
\end{algorithm}
\begin{rem}
While we assume the existence of some $\beta\ge\alpha>0$ for constraining signal norms, it is worth noting that Algorithm \ref{alg:pgd}, in fact all algorithms presented in this section, does not require knowing $\alpha$ and $\beta$ a priori.
\end{rem}
The following result shows that Algorithm \ref{alg:pgd} converges linearly to a solution with near-optimal estimation error, provided that $\bx^{(0)}$ is close enough to $\pm\bx$. We note that our  Theorem \ref{thm:pgd_low}  and the forthcoming Theorem \ref{thm:pgd_high} are uniform in nature. 
%that is, $\bx^{(0)}$ enters a contraction domain surrounding $\pm\bx$.  

\begin{theorem}
	[{GD-1bPR} is Near-Optimal]\label{thm:pgd_low} We consider the iteration sequence $\{\bx^{(t)}\}_{t=0}^\infty$ produced by Algorithm \ref{alg:pgd}.
	Suppose $m\ge C_1n$ holds with a sufficiently large $C_1$, $\bx\in \mathbbm{A}_{\alpha,\beta}$ is the desired signal, and  the initialization $\bx^{(0)}$  satisfies $\dist(\bx^{(0)},\bx)\le \frac{c_2}{\log^{1/2}(\frac{m}{n})}$ for some small enough $c_2$. 
	If {GD-1bPR} is executed with  $\eta=\sqrt{\frac{\pi e}{2}}\tau$, 
	%\begin{align*}
	%	\eta=c_\eta\cdot\sqrt{\frac{\pi e}{2}}\tau
	%\end{align*}
	%or the adaptive step sizes $\eta=c_\eta\cdot \sqrt{\frac{\pi e}{2}}\|\bx^{(t-1)}\|_2$, 
	%\begin{align*}
	%	\eta=c_\eta\cdot \sqrt{\frac{\pi e}{2}}\|\bx^{(t-1)}\|_2,
	%\end{align*}
	then the following statement holds with probability at least $1-C_3\exp(-c_4n\log(\frac{m}{n}))$: universally for any $\bx\in\mathbbm{A}_{\alpha,\beta}$, we have $$\dist(\bx^{(t)},\bx) \le \frac{C_5n}{m}\log^2\big(\frac{m}{n}\big),\qquad\forall\,t\ge C_6\log\big(\frac{m}{n}\big)$$
	for some constants $C_5,C_6$.  
\end{theorem}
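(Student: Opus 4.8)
The plan is to analyze one gradient step and show that the phaseless distance contracts geometrically until it hits a noise floor of order $\frac{n}{m}\log^2(\frac{m}{n})$. Throughout I would fix a sign of $\bx$ by working with whichever of $\pm\bx$ is closer to the current iterate; since the loss $\calL$, its subgradient, and the update \eqref{eq:gd} are all even in $\bx$, I may assume $\|\bx^{(t-1)}-\bx\|_2=\dist(\bx^{(t-1)},\bx)$ and track $\|\bx^{(t)}-\bx\|_2$. The core is a one-step estimate: conditioned on the good event, for every $\bu$ with $\|\bu-\bx\|_2\le\rho_{t-1}$ (where $\rho_{t-1}$ is the bound from the previous iterate) we have
\begin{align*}
\|\bu-\eta\,\partial\calL(\bu)-\bx\|_2 \le \tfrac12\|\bu-\bx\|_2 + C\tfrac{n}{m}\log^2\!\big(\tfrac{m}{n}\big).
\end{align*}
Iterating this recursion from $\dist(\bx^{(0)},\bx)\le c_2/\log^{1/2}(m/n)$ gives linear convergence down to the claimed error within $O(\log(m/n))$ steps, and stability of the fixed-point bound thereafter. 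The uniformity over $\bx\in\mathbbm{A}_{\alpha,\beta}$ and over all iterates is exactly why the one-step estimate must be proved as a uniform statement over a ball, on a single high-probability event.

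To prove the one-step estimate I would first decompose $\partial\calL(\bu)$ using \eqref{eq:subgra}. The factor $\sign(|\ba_i^\top\bu|-\tau)-\sign(|\ba_i^\top\bx|-\tau)$ is nonzero only on the index set of phaseless separations $S(\bu,\bx)=\{i:\bu,\bx \text{ on opposite sides of }\calH_{|\ba_i|}\}$, so the gradient is really a sum over those $i$. Following the architecture of \cite{matsumoto2022binary}, I would split $\partial\calL(\bu)$ into a \emph{main term}, obtained by replacing $\sign(\ba_i^\top\bu)$ with $\sign(\ba_i^\top\bx)$, and a \emph{higher-order term} collecting the discrepancy; the higher-order term is supported on the doubly-separated set $\{i:\bu,\bx\text{ separated by both }\calH_{\ba_i,\tau}\text{-type hyperplanes}\}$, whose cardinality concentrates around $m$ times the ``double separation'' probability, which decays like $\exp(-c/\|\bu-\bx\|_2)$ — this is the key observation flagged in the introduction and lets one absorb that term into the noise floor (large-distance regime) or make it truly negligible (small-distance regime). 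For the main term, I would use the orthogonal decomposition $\bu = \bx_\parallel + \bx_\perp$ relative to $\spn(\bx)$ (suitably generalized since $\|\bu\|_2\ne\|\bx\|_2$ in general), compute $\mathbbm{E}\,\partial\calL(\bu)$ in closed form — here the choice $\eta=\sqrt{\pi e/2}\,\tau$ is engineered so that $\bu-\eta\,\mathbbm{E}\partial\calL(\bu)-\bx$ has the $\tfrac12$-contraction built in — and then control the fluctuation $\partial\calL(\bu)-\mathbbm{E}\partial\calL(\bu)$ uniformly over the ball.

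The uniform fluctuation bound is where I expect the main difficulty, and the treatment splits by regime. In the \emph{large-distance regime} $\|\bu-\bx\|_2\gtrsim \frac{n}{m}\log^2(\frac{m}{n})$, since the gradient depends on $\bu$ only through the bits $\sign(|\ba_i^\top\bu|-\tau)$, I would bound the directional deviation $\sup_{\bv\in\mathbb{S}^{n-1}}|\bv^\top(\partial\calL(\bu)-\mathbbm{E}\partial\calL(\bu))|$ by conditioning on $S(\bu,\bx)$ (whose size concentrates at $\Theta(m\|\bu-\bx\|_2)$ by the $E_l$-type estimate of Theorem \ref{thm:local_embed}), applying a vector Bernstein / Gaussian concentration bound for the conditionally-Gaussian sum of the surviving $\ba_i$, and then union-bounding over a net of the ball of radius $\rho$ together with a Lipschitz-in-$\bu$ argument (taking care that the bit functions are only piecewise constant, so the net must be fine enough that the number of $i$ whose bit flips between net points is controlled, again via hyperplane-tessellation counting). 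In the \emph{small-distance regime} $\|\bu-\bx\|_2\lesssim \frac{n}{m}\log^2(\frac{m}{n})$, the naive net is too coarse and — as the introduction emphasizes — the packing set of achievable bit-patterns is not oblivious to $\bA$, so I would instead invoke the $E_s$ event of Theorem \ref{thm:local_embed}: the number of separated indices is at most $Crm$ with $r\asymp \|\bu-\bx\|_2$, hence $\partial\calL(\bu)$ is a sum of at most $Crm$ terms each of operator size $O(\|\ba_i\|_2/m)$, and a crude bound $\|\partial\calL(\bu)\|_2 \le C r \cdot \sqrt{n/m}$-type estimate together with the fact that $r m \gtrsim n$ (from $m\ge C_1 n$ and $r\gtrsim n/m$) shows the entire gradient step moves $\bu$ by at most $O(\frac{n}{m}\log^2(\frac m n))$, which keeps the iterate trapped inside the target ball. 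Assembling the two regimes, verifying that the step-size constant produces exactly the $\tfrac12$ factor, and checking that the probability budget $\exp(-cn\log(m/n))$ survives all the union bounds completes the argument.
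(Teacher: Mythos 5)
Your high-level architecture matches the paper's: a uniform one-step estimate proved on a single high-probability event, a main/higher-order split of the gradient whose higher-order piece is supported on the doubly-separated index set and controlled via the exponentially small double-separation probability, an orthogonal decomposition of the main term, a large-distance regime handled via a net plus concentration conditioned on the separating set, and a small-distance regime where you correctly abandon the ``discretize the bit-patterns'' trick and instead use the $E_s$-type estimate to count nonzero contributors. The paper packages the one-step estimate as the PLL-AIC condition (Definition~\ref{def:plaic}), verified for Gaussian $\bA$ in Theorem~\ref{thm:raic}, and feeds it through Lemma~\ref{lem:low_aic2err}; your proposal amounts to the same thing in-line. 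One cosmetic difference: the paper's main term $\bh_1$ replaces $\sign(\ba_i^\top\bu)$ by $\sign(\ba_i^\top(\bu-\bx))$ on all of $\bR_{\bu,\bx}$ (so that $\bh_1$ is precisely the NBIHT-type gradient of $\bu-\bx$), rather than by $\sign(\ba_i^\top\bx)$ as you suggest; both put the remainder on $\bR\cap\bL$, but the paper's choice makes the conditional-distribution analysis of the three components $T_1,T_2,T_3$ cleaner.

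There are, however, two substantive issues. First, you assert that $\eta=\sqrt{\pi e/2}\,\tau$ is ``engineered so that $\bu-\eta\mathbbm{E}\partial\calL(\bu)-\bx$ has the $\tfrac12$-contraction built in.'' That is not what the paper proves, and it is not true in general: the contraction factor $\delta_1$ depends on $\tau/\|\bu\|_2$ through $g_\eta,h_\eta$ (Lemma~\ref{lem:maximum}), and for GD-1bPR the paper only establishes $\delta_1<1-c$ for some $c>0$ (which holds for all $(\alpha,\beta,\tau)$). Requiring $\delta_1<\tfrac12$ would force the ratio conditions in \eqref{eq:fixed_ratio}, which are deliberately imposed only for BIHT-1bSPR where the extra factor $2$ from hard thresholding necessitates them. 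Your recursion still closes with any factor $q<1$, so the conclusion survives, but as written the claim overstates what the step size buys. Second, your small-distance estimate ``$\|\partial\calL(\bu)\|_2 \le Cr\sqrt{n/m}$'' is not justified and, as far as I can tell, not correct. The naive triangle-inequality count — at most $Crm$ surviving indices, each contributing $\|\ba_i\|_2/m\asymp\sqrt{n}/m$ — gives $\|\partial\calL(\bu)\|_2\lesssim r\sqrt{n}$, which with $r\asymp\frac{n}{m}\log(\tfrac{m}{n})$ is $\frac{n^{3/2}}{m}\log(\tfrac{m}{n})$, larger than the target floor by a factor of $\sqrt{n}/\log(\tfrac{m}{n})$. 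The paper's argument (Lemma~\ref{lem:smalldis}, driven by Lemma~\ref{lem:max_ell_sum}) writes $\|\calP_{\calC_-}(\bh(\bu,\bv))\|_2=\sup_{\bw\in\calC_{(1)}}\bw^\top\bh(\bu,\bv)\le\sup_{\bw}\frac{1}{m}\sum_{i\in\bR}|\ba_i^\top\bw|$ and bounds the supremum of the sum over the worst $\lceil \ell\rceil$-subset uniformly in $\bw$, getting $r\log(r^{-1})$ rather than $r\sqrt{n}$; this is what yields the extra $\log(\tfrac{m}{n})$ factor and no stray $\sqrt{n}$. Without this step, your small-distance regime does not deliver the stated floor.
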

\begin{rem}
    We focus on the noiseless case to keep the presentation concise. We will note in Section \ref{sec:conclusion} how to further establish the robustness of Algorithms \ref{alg:pgd}--\ref{alg:pgd_high} to adversarial bit flips. 
\end{rem}
\begin{rem}
Note that in similar non-convex algorithms for phase retrieval (e.g., \cite{candes2015phase,chen2017solving,wang2017solving,zhang2017nonconvex}), the initial guess $\bx^{(0)}$ only needs to satisfy $\dist(\bx^{(0)},\bx)\le c$ for some small enough $c$. In contrast, our Theorem \ref{thm:pgd_low} requires the slightly more stringent $\dist(\bx^{(0)},\bx)\lesssim  (\log(\frac{m}{n}))^{-1/2}$. Such logarithmic  scaling helps ensure that the ``double separation probability'' is exponentially smaller than the ``separation probability'', which will be useful in Lemma \ref{lem:final_h2}. 
\end{rem}

%In signal reconstruction problems with random sensing ensemble, uniform recovery---meaning that a single draw of the random ensemble allows for accurately recovering all signals of interest---is a highly sought-after notion. See, e.g., \cite{genzel2023unified,chen2023unified,dirksen2021non,xu2020quantized}. 

\subsubsection{Recovering Sparse Signals}\label{sec:alg_sparse}
To accurately recover sparse signals from $m$ phaseless bits  that might be much fewer than the ambient dimension $n$, it is imperative to enforce the sparsity constraint. To this end, we shall consider the following iterative hard thresholding procedure. 

\begin{algorithm}
	\caption{Binary Iterative Hard Thresholding for 1-bit Sparse Phase Retrieval ({BIHT-1bSPR})\label{alg:pgd_high}}
	\textbf{Input}:   model data $(\bA,\by, \tau)$, sparsity $k$, initialization $\bx^{(0)}\in \Sigma^n_k$, step size $\eta$ ($\eta=\sqrt{\frac{\pi e}{2}}\tau$ by default)

	\textbf{For}
	$t = 1, 2, 3,\cdots$  \textbf{do}
	\begin{subequations}
		\begin{gather}
			\tilde{\bx}^{(t)} = \bx^{(t-1)} - \eta\cdot \partial\calL(\bx^{(t-1)})  
			;\label{eq:pgd_high_gd}\\\label{eq:thresholding_biht}
			\bx^{(t)} = \calT_{(k)}(\tilde{\bx}^{(t)}),
		\end{gather}
	where $\calT_{(k)}(\bu)$ zeros out all entries of $\bu$ except for the top $k$-largest ones in absolute value.
	\end{subequations}
\end{algorithm}

By interpreting $\calT_{(k)}$ as the projection onto $\Sigma^n_k$, the above algorithm can also be viewed as a procedure of projected gradient descent. Similar to Theorem \ref{thm:pgd_low}, the following theorem shows that {BIHT-1bSPR} is near-optimal for 1-bit sparse phase retrieval.% of sparse signals.

\begin{theorem}
	[{BIHT-1bSPR} is Near-Optimal]\label{thm:pgd_high}   We consider the iteration sequence $\{\bx^{(t)}\}_{t=0}^\infty$ produced by    Algorithm \ref{alg:pgd_high} with $\eta=\sqrt{\frac{\pi e}{2}}\tau$. Suppose $m\ge C_1 k\log(\frac{mn}{k^2})\log(\frac{m}{k})$ holds with sufficiently large $C_1$, $\bx\in \Sigma^n_k\cap\mathbbm{A}_{\alpha,\beta}$ is the desired signal, and the initialization $\bx^{(0)}$ satisfying $\dist(\bx^{(0)},\bx)\le \frac{c_2}{\log^{1/2}(\frac{m}{k})}$  for some small enough $c_2$ is provided. If 
		\begin{align}
		\max\Big\{\max_{w\in [\frac{\tau}{1.01\beta},\frac{\tau}{0.99\alpha}]}\Big|1-w^3\exp\Big(\frac{1-w^2}{2}\Big)\Big|,\max_{w\in[\frac{\tau}{1.01\beta},\frac{\tau}{0.99\alpha}]}\Big|1-w\exp\Big(\frac{1-w^2}{2}\Big)\Big|\Big\}	\le 0.49,
			\label{eq:fixed_ratio}
		\end{align}
	 then the following statement holds  with probability at least $1-C_3\exp(-c_4k\log(\frac{mn}{k^2}))$: Universally for any $\bx\in \Sigma^n_k\cap\mathbbm{A}_{\alpha,\beta}$, we have $$\dist(\bx^{(t)},\bx) \le \frac{C_5k}{m}\log\big(\frac{mn}{k^2}\big)\log\big(\frac{m}{k}\big),\qquad \forall \,t\ge C_6\log(\frac{m}{k})$$
			for some  constants $C_5,C_6$. %The  constants in this theorem may depend on $c_\eta$.   
		\end{theorem}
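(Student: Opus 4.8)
The plan is to reduce everything to a \emph{one-step max-contraction} bound and then iterate it, retaining the large-/small-distance dichotomy behind Theorem~\ref{thm:pgd_low} but propagating the sparsity constraint through every estimate. Write $\rho_\star:=\frac{C_5k}{m}\log(\frac{mn}{k^2})\log(\frac{m}{k})$ and $r_0:=\frac{c_2}{\log^{1/2}(m/k)}$. The goal is to show that on a single event of probability $1-C_3\exp(-c_4k\log(\frac{mn}{k^2}))$, simultaneously for \emph{every} $\bx\in\Sigma^n_k\cap\mathbbm{A}_{\alpha,\beta}$ and every point $\bu\in\Sigma^n_k$ with $\dist(\bu,\bx)\le r_0$, one gradient-plus-thresholding step satisfies
\[
\dist\big(\calT_{(k)}(\bu-\eta\,\partial\calL(\bu)),\ \bx\big)\ \le\ \max\big\{\rho_\star,\ q\,\dist(\bu,\bx)\big\}
\]
for some absolute contraction factor $q\in(0,1)$, with $C_5$ taken large enough to absorb the hard-thresholding constant. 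Granting this, $\dist(\bx^{(0)},\bx)\le r_0$ gives $\dist(\bx^{(t)},\bx)\le\max\{\rho_\star,q^t r_0\}$, which is $\le\rho_\star$ once $t\ge C_6\log(m/k)$ and stays there, proving the theorem. Uniformity over all admissible $\bu$ is the standard reduction (cf.\ \cite{matsumoto2022binary}): by \eqref{eq:subgra} the gradient $\partial\calL(\bu)$ depends on $\bu$ only through the discrete data $\big(\sign(|\bA\bu|-\tau),\sign(\bA\bu)\big)$, hence is constant on each cell of the arrangement $\{\calH_{|\ba_i|},\calH_{\ba_i,0}\}_{i\le m}$; since $\bu-\bx\in\Sigma^n_{2k}$ and $\bu-\eta\,\partial\calL(\bu)-\bx\in\Sigma^n_{3k}$, the analysis only ever sees $\calO(k\log\frac nk)$-dimensional directions, which is what the union bounds can afford.

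\textbf{One gradient step.} Fix admissible $(\bu,\bx)$ with $d:=\dist(\bu,\bx)\le r_0$; after replacing $\bu$ by $-\bu$ if necessary assume $\|\bu-\bx\|_2=d$ and set $\bw:=\bu-\bx$. An index $i$ contributes to $\partial\calL(\bu)$ only if $\bu$ and $\bx$ are phaselessly separated by $\calH_{|\ba_i|}$; split these into the \emph{main} set $\calI_{\rm m}$ (where $\sign(\ba_i^\top\bu)=\sign(\ba_i^\top\bx)$) and the \emph{double-separation} set $\calI_{\rm d}$ (opposite signs), so $\partial\calL(\bu)=\bg_{\rm m}+\bg_{\rm d}$. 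For $\bg_{\rm m}$ one computes $\mathbbm{E}\,\bg_{\rm m}$ by one- and two-dimensional Gaussian integrals and uses a general orthogonal decomposition of $\bw$ (the iterates no longer lying on a sphere); scaling by $\eta=\sqrt{\pi e/2}\,\tau$, the two scalar maxima in \eqref{eq:fixed_ratio}, evaluated at $w=\tau/\|\bx\|_2$ and $w=\tau/\|\bu\|_2$ in the range $[\frac{\tau}{1.01\beta},\frac{\tau}{0.99\alpha}]$, are exactly the worst-case relative errors of $\eta\,\mathbbm{E}\bg_{\rm m}$ as an approximation to $\bw$ in its two orthogonal components, so \eqref{eq:fixed_ratio} forces $\dist(\bu-\eta\,\mathbbm{E}\bg_{\rm m},\bx)\le 0.49\,d$ up to lower-order terms. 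The fluctuation $\langle\bg_{\rm m}-\mathbbm{E}\bg_{\rm m},\hat\bv\rangle$, uniformly over all cells (equivalently over all $\calI_{\rm m}$ and sign patterns) and over $\hat\bv$ in a net of $\Sigma^{n,*}_{3k}$, is controlled by conditioning on $\calI_{\rm m}$---under which $\{\sign(\ba_i^\top\bu)\ba_i:i\in\calI_{\rm m}\}$ are independent, conditionally sub-Gaussian with bounded covariance---and combining a Bernstein bound with union bounds over $\binom{m}{|\calI_{\rm m}|}$ subsets and $\exp(\calO(k\log\frac nk))$ net points; this is precisely where $m\gtrsim k\log(\frac{mn}{k^2})\log(\frac mk)$ enters, the extra $\log(\frac mk)$ buying uniformity over all iterates and index configurations. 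For $\bg_{\rm d}$ we use the key fact (the analogue of Lemma~\ref{lem:final_h2}, and the $\dist$-dependence of the ``double separation'' probability behind Lemma~\ref{lem:Puv}): $i\in\calI_{\rm d}$ forces $|\ba_i^\top\bw|\gtrsim\tau$, which for a unit direction has probability $\le\exp(-c\tau^2/d^2)\le(k/m)^{c'}$ since $d\le r_0$; hence $|\calI_{\rm d}|$ is tiny and the contribution of $\bg_{\rm d}$ to any $\Sigma^{n,*}_{3k}$-directional gradient is negligible relative to $\max\{\rho_\star,d\}$ on the stated event. Collecting these, $\dist(\bu-\eta\,\partial\calL(\bu),\bx)\le 0.49\,d+\calO(\rho_\star)$ whenever $d$ is not too small.

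\textbf{Small distances and hard thresholding.} When $d$ is comparable to $\rho_\star$ the relative-error terms no longer shrink, so one bounds $\partial\calL(\bu)$ directly. Here the discretization device of \cite{matsumoto2022binary} is unavailable (its packing set depends on $\bA$); we instead invoke Theorem~\ref{thm:local_embed} with $\calK=\Sigma^n_k\cap\mathbbm{A}_{\alpha,\beta}$ (for which $\omega^2(\calK_{(r)})\lesssim r^2k\log\frac nk$ and $\scrH(\calK,r)\lesssim k\log\frac{n}{rk}$): event $E_s$ says the number of indices at which $\bu$ and $\bx$ produce different phaseless bits is $\lesssim rm$ whenever $\dist(\bu,\bx)\le r'/2$, so taking $r'/2=d$ only $\lesssim dm\log^{1/2}(\tfrac mk)$ summands are nonzero; combined with a uniform bound on $\|\sum_{i\in\calI}\sigma_i\ba_i\|_2$ over all sign patterns $\sigma$ and all index sets $\calI$ of that size, restricted to $\Sigma^{n,*}_{3k}$ test directions, this yields $\|\partial\calL(\bu)\|_2\lesssim\rho_\star$. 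Finally, since $\bx$ is $k$-sparse and $\calT_{(k)}$ commutes with sign flip, $\dist(\calT_{(k)}(\bz),\bx)\le 2\,\dist(\bz,\bx)$ for all $\bz$; applying this with $\bz=\bu-\eta\,\partial\calL(\bu)$ and using $2\times0.49<1$ gives the displayed max-contraction with a factor $q<1$ (and $C_5$ chosen accordingly), completing the induction.

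\textbf{Main obstacle.} The crux, as signalled in the introduction, is the small-distance uniform control of the directional gradient at $\bu\approx\pm\bx$: the gradient is a sum over a data-dependent, $\bA$-dependent index set, and the tools that handle such sums for unstructured signals must be re-derived so that the implied constants do not degrade under the $\exp(\calO(k\log\frac nk))$ union bound over sparse supports and test directions. Making the tessellation count $E_s$ and the ``few nonzero summands'' heuristic quantitative---with the right interplay among $r$, $r'=c_1r/\log^{1/2}(r^{-1})$, the sample size, and the target error $\rho_\star$, so that no spurious power of $\log$ appears---is the delicate part; the Gaussian-integral evaluation of $\mathbbm{E}\bg_{\rm m}$, the Bernstein concentration, and the thresholding inequality are comparatively routine.
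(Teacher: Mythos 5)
Your overall plan is the same as the paper's: decompose the gradient into a ``main'' part (indices where $\bu,\bx$ are phaselessly separated but on the same side of $\calH_{\ba_i,0}$) and a ``double-separation'' part, compute the population drift of the main part by Gaussian integrals under a general orthogonal decomposition, show the double-separation set is exponentially small in $\tau^2/d^2$, handle the small-distance regime by the tessellation count from Theorem~\ref{thm:local_embed} together with a uniform bound on sums over small index sets, and absorb the hard-thresholding factor of $2$ via the condition~\eqref{eq:fixed_ratio}, which is exactly what forces $2\delta_1<1$. The paper organizes this through the abstract PLL-AIC condition (Definition~\ref{def:plaic}, Lemmas~\ref{lem:low_aic2err}--\ref{lem:high_aic2err}, Theorem~\ref{thm:raic}) but the mathematical content is the one you sketch. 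Two points, however, deserve a closer look.

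First, and most substantively: in the main-term fluctuation bound you propose to union over ``$\binom{m}{|\calI_{\rm m}|}$ subsets.'' This would break the rate. In the large-distance regime $|\calI_{\rm m}|$ can be as large as $\ell\asymp m\,\delta_4\asymp m/\log^{1/2}(m/k)$ (since $\|\bu-\bv\|_2$ can be of order $\delta_4$), so $\log\binom{m}{\ell}\gtrsim \ell\log(m/\ell)\gg k\log(\tfrac{mn}{k^2})\log(\tfrac{m}{k})$; there is no way $m\gtrsim k\log(\tfrac{mn}{k^2})\log(\tfrac{m}{k})$ can absorb this. What actually works---and what the paper does in Lemmas~\ref{lem:final_T11}, \ref{lem:final_bound_T2}, and \ref{lem:T3_final}---is to condition only on the \emph{size} $|\bR_{\bp,\bq}|=r_{\bp,\bq}$, not on which indices. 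By exchangeability the conditional law of $\frac{1}{m}\sum_{i\in\bR_{\bp,\bq}}(\cdot)$ depends only on $r_{\bp,\bq}$, so one gets a sub-Gaussian concentration with $\psi_2$-norm $\asymp\sqrt{r_{\bp,\bq}}/m$ without any union over subsets; the only remaining unions are over the net $\calN^{(2)}_{r,\delta_4}$ (cost $\exp(\calO(\scrH))$) and over the Binomial size via Chernoff. The index-set max (Lemma~\ref{lem:max_ell_sum}) is used only for the auxiliary and small-distance terms, where the resulting $\ell\log(em/\ell)$ loss is affordable because $\ell$ is then $\calO(rm\,\mathrm{polylog})$, not $\calO(m\delta_4)$. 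Conflating the two mechanisms for the main term is the one place where your sketch, taken literally, would fail.

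Second, a minor but real slip: $\bu-\eta\,\partial\calL(\bu)-\bx$ is not $3k$-sparse (the gradient is a Gaussian combination and hence dense). What makes the analysis ``see only $\calO(k)$-sparse directions'' is Lemma~\ref{lem:pro_contain}: $\|\calP_{\Sigma^n_k-\bx}(\bz)\|_2\le 2\|\calP_{\Sigma^n_{2k}}(\bz)\|_2$, which replaces the projection onto the shifted sparse set by a projection onto the cone $\Sigma^n_{2k}$ and produces exactly the factor of $2$ you invoke. So the statement should be about \emph{projected} norms (onto $\Sigma^n_{2k}$, the secant cone), not about the vector itself being sparse. With these two corrections the argument aligns with the paper's proof.
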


%		\begin{rem}[The Selection of $\eta$] \label{rem:ratio_cons}
\begin{rem}
We elucidate on the condition (\ref{eq:fixed_ratio}) which translates into certain ratio conditions (i.e., $\frac{\tau}{\alpha}\le \xi_1$ and $\frac{\beta}{\tau}\le \xi_2$ for some positive thresholds $\xi_1,\xi_2>1$), and is not seen in the unstructured case. Specifically, for some function $F(\eta, a,b)$ we  show
$$
\|\bx^{(t)}-\bx\|_2 \le \sup_{a,b} F(\eta,a,b)\cdot\|\bx^{(t-1)}-\bx\|_2+\textrm{other terms}$$
 for {GD-1bPR},
but only have $$
\|\bx^{(t)}-\bx\|_2 \le 2\sup_{a,b}F(\eta,a,b)\cdot \|\bx^{(t-1)}-\bx\|_2+\text{other terms}$$ for {BIHT-1bSPR},  
 with extra leading factor $2$ arising as the result of hard thresholding. 
 This means that $\sup_{a,b}F(\eta,a,b)<1$ suffices to ensure the contraction of {GD-1bPR}, while     we will need $\sup_{a,b}F(\eta,a,b)<\frac{1}{2}$ for {BIHT-1bSPR}, which gives rise to (\ref{eq:fixed_ratio}). 
 This creates a gap between the practical algorithm BIHT-1bSPR and the intractable decoder (\ref{eq:hdm}) since the latter does not require ratio conditions of this type. It is possible to relax (\ref{eq:fixed_ratio}) by using adaptive step size or tighter hard thresholding bound, which we do not attempt here. {It would be of greater interest to study whether (\ref{eq:fixed_ratio}) is some fundamental algorithmic barrier or simply a proof artifact that can be completely removed, which we leave as an open question.}  
\end{rem}
\begin{rem} \label{rem:tau}
    {Since the ratio conditions from (\ref{eq:fixed_ratio}) require a properly chosen $\tau$, an important practical issue is the tuning of $\tau$. We shall provide some brief discussion here (although a fuller account is beyond the scope of this theoretical work). For the (non-uniform) recovery of a specific $\bx$, we can set $\alpha=\beta = \|\bx\|_2$, and to fulfill (\ref{eq:fixed_ratio}) we shall choose $\tau$ as an accurate estimator of $\|\bx\|_2$; suppose that it is feasible to collect some unquantized phaseless measurements $\{y_i=|\ba_i^\top\bx|\}_{i=1}^{n_0}$ before the quantization of $\{y_i=|\ba_i^\top\bx|\}_{i=n_0+1}^n$, then one can set $\tau = \sqrt{\frac{\pi}{2}}\frac{1}{n_0}\sum_{i=1}^{n_0}y_i$, and standard concentration argument shows that $n_0=O(\log n)$ suffices to ensure (with high probability) that $(1-c')\|\bx\|_2\le\tau\le(1+c')\|\bx\|_2$ for some small $c'>0$.}  
\end{rem}

		\subsubsection{Spectral Initialization}\label{sec:spectralini}
We proceed to discuss how a good initialization can be obtained via spectral methods. First observe that $\frac{\bx}{\|\bx\|_2}$ is the leading eigenvector of the expectation of
		$
			\hat{\bS}_{\bx} = \frac{1}{m}\sum_{i=1}^m y_i \ba_i\ba_i^\top.$ 
We can then initialize the ``direction'' of $\bx$ by the leading eigenvector of $\hat{\bS}_{\bx}$. Additionally, note that the expectation of
	$	\hat{\lambda}_{\bx} = \frac{1}{m}\sum_{i=1}^m \mathbbm{1}(y_i = 1)$
equals to $\mathbbm{P}\big(|\ba_i^\top\bx|\ge\tau\big)=2\Phi(-\frac{\tau}{\|\bx\|_2})$, where $\Phi$ is the  cumulative distribution function of standard Gaussian variable. We can therefore estimate $\|\bx\|_2$ by
$ \hat{\lambda}_{\rm SI}:=  -\tau/\Phi^{-1}(\frac{\hat{\lambda}_{\bx}}{2}).
$
We thus propose the following spectral initialization:		
		\begin{algorithm}
			\caption{Spectral Initialization for  1-bit Phase Retrieval ({SI-1bPR})\label{alg:SI_low}}
			\textbf{Input}: Model data $(\bA,\by,\tau)$

			\textbf{Estimate $\frac{\bx}{\|\bx\|_2}$:} Compute $\hat{\bv}_{\rm SI}$ as the normalized leading eigenvector of $\hat{\bS}_{\bx}= \frac{1}{m}\sum_{i=1}^m y_i \ba_i\ba_i^\top$

			\textbf{Estimate $\|\bx\|_2$:} Compute $\hat{\lambda}_{\rm SI}=\frac{-\tau}{\Phi^{-1}(\frac{\hat{\lambda}_{\bx}}{2})}$  
%			\begin{align}\label{eq:norm_est}
%				\hat{\lambda}_{\rm SI}:= \frac{-\tau}{\Phi^{-1}(\frac{\hat{\lambda}_{\bx}}{2})}
%			\end{align} where $\Phi^{-1}(\cdot)$ is the inverse function of $\Phi(\cdot)$ and $\hat{\lambda}_{\bx} = \frac{1}{m}\sum_{i=1}^m \mathbbm{1}(y_i = 1)$
						
			\textbf{Output:} Spectral initialization $\hat{\bx}_{\rm SI} = \hat{\lambda}_{\rm SI}\hat{\bv}_{\rm SI}$
		\end{algorithm}

While drawing the inspiration from the population level, we obtain the non-asymptotic error rate of Algorithm \ref{alg:SI_low} as follows. %Note that we separately present the non-uniform and uniform guarantees in the theorems for spectral initialization. 

		\begin{theorem}
	[Guarantees for {SI-1bPR}]\label{thm:SI_low} Under standard Gaussian $\bA$, we let $\hat{\bx}_{\rm SI}$ be the spectral estimator in Algorithm \ref{alg:SI_low}. 
	Suppose that $m\ge C_1n$ for sufficiently large $C_1$. Then, for a fixed $\bx\in\mathbbm{A}_{\alpha,\beta}$  we have $$	\mathbbm{P}\left(\dist(\hat{\bx}_{\rm SI},\bx) \le C_2\sqrt{\frac{n}{m}}\right) \le 1-\exp(-c_3n).$$
	We also have a uniform bound $$\mathbbm{P}\left(\sup_{\bx\in\mathbbm{A}_{\alpha,\beta}}\dist(\hat{\bx}_{\rm SI},\bx) \le C_4\sqrt{\frac{n\log(\frac{m}{n})}{m}}\right)\ge 1-\exp\big(-c_5n\log(\frac{m}{n})\big).$$ 
	% More specifically, given some $\delta\in(0,1)$, $m\ge C_6n$ for large enough $C_6$ depending on $(\alpha,\beta,\gamma,\delta)$ suffices to ensure $\dist(\hat{\bx}_{\rm SI},\bx)\le \delta$ holding for all $\bx\in \mathbbm{A}_{\alpha,\beta}$ with probability at least $1-\exp(-c_7n\log\frac{m}{n})$.  
\end{theorem}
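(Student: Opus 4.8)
\textbf{Proof plan for Theorem~\ref{thm:SI_low}.}
The plan is to treat the directional estimate $\hat\bv_{\rm SI}$ and the norm estimate $\hat\lambda_{\rm SI}$ separately and combine them by a triangle inequality. Writing $\bs=\bx/\|\bx\|_2$ and choosing the sign $s\in\{\pm1\}$ realizing $\dist(\hat\bv_{\rm SI},\bs)=\|\hat\bv_{\rm SI}-s\bs\|_2$, one has
\[
\dist(\hat\bx_{\rm SI},\bx)\le\big\|\hat\lambda_{\rm SI}\hat\bv_{\rm SI}-s\|\bx\|_2\,\bs\big\|_2\le|\hat\lambda_{\rm SI}-\|\bx\|_2|+\hat\lambda_{\rm SI}\cdot\dist(\hat\bv_{\rm SI},\bs),
\]
so, as $\hat\lambda_{\rm SI}$ will be shown to be $O(1)$ on the relevant event, it suffices to control $|\hat\lambda_{\rm SI}-\|\bx\|_2|$ and $\dist(\hat\bv_{\rm SI},\bs)$ by $\sqrt{n/m}$ (fixed $\bx$), respectively by $\sqrt{n\log(m/n)/m}$ (uniform).

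For the direction, fix $\bx$ and, by rotation invariance of the Gaussian, assume $\bs=\be_1$. Since $y_i=\sign(|\ba_i^\top\bx|-\tau)$ is a function of $\ba_i^\top\be_1$ alone, the population matrix $\bS:=\mathbbm{E}\hat\bS_\bx$ is diagonal, with $\bS_{11}=\mathbbm{E}[\sign(\|\bx\|_2|g|-\tau)g^2]$ and $\bS_{jj}=\mathbbm{E}[\sign(\|\bx\|_2|g|-\tau)]$ for $j\ge2$, where $g\sim\calN(0,1)$; an integration-by-parts computation gives the eigengap $\bS_{11}-\bS_{22}=4c\,\phi(c)$ with $c=\tau/\|\bx\|_2\in[\tau/\beta,\tau/\alpha]$, a compact interval bounded away from $0$ and $\infty$, so this gap is bounded below by a constant $\kappa=\kappa(\alpha,\beta,\tau)>0$; hence $\bs$ is the leading eigendirection of $\bS$ with spectral gap $\ge\kappa$. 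Next, $y_i(\ba_i^\top\bv)^2$ is sub-exponential with $\psi_1$-norm $O(1)$ uniformly over $\bv\in\mathbb{S}^{n-1}$, so an $\epsilon$-net over $\bv$ plus Bernstein's inequality (equivalently, a sub-exponential matrix Bernstein bound) yields $\|\hat\bS_\bx-\bS\|_{\rm op}\le C\sqrt{n/m}$ with probability $\ge1-\exp(-cn)$ once $m\ge C_1n$; taking $C_1$ large makes this perturbation $<\kappa/2$, and the Davis--Kahan $\sin\Theta$ theorem gives $\dist(\hat\bv_{\rm SI},\bs)\le C_2\sqrt{n/m}$.

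For the norm, $\hat\lambda_\bx=\frac1m\sum_i\mathbbm{1}(y_i=1)$ is an average of i.i.d.\ $\Ber(p)$ variables with $p=\mathbbm{P}(|\ba_1^\top\bx|\ge\tau)=2\Phi(-c)$, which lies in a compact subinterval of $(0,1)$ for $c\in[\tau/\beta,\tau/\alpha]$. Hoeffding's inequality gives $|\hat\lambda_\bx-p|\le\sqrt{n/m}$ with probability $\ge1-2\exp(-2n)$; since $t\mapsto-\tau/\Phi^{-1}(t/2)$ is Lipschitz near $p$ (because $\Phi^{-1}$ is smooth with nonvanishing derivative away from the endpoints) and maps $p$ to $\|\bx\|_2$, we get $|\hat\lambda_{\rm SI}-\|\bx\|_2|\le C\sqrt{n/m}$ (hence $\hat\lambda_{\rm SI}=O(1)$) on the same event. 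A union bound over the direction and norm events, plugged into the triangle inequality above, proves the fixed-$\bx$ statement.

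The main obstacle is the uniform bound, where the difficulty is that $\hat\bS_\bx$ and $\hat\lambda_\bx$ depend on $\bx$ through the whole measurement vector $(y_i)_i$, which is a discontinuous function of $\bx$, so one cannot exploit the (fine-scale) metric entropy of $\mathbbm{A}_{\alpha,\beta}$ directly. I would condition on $\max_i\|\ba_i\|_2\lesssim\sqrt n$, write $\hat\bS_\bx=\frac2m\sum_i\mathbbm{1}(|\ba_i^\top\bx|\ge\tau)\ba_i\ba_i^\top-\frac1m\sum_i\ba_i\ba_i^\top$ (the second term concentrating around $\bI_n$ at rate $\sqrt{n/m}$ with no $\bx$-dependence), and for the first term control the empirical process $(\bx,\bv)\mapsto\frac1m\sum_i\mathbbm{1}(|\ba_i^\top\bx|\ge\tau)(\ba_i^\top\bv)^2-\mathbbm{E}[\cdot]$ over $\mathbbm{A}_{\alpha,\beta}\times\mathbb{S}^{n-1}$: for fixed $\bv$ one truncates $(\ba_i^\top\bv)^2$ and uses a variance-sensitive uniform deviation bound over the class $\{\ba\mapsto\mathbbm{1}(|\ba^\top\bx|\ge\tau):\bx\in\mathbbm{A}_{\alpha,\beta}\}$ (a union of two half-spaces, VC dimension $O(n)$), then takes an $\epsilon$-net over $\bv$; equivalently, one nets $\bx$ at a scale $\delta$ small enough that the number of indices $i$ with $\big||\ba_i^\top\bx|-\tau\big|\lesssim\delta\sqrt n$ (the measurements that can flip between $\bx$ and a neighbour, a count of exactly the kind governed by the phaseless tessellation of Theorem~\ref{thm:local_embed}) is a negligible fraction, so that $\hat\bS_\bx$ and $\hat\lambda_\bx$ are stable along the net. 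Either way the extra $\log(m/n)$ factor enters through the $\exp(O(n\log(m/n)))$-size net forced by the required fineness. Re-running the Davis--Kahan and $\Phi^{-1}$-Lipschitz steps uniformly then yields the claimed $\sqrt{n\log(m/n)/m}$ bound with probability $\ge1-\exp(-c_5n\log(m/n))$. The delicate point throughout is making the net scale, the deviation level, the truncation level, and the flip count fit together so that the exponents match in \emph{every} regime of $m$ relative to $n$ — in particular near $m\asymp n$, where the target rate is $O(1)$ and a crude bound already suffices, versus $m\gg n$, where the refined concentration is needed; stitching these regimes and keeping the union-bound exponent above the net cardinality is where most of the technical work lies.
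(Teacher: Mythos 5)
Your fixed-$\bx$ argument is correct and is essentially the paper's: you compute the same diagonal population matrix and eigengap $4c\,\phi(c)$ (which matches the paper's $a_{\bx}\|\bx\|_2^2$), apply Davis--Kahan, and concentrate $\hat{\bS}_\bx$ and $\hat\lambda_\bx$ at rate $\sqrt{n/m}$. The only cosmetic difference is that the paper obtains the operator-norm concentration of $\hat{\bS}_\bx$ from a product-process bound (Lemma \ref{lem:product_process}), while you propose a direct $\epsilon$-net on $\bv\in\mathbb{S}^{n-1}$ plus a sub-exponential Bernstein bound; these are interchangeable. Your triangle-inequality split $\dist(\hat\bx_{\rm SI},\bx)\le|\hat\lambda_{\rm SI}-\|\bx\|_2|+\hat\lambda_{\rm SI}\dist(\hat\bv_{\rm SI},\bs)$ is equivalent to the paper's use of Lemma \ref{lem:norm_equa}.

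For the uniform bound you correctly identify the real obstacle (the discontinuity of $\bx\mapsto\hat{\bS}_\bx$ through $\by$) and sketch two remedies. Your second remedy — net $\mathbbm{A}_{\alpha,\beta}$ at scale $\gamma\asymp\frac{n}{m}\sqrt{\log(m/n)}$, bound the number of bits that flip between $\bx$ and the nearest net point via the phaseless tessellation Theorem \ref{thm:local_embed}, and then control the operator-norm contribution of the flipped subset — is exactly the paper's proof of the uniform part of Lemma \ref{lem:con_Sx_low}, which feeds the flip count into Lemma \ref{lem:max_ell_sum}; so this route is right and matches the paper. Your first remedy (a VC / empirical-process bound over the class $\{\mathbbm{1}(|\ba^\top\bx|\ge\tau):\bx\in\mathbbm{A}_{\alpha,\beta}\}$ after truncating the sub-exponential weight $(\ba^\top\bv)^2$) is a genuinely different strategy; it is plausible but not obviously loss-free: to make the truncation error vanish you must truncate the envelope at a level $\asymp\log m$, and a crude chained bound for a VC class with such an envelope can pick up an extra $\log m$ factor beyond the claimed $\sqrt{n\log(m/n)/m}$, unless you use a variance-localized (Talagrand/Klein--Rio-type) inequality with care. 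So treat the two routes not as "equivalent" but as a verified one (the tessellation/flip-count route, which is what the paper does and which also handles $\hat\lambda_\bx$ for free) and a speculative one (the VC route). One more small point worth making explicit in a write-up: after the flip-count step you still need to compare $\mathbbm{E}\hat{\bS}_{\bx}$ with $\mathbbm{E}\hat{\bS}_{\bx_1}$; the paper handles this by the Lipschitz continuity in $\bx$ of the coefficients $a_\bx,b_\bx$ in $\mathbbm{E}\hat{\bS}_{\bx}=a_\bx\bx\bx^\top+b_\bx\bI_n$, which your sketch implicitly subsumes but should be stated.
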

\begin{rem}
Combining Theorem \ref{thm:pgd_low} and Theorem \ref{thm:SI_low}, it immediately follows that Algorithm \ref{alg:pgd} initialized by Algorithm \ref{alg:SI_low} provides an efficient near-optimal algorithm for 1-bit phase retrieval, which uniformly recovers all unstructured signals in $\mathbbm{A}_{\alpha,\beta}$ to the accuracy $\calO(\frac{n}{m}\log^2(\frac{m}{n}))$ under the sample complexity $m=\calO(n)$.  
\end{rem}

%We pause to provide a novel perspective for 1-bit phase retrieval. Restricting our attention to $\bx\in\mathbb{S}^{n-1}$, by lifting \cite{candes2013phaselift,candes2015phase1,li2013sparse} we can linearize the magnitude-only measurements and recast the phaseless bits as 
%\begin{align*}
  %  y_i = \sign(|\ba_i^\top\bx|-\tau)=\sign(|\ba_i^\top\bx|^2-\tau^2)=\sign(\langle \ba_i\ba_i^\top-\tau^2\bI_n,\bx\bx^\top\rangle):=\sign(\langle\bA_i,\bX\rangle).
%\end{align*}
%This interprets the 1-bit phase retrieval problem as 1-bit compressed sensing of the rank-1 positive semi-definite matrix $\bX:=\bx\bx^\top$ under the unusual design $\bA_i:=\ba_i\ba_i^\top-\tau^2\bI_n$. Along this perspective, the direction estimator $\hat{\bv}_{\rm SI}$ in Algorithm \ref{alg:SI_low} can be evidently identified with 
 %\begin{align*}
  %       \hat{\bV}_{\rm SI}:=\text{arg}\max ~ \Big\langle \frac{1}{m}\sum_{i=1}^my_i\bA_i,\bV\Big\rangle,~\text{subject to }\bV = \bv\bv^\top,~\bv\in \mathbb{S}^{n-1}.
%\end{align*}
% Interestingly, the above objective resembles the one for the 1-bit compressed sensing solver in \cite{plan2012robust}, which seeks a signal maximally correlated with $\frac{1}{m}\sum_{i=1}^m \sign(\ba_i^\top\bx)\cdot\ba_i$. 

\begin{rem} \label{rem:ca1bpr}
There exist related results from \cite{domel2022phase} who derived non-asymptotic guarantees for    a different 1-bit phase retrieval model with   ``heavier'' sensing ensemble 
\begin{align*}
    y_i = \sign\Big(\|\bP_i \bx\|_2^2-\frac{1}{2}\Big),~i=1,2,\cdots,m,
\end{align*}
where  $\bP_i$ denotes the projection onto the i.i.d. uniformly distributed $(\frac{n}{2})$-dimensional linear subspace $\calV_i$, and the signal  $\bx$ resides in $\mathbb{S}^{n-1}$. Their estimator essentially returns the   leading eigenvector of $\frac{1}{m}\sum_{i=1}^m y_i \bP_i$ and was proved to achieve a comparable non-uniform error rate $\calO((\frac{n\log (n)}{m})^{1/2})$ but an essentially worse uniform error rate $\calO(n(\frac{\log(m)}{m})^{1/2})$. We also note in passing that the fundamental limit of Algorithm \ref{alg:SI_low} can be derived from the theories in \cite{lu2020phase,mondelli2018fundamental}.
\end{rem}

%that attains the information theoretical lower bound, up to a logarithmic factor.

%For sparse signals, an additional thresholding step is needed to In the high-dimensional sparse case.
Now let us proceed to the spectral initialization for sparse signal. 
Note that the non-zero coordinates of $\bx$ can be identified with the $k$ largest diagonal entries of  $\mathbbm{E}(\hat{\bS}_{\bx})$. We shall therefore estimate the support $\supp(\bx)=\{i\in [n]:x_i\ne 0\}$ by the set of indices, $\calS_{\bx}$, corresponding  to the $k$ largest diagonal entries of $\hat{\bS}_{\bx}$. It then suffices to invoke Algorithm \ref{alg:SI_low} by restricting to the support estimate, leading to the spectral initialization in Algorithm \ref{alg:SI_high}.
		
		\begin{algorithm}
			\caption{Spectral Initialization for  1-bit Sparse Phase Retrieval ({SI-1bSPR})	\label{alg:SI_high}}
			\textbf{Input}:  Model data $(\bA,\by,\tau)$ and sparsity $k$

			\textbf{Estimate $\supp(\bx)$:} Estimate $\supp(\bx)$ by  the index set $\calS_{\bx}$ that corresponds to the $k$ largest entries (in real value) in the diagonal of $\hat{\bS}_{\bx}=  \frac{1}{m}\sum_{i=1}^m y_i \ba_i\ba_i^\top$ 
			
			\textbf{Estimate $\frac{\bx}{\|\bx\|_2}$:} Compute $\hat{\bv}_{\rm SI}\in\Sigma^n_k$ as the  normalized leading eigenvector of $[\hat{\bS}_{\bx}]_{\calS_{\bx}}$, where $[\hat{\bS}_{\bx}]_{\calS_{\bx}}$ denotes the matrix obtained from $\hat{\bS}_{\bx}$ by zeroing out the rows and columns not in $\calS_{\bx}$ 
   %Zero out the entries of $\hat{\bv}_{\rm SI}$ not in ${\calS_{\bx}}$ and set $[\hat{\bv}_{\rm SI}]_{\calS_{\bx}}$ as the leading eigenvector of $[\hat{\bS}_{\bx}]_{\calS_{\bx}}$.		
   
			\textbf{Estimate $\|\bx\|_2$:} Compute $\hat{\lambda}_{\rm SI}=\frac{-\tau}{\Phi^{-1}(\frac{\hat{\lambda}_{\bx}}{2})}$

			\textbf{Output:} Spectral initialization $\hat{\bx}_{\rm SI} = \hat{\lambda}_{\rm SI}\hat{\bv}_{\rm SI}$
		\end{algorithm}

		In the  following theorem, we provide theoretical guarantees for initializing   sparse signals via Algorithm \ref{alg:SI_high}.
		%s guarantee that the spectral methods settle the initialization problem of Algorithm \ref{alg:pgd}.  

		\begin{theorem}
			[Guarantees for {SI-1bSPR}]\label{thm:SI_high} Under standard Gaussian $\bA$, we let  $\hat{\bx}_{\rm SI}$ be the spectral estimator in Algorithm \ref{alg:SI_high}. 
			Given a fixed $\bx\in \Sigma^n_k\cap \mathbbm{A}_{\alpha,\beta}$, if $m\ge C_1k^2\log n$ for sufficiently large $C_1$, then we have $$\mathbbm{P}\left(\dist(\hat{\bx}_{\rm SI},\bx) \le C_2 \big(\frac{k^2\log n}{m}\big)^{1/4}\right) \ge 1-n^{-9}.$$ 
			If $m\ge C_3k^3\log n$ for large enough $C_3$, then we have  a uniform bound $$\mathbbm{P}\left(\sup_{\bx\in \Sigma^n_k\cap\mathbbm{A}_{\alpha,\beta}}\dist(\hat{\bx}_{\rm SI},\bx)\le C_4 \big(\frac{k^3\log n}{m}\big)^{1/4}\right) \ge 1-\exp\big(-c_5k\log\big(\frac{en}{k}\big)\big).$$ 
			\end{theorem}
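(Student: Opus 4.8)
The plan is to follow the standard template for a truncated spectral estimator, with the error budget set by how the $\ell_\infty$ fluctuation of $\diag(\hat{\bS}_{\bx})$ propagates through the support-selection step. First I would record the population objects. Writing $\bu=\bx/\|\bx\|_2$ and $w=\tau/\|\bx\|_2\in[\tau/\beta,\tau/\alpha]$, rotational invariance of the Gaussian gives
\[
\mathbbm{E}\,\hat{\bS}_{\bx}=\mu_2\,\bI_n+(\mu_1-\mu_2)\,\bu\bu^\top,\qquad \mu_2=\mathbbm{E}[\sign(|g|-w)],\quad \mu_1=\mathbbm{E}[g^2\sign(|g|-w)],\quad g\sim\calN(0,1),
\]
and an elementary computation yields $\mu_1-\mu_2=4w\varphi(w)$ (with $\varphi$ the standard normal density), which is bounded below by a constant $\delta_0=\delta_0(\alpha,\beta,\tau)>0$. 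Hence the diagonal of $\mathbbm{E}\,\hat{\bS}_{\bx}$ equals $\mu_2$ off $\supp(\bx)$ and $\mu_2+(\mu_1-\mu_2)u_j^2$ on $\supp(\bx)$, and, restricted to any coordinate set that retains most of $\supp(\bx)$, its leading eigenvector is (the restriction of) $\bu$ with an $\Omega(1)$ spectral gap; here the interplay of $(\alpha,\beta,\tau)$ enters to ensure the signal eigenvalue is the \emph{largest} one, which I take as given (subsumed in the constants, or imposable as a mild condition on $\tau/\alpha$ analogous to \eqref{eq:fixed_ratio}). Likewise $\mathbbm{E}\,\hat{\lambda}_{\bx}=2\Phi(-w)$, so the population version of the norm estimate is exactly $-\tau/\Phi^{-1}(\Phi(-w))=\|\bx\|_2$, and for the dense part of the argument one may also borrow Theorem \ref{thm:SI_low}.

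For the fixed-$\bx$ bound I would run four concentration steps; throughout write $\bx_{\calS}$ for $\bx$ with entries outside $\calS$ zeroed. \emph{(1) Diagonal/support.} Each $(\hat{\bS}_{\bx})_{jj}=\tfrac1m\sum_i y_ia_{ij}^2$ is an average of i.i.d.\ sub-exponentials, so by Bernstein and a union bound over $j\in[n]$, $\max_j|(\hat{\bS}_{\bx})_{jj}-(\mathbbm{E}\hat{\bS}_{\bx})_{jj}|\le\epsilon_0:=C\sqrt{(\log n)/m}$ with probability $\ge 1-n^{-9}$. Since $|\calS_{\bx}|=k=|\supp(\bx)|$, any coordinate $j\in\supp(\bx)\setminus\calS_{\bx}$ is displaced by some $j'\notin\supp(\bx)$, which forces $(\mu_1-\mu_2)u_j^2\le 2\epsilon_0$; summing over the at most $k$ missed coordinates gives $\|\bx-\bx_{\calS_{\bx}}\|_2^2\lesssim k\epsilon_0$, i.e.\ $\|\bx-\bx_{\calS_{\bx}}\|_2\lesssim(k^2\log n/m)^{1/4}$, and for $m\ge C_1k^2\log n$ this is $\le\alpha/2$, so $\|\bx_{\calS_{\bx}}\|_2\ge\alpha/2$. \emph{(2) Sparse operator norm.} Covering the $k$-sphere inside each of the $\binom nk$ supports and applying a matrix Bernstein bound per direction gives, with probability $\ge 1-n^{-9}$, $\sup_{|\calT|\le k}\|[\hat{\bS}_{\bx}-\mathbbm{E}\hat{\bS}_{\bx}]_{\calT}\|_{\rm op}\le\epsilon_1:=C\sqrt{(k\log n)/m}\le(k^2\log n/m)^{1/4}$. \emph{(3) Eigenvector.} On these events the Davis--Kahan $\sin\Theta$ theorem, applied to the $\calS_{\bx}$-restricted matrices (whose population version has top eigenvector $\bx_{\calS_{\bx}}/\|\bx_{\calS_{\bx}}\|_2$ and gap $\gtrsim 1$), gives $\dist(\hat{\bv}_{\rm SI},\bx_{\calS_{\bx}}/\|\bx_{\calS_{\bx}}\|_2)\lesssim\epsilon_1$, and since $\|\bx-\bx_{\calS_{\bx}}\|_2$ is small, $\dist(\bx_{\calS_{\bx}}/\|\bx_{\calS_{\bx}}\|_2,\bu)\lesssim\|\bx-\bx_{\calS_{\bx}}\|_2$, whence $\dist(\hat{\bv}_{\rm SI},\bu)\lesssim(k^2\log n/m)^{1/4}$. \emph{(4) Norm.} Hoeffding gives $|\hat{\lambda}_{\bx}-2\Phi(-w)|\lesssim\sqrt{(\log n)/m}$, and since $\Phi^{-1}$ is Lipschitz on the (compact, interior to $(0,1)$) range of $\hat{\lambda}_{\bx}/2$, $|\hat{\lambda}_{\rm SI}-\|\bx\|_2|\lesssim\sqrt{(\log n)/m}$. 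Combining via the triangle inequality for $\dist$ (flipping the sign of $\hat{\bx}_{\rm SI}$ if needed), $\dist(\hat{\bx}_{\rm SI},\bx)\le|\hat{\lambda}_{\rm SI}-\|\bx\|_2|+\|\bx\|_2\,\dist(\hat{\bv}_{\rm SI},\bu)\lesssim(k^2\log n/m)^{1/4}$, with total failure probability $\le n^{-9}$ after adjusting constants.

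For the uniform bound over $\bx\in\Sigma^n_k\cap\mathbbm{A}_{\alpha,\beta}$ I would take a $\gamma$-net $\calN$ of $\Sigma^{n,*}_k$ together with a net of $[\alpha,\beta]$, with $\gamma$ polynomially small in $n$, so that $|\calN|\le\exp(O(k\log n))$. Rerunning (1)--(4) for each $\bar{\bx}\in\calN$ forces every union bound to absorb the factor $|\calN|$, which inflates $\epsilon_0$ to $C\sqrt{(k\log n)/m}$ (an extra $\sqrt k$); the support step then yields $\|\bar{\bx}-\bar{\bx}_{\calS_{\bar{\bx}}}\|_2\lesssim(k^3\log n/m)^{1/4}$, while the eigenvector and norm errors stay $\le(k^3\log n/m)^{1/4}$ once $m\ge C_3k^3\log n$, and the failure probability is $\exp(O(k\log n))\cdot(\cdots)\le\exp(-c_5k\log(en/k))$. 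To pass from $\calN$ to an arbitrary $\bx$ with nearest net point $\bar{\bx}$ (so $\dist(\bx,\bar{\bx})\le 2\gamma$ and $\supp(\bx)\cup\supp(\bar{\bx})$ is $2k$-sparse), I would invoke Event $E_s$ of Theorem \ref{thm:local_embed} with $\calK=\Sigma^n_{2k}\cap\mathbbm{A}_{\alpha,\beta}$ to bound the number of indices on which $\by(\bx)$ and $\by(\bar{\bx})$ differ by $\lesssim rm$ for a suitably small $r$, and then control the induced change in $\diag(\hat{\bS}_{\bx})$, in $\sup_{|\calT|\le 2k}\|[\hat{\bS}_{\bx}]_{\calT}\|_{\rm op}$, and in $\hat{\lambda}_{\bx}$ using a standard heavy-tail bound on $\sup_{i,\,|\calT|\le 2k}\|(\ba_i)_{\calT}\|_2^2$; since these perturbations can be made $o((k^3\log n/m)^{1/4})$, the support estimate $\calS_{\bx}$ does not jump, and $\hat{\bv}_{\rm SI},\hat{\lambda}_{\rm SI}$ for $\bx$ inherit the bounds proved at $\bar{\bx}$. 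Adding $\dist(\bx,\bar{\bx})\le 2\gamma$ closes the argument.

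The main obstacle, I expect, is two-fold, and both parts are about bookkeeping of the $k$-dependence rather than a single hard inequality. First, the support-selection step: one must argue that the coordinates of $\bx$ missed by $\calS_{\bx}$ are \emph{forced} to be small by the $\ell_\infty$ diagonal fluctuation, and then sum their squared magnitudes --- this is exactly what converts an $O(\sqrt{\log n/m})$ entrywise error into the $(k^2\log n/m)^{1/4}$ (resp.\ $(k^3\log n/m)^{1/4}$) rate, and pinning down the exponent $1/4$ and the precise power of $k$ is the crux. Second, the net-transfer argument in the uniform case is delicate because the estimator depends on $\bx$ only through the discrete bit vector $\by(\bx)$, which changes discontinuously: one must carefully marry the phaseless-tessellation count of flipped bits (from Theorem \ref{thm:local_embed}) with tail control of the Gaussian rows to show that moving $\bx$ inside a tessellation cell perturbs $\hat{\bS}_{\bx}$, $\hat{\lambda}_{\bx}$, and hence $\calS_{\bx}$ and $\hat{\bx}_{\rm SI}$, negligibly; a secondary subtlety there is verifying the estimated support does not change, which again reduces to keeping the diagonal perturbation below $\epsilon_0$.
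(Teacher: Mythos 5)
Your proposal tracks the paper's argument quite closely. The deterministic decomposition of $\dist(\hat{\bx}_{\rm SI},\bx)$ into three terms (sparse operator-norm deviation, support-miss error $\|\bx-\bx_{\calS_\bx}\|_2$, and the deviation of $\hat{\lambda}_\bx$), the population computation $\mathbbm{E}\hat{\bS}_\bx=a_\bx\bx\bx^\top+b_\bx\bI_n$, the displacement argument that bounds $\|\bx-\bx_{\calS_\bx}\|_2^2\lesssim k\,\epsilon_0$ from the $\ell_\infty$ diagonal bound, the Davis--Kahan step on the $\calS_\bx$-restricted matrices, and the Lipschitz treatment of $\Phi^{-1}$ are all the paper's ingredients (Lemmas on $\mathbbm{E}\hat{\bS}_\bx$, sparse operator-norm concentration, support-estimate error, and $\hat{\lambda}_\bx$ concentration). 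The fixed-$\bx$ rate $(k^2\log n/m)^{1/4}$ and the uniform rate $(k^3\log n/m)^{1/4}$ emerge exactly as you describe. Two remarks.

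First, a small point that you flag as an assumption but that the paper actually settles unconditionally: the paper proves $\inf_{\bx\in\mathbbm{A}_{\alpha,\beta}}a_\bx\ge c_0>0$ by computing $a_\bx\propto\mathbbm{E}[(g^2-1)\mathbbm{1}(|g|>w)]=-\mathbbm{E}[(g^2-1)\mathbbm{1}(|g|\le w)]>0$ for all $w>0$, so no ratio condition analogous to \eqref{eq:fixed_ratio} is needed here; the leading eigenvalue is automatically the signal direction.

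Second, and more substantively, the last sentence of your uniform-bound transfer step — ``the support estimate $\calS_\bx$ does not jump'' — is both unnecessary and, as stated, false. Hard thresholding is discontinuous: two diagonal entries of $\hat{\bS}_\bx$ can be arbitrarily close, so an $o(1)$ perturbation can flip which one is retained, and $\calS_\bx\ne\calS_{\bar\bx}$ in general. Fortunately you never need $\calS_\bx$ to be stable. The right invariant to transfer is the $\ell_\infty$ diagonal concentration $\max_j|(\hat{\bS}_\bx)_{jj}-(a_\bx x_j^2+b_\bx)|\le\epsilon_0$ holding uniformly over $\bx$; once that is in hand (which is what the tessellation argument and the heavy-tail control buy you, and what the paper's uniform version of Lemma \ref{lem:con_Sx_high} encodes), the displacement argument bounds $\|\bx-\bx_{\calS_\bx}\|_2^2\lesssim k\epsilon_0$ for each $\bx$, whatever $\calS_\bx$ the algorithm chooses for that $\bx$. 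You should drop the stability-of-$\calS_\bx$ claim and instead phrase the transfer as establishing the uniform concentration bounds themselves, after which the deterministic error decomposition applies pointwise; this is also how the paper organizes the argument, by establishing uniform versions of the three concentration lemmas and then invoking the deterministic bound once.
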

\begin{rem}
We are now able  to  combine Theorem   \ref{thm:pgd_high} and Theorem \ref{thm:SI_high} to establish a near-optimal guarantee for Algorithm \ref{alg:pgd_high}  initialized by  Algorithm \ref{alg:SI_high}. In particular, for recovering a fixed $k$-sparse signal, the combined algorithm attains the near-optimal error rate $\calO(\frac{k}{m}\log(\frac{mn}{k^2})\log(\frac{m}{k}))$ under the sample complexity $m\ge Ck^2\log(n)\log^2(\frac{m}{k})$ for some sufficiently large constant $C$. It is also worth pointing out that  a higher sample complexity $\calO(k^3\log(n)\log^2(\frac{m}{k}))$ is needed to achieve uniform recovery of all $k$-sparse signals in $\mathbbm{A}_{\alpha,\beta}$. %as dictated by $(\frac{k^3\log n}{m})^{1/4}\lesssim(\log\frac{m}{k})^{-1/2}$.
\end{rem}

\begin{rem}\label{rem:k2}
Note that the sample complexity $\tilde{\calO}(k^2)$ for 1-bit sparse phase retrieval is sub-optimal in light of the information-theoretic bound. As already mentioned in Section \ref{sec:intro}, similar gap between computationally efficient algorithm and information-theoretic limit is widely observed for sparse phase retrieval, whose closing remains open (see, e.g., \cite{li2013sparse,wang2017sparse,cai2016optimal,oymak2015simultaneously,hand2024compressive,jaganathan2017sparse,jaganathan2013sparse}). 
\end{rem}
%\begin{rem}
%We further note that the  non-uniform error rate in Theorem \ref{thm:SI_high} can be improved to the faster rate $\calO((\frac{k}{m}\log(\frac{en}{k}))^{1/2})$, provided that $\bx$ satisfies $ck^{-1/2}\le |x_i|\le Ck^{-1/2}$ for any $i\in\supp(\bx)$ and some given $C\ge c>0$ (e.g., \cite{wang2017sparse}). The reason is that one can prove $\calS_{\bx}=\supp(\bx)$ (w.h.p.) and hence the support estimate error in Lemma \ref{lem:supp_est} no longer plays a role in the final error rate.     
%\end{rem}

\section{Proof Sketches: Information-Theoretic Bounds}\label{sec:ITproof}
In this section, we establish near-matching upper bounds and lower bounds. We first study phaseless hyperplane tessellation that is closely related to (\ref{eq:hdm}). %Let us begin with the definitions for two points being separated by a phaseless hyperplane and the associated probability. 
Given $\bu,\bv\in \mathbb{R}^n$, we say a phaseless hyperplane $\calH_{|\ba|}$ separates $\bu$ and $\bv$ if either $(\bu,\bv)\in \calH_{|\ba|}^+\times  \calH_{|\ba|}^-$ or $(\bu,\bv)\in \calH_{|\ba|}^-\times  \calH_{|\ba|}^+ $ holds.
 For $\bu,\bv \in \mathbb{R}^n$ and $\ba\sim \calN(0,\bI_n)$, we define the   probability of the separation 
	as 
	\begin{align*}
		\sfP _{\bu,\bv} = \mathbbm{P}\big(\calH_{|\ba|}\text{ separates }\bu\text{ and }\bv\big). 
	\end{align*}
The following lemma is the most elementary ingredient that supports our Theorem \ref{thm:local_embed}. It states that $\sfP_{\bu,\bv}$ is order-wise equivalent to $\dist(\bu,\bv)$ for any $(\bu,\bv)$ in  $\mathbbm{A}_{\alpha,\beta}$. %with pre-specified $\beta\ge\alpha>0$.  

%The most elementary ingredient that supports our Theorem \ref{thm:local_embed} (indeed, the whole paper) is the lemma below, which shows that the separation probability $\mathsf{P}_{\bu,\bv}$ is order-wise equivalent to $\dist(\bu,\bv)$ over a given annulus.  

\begin{lem}
	[$\mathsf{P}_{\bu,\bv}\asymp \dist(\bu,\bv)$ over $\mathbbm{A}_{\alpha,\beta}$] \label{lem:Puv} There exist some constants $c,C$ depending only on the given $\beta\ge\alpha>0$, such that $c\dist(\bu,\bv)\le   \sfP_{\bu,\bv}\le C\dist(\bu,\bv),~\forall\bu,\bv\in \mathbbm{A}_{\alpha,\beta}.$
\end{lem}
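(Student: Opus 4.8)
The plan is to compute $\sfP_{\bu,\bv}$ in terms of the joint distribution of the pair $(\ba^\top\bu,\ba^\top\bv)$, which is a centered bivariate Gaussian with covariance $\Sigma = \begin{pmatrix}\|\bu\|_2^2 & \langle\bu,\bv\rangle\\ \langle\bu,\bv\rangle & \|\bv\|_2^2\end{pmatrix}$, and then show that the resulting expression is sandwiched between two multiples of $\dist(\bu,\bv)$ uniformly over $\mathbbm{A}_{\alpha,\beta}$. Writing $g=\ba^\top\bu$, $h=\ba^\top\bv$, the event that $\calH_{|\ba|}$ separates $\bu$ and $\bv$ is exactly $\{|g|\ge\tau,\ |h|<\tau\}\cup\{|g|<\tau,\ |h|\ge\tau\}$, so $\sfP_{\bu,\bv} = \mathbbm{P}(|g|\ge\tau) + \mathbbm{P}(|h|\ge\tau) - 2\mathbbm{P}(|g|\ge\tau,\ |h|\ge\tau)$. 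Since $|g|\ge\tau$ and $|h|\ge\tau$ are the ``outer'' sides, it is cleaner to use the complementary form $\sfP_{\bu,\bv} = \mathbbm{P}(|g|<\tau) + \mathbbm{P}(|h|<\tau) - 2\mathbbm{P}(|g|<\tau,\ |h|<\tau)$, i.e. $\sfP_{\bu,\bv} = \mathbbm{P}\big((|g|,|h|)\in S_1\times S_2\big)+\mathbbm{P}\big((|g|,|h|)\in S_2\times S_1\big)$ with $S_1=(-\tau,\tau)$, $S_2=\mathbb{R}\setminus(-\tau,\tau)$.

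\textbf{Step 1 (reduce to a 2D Gaussian integral).} I would parametrize by $a=\|\bu\|_2\in[\alpha,\beta]$, $b=\|\bv\|_2\in[\alpha,\beta]$, and the angle $\theta\in[0,\pi]$ between $\bu$ and $\bv$ (the relevant quantity being $|\cos\theta|$ after accounting for the phaseless symmetry). Because $\bu$ and $-\bu$ give the same phaseless measurement, $\sfP_{\bu,\bv}$ depends on $\theta$ only through $\delta:=\dist(\bu,\bv)=\min\{\|\bu-\bv\|_2,\|\bu+\bv\|_2\}$, which for fixed $a,b$ is an increasing function of the ``phaseless angle'' $\theta':=\min\{\theta,\pi-\theta\}\in[0,\pi/2]$, with $\delta\asymp_{\alpha,\beta}\sqrt{(a-b)^2 + ab\,\theta'^2}$ for small $\theta'$ (more precisely $\delta^2 = a^2+b^2-2ab|\cos\theta|$). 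So it suffices to prove $\sfP\asymp \delta$ as $(a,b,\theta')$ ranges over $[\alpha,\beta]^2\times[0,\pi/2]$.

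\textbf{Step 2 (the easy regime: $\delta$ bounded below).} When $\delta\ge\delta_0$ for a fixed $\delta_0>0$, both $\sfP_{\bu,\bv}$ and $\delta$ are bounded above (trivially $\sfP\le 1$, $\delta\le 2\beta$) and I claim both are bounded \emph{below} by positive constants: $\delta\ge\delta_0$ by assumption, and $\sfP_{\bu,\bv}\ge c(\delta_0)>0$ because a nondegenerate bivariate Gaussian assigns positive mass to the set $S_1\times S_2$, and this mass is continuous in $(a,b,\theta')$ over the compact region where $\delta\ge\delta_0$ (one must check the boundary cases $a=b,\ \theta'=0$ are excluded there, or handled by the fact that $\delta\ge\delta_0$ forces $(a,b)$ or $\theta'$ away from the degenerate locus). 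A compactness/continuity argument then gives uniform two-sided bounds. This handles everything except a shrinking neighborhood of the locus $\{\delta=0\}$, i.e. $\{a=b,\ \theta'=0\}$, i.e. $\bu=\bv$.

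\textbf{Step 3 (the delicate regime: $\delta\to 0$).} Here I expand. Write $g\sim\calN(0,a^2)$, $h\sim\calN(0,b^2)$ with correlation $\rho=|\cos\theta|$ (up to sign, irrelevant for $|g|,|h|$), and note $\bv = \bu + \be$ with $\|\be\|_2 = \delta$ (choosing the sign so $\|\bu-\bv\|_2=\delta$). Then $h = g + \ba^\top\be$ where $\ba^\top\be\sim\calN(0,\delta^2)$; conditionally on $g$, $h$ is Gaussian with mean $\approx g$ and variance $O(\delta^2)$. The separation event requires $|g|<\tau$ and $|h|\ge\tau$ (or vice versa), which — since $h$ is within $O(\delta)$ of $g$ with high probability — forces $g$ to lie within $O(\delta)$ of $\pm\tau$. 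The probability that $g\in[\tau-O(\delta),\tau+O(\delta)]$ is $\asymp\delta$ because the density of $g$ near $\pm\tau$ is bounded above and below (here is where $\tau>0$ and $\|\bu\|_2\in[\alpha,\beta]$ enter: the Gaussian density $\frac{1}{a\sqrt{2\pi}}e^{-\tau^2/2a^2}$ evaluated near $\pm\tau$ is a positive constant bounded away from $0$ and $\infty$ over $a\in[\alpha,\beta]$). Conditioned on $g$ being that close to $\tau$, the event $|h|\ge\tau$ has probability bounded away from $0$ and $1$ (it's roughly a coin flip on which side of $\tau$ the $O(\delta)$ perturbation lands), giving both the upper and lower bound $\sfP_{\bu,\bv}\asymp\delta$. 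Making this rigorous means carefully writing $\sfP_{\bu,\bv}$ as an integral over $g\in(-\tau,\tau)$ of $\mathbbm{P}(|h|\ge\tau\mid g)$ plus the symmetric term, bounding $\mathbbm{P}(|h|\ge\tau\mid g)$ above by something like $\exp(-c(\tau-|g|)^2/\delta^2)$ for the upper bound on $\sfP$, and below by a constant on the strip $|g|\in[\tau-\delta,\tau-\delta/2]$ for the lower bound.

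\textbf{Main obstacle.} The routine parts are Steps 1–2; the real work is Step 3, and within it the \emph{uniformity} — getting constants $c,C$ that do not degenerate as $\rho\to 1$ (i.e. $\theta'\to 0$) while simultaneously $a\ne b$, or as $a\to b$ while $\theta'\not\to 0$, since both of these are ways for $\delta$ to be small or not, and the geometry of the region $\{|g|<\tau,|h|\ge\tau\}$ under the degenerating covariance must be controlled carefully. The cleanest route is probably to change variables to $g$ and $w:=h-g$ (with $w$ independent-ish of $g$ up to the correlation structure, $w$ having variance exactly $\delta^2$ when the sign is chosen so $\|\bu-\bv\|_2=\delta$), reducing everything to a one-dimensional integral $\int_{-\tau}^{\tau}\phi_a(g)\,\mathbbm{P}(|g+w|\ge\tau)\,dg$ with $w\sim\calN(0,\delta^2)$, after which the two-sided bound $\asymp\delta$ follows by splitting the integral near $g=\pm\tau$ versus away from it, and all constants depend only on $\alpha,\beta,\tau$.
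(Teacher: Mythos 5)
Your high-level strategy (reduce $\sfP_{\bu,\bv}$ to a bivariate-Gaussian integral, dispose of the large-distance regime by compactness, then expand near $\dist(\bu,\bv)\to0$) is in the same spirit as the paper's, but the lower bound in your Step 3 has a genuine gap, and it is precisely at the degenerate locus you flag as the ``main obstacle.''

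The issue is that the conditional variance of $h=\ba^\top\bv$ given $g=\ba^\top\bu$ is $\delta^2 - \langle\bu,\be\rangle^2/\|\bu\|_2^2$ with $\be=\bv-\bu$, and this vanishes when $\be\parallel\bu$, i.e.\ when $\bv=\lambda\bu$. In that case $h=\lambda g$ deterministically, so $\mathbbm{P}(|h|\ge\tau\mid g)$ is not a ``coin flip'' but an indicator $\mathbbm{1}(\lambda|g|\ge\tau)$, and the separation set is the strip $\{\tau/\lambda\le|g|<\tau\}$ of width $\approx\tau\delta/\|\bu\|_2$ — which does \emph{not} contain your proposed strip $\{|g|\in[\tau-\delta,\tau-\delta/2]\}$ whenever $\|\bu\|_2>2\tau$ (in particular for all $\bu$ if $\alpha>2\tau$). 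Your conditional probability is then identically zero on your chosen strip, and the lower bound fails. More generally, as $\rho\to1$ with $\|\bu\|_2\ne\|\bv\|_2$ the conditional distribution of $h$ given $g$ develops a nonnegligible deterministic drift ($\mathbbm{E}[h\mid g]-g=\langle\bu,\be\rangle g/\|\bu\|_2^2$, which can be $\Theta(\delta)$) while the randomness shrinks, so the strip where separation occurs must be allowed to move and scale with the ratio $\tau/\|\bu\|_2$; a fixed window $[\tau-\delta,\tau-\delta/2]$ cannot be used uniformly.

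The paper sidesteps this by splitting on $\rho=\bu^\top\bv/(\|\bu\|_2\|\bv\|_2)$: the case $\rho=1$ (your degenerate locus) is handled by an explicit one-line computation (eq.~(A.20)), while for $\rho<1$ the single Gaussian integral is split at $z=\tau/\|\bu\|_2$ into two pieces $\sfI_{\bu,\bv}$ and $\sfJ_{\bu,\bv}$, and each is matched to a \emph{component} of the distance via $\sfI_{\bu,\bv}\asymp\dist_{\rm n}(\bu,\bv)$ and $\sfJ_{\bu,\bv}\asymp\dist_{\rm d}(\bu,\bv)$, finishing with Lemma~\ref{lem:norm_equa} ($\dist\asymp\dist_{\rm d}+\dist_{\rm n}$). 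The decomposition into a ``norm'' piece and a ``direction'' piece is exactly what absorbs the two independent ways $\delta$ can be small ($a\to b$ versus $\theta'\to0$) that you identify as the source of the difficulty; if you want to rescue your Step~3, the cleanest fix is to make that decomposition explicit rather than chase a single adaptive strip.
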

\begin{proof}[A Proof Sketch]
	We note that many results on 1-bit compressed sensing (e.g., \cite{jacques2013robust,oymak2015near,plan2014dimension}) are built upon $		\mathbbm{P}_{\ba\sim\calN(0,\bI_n)}\left(\sign(\ba^\top\bu)\ne \sign(\ba^\top\bv)\right) = \frac{\arccos (\bu^\top\bv)}{\pi}$,
	a well-documented fact from, e.g., \cite{goemans1995improved}. Nonetheless, we are not aware of prior work implying Lemma \ref{lem:Puv}, thus  an independent proof is needed. We make use of the equivalence $\dist(\bu,\bv)\asymp \dist_{\rm d}(\bu,\bv)+\dist_{\rm n}(\bu,\bv)$ from Lemma \ref{lem:norm_equa}, %\begin{align}
	%  \dist(\bu,\bv)\asymp \dist_{\rm d}(\bu,\bv)+\dist_{\rm n}(\bu,\bv).
	%\end{align}  
	and then 
	decompose $\sfP_{\bu,\bv}$ into the sum of two integrals $\sfI_{\bu,\bv} + \sfJ_{\bu,\bv}$ in (\ref{eq:PtoIJ}).  
	%\begin{align}
	%   \sfP_{\bu,\bv} = \sfI_{\bu,\bv} + \sfJ_{\bu,\bv}.
	%\end{align}
	Then, we separately show $\sfI_{\bu,\bv}\asymp \dist_{\rm n}(\bu,\bv)$ and $\sfJ_{\bu,\bv}\asymp \dist_{\rm d}(\bu,\bv)$ by establishing two-sided bounds.  The complete proof can be found in Appendix \ref{app:sepa_prob_equ}. 
	%while the order-wise equivalence (\ref{eq:norm_equaa}) suffices for our subsequent development. 
\end{proof}

Armed with Lemma \ref{lem:Puv}, the proof of Theorem \ref{thm:local_embed} follows similar courses in \cite{oymak2015near,dirksen2021non}, except that we work with $\dist(\cdot,\cdot)$ rather than $\|\bu-\bv\|_2$ and hence need to adjust certain arguments with cautiousness. We relegate the complete proof to Appendix \ref{app:prove_thm1}. %{\color{blue}[TODO: Continue from here---where to put the following pages?]}

%\subsection{Upper Bound (Theorem \ref{thm:hdm_gua})}

\paragraph{Upper bound:} Our recovery guarantee for (\ref{eq:hdm}) can  be proved by Event $E_l$ in Theorem \ref{thm:local_embed}, which essentially states that large $\dist (\bu,\bv)$   implies large $d_H (\sign(|\bA\bu|-\tau), \sign(|\bA\bv|-\tau))$.

%, this precludes   $\hat{\bx}_{\rm hdm}$ from deviating too much from $\bx$ since its binary measurements best match $\hat{\by}$.
\begin{proof}[Proof of Theorem \ref{thm:hdm_gua}]
	To get started, under (\ref{eq:gua_sample}), Theorem \ref{thm:local_embed} with $r = \epsilon$ gives the following event with the promised probability:
	\begin{align}\label{eq:large2large}
		  d_H(\sign(|\bA\bu|-\tau),\sign(|\bA\bv|-\tau)) \geq c m \dist(\bu,\bv),\quad \forall\bu,\bv\in \calK~\text{obeying}~\dist(\bu,\bv)\geq 2\epsilon
	\end{align}
	%$\dist(\bu,\bv)\geq 2\epsilon$ implies $d_H(\sign(|\bA\bu|-\tau),\sign(|\bA\bv|-\tau)) \geq c m \dist(\bu,\bv)$ 
	for some $c>0$. Note that we only need to consider $\dist(\hat{\bx}_{\rm hdm},\bx)>2\epsilon$; we are done otherwise. Then, applying (\ref{eq:large2large}) to $(\bu,\bv)=(\hat{\bx}_{\rm hdm},\bx)$ gives  $$d_H\big(\sign(|\bA\hat{\bx}_{\rm hdm}|-\tau),\sign(|\bA\bx|-\tau)\big) \geq cm \dist(\hat{\bx}_{\rm hdm},\bx).$$ 
	%Specifically, provided that $\dist(\hat{\bx}_{\rm hdm},\bx)>2\epsilon$, we have 
	Next, we use the assumption $d_H(\hat{\by},\sign(|\bA\bx|-\tau))\le \zeta m$, along with triangle inequality and the optimality of $\hat{\bx}_{\rm hdm}$, to obtain 
    \begin{align*}
        &d_H\big(\sign(|\bA\hat{\bx}_{\rm hdm}|-\tau),\sign(|\bA\bx|-\tau)\big)\\&\leq  d_H\big(\sign(|\bA\hat{\bx}_{\rm hdm}|-\tau),\hat{\by}\big)+d_H\big(\sign(|\bA\bx|-\tau),\hat{\by}\big)\\&\leq  2d_H\big(\sign(|\bA\bx|-\tau),\hat{\by}\big)\\
        &\leq 2\zeta m.
    \end{align*}
    Taken collectively, we obtain $\dist(\hat{\bx}_{\rm hdm},\bx) \leq \frac{2\zeta m}{cm}=\frac{2\zeta}{c}$ and arrive at the desired claim. 
	%Combining with the case that $\dist(\hat{\bx}_{\rm hdm},\bx)\le 2\epsilon$, we always have $\dist(\hat{\bx}_{\rm hdm},\bx)\le 2\epsilon+\frac{2\zeta}{c}$, as claimed.    
\end{proof}

\begin{rem}
By analyzing an intractable program, Theorem \ref{thm:hdm_gua} provides an achievable reconstruction accuracy for 1-bit phase retrieval. Specifically, a measurement number  of order $\calO\big(\frac{\omega^2(\calK_{(3\epsilon'/2)})}{\epsilon^3} +\frac{\scrH(\calK,\epsilon')}{\epsilon}\big)$ is information theoretically sufficient for achieving a uniform accuracy of $\epsilon$.  
\end{rem}

%It is worth mentioning that the defining property of structured signal set is also satisfied by the recently popular generative prior \cite{bora2017compressed}, provided that $\omega^2(\calK)$ in (\ref{eq:structure_set}) is replaced by the latent dimension of the generative model. As a result, our Theorem \ref{thm:hdm_gua} also implies $\frac{1}{m}$ error rate for 1-bit phase retrieval using generative prior.     
%\subsection{Lower Bound (Theorem \ref{thm:lower})}

\paragraph{Lower bound:} We move on to the main techniques for proving the lower bound in Theorem \ref{thm:lower}. Without loss of generality, we can %simply work with $\mathbbm{A}_{\alpha,\beta}=\mathbb{S}^{n-1}$, since the general setting $\beta>\alpha$ can only make the reconstruction harder, and $\alpha =\beta \ne 1$ can be reduced to the considered setting by writing $\by=\sign(|\bA\bx|-\tau)=\sign(|(\alpha \bA)(\alpha^{-1}\bx)|-\tau)$. Moreover, we can 
specify the $\nu$-dimensional subspace $\calV=\calV_0:=\{\bu=(u_1,...,u_{\nu},u_{\nu+1},...,u_n)^\top:u_{i}=0,~\nu+1\le i\le n\}$. %if this is not the case, we can pick an orthogonal matrix $\bP$ such that $\bP\calV=\calV_0$, then  write $\by = \sign(|\bA\bx|-\tau)=\sign(|(\bA\bP^{-1})(\bP\bx)|-\tau)$ and note that  $\bP\bx \in \bP(\calV\cap \mathbbm{A}_{\alpha,\beta})= \calV_0\cap \mathbbm{A}_{\alpha,\beta}$. 
The lower bound is proved by   a counting argument. 
  First we seek an upper bound $\breve{U}$ on the number of quantized measurement vectors. 
  That is, we need to bound  $|\calM_{\bA}|$ where $\calM_{\bA} =  \{\sign(|\bA\bx|-\tau):\bx\in \calV_0 \cap \mathbbm{A}_{\alpha,\beta}\}.$ 
To address this, we first {\it linearize} the phaseless measurements by considering the set of measurements
	\begin{align}\label{eq:tildeM}
		\widetilde{\calM}_{\bA}& = \left\{\begin{bmatrix}
			\sign(\bA\bx+\tau)\\
			\sign(\bA\bx-\tau)
		\end{bmatrix}:\bx\in \calV_0\right\} =\big\{\sign(\widetilde{\bA}\bx + \widetilde{\btau}):\bx\in\calV_0\big\},~\text{where }\widetilde{\bA}=\begin{bsmallmatrix}
			\bA\\\bA
		\end{bsmallmatrix},~\widetilde{\btau} = \begin{bsmallmatrix}
			\tau\mathbf{1}\\
			-\tau \mathbf{1}
		\end{bsmallmatrix}.
	\end{align} 
It obviously holds that $|\widetilde{\calM}_{\bA}|\ge |\calM_{\bA}|$, so it suffices to bound $|\widetilde{\calM}_{\bA}|$.
Since $\widetilde{\bA}\calV_0+\widetilde{\btau}$ is a $\nu$-dimensional affine space in $\mathbb{R}^{2m}$ (here, affine space refers to a linear subspace with possibly non-zero shift), we   seek to bound the maximal number of orthants intersected by an affine space. Specifically the lemma below gives $|\calM_{\bA}|\le |\widetilde{\calM}_{\bA}|\le (\frac{2em}{\nu})^\nu$. 

\begin{lem}\label{lem:intersect_affine}
	Let $I_{m,k}$ be the maximal number of orthants in $\mathbb{R}^m$ intersected by a $k$-dimensional affine space, then we have  $I_{m,k}\le(\frac{em}{k})^k$. 
\end{lem}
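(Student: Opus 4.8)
The plan is to bound $I_{m,k}$, the maximal number of orthants in $\mathbb{R}^m$ that a $k$-dimensional affine subspace can meet, by a counting argument based on hyperplane arrangements. The key observation is that the orthants of $\mathbb{R}^m$ are precisely the full-dimensional cells cut out by the $m$ coordinate hyperplanes $\{z_j = 0\}$, $j = 1, \ldots, m$. Intersecting with a $k$-dimensional affine subspace $\calV \subset \mathbb{R}^m$, each coordinate hyperplane either contains $\calV$ (a degenerate case we can handle or perturb away) or cuts $\calV$ in an affine subspace of dimension $k-1$ inside $\calV \cong \mathbb{R}^k$. So the number of orthants met by $\calV$ is at most the number of regions into which $m$ hyperplanes partition $\mathbb{R}^k$.

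The first step is therefore to invoke (or re-derive) the classical bound on the number of regions of a hyperplane arrangement: $m$ hyperplanes in $\mathbb{R}^k$ divide it into at most $\sum_{i=0}^{k}\binom{m}{i}$ regions (Schläfli's formula; the maximum is attained in general position, and a non-generic arrangement only gives fewer regions). The second step is to bound this sum by $(em/k)^k$. Using $\binom{m}{i}\le \binom{m}{k}$ for $i\le k$ (valid when $m\ge 2k$, with the small-$m$ cases checked directly) gives $\sum_{i=0}^k\binom{m}{i}\le (k+1)\binom{m}{k}$, and then the standard estimate $\binom{m}{k}\le (em/k)^k$ together with absorbing the factor $k+1$ — or, more cleanly, applying $\sum_{i=0}^{k}\binom{m}{i} \le (em/k)^k$ directly, which is a well-known consequence of the binomial theorem: for $0 < t \le 1$, $\sum_{i=0}^k \binom{m}{i} \le t^{-k}\sum_{i=0}^m \binom{m}{i} t^i = t^{-k}(1+t)^m \le t^{-k} e^{tm}$, and optimizing with $t = k/m$ yields $(em/k)^k$. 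This finishes the proof.

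The only genuinely delicate point is the degenerate case where some coordinate hyperplanes contain the affine subspace $\calV$, or where the induced arrangement on $\calV$ is not in general position. Neither actually causes trouble: if a coordinate $z_j$ vanishes identically on $\calV$, then $\calV$ meets no open orthant at all (every point of $\calV$ lies on a coordinate hyperplane), so the count is $0$; and for the remaining coordinates, a non-generic arrangement produces no more regions than a generic one, so the bound $\sum_{i=0}^k \binom{m}{i}$ still applies. Thus the main obstacle is merely bookkeeping around degeneracies rather than any real mathematical difficulty, and the heart of the argument is the reduction ``orthants met by an affine $k$-flat $\le$ regions of an arrangement of $m$ hyperplanes in $\mathbb{R}^k$'' combined with Schläfli's region-count bound.
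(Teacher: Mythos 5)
Your proposal is correct and follows essentially the same route as the paper: both reduce the orthant count to the number of regions cut out by $m$ affine hyperplanes in $\mathbb{R}^k$, then bound that by Schl\"afli's $\sum_{i=0}^{k}\binom{m}{i}$, and finally bound the sum by $(em/k)^k$. The only presentation-level differences are that the paper re-derives the Schl\"afli bound from scratch via the recursion $I_{m,k}\le I_{m-1,k}+I_{m-1,k-1}$ (together with the base cases $I_{1,k}=2$, $I_{m,1}=m+1$) and then cites a reference for $\sum_{i\le k}\binom{m}{i}\le(em/k)^k$, whereas you invoke Schl\"afli as a known result and instead give the self-contained generating-function proof of the final estimate by taking $t=k/m$ in $\sum_{i\le k}\binom{m}{i}\le t^{-k}(1+t)^m\le t^{-k}e^{tm}$; both are standard and the choices are a wash.
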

\begin{proof}
	The proof can be found in  
	Appendix \ref{appother}.
\end{proof}

    We then construct a packing set contained in $\calV_0\cap \mathbbm{A}_{\alpha,\beta}$ with cardinality larger than $\breve{L}$, such that any two different points  $\bu$ and $\bv$ in this set satisfy $\dist(\bu,\bv)\ge \epsilon$. This can be achieved by a standard volume argument. Therefore, the condition  $\breve{U}\ge\breve{L}$
is necessary for any algorithm to reconstruct all $\bx\in \calV_0\cap \mathbbm{A}_{\alpha,\beta}$ to an accuracy of $\frac{\epsilon}{2}$. This condition immediately gives the lower bound. 

\section{Proof Sketches: Efficient Algorithms}\label{sec:pgdproof}

We introduce a  generic notation 
\begin{align}\label{eq:huv}
	\bh(\bu,\bv) := \frac{1}{2m}\sum_{i=1}^m \big[\sign(|\ba_i^\top\bu|-\tau)-\sign(|\ba_i^\top\bv|-\tau)\big] \sign(\ba_i^\top\bu)\ba_i 
\end{align}
for the gradient $\partial\calL(\bu)$ in (\ref{eq:subgra}). With respect to a specific underlying signal $\bx$, we have $\partial \calL(\bu)=\bh(\bu,\bx)$. 

\subsection{PLL-AIC Implies Convergence}
Our convergence guarantees are built upon a deterministic property on the interplay among several parties: the Gaussian matrix $\bA$, the quantization threshold $\tau$, some cone $\calC$ ($\calC=\mathbb{R}^n$ for GD-1bPR, $\calC=\Sigma^n_k$ for BIHT-1bSPR) and the step size $\eta$. Let us begin with   the formal definition of this property. 
\begin{definition}
	[Phaseless Local Approximate Invertibility Condition (PLL-AIC)] \label{def:plaic}Given $\beta_1\ge\alpha_1>0$ and $\tau>0$, a measurement matrix $\bA=[\ba_1^\top,\cdots,\ba_m^\top]^\top\in \mathbb{R}^{m\times n}$, a cone $\calC$, a step size $\eta$, and certain non-negative scalars $\bdelta=(\delta_1,\delta_2,\delta_3,\delta_4)^\top$, we say $(\bA,\tau,\calC,\eta)$ respects $(\alpha_1,\beta_1,\bdelta)$-PLL-AIC if 
	\begin{subequations}\label{eq:delta1_low}
		\begin{align}
			\|\calP_{\calC_-}(\bu-\bv-\eta\cdot \bh(\bu,\bv))\|_2 \le \delta_1\|\bu-\bv\|_2+\sqrt{\delta_2\cdot\|\bu-\bv\|_2}+ \delta_3,&\\
			\forall \bu,\bv\in\calC\cap\mathbbm{A}_{\alpha_1,\beta_1}\text{ obeying }\|\bu-\bv\|_2\le \delta_4,&
		\end{align}
	\end{subequations}
	where $\bh(\bu,\bv)$ is defined through $\bA$ and $\tau$ in (\ref{eq:huv}).
\end{definition}

%We mention that PLL-AIC is a uniform local property, in that it {\it universally} applies to $\bu,\bv$ in $\mathbbm{A}_{\alpha_1,\beta_1}$ that are {\it not so far apart}. 
Despite the simple ideas behind Algorithm \ref{alg:pgd}--\ref{alg:pgd_high}, the theoretical analysis remains highly challenging due to the presence of  multiple unusual bits in the gradient $\bh(\bu,\bv)$. PLL-AIC will   prove useful in that it  implies the convergence of {GD-1bPR} to certain error rate. %This  phenomenon is precisely characterized in the following lemma. 

%Without loss of generality, we assume here and hereafter that the warm initialization $\bx^{(0)}$ is closer to $\bx$ than $-\bx$,  which gives $\dist(\bx^{(0)},\bx)=\|\bx^{(0)}-\bx\|_2$ and   allows us to simply work with $\{\|\bx^{(t)}-\bx\|_2\}_{t\ge 0}$. 

%All the subsequent attempts are built upon a deterministic analysis of a single iteration:
%\begin{subequations}\label{eq:iter_low}
%   \begin{gather}
	%  \tilde{\bx}^{(t)}=\bx ^{(t-1)} -\eta\cdot \partial \calL(\bx^{(t-1)});\\
	% \bx^{(t)} = \calP_{\mathbbm{A}_{\alpha,\beta}}(\tilde{\bx}^{(t)}). 
	%\end{gather}
	%\end{subequations}
	% 
	\begin{lem}[PLL-AIC Implies Convergence of GD-1bPR]
		\label{lem:low_aic2err} Suppose that $(\bA,\tau,\eta,\mathbb{R}^n)$ respects $(\frac{\alpha}{2},2\beta,\bdelta=(\delta_1,\delta_2,\delta_3,\delta_4)^\top)$-PLL-AIC for some $\delta_1\in[0,1)$,
		with the involved $\{\delta_i\}_{i=1}^4$ obeying \begin{align*}
			E(\delta_1,\delta_2,\delta_3):=\max\Big\{\frac{16\delta_2}{(1-\delta_1)^2},\frac{4\delta_3}{1-\delta_1}\Big\}<\delta_4<\min\Big\{\frac{\alpha}{2},1\Big\}.
		\end{align*}
		Then for any $\bx\in \mathbbm{A}_{\alpha,\beta}$,  Algorithm \ref{alg:pgd} is capable of  entering a neighbourhood that surrounds
		$\pm\bm{x}$ \begin{align}
			\mathbb{B}_2^n\big(\bx;E(\delta_1,\delta_2,\delta_3)\big)\cup\mathbb{B}_2^n\big(-\bx;E(\delta_1,\delta_2,\delta_3)\big)\label{eq:low_neighbour}
		\end{align} with no more than $\lceil \frac{\log E(\delta_1,\delta_2,\delta_3)}{\log((1+\delta_1)/2)}\rceil $ gradient descents and then  staying in (\ref{eq:low_neighbour}) in later iterations, with the proviso that {GD-1bPR} is executed with step sizes $\eta$, the clean phaseless bits $\by=\sign(|\bA\bx|-\tau)$ and a good initialization $\bx^{(0)}$ satisfying $\dist(\bx^{(0)},\bx)\le\delta_4$.
	\end{lem}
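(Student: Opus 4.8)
The plan is to reduce the lemma to a scalar recursion for the error $e_t:=\|\bx^{(t)}-\bx\|_2$ and to show that $e_t$ contracts geometrically until it drops below $E:=E(\delta_1,\delta_2,\delta_3)$, after which it stays below $E$. The first move is to dispose of the sign ambiguity: since $\bh(\bu,\bv)$ in (\ref{eq:huv}) depends on $\bv$ only through the bits $\{\sign(|\ba_i^\top\bv|-\tau)\}_{i=1}^m$, we have $\bh(\bu,\bx)=\bh(\bu,-\bx)$, so the trajectory $\{\bx^{(t)}\}$ produced by Algorithm \ref{alg:pgd} is literally unchanged when $\bx$ is replaced by $-\bx$. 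As the target set (\ref{eq:low_neighbour}) and the initialization hypothesis $\dist(\bx^{(0)},\bx)\le\delta_4$ are both invariant under $\bx\mapsto-\bx$, I may assume without loss of generality that $e_0=\|\bx^{(0)}-\bx\|_2=\dist(\bx^{(0)},\bx)\le\delta_4$ and prove $\bx^{(t)}\in\mathbb{B}_2^n(\bx;E)$ for all $t\ge T:=\lceil \log E/\log((1+\delta_1)/2)\rceil$.

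Next I would establish the core recursion. With $\calC=\mathbb{R}^n$ one has $\calC_-=\mathbb{R}^n$ and $\calP_{\calC_-}=\mathrm{id}$, so the PLL-AIC bound (\ref{eq:delta1_low}) reads $\|\bu-\bv-\eta\,\bh(\bu,\bv)\|_2\le\delta_1\|\bu-\bv\|_2+\sqrt{\delta_2\|\bu-\bv\|_2}+\delta_3$. Substituting $\bu=\bx^{(t-1)}$, $\bv=\bx$ and using $\bx^{(t)}-\bx=\bx^{(t-1)}-\bx-\eta\,\bh(\bx^{(t-1)},\bx)$ from (\ref{eq:gd}) gives
\[
e_t\;\le\;\delta_1\,e_{t-1}+\sqrt{\delta_2\,e_{t-1}}+\delta_3,
\]
valid whenever the PLL-AIC hypotheses $e_{t-1}\le\delta_4$ and $\bx^{(t-1)}\in\mathbbm{A}_{\alpha/2,2\beta}$ hold. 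These side conditions I would carry along the induction: if $e_{t-1}\le\delta_4<\min\{\alpha/2,1\}$, then since $\bx\in\mathbbm{A}_{\alpha,\beta}$ the triangle inequality yields $\alpha/2<\alpha-\delta_4\le\|\bx^{(t-1)}\|_2\le\beta+\delta_4<2\beta$, so $\bx^{(t-1)}\in\mathbbm{A}_{\alpha/2,2\beta}$, and the contraction below will re-establish $e_t\le\delta_4$.

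The recursion is analysed by splitting at the level $E$. Writing $q:=\frac{1+\delta_1}{2}\in[\tfrac12,1)$: when $e_{t-1}>E$, the bound $e_{t-1}>E\ge\frac{16\delta_2}{(1-\delta_1)^2}$ forces $\sqrt{\delta_2\,e_{t-1}}\le\frac{1-\delta_1}{4}e_{t-1}$ and $e_{t-1}>E\ge\frac{4\delta_3}{1-\delta_1}$ forces $\delta_3\le\frac{1-\delta_1}{4}e_{t-1}$, whence $e_t\le\delta_1 e_{t-1}+\frac{1-\delta_1}{2}e_{t-1}=q\,e_{t-1}<e_{t-1}\le\delta_4$. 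Iterating while the error stays above $E$ gives $e_t\le q^t e_0<q^t$ (using $e_0\le\delta_4<1$), and since $q^T\le E$, the error must satisfy $e_s\le E$ at some time $s\le T$. Conversely, once $e_{t-1}\le E$, monotonicity of the right-hand side in $e_{t-1}$ and the same two estimates with $E$ in place of $e_{t-1}$ give $e_t\le\delta_1 E+\frac{1-\delta_1}{4}E+\frac{1-\delta_1}{4}E=qE<E$, so $\mathbb{B}_2^n(\bx;E)$ is absorbing. Combining the two regimes, the trajectory enters $\mathbb{B}_2^n(\bx;E)$ within $T$ gradient steps and remains there forever, which after undoing the sign reduction is exactly the stated conclusion about the neighbourhood (\ref{eq:low_neighbour}).

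I expect the only genuinely delicate point to be the self-consistency of the induction: the PLL-AIC estimate (\ref{eq:delta1_low}) is available only when $e_{t-1}\le\delta_4$ and $\bx^{(t-1)}\in\mathbbm{A}_{\alpha/2,2\beta}$, so at every step one must first check that the error has not grown past $\delta_4$ and the iterate has not exited the enlarged annulus before invoking it. The contraction bounds $e_t\le q\,e_{t-1}$ in the large-error regime and $e_t\le qE$ in the small-error regime are precisely what close this loop, and the norm control $\|\bx^{(t-1)}\|_2\in(\alpha/2,2\beta)$ follows from $e_{t-1}<\alpha/2$ together with $\bx\in\mathbbm{A}_{\alpha,\beta}$; everything else is routine bookkeeping.
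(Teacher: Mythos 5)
Your proof is correct and follows essentially the same route as the paper's: reduce the sign ambiguity, apply PLL-AIC to get the scalar recursion $e_t\le\delta_1 e_{t-1}+\sqrt{\delta_2 e_{t-1}}+\delta_3$, show geometric contraction with factor $(1+\delta_1)/2$ while $e_{t-1}>E$, and show the $E$-ball is absorbing. The only cosmetic difference is that you justify the sign reduction via the identity $\bh(\bu,\bx)=\bh(\bu,-\bx)$, which the paper leaves implicit behind its "without loss of generality"; your version is slightly more careful on that point.
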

	
	\begin{proof}
		The proof can be found in Appendix \ref{app:pllaic_lem}. 
	\end{proof}
	%{\color{blue}[continue from here: Discuss the parameters in pll-aic we need to show]} 
%\subsection{PLL-RAIC Implies Convergence}
	%We propose a deterministic property on the triple $(\bA,\tau,\eta(\bu,\bv))$, termed PLL-RAIC, to govern our convergence analysis for {BIHT-1bSPR}. In contrast to PLL-AIC in Definition \ref{def:plaic}, PLL-RAIC is  restricted to $k$-sparse signals and  tied to the thresholded gradient. In turn, it can be established with a sample complexity proportional to the sparsity $k$.

	%PLL-RAIC ensures the convergence of {BIHT-1bSPR}, but this implication requires $\delta_1<\frac{1}{2}$ that is more restricted than $\delta_1<1$ in Lemma \ref{lem:low_aic2err} due to the thresholding operator $\calT_{(k)}(\cdot)$ incorporated into the iteration.  
	
	\begin{lem}
		[PLL-RAIC Implies Convergence of BIHT-1bSPR] \label{lem:high_aic2err} Suppose that $(\bA,\tau,\eta,\Sigma^n_k)$ respects $(0.99\alpha,1.01\beta,\bdelta=(\delta_1,\delta_2,\delta_3,\delta_4)^\top)$-PLL-AIC for some $\delta_1\in[0,\frac{1}{2})$, with the involved $\{\delta_i\}_{i=1}^4$ obeying
		\begin{align*}
			E(\delta_1,\delta_2,\delta_3):= \max\Big\{\frac{64\delta_2}{(1-2\delta_1)^2},\frac{8\delta_3}{1-2\delta_1}\Big\} <\delta_4< \min\Big\{0.01\alpha,1\Big\}.
		\end{align*}
		Then for any $\bx\in \Sigma^n_k\cap \mathbbm{A}_{\alpha,\beta}$,  Algorithm \ref{alg:pgd_high} is capable of entering a neighbourhood surrounding $\pm\bx$ 
		\begin{align}
			\mathbb{B}_2^n(\bx;E(\delta_1,\delta_2,\delta_3))\cup \mathbb{B}_2^n(-\bx;E(\delta_1,\delta_2,\delta_3))\label{eq:nei_high}
		\end{align}
		with no more than $\lceil\frac{\log E(\delta_1,\delta_2,\delta_3)}{\log( 1/2+\delta_1)}\rceil$ iterations and then staying in (\ref{eq:nei_high}) in later iterations, with the proviso that it is executed with step sizes $\eta$, the clean 1-bit observations $\by=\sign(|\bA\bx|-\tau)$ and a good initialization $\bx^{(0)}$ satisfying $\dist(\bx^{(0)},\bx)\le\delta_4$. 
	\end{lem}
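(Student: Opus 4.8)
The plan is to run the deterministic iteration analysis exactly as in the proof of Lemma~\ref{lem:low_aic2err} (the GD-1bPR case), inserting one extra step to absorb the hard-thresholding operation $\calT_{(k)}$. Since $\bh(\bu,\bv)$ enters Algorithm~\ref{alg:pgd_high} only through the bits $\sign(|\bA\bv|-\tau)$, the algorithm is invariant under $\bx\mapsto-\bx$, so I would first replace $\bx$ by $-\bx$ if necessary and assume $\|\bx^{(0)}-\bx\|_2\le\delta_4$; it then suffices to control the scalar sequence $e_t:=\|\bx^{(t)}-\bx\|_2$, because $\dist(\bx^{(t)},\bx)\le e_t$ and, once $e_t<0.01\alpha$, one in fact has $\dist(\bx^{(t)},\bx)=e_t$ (the other sign is at distance $\ge 2\alpha-e_t>e_t$).

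The heart of the argument is the inductive claim that for every $t$, $\bx^{(t)}\in\Sigma^n_k$ with $e_t\le\delta_4$, and moreover either $e_t\le E:=E(\delta_1,\delta_2,\delta_3)$ or $e_t\le(\tfrac12+\delta_1)e_{t-1}$. For the inductive step I note that $e_{t-1}\le\delta_4<0.01\alpha$ together with $\bx\in\mathbbm{A}_{\alpha,\beta}$ forces $\big|\,\|\bx^{(t-1)}\|_2-\|\bx\|_2\,\big|\le e_{t-1}$, hence $\bx^{(t-1)}\in\Sigma^n_k\cap\mathbbm{A}_{0.99\alpha,1.01\beta}$, so the PLL-AIC inequality (with $\calC=\Sigma^n_k$, $\calC_-=\Sigma^n_{2k}$) applies to the pair $(\bx^{(t-1)},\bx)$ and yields
\[
\big\|\calP_{\Sigma^n_{2k}}\!\big(\bx^{(t-1)}-\bx-\eta\,\bh(\bx^{(t-1)},\bx)\big)\big\|_2\;\le\;\delta_1 e_{t-1}+\sqrt{\delta_2 e_{t-1}}+\delta_3 .
\]
Next I would invoke a standard hard-thresholding inequality: writing $\tilde{\bx}^{(t)}=\bx^{(t-1)}-\eta\,\bh(\bx^{(t-1)},\bx)$ and $T=\supp(\bx^{(t)})\cup\supp(\bx)$ (so $|T|\le 2k$), the greedy optimality of $\calT_{(k)}$ together with the magnitude ordering it enforces gives $\|\bx^{(t)}-\bx\|_2\le 2\|(\tilde{\bx}^{(t)}-\bx)_T\|_2\le 2\big\|\calP_{\Sigma^n_{2k}}(\tilde{\bx}^{(t)}-\bx)\big\|_2$; combined with the displayed bound this produces the recursion $e_t\le 2\delta_1 e_{t-1}+2\sqrt{\delta_2 e_{t-1}}+2\delta_3$. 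This extra factor $2$ is precisely what forces the stronger hypothesis $\delta_1<\tfrac12$ and the enlarged constants $64\delta_2/(1-2\delta_1)^2$ and $8\delta_3/(1-2\delta_1)$ in $E$, as anticipated in the remark following Theorem~\ref{thm:pgd_high}.

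To finish I would close the recursion by elementary estimates. If $e_{t-1}>E$ then $e_{t-1}\ge 64\delta_2/(1-2\delta_1)^2$ gives $2\sqrt{\delta_2 e_{t-1}}\le\tfrac14(1-2\delta_1)e_{t-1}$, while $e_{t-1}\ge 8\delta_3/(1-2\delta_1)$ gives $2\delta_3\le\tfrac14(1-2\delta_1)e_{t-1}$, whence $e_t\le\big(2\delta_1+\tfrac12-\delta_1\big)e_{t-1}=(\tfrac12+\delta_1)e_{t-1}$; and if $e_{t-1}\le E$, the same two bounds with $E$ in place of $e_{t-1}$ show $e_t\le(\tfrac12+\delta_1)E<E$, so the union of balls in \eqref{eq:nei_high} is absorbing. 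Iterating the contraction from $e_0\le\delta_4<1$ shows $e_t\le E$ after at most $\lceil\log E/\log(\tfrac12+\delta_1)\rceil$ steps, and the claim is maintained thereafter. The only genuinely delicate point I expect is the hard-thresholding inequality: one must check that the ``missed'' coordinates $\supp(\bx)\setminus\supp(\bx^{(t)})$ contribute no more in $\ell_2$-norm than the ``spurious'' selected coordinates $\supp(\bx^{(t)})\setminus\supp(\bx)$, which follows from $|\supp(\bx)\setminus\supp(\bx^{(t)})|\le|\supp(\bx^{(t)})\setminus\supp(\bx)|$ and the fact that $\calT_{(k)}$ keeps the largest entries of $\tilde{\bx}^{(t)}$; the remainder of the proof is a line-by-line transcription of the GD-1bPR argument with the ratio $\tfrac12+\delta_1$ replacing $(1+\delta_1)/2$.
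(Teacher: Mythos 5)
Your proposal is correct and follows essentially the same route as the paper's proof: the same without-loss-of-generality reduction, the same application of PLL-AIC to $(\bx^{(t-1)},\bx)\in\Sigma^n_k\cap\mathbbm{A}_{0.99\alpha,1.01\beta}$, the same recursion $e_t\le 2\delta_1 e_{t-1}+2\sqrt{\delta_2 e_{t-1}}+2\delta_3$, and the same contraction-versus-absorbing dichotomy that yields the iteration count $\lceil\log E/\log(1/2+\delta_1)\rceil$. The only (purely cosmetic) difference is that you re-derive the factor-$2$ hard-thresholding bound $\|\bx^{(t)}-\bx\|_2\le 2\|\calP_{\Sigma^n_{2k}}(\tilde{\bx}^{(t)}-\bx)\|_2$ from first principles via the support-set argument, whereas the paper obtains it instantly from the identity $\calT_{(k)}(\tilde{\bx}^{(t)})-\bx=\calP_{\Sigma^n_k-\bx}(\tilde{\bx}^{(t)}-\bx)$ together with the abstract projection Lemma~\ref{lem:pro_contain} (with $\calD=\Sigma^n_k-\bx\subset\calC=\Sigma^n_{2k}$), which is exactly the general statement your hand calculation proves in this special case.
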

	\begin{proof}
		The proof can be found in Appendix \ref{app:pllraic2conver}. 
	\end{proof}
	
	\subsection{Gaussian Matrix Respects PLL-AIC}\label{sec:gau_pll_aic}
    We have the following theorem which states $(\bA,\tau,\calC,\eta)$ respects local PLL-AIC. {We introduce the shorthand $\calC_{\alpha,\beta}:=\calC\cap \mathbb{A}_{\alpha,\beta}$.}
    %\footnote{In view of Lemmas \ref{lem:low_aic2err}--\ref{lem:high_aic2err} we shall work with $\mathbbm{A}_{\frac{\alpha}{2},2\beta}$, while in the analysis we simply establish PLL-AIC over $\mathbbm{A}_{\alpha,\beta}$; the gap can be easily closed by eventually replacing $(\alpha,\beta)$ with $(\frac{\alpha}{2},2\beta)$.}  
   \begin{theorem}\label{thm:raic}
       Suppose $\bA\sim \calN^{m\times n}(0,1)$,    the   $\alpha,\beta,\tau$ are any given positive numbers,  $\calC$ is a cone in $\mathbb{R}^n$. For some constants $c_i$'s and $C_i$'s depending on $(\alpha,\beta,\tau)$, if $r\in(0,c_1)$ with small enough $c_1$, $mr\ge C_2 [\scrH(\calC_{\alpha,\beta},r)+\omega^2(\calC_{(1)})]$, then with probability at least $1-\exp(-c_3\scrH(\calC_{\alpha,\beta},r))$, $(\bA,\tau,\calC,\eta)$ respects $(\alpha,\beta,\bm{\delta}=(\delta_1,\delta_2,\delta_3,\delta_4)^\top)$-PLL-AIC with $$\delta_1=\sup_{a^2+b^2\in[\alpha^2,\beta^2]}\sqrt{|1-\eta g_\eta(a,b)|^2+|\eta h_\eta(a,b)|^2}+c_3\log^{-1/8}(r^{-1}),$$ $\delta_2= C_4r$, $\delta_3=C_5r\log(r^{-1})$ and $\delta_4=\frac{c_5}{\log^{1/2}(r^{-1})}$, where 
       \begin{gather*}
           g_\eta(a,b)=\sqrt{\frac{2}{\pi}}\exp\Big(-\frac{\tau^2}{2(a^2+b^2)}\Big)\frac{\tau^2a^2+b^2(a^2+b^2)}{(a^2+b^2)^{5/2}},\\
           h_\eta(a,b)=\sqrt{\frac{2}{\pi}}\exp\Big(-\frac{\tau^2}{2(a^2+b^2)}\Big)\frac{ab(a^2+b^2-\tau^2)}{(a^2+b^2)^{5/2}}.
       \end{gather*}
   \end{theorem}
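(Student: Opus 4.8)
\medskip

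The plan is to prove the PLL-AIC inequality \eqref{eq:delta1_low} by first reducing it, via the orthogonal/radial structure of $\bu,\bv$, to a uniform control of the rescaled gradient map $\bu-\bv-\eta\cdot\bh(\bu,\bv)$ along \emph{directions}, then to split $\bh(\bu,\bv)$ into a ``main term'' and a ``higher-order term,'' and finally to concentrate each piece uniformly over the localized set $\calC_{(1)}$ using the (phaseless) hyperplane tessellation bounds from Theorem \ref{thm:local_embed} together with a conditioning argument on the index set of separating hyperplanes. Concretely: for $\bu,\bv\in\calC\cap\mathbbm{A}_{\alpha,\beta}$ close in $\ell_2$, write $\bu=a\bw+\cdots$, $\bv=b\bw'+\cdots$ in a suitable common orthonormal frame (the decomposition generalizing the spherical one in \cite{matsumoto2022binary}, but now allowing $\|\bu\|_2\neq\|\bv\|_2$), so that $a,b$ range over the band $a^2+b^2\in[\alpha^2,\beta^2]$ appearing in the definition of $\delta_1$. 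The expectation $\mathbbm{E}\,\bh(\bu,\bv)$, computed by a direct Gaussian integral in this frame, yields a linear map whose action is captured by the pair $(g_\eta,h_\eta)$; this is where the explicit formulas for $g_\eta,h_\eta$ come from, and the resulting operator norm of $\bI-\eta\,\mathbbm{E}[D\bh]$ at the population level is exactly $\sup_{a^2+b^2\in[\alpha^2,\beta^2]}\sqrt{|1-\eta g_\eta(a,b)|^2+|\eta h_\eta(a,b)|^2}$, the leading part of $\delta_1$.

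\medskip

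Next I would handle the fluctuation. First decompose $\bh(\bu,\bv)=\bh_1(\bu,\bv)+\bh_2(\bu,\bv)$, where $\bh_1$ is the contribution of indices $i$ for which $\calH_{|\ba_i|}$ separates $\bu$ and $\bv$ but $\ba_i^\top\bu,\ba_i^\top\bv$ have the \emph{same} sign (the ``main term''), and $\bh_2$ is the contribution of ``double separations'' — indices for which additionally the ordinary hyperplane $\{\ba_i^\top(\cdot)=0\}$ separates $\bu$ and $\bv$ (equivalently $\sign(\ba_i^\top\bu)\neq\sign(\ba_i^\top\bv)$). For $\bh_1$: conditionally on the separating index set $\calS=\{i:\calH_{|\ba_i|}\text{ separates }\bu,\bv\}$, each summand is a sub-Gaussian vector, and $|\calS|\lesssim m\,\dist(\bu,\bv)\lesssim m r$ on the good event of Theorem \ref{thm:local_embed} (Event $E_l$ gives a lower bound, and the analogous upper bound via $\sfP_{\bu,\bv}\asymp\dist(\bu,\bv)$ from Lemma \ref{lem:Puv} plus a Bernstein bound); so a matrix-Bernstein/chaining argument over a net of $\calC_{(1)}$ of log-cardinality $\scrH(\calC_{\alpha,\beta},r)+\omega^2(\calC_{(1)})$ gives $\|\bh_1-\mathbbm{E}\bh_1\|_2\lesssim \sqrt{r}\cdot\text{(net factor)}$, which under the sample bound $mr\gtrsim \scrH+\omega^2$ is what feeds $\delta_2=C_4 r$ and the $c_3\log^{-1/8}(r^{-1})$ slack in $\delta_1$. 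For $\bh_2$: invoke the key observation — already flagged in the remark before Lemma \ref{lem:final_h2} — that the double-separation probability decays \emph{exponentially faster} than $\dist(\bu,\bv)$ (of order $\exp(-c/\dist)\cdot\dist$ roughly, certainly $o(\dist)$), so that on the choice $\delta_4 = c_5/\log^{1/2}(r^{-1})$ the number of double-separation indices is $o(mr)$ up to an additive $\lesssim r\log(r^{-1})$ fluctuation, which is exactly $\delta_3 = C_5 r\log(r^{-1})$. Passing from the net back to all of $\calC\cap\mathbbm{A}_{\alpha,\beta}$ uses a short Lipschitz/oscillation argument plus the $\sqrt{\delta_2\|\bu-\bv\|_2}$ term to absorb the residual.

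\medskip

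The main obstacle I expect is the \emph{uniform} control of $\bh_1$ near the diagonal $\bu\approx\bv$, i.e.\ the small-distance regime: the summands of $\bh_1$ depend on $\bu,\bv$ only through the discrete bits $\sign(|\ba_i^\top\cdot|-\tau)$ and $\sign(\ba_i^\top\bu)$, so a naive net argument over $\calC_{(1)}$ is too crude (the bits are constant on cells, but cells near the diagonal are measurement-dependent, so one cannot simply discretize as in \cite{matsumoto2022binary}). As the introduction indicates, the fix is to not discretize the signal at all but instead argue directly: use Event $E_s$ of Theorem \ref{thm:local_embed} to bound the number of separating indices uniformly, observe that the non-trivial summands of $\bh_1$ are few (at most $\lesssim mr$), and bound the norm of a sum of few sub-Gaussian vectors supported on an adversarially-chosen small index set by a uniform ``$(mr)$-sparse operator norm'' of the Gaussian matrix $\bA$, which concentrates at scale $\sqrt{mr\log(1/r)}+\sqrt{n}$. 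Getting the logarithmic powers to line up — so that the leftover error is $C_5 r\log(r^{-1})$ in $\delta_3$ and only $c_3\log^{-1/8}(r^{-1})$ in $\delta_1$, with $\delta_4\asymp\log^{-1/2}(r^{-1})$ — is the delicate bookkeeping that the proof (relegated to the appendix, cf.\ Lemmas \ref{lem:final_h2} and the surrounding development) has to carry out carefully; everything else is Gaussian computation and standard chaining.
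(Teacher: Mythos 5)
Your route is the paper's: a net over $\calC_{\alpha,\beta}$ split into large- and small-distance regimes; the decomposition $\bh=\bh_1+\bh_2$ with the higher-order term $\bh_2$ controlled through the exponentially small double-separation probability $\sfP^{(2)}_{\bu,\bv}\le 4\exp(-\tau^2/(2\|\bu-\bv\|_2^2))$; conditioning on $|\bR_{\bp,\bq}|$ so that the $\bh_1$ summands become i.i.d.\ sub-Gaussian with conditional means producing $g_\eta,h_\eta$ upon orthogonal decomposition into radial, in-plane-tangential, and out-of-plane pieces (the $T_1,T_2,T_3$ in Appendix~\ref{app:large_main_term}); and the ``count non-zero contributors, then invoke Lemma~\ref{lem:max_ell_sum}'' trick near the diagonal in place of the discretization argument you rightly identify as broken. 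On all of this you are aligned with the paper.

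The genuine gap is the net-to-signal transfer in the \emph{large}-distance regime. You propose ``a short Lipschitz/oscillation argument,'' but $\bh$ is built out of $\sign$ functions and admits no Lipschitz bound, and a naive count of nonzero contributors to $\bh(\bu,\bv)-\bh(\bu_1,\bv_1)$ via $|\bR_{\bu,\bv}|$ or $|\bR_{\bu_1,\bv_1}|$ is far too loose: these sets have size of order $m\delta_4\sim m/\sqrt{\log(r^{-1})}$, not $\sim mr$, because $\|\bu-\bv\|_2$ can be as large as $\delta_4\gg r$. The paper's resolution is the ``decouple--recouple'' identity~(\ref{eq:decom_hminush}), which rewrites the difference so that it is supported on $\bR_{\bu,\bu_1}\cup\bR_{\bv,\bv_1}\cup\bL_{\bu,\bu_1}$ — each of size $\lesssim mr\sqrt{\log(r^{-1})}$ by Theorem~\ref{thm:local_embed} and its 1-bit-CS analogue Lemma~\ref{lem:binaryembed}, since $\|\bu-\bu_1\|_2,\|\bv-\bv_1\|_2\le r$ — and only then applies Lemma~\ref{lem:max_ell_sum}. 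This is the same counting idea you deploy in the small-distance regime, but the algebraic rearrangement needed to expose the small support is not automatic and is missing from your plan. Relatedly, your bookkeeping attributes $\delta_3=C_5r\log(r^{-1})$ to the $\bh_2$ term; in fact $\bh_2$ contributes only $\calO(r)$ (Lemma~\ref{lem:final_h2}), and the extra factor $\log(r^{-1})$ in $\delta_3$ arises from the $\sqrt{\log(em/\ell)}$ factor of Lemma~\ref{lem:max_ell_sum} applied to the small-distance bound (Lemma~\ref{lem:smalldis}) and to the decoupled net-transfer bound (Lemma~\ref{lem:final_hdifference}).
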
   
\begin{rem}
    We note that this result can be established over the so-called ``star-shaped'' $\calC$; see our follow-up work \cite{chen2024optimal}. However, we will not pursue this since we focus on recovering unstructured or sparse signals, as with most prior works in phase retrieval. 
\end{rem}
We only provide an overview for the proof of Theorem \ref{thm:raic} in the main text. Further details are postponed to Appendix \ref{app:prove_pgd}. Before that, we shall also use Theorem \ref{thm:raic} to prove Theorems \ref{thm:pgd_low}--\ref{thm:pgd_high}. 

\begin{proof}
    [Proof of Theorem \ref{thm:pgd_low}] We specialize Theorem \ref{thm:raic} to $\calC=\mathbb{R}^n$, $\eta=\sqrt{\frac{\pi e}{2}}\tau$ and $r=\frac{Cn}{m}\log(\frac{m}{n})$ with large enough $C$ (justified by standard bounds on $\scrH(\mathbbm{A}_{\alpha,\beta},r)$ and $\omega^2(\mathbb{B}_2^n)$) to obtain the following: with probability at least $1-\exp(-c_1n\log(\frac{m}{n}))$, $(\bA,\tau,\mathbb{R}^n,\eta=\sqrt{\frac{\pi e}{2}}\tau)$ respects $(\frac{\alpha}{2},2\beta,\bm{\delta}=(\delta_1,\delta_2,\delta_3,\delta_4)^\top)$-PLL-AIC with $$\delta_1= \sup_{a^2+b^2\in[\alpha^2/4,4\beta^2]}\sqrt{|1-\eta g_\eta(a,b)|^2+|\eta h_\eta(a,b)|^2}+\frac{c_1}{\log^{1/8}(\frac{m}{n})}<1-c_2$$ for some $c_2>0$ when $m\gtrsim n$ (this follows from Lemma \ref{lem:maximum}), $\delta_2=\frac{C_3n}{m}\log(\frac{m}{n})$, $\delta_3=\frac{C_4n}{m}\log^2(\frac{m}{n})$ and $\delta_4=\frac{c_4}{\log^{1/2}(m/n)}$. The statement then follows from Lemma \ref{lem:low_aic2err}.  
\end{proof}
\begin{proof}
    [Proof of Theorem \ref{thm:pgd_high}] We specialize Theorem \ref{thm:raic} to $\calC=\Sigma^n_k$, $\eta=\sqrt{\frac{\pi e}{2}}\tau$ and $r=\frac{Ck}{m}\log(\frac{mn}{k^2})$ with large enough $C$ (justified by standard bounds on $\scrH(\Sigma^n_k\cap \mathbbm{A}_{\alpha,\beta},r)$ and $\omega^2(\Sigma^n_k\cap \mathbb{B}_2^n)$) to obtain the following: with probability at least $1-\exp(-c_1 k\log(\frac{mn}{k^2}))$, $(\bA,\tau,\Sigma^n_k,\eta=\sqrt{\frac{\pi e}{2}}\tau)$ respects $(0.99\alpha,1.01\beta,\bm{\delta}=(\delta_1,\delta_2,\delta_3,\delta_4)^\top)$-PLL-AIC with 
    \begin{align*}
        &\delta_1=\sup_{a^2+b^2\in[(0.99\alpha)^2,(1.01\beta)^2]}\sqrt{|1-\eta g_\eta(a,b)|^2+|\eta h_\eta(a,b)|^2}+\frac{c_1}{\log^{1/8}(r^{-1})}\\
        &=\max\left\{\max_{w\in[\frac{\tau}{1.01\beta},\frac{\tau}{0.99\alpha}]}\Big|1-w^3\exp(\frac{1-w^2}{2})\Big|,\max_{w\in [\frac{\tau}{1.01\beta},\frac{\tau}{0.99\alpha}]}\Big|1-w\exp(\frac{1-w^2}{2})\Big|\right\}+\frac{c_1}{\log(r^{-1})}
    \end{align*}
    (by using Lemma \ref{lem:maximum}), $\delta_2=\frac{C_2k}{m}\log(\frac{mn}{k^2})$, $\delta_3=\frac{C_3k}{m}\log(\frac{mn}{k^2})\log(\frac{m}{k})$ and $\delta_4=\frac{c_4}{\log^{1/2}(m/k)}$. Under (\ref{eq:fixed_ratio}) and small enough $r$ we will have $\delta_1<\frac{1}{2}-c_5$ for some $c_5>0$ depending on $(\alpha,\beta,\tau)$. The statement then follows from Lemma \ref{lem:high_aic2err}.  
\end{proof}

  \subsubsection*{Outline for the proof of Theorem \ref{thm:raic}:} 
 We build a minimal $r$-net of $\calC_{\alpha,\beta}$ that is denoted by $\calN_r$ and satisfies $\log|\calN_r|=\scrH(\calC_{\alpha,\beta},r)$. For any   $\bu,\bv\in \calC_{\alpha,\beta}$ we can pick their closest points in $\calN_r$: $\bu_1  := \text{arg}\min_{\bw \in \calN_r}\|\bw-\bu\|_2 $ and $\bv_1 := \text{arg}\min_{\bw\in\calN_r}\|\bw-\bv\|_2.$ 
	Then by Lemma \ref{lem:pro_closec} we have 
 		\begin{align}\nn
			& \|\calP_{\calC_-}(\bu-\bv-\eta\cdot\bh(\bu,\bv))\|_2\\& \le \|\calP_{\calC_-}(\bu-\bu_1)\|_2 + \|\calP_{\calC_-}(\bv-\bv_1)\|_2+ \|\calP_{\calC_-}(\bu_1-\bv_1-\eta\cdot\bh(\bu,\bv))\|_2\nn\\&\le 2r+  \|\calP_{\calC_-}(\bu_1-\bv_1-\eta\cdot\bh(\bu,\bv))\|_2.\label{eq:use_net1_rapp}
		\end{align} 
	Our further developments to bound $\|\calP_{\calC_-}(\bu_1-\bv_1-\bh(\bu,\bv))\|_2$ are built upon different ideas in two regimes, namely a {\it large-distance regime} where $\|\bu_1-\bv_1\|_2\ge r$, and a {\it small-distance regime} where $\|\bu_1-\bv_1\|_2< r$. Specifically, for $\|\bu_1-\bv_1\|_2\ge r$ we proceed as 
		\begin{align}\label{eq:large_decompose}
			\|\calP_{\calC_-}(\bu_1-\bv_1-\eta\bh(\bu,\bv))\|_2\le \|\calP_{\calC_-}(\bu_1-\bv_1-\eta\bh(\bu_1,\bv_1))\|_2 + \eta \|\calP_{\calC_-}(\bh(\bu,\bv)-\bh(\bu_1,\bv_1))\|_2;
		\end{align}
while for $\|\bu_1-\bv_1\|_2<r$ we begin with a simpler decomposition 
  \begin{align}\label{eq:small_decompose}
			\|\calP_{\calC_-}(\bu_1-\bv_1 -\eta\bh(\bu,\bv))\|_2 \le \|\calP_{\calC_-}(\bu_1-\bv_1)\|_2+\eta \|\calP_{\calC_-}(\bh(\bu,\bv))\|_2 < r + \eta\|\calP_{\calC_-}(\bh(\bu,\bv))\|_2. 
		\end{align}

   \begin{rem}
       Intuitively, the bound in large-distance regime induces a contraction  that shrinks $\dist(\bx^{(t)},\bx)$ to the near-optimal error, then the bound in small-distance regime ensures that  this near-optimal error rate is maintained in subsequent iterations.    
   \end{rem}

	\subsubsection{Large-Distance Regime ($\|\bu_1-\bv_1\|_2\ge r$)}
In the large-distance analysis we need to precisely characterize the gradient $\bh(\bu,\bv)$ to show contraction.

	\subsubsection*{Step 1: Bounding $\|\calP_{\calC_-}(\bu_1-\bv_1-\eta\cdot \bh(\bu_1,\bv_1))\|_2$}
    {Recall that $\delta_4\asymp \frac{1}{\log^{1/2}(r^{-1})}$ and $r$ is small enough, and hence we have $r<\frac{\delta_4}{2}$}, which then yields $\|\bu_1-\bv_1\|_2\le \|\bu_1-\bu\|_2+\|\bu-\bv\|_2+\|\bv-\bv_1\|_2\le 2r+\delta_4\le 2\delta_4$.   
	 Combining with $\|\bu_1-\bv_1\|_2\ge r$, $(\bu_1,\bv_1)$ in the large-distance regime lives in the constraint set 
	\begin{align}\label{eq:Nrxi}
		\calN_{r,\delta_4}^{(2)} = \big\{(\bp,\bq)\in \calN_r\times \calN_r: r\le \|\bp-\bq\|_2 \le 2\delta_4\big\}.
	\end{align}  %and   seek to bound 
	%\begin{align}\label{eq:most_diffi}
	%   \sup_{(\bp,\bq)\in \calN^{(2)}_{r,\xi}}\big\|\bp-\bq-\eta\cdot\bh(\bp,\bq)\big\|_2.
	%\end{align}
	Note that $(\bu_1,\bv_1)$ already lives in the finite set $\calN_{r,\delta_4}^{(2)}$, thus it is enough to bound $\|\bp-\bq-\eta\cdot\bh(\bp,\bq)\|_2$ for a fixed pair $(\bp,\bq)\in \calN^{(2)}_{r,\delta_4}$ and then invoke a union bound. 
	Nonetheless, tightly  bounding $\|\bp-\bq-\eta\cdot\bh(\bp,\bq)\|_2$  for fixed $(\bp,\bq)\in\calN_{r,\delta_4}^{(2)}$ remains challenging due to  the presence of the unusual binary factors. We summarize some of the key ideas as follows:  
	\begin{itemize}
		[leftmargin=2ex,topsep=0.25ex]
		\setlength\itemsep{-0.1em}
		\item We utilize an orthogonal decomposition of the random vector $\bp-\bq-\eta\cdot \bh(\bp,\bq)$, along with some algebraic manipulation on  $X_{i}^{\bp,\bq} :=[\sign(|\ba_i^\top\bp|-\tau)-\sign(|\ba_i^\top\bq|-\tau)]\sign(\ba_i^\top\bp),$ 
		%The algebra on $X_i^{\bp,\bq}$ and the orthogonal decomposition will be introduced in detail shortly, 
		to break down the problem into more manageable pieces. Note that we can write $\bh(\bp,\bq)=\frac{1}{2m}\sum_{i=1}^m X_i^{\bp,\bq}\ba_i$. 
  %(Note that   (\ref{eq:undesired_factor}) allows us to write $\bh(\bp,\bq)=\frac{1}{2m}\sum_{i=1}^mX_{i}^{\bp,\bq}\ba_i$.)
		
		\item  To derive tight concentration bounds for each piece, we observe that a small $\|\bp-\bq\|_2$ implies a small $d_H(\sign(|\ba_i^\top\bp|-\tau),\sign(|\ba_i^\top\bq|-\tau))$, and the number of non-zero contributors to $\bh(\bp,\bq)$ is indeed much fewer than $m$. This inspires us to establish bounds conditioning on the hamming distance ($|\bR_{\bp,\bq}|$ below) first. 
		%With respect to the second dot point above, we must forbid  any ``globally applicable'' concentration inequality that is ignorant of the closeness between $\bp$ and $\bq$ (i.e., $\|\bp-\bq\|_2\le 2\xi$). Rather, we must stick to a localized analysis throughout, with the core spirit being that small $\|\bp-\bq\|_2$ implies small $d_H(\sign(|\ba_i^\top\bp|-\tau),\sign(|\ba_i^\top\bq|-\tau))$ (see Theorem \ref{thm:local_embed}), and hence many of the $m$ summands in $\bh(\bp,\bq)$ vanish. This phenomenon makes it possible to obtain essentially sharper concentration bounds. 
	\end{itemize}

	Let us delve into the details in the following.
	
	\paragraph{Simplifying the gradient:} For the $m$ contributors to $\bh(\bp,\bq)$, only the ones in 
	\begin{align}\label{eq:def_Rpq}
		\bR_{\bp,\bq} = \{i\in[m]:\sign(|\ba_i^\top\bp|-\tau)\neq \sign(|\ba_i^\top\bq|-\tau)\}
	\end{align}
are non-zero. Note that $\bR_{\bp,\bq}$ collect the indexes corresponding to the phaseless separations induced by the phaseless hyperplane $\{\calH_{|\ba_i|}\}_{i=1}^m$. For $i\in \bR_{\bp,\bq}$ we can also simplify $X_i^{\bp,\bq}$ as 
        \begin{align}
			 X_i^{\bp,\bq}& =[\sign(|\ba_i^\top\bp|-\tau)-\sign(|\ba_i^\top\bq|-\tau)]\sign(\ba_i^\top\bp) \nn\\&= 2\sign(|\ba_i^\top\bp|-|\ba_i^\top\bq|)\sign(\ba_i^\top\bp)\nn\\&=2\sign(\ba_i^\top\bp-|\ba_i^\top\bq|\sign(\ba_i^\top\bp))\nn \\&=2 \sign\big(\ba_i^\top\bp - [\sign(\ba_i^\top\bp)\sign(\ba_i^\top\bq)]\ba_i^\top\bq\big).\label{eq:first_simplify}
		\end{align} 
	To allow for further reduction, we shall  define another  index set
 $	\bL_{\bp,\bq} = \big\{i\in[m]:\sign(\ba_i^\top\bp)\ne\sign(\ba_i^\top\bq)\big\}$
	with respect to the separations induced by hyperplanes $\{\calH_{\ba_i,0}\}_{i=1}^m$. Now we have   
	\begin{subequations}\label{eq:simplify_Xipq}
		\begin{align}
			&X_i^{\bp,\bq} = 2\sign(\ba_i^\top(\bp-\bq)),~\forall i\in \bR_{\bp,\bq}\setminus \bL_{\bp,\bq},\\
			&X_i^{\bp,\bq} = 2\sign(\ba_i^\top(\bp+\bq)),~\forall i\in \bR_{\bp,\bq}\cap \bL_{\bp,\bq}.
		\end{align}
	\end{subequations}
	Substituting (\ref{eq:simplify_Xipq}) into $\bh(\bp,\bq)=\frac{1}{2m}\sum_{i=1}^m X_i^{\bp,\bq}\ba_i$ yields
         \begin{align}\nn
			&\bh(\bp,\bq) = \frac{1}{2m}\Big(\sum_{i\in \bR_{\bp,\bq}\setminus \bL_{\bp,\bq}}2\sign(\ba_i^\top(\bp-\bq))\ba_i + \sum_{i\in \bR_{\bp,\bq}\cap \bL_{\bp,\bq}}2\sign(\ba_i^\top(\bp+\bq))\ba_i\Big)\\\nn 
			& = \frac{1}{m}\Big(\sum_{i\in\bR_{\bp,\bq}}\sign(\ba_i^\top(\bp-\bq))\ba_i - \sum_{i\in \bR_{\bp,\bq}\cap\bL_{\bp,\bq}} \sign(\ba_i^\top(\bp-\bq))\ba_i +\sum_{i\in \bR_{\bp,\bq}\cap\bL_{\bp,\bq}}\sign(\ba_i^\top(\bp+\bq))\ba_i \Big)\nn\\
			& = \underbrace{\frac{1}{m}\sum_{i\in\bR_{\bp,\bq}}\sign(\ba_i^\top(\bp-\bq))\ba_i}_{:=\bh_1(\bp,\bq)}+\underbrace{\frac{1}{m}\sum_{i\in \bR_{\bp,\bq}\cap \bL_{\bp,\bq}} \big[\sign(\ba_i^\top(\bp+\bq))-\sign(\ba_i^\top(\bp-\bq))\big]\ba_i}_{:=\bh_2(\bp,\bq)}.\label{eq:main_higher}  
		\end{align}  
	 By $\bh(\bp,\bq)=\bh_1(\bp,\bq)+\bh_2(\bp,\bq)$ and Lemma \ref{lem:pro_closec} we have
	\begin{align}\label{eq:sepa_main_side}
		\|\calP_{\calC_-}(\bp-\bq - \eta\cdot \bh(\bp,\bq))\|_2 \le \|\calP_{\calC_-}(\bp-\bq -\eta\cdot \bh_1(\bp,\bq))\|_2+\eta \cdot\|\calP_{\calC_-}(\bh_2(\bp,\bq))\|_2,
	\end{align}
        Our subsequent development  shows that $\|\calP_{\calC_-}(\bh_2(\bp,\bq))\|_2$ is negligible higher-order term compared to $\|\calP_{\calC_-}(\bp-\bq-\eta\cdot\bh_1(\bp,\bq))\|_2$ because $\bR_{\bp,\bq}\cap \bL_{\bp,\bq}$ is typically much smaller than $\bR_{\bp,\bq}$. We will then sketch the techniques for bounding $\|\calP_{\calC_-}(\bp-\bq-\eta\cdot \bh_1(\bp,\bq))\|_2$.

	\paragraph{Orthogonal decomposition:}
	For a specific pair $(\bp,\bq)\in\calN_{r,\delta_4}^{(2)}$, by Lemma \ref{lem:parameterization} we can pick {\it orthonormal} $\bbeta_1:=\frac{\bp-\bq}{\|\bp-\bq\|_2},~\bbeta_2\in \mathbb{R}^n$, such that 
		\begin{gather}\label{eq:beta1beta2}
			\bp =  u_1\bbeta_1+u_2\bbeta_2\quad\text{and}\quad\bq = v_1\bbeta_1+u_2\bbeta_2
		\end{gather}
	hold for some coordinates $(u_1,u_2,v_1)$ obeying $u_1>v_1$ and $u_2\ge 0$.\footnote{Here we drop the dependence of $(\bbeta_1,\bbeta_2)$ on $(\bp,\bq)$ to avoid cumbersome notation.} This gives the following orthogonal decomposition of $\bh_1(\bp,\bq)$: 
		\begin{align*}
			\bh_1(\bp,\bq) &= \langle\bh_1(\bp,\bq),\bbeta_1\rangle\bbeta_1+ \langle \bh_1(\bp,\bq),\bbeta_2\rangle\bbeta_2 +\underbrace{\big\{\bh_1(\bp,\bq)-\langle\bh_1(\bp,\bq),\bbeta_1\rangle\bbeta_1-\langle \bh_1(\bp,\bq),\bbeta_2\rangle\bbeta_2\big\}}_{:=\bh_1^\bot(\bp,\bq)}. 
		\end{align*}  
	Combining with Lemma \ref{lem:pro_closec} and the symmetry of $\calC_-$, we decompose $\|\calP_{\calC_-}(\bp-\bq-\eta\cdot \bh_1(\bp,\bq))\|_2$ into 
		\begin{align}\nn
			&\|\calP_{\calC_-}(\bp-\bq-\eta\cdot\bh_1(\bp,\bq))\|_2 \\&\nn \le \big\|\bp-\bq-\eta \cdot \langle\bh_1(\bp,\bq),\bbeta_1\rangle \bbeta_1-\eta\cdot\langle \bh_1(\bp,\bq),\bbeta_2\rangle \bbeta_2\big\|_2 + \eta\cdot\|\calP_{\calC_-}(\bh^\bot_1(\bp,\bq))\|_2  \\\nn
			&=\Big\|\Big(\|\bp-\bq\|_2-\eta\cdot\Big<\bh_1(\bp,\bq),\frac{\bp-\bq}{\|\bp-\bq\|_2}\Big>\Big)\bbeta_1 - \eta\cdot \langle \bh_1(\bp,\bq),\bbeta_2\rangle\bbeta_2\Big\|_2 + \eta\cdot\big\|\calP_{\calC_-}(\bh^\bot_1(\bp,\bq))\big\|_2
			\\
			&\le \Big(\Big|\|\bp-\bq\|_2 - \eta\cdot \Big\langle\bh_1(\bp,\bq),\frac{\bp-\bq}{\|\bp-\bq\|_2}\Big\rangle\Big|^2+ \eta^2\cdot \big| \langle \bh_1(\bp,\bq),\bbeta_2\rangle\big|^2\Big)^{1/2}+\eta\cdot\big\|\calP_{\calC_-}(\bh_1^\bot(\bp,\bq))\big\|_2, \nn\\
			:&=((T_1^{\bp,\bq})^2+\eta^2\cdot |T_2^{\bp,\bq}|^2)^{1/2}+\eta \cdot T_3^{\bp,\bq},\label{eq:p_q_hpq}
		\end{align}  
	where in the last line we introduce $ T_1^{\bp,\bq} :=  |\|\bp-\bq\|_2 - \eta\cdot \langle\bh_1(\bp,\bq),\frac{\bp-\bq}{\|\bp-\bq\|_2}\rangle|$, $	T_2^{\bp,\bq}:= \langle \bh_1(\bp,\bq),\bbeta_2\rangle$ and $	T_3^{\bp,\bq}:= \|\calP_{\calC_-}(\bh_1^\bot(\bp,\bq))\|_2$.

	\paragraph{Bounding $\{T_i^{\bp,\bq}\}_{i=1}^3$:} It remains to bound $T_i^{\bp,\bq}~(i=1,2,3)$ separately.  Substituting $\bh_1(\bp,\bq)=\frac{1}{m}\sum_{i\in\bR_{\bp,\bq}}\sign(\ba_i^\top(\bp-\bq))\ba_i$ into $T_i^{\bp,\bq}~(i=1,2,3)$ and conditioning on $|\bR_{\bp,\bq}|$, we observe that they are the sum of $|\bR_{\bp,\bq}|$ i.i.d. random variables that follow certain conditional distributions.  We then proceed to show the sub-Gaussianity of these conditional distributions and establish their concentrations around the means. Here,  showing sub-Gaussianity and calculating the means again prove   technical due to the conditioning on the unusual rare event of ``two close points $\bp,\bq$ being separated by a Gaussian phaseless hyperplane.'' 
	Our mechanism is to first derive the probability density function (P.D.F.) as a handle and then exhaustively examine  every possible relative location  of $(\bp,\bq)$, each of which typically corresponds to  different  conditional distributions. 
	With the conditional concentration ready, we further  analyze $|\bR_{\bp,\bq}|$ via Chernoff bound to establish the unconditional concentration bounds. All the   details are deferred to Appendix \ref{app:large_main_term}. 
	
	\subsubsection*{Step 2: Bounding $\|\calP_{\calC_-}(\bh(\bu,\bv)-\bh(\bu_1,\bv_1))\|_2$} 
 In view of (\ref{eq:large_decompose}) we still need to  uniformly control $\|\calP_{\calC_-}(\bh(\bu,\bv)-\bh(\bu_1,\bv_1))\|_2$ where   $\bu_1,\bv_1$ are the points in $\calN_r$ that are closest to $\bu,\bv$. 
	Compared to $\|\calP_{\calC_-}(\bu_1-\bv_1-\eta\cdot \bh(\bu_1,\bv_1))\|_2$, an evident challenge here is that there are infinitely many pairs of $(\bu,\bv)$, and in turn we need fundamentally different idea.
	More specifically, we seek to establish a tight enough bound through counting the overall number of non-zero contributors, an idea that will be recurring in subsequent development. 
	However, when separately examining $\bh(\bu_1,\bv_1)$ and $\bh(\bu,\bv)$,  they   contain  $|\bR_{\bu_1,\bv_1}|$ and $|\bR_{\bu,\bv}|$ non-zero contributors, which are still overly numerous and preclude us from establishing satisfactory bound using this approach. 
	Observing that $|\bR_{\bu,\bu_1}|$ and $|\bR_{\bv,\bv_1}|$ are notably smaller than $|\bR_{\bu_1,\bv_1}|$ and $|\bR_{\bu,\bv}|$ due to $\|\bu_1-\bu\|_2\le r$ and $\|\bv_1-\bv\|_2\le r$, our remedy is to instead utilize the closeness of $(\bu,\bu_1)$ and $(\bv,\bv_1)$. To achieve this goal, we   first ``decouple'' $(\bu,\bv)$ and $(\bu_1,\bv_1)$,  then ``recouple'' $(\bu,\bu_1)$ and $(\bv,\bv_1)$ to obtain 
	\begin{subequations}
	    	\begin{align}
			\bh(\bu_1,\bv_1)-\bh(\bu,\bv)&=\frac{1}{2m}\sum_{i=1}^m\big[\sign(|\ba_i^\top\bv|-\tau)-\sign(|\ba_i^
			\top\bv_1|-\tau)\big]\sign(\ba_i^\top\bu_1)\ba_i\label{eq:T4}\\
			& + \frac{1}{2m}\sum_{i=1}^m \big[\sign(|\ba_i^\top\bu_1|-\tau)-\sign(|\ba_i^\top\bu|-\tau)\big]\sign(\ba_i^\top\bu_1)\ba_i\label{eq:T5}\\
			& + \frac{1}{2m}\sum_{i=1}^m \big[\sign(\ba_i^\top\bu)-\sign(\ba_i^\top \bu_1)\big]\big[\sign(|\ba_i^\top\bv|-\tau)-\sign(|\ba_i^\top\bu|-\tau)\big]\ba_i\label{eq:T6},
		\end{align}  \label{eq:decom_hminush}
	\end{subequations}
	where we note that the term in the right-hand side of (\ref{eq:T4}) contains no more than $|\bR_{\bv,\bv_1}|$ non-zero contributors, the term in (\ref{eq:T5}) contains no more than $|\bR_{\bu,\bu_1}|$ non-zero contributors, and the term in (\ref{eq:T6}) contains no more than $|\bL_{\bu,\bu_1}|$ non-zero contributors (due to the factor $\sign(\ba_i^\top\bu)-\sign(\ba_i^\top\bu_1)$). Therefore, we deduce that $\bh(\bu_1,\bv_1)-\bh(\bu,\bv)$ involves no more than $|\bR_{\bv,\bv_1}|+|\bR_{\bu,\bu_1}|+|\bL_{\bu,\bu_1}|$ effective contributors, a significant reduction from the trivial bound    {$|\bR_{\bu_1,\bv_1}|+|\bR_{\bu,\bv}|$}. In what follows, 
	Our Theorem \ref{thm:local_embed} and Lemma \ref{lem:binaryembed} immediately imply  uniform bounds on  $|\bR_{\bv,\bv_1}|,~|\bR_{\bu,\bu_1}|$ and $|\bL_{\bu,\bu_1}|$. These bounds are tight enough, hence we are able to further establish satisfactory bounds by triangle inequality and Lemma \ref{lem:max_ell_sum}.  All these technical details for bounding $\|\calP_{\calC_-}(\bh(\bu,\bv)-\bh(\bu_1,\bv_1))\|_2$ are provided in Appendix \ref{app:difference_h}.

	\subsubsection{Small-Distance Regime ($\|\bu_1-\bv_1\|_2<r$)}\label{sec:smalltech}
       In (\ref{eq:small_decompose}) it remains to bound $\|\calP_{\calC_-}(\bh(\bu,\bv))\|_2$ where $\bh(\bu,\bv)$ contains no more than $|\bR_{\bu,\bv}|$ non-zero contributors. Since in the small-distance regime we have $\|\bu-\bv\|_2\le \|\bu-\bu_1\|_2+\|\bu_1-\bv_1\|_2+\|\bv_1-\bv\|_2 <3r,$ $|\bR_{\bu,\bv}|$ here is order-wise as small as $|\bR_{\bv,\bv_1}|$ in bounding $\|\calP_{\calC_-}(\bh(\bu_1,\bv_1)-\bh(\bu,\bv))\|_2$ in large-distance regime.  Therefore we can again apply the techniques developed therein, i.e., first uniformly controlling $|\bR_{\bu,\bv}|$ and then invoke Lemma \ref{lem:max_ell_sum}. The details can be found in Appendix \ref{app:small_distance}.

		\subsection{Technical Comparison with   NBIHT}\label{app:matter}
	%{\color{blue}[continue from here]} We position our major technical contributions in the proofs of Theorems \ref{thm:pgd_low}--\ref{thm:pgd_high}. 
 In this subsection, we provide a technical comparison to the analysis of NBIHT \cite{matsumoto2024binary}. The discussions will be based on the set of notation in the present paper, particularly with $\bh(\bu,\bv)$   denoting the gradient and $r$   denoting the covering radius.  In large-distance regime,   the gradient in NBIHT takes the much simpler form
			$	\bh(\bu,\bv) = \frac{1}{2m}\sum_{i=1}^m \big(\sign(\ba_i^\top\bu)-\sign(\ba_i^\top\bv)\big)\ba_i$ that is amenable to directly work with. In contrast, we encounter a more intricate $\bh(\bu,\bv) = \frac{1}{2m}\sum_{i=1}^m[\sign(|\ba_i^\top\bu|-\tau)-\sign(|\ba_i^\top\bv|-\tau)]\sign(\ba_i^\top\bu)\ba_i$ and will need to first decompose it into $\bh_1(\bu,\bv)+\bh_2(\bu,\bv)$ by examining two index sets $\bR_{\bu,\bv}$ and $\bL_{\bu,\bv}$.  We thus need a set of additional technicalities such as bounding the higher-order term $\bh_2(\bu,\bv)$ and  rearrangement	(\ref{eq:decom_hminush}). Another set of new challenges come from the signals living in an annulus rather than the standard sphere, which   already adds to complication in the orthogonal decomposition of the gradient. More prominently, we have to consider multiple situations when analyzing the conditional distributions,
   since     they essentially hinge on the   ``relative position'' of two points.  
   % compared to \cite[Definition 3.1]{matsumoto2024binary}, we additionally include $\delta_1\|\bu-\bv\|_2$ in our AIC (see (\ref{eq:delta1_low}) and (\ref{eq:delta1_high})), since it is not possible to achieve AIC with $\delta_1=0$ when $\beta>\alpha$. As a result, 
			%the selection of the step size in our algorithms also becomes more tricky, for which the criterion is to lower $\delta_1$ in the AIC.
In the small-distance regime, we seek to control $\|\calP_{\calC_-}(\bh(\bu,\bv))\|_2$ uniformly for all $\bu,\bv$ in $\calK$ obeying $\|\bu-\bv\|_2\le 3r$. {Compared to the small-distance analysis in \cite[Section A.2]{matsumoto2024binary} where the authors utilized a result from \cite{oymak2015near} (see a restatement in our Lemma \ref{lem:binaryembed}) to uniformly bound the number of the non-zero contributors to the gradient and then invoked the conditional concentration bounds from large-distance regime again, we make the following observation that enables a simpler approach: unlike in large-distance regime where   sharp two-sided concentration  of the directional gradient around its mean is necessary, a uniform upper bound (at the order of the optimal reconstruction error) suffices for the small-distance regime, which can be achieved by directly invoking Lemma \ref{lem:max_ell_sum} without resorting to the techniques in large-distance regime again. This argument can be applied to simplify the analysis of the optimality of NBIHT; see \cite{chen2024optimal}. Also note that we instead use   Theorem \ref{thm:local_embed} to establish uniform bound on the number of non-zero contributors in our gradient.}

\section{Numerical Experiments}\label{sec:numerics}
 We provide a few sets of numerical results to complement the theoretical developments. In particular, our goal is to complement the intriguing message that ``phases are not essential for 1-bit compressed sensing''\footnote{{In a nutshell, our experimental results indicate that, when using their respective optimal algorithms, reconstructing unstructured or sparse $\bx$ from $\sign(|\bA\bx|-\tau)$ can achieve smaller reconstruction errors than from $\sign(\bA\bx)$. We emphasize, however, that this observation depends on the specific experimental setup, in particular the choice of the quantization threshold $\tau$ in 1-bit phase retrieval. We also observe that 1-bit sparse phase retrieval generally performs less stably than 1-bit compressed sensing when the number of measurements is relatively small (e.g., $m \le 800$ in Figure~\ref{fig:inessential} (Right)), which is consistent with the higher sample complexity required for 1-bit sparse phase retrieval; see Remark~\ref{rem:k2}.}} and demonstrate the practical merits of the proposed algorithms. All experiments were implemented using Matlab R2022a on a laptop with an Intel CPU up to 2.5 GHz and 32 GB RAM. {The codes are available at}  
 $$\href{https://github.com/junrenchen58/one-bit-phase-retrieval}{\texttt{https://github.com/junrenchen58/one-bit-phase-retrieval}}$$
\paragraph{1-Bit Phase Retrieval v.s. 1-Bit Linear System:} 
We refer to the recovery of an unstructured $\bx\in \mathbb{S}^{n-1}$ from $\by = \sign(\bA\bx)$ as solving 1-bit linear system (1bLS). By introducing the linear constraint $-\frac{\by^\top \bA\bx}{m}= {-\frac{\|\bA\bx\|_1}{m}\le -1}$
that precludes an algorithm from returning $0$, we   solve 1-bit linear system by first finding a feasible point $\hat{\bx}$ satisfying the linear constraint $\diag(\by)\bA\hat{\bx}\le0,~-\frac{\by^\top\bA\hat{\bx}}{m}\le -1$ 
and then use $\frac{\hat{\bx}}{\|\hat{\bx}\|_2}$ as the final estimate. For 1-bit phase retrieval (1bPR) that recovers unstructured $\bx\in \mathbb{S}^{n-1}$ from the 1-bit phaseless measurements $\by=\sign(|\bA\bx|-\tau)$,\footnote{Otherwise stated, we set $\tau=\sqrt{\alpha\beta}$  without tuning in the experiments (e.g., $\tau=1$ is used in Figure \ref{fig:inessential}), and run {GD-1bPR}, {BIHT-1bSPR} and {NBIHT} for 150 iterations.} we execute the provably near-optimal solver {GD-1bPR} with $\bx^{(0)}$ found by {SI-1bPR} and step size $\eta= \sqrt{\frac{\pi e}{2}}\tau$. The experimental results are reported as log-log curves in Figure \ref{fig:inessential}(Left). The results display the theoretically optimal decay rate $\calO(m^{-1})$, and surprisingly the errors in the phaseless setting are smaller  under the same measurement number. For reproducibility, more details on data generation are provided in the caption of Figure \ref{fig:inessential}. 
\begin{figure}[ht!]
	\begin{centering}
		\includegraphics[width=0.4\columnwidth]{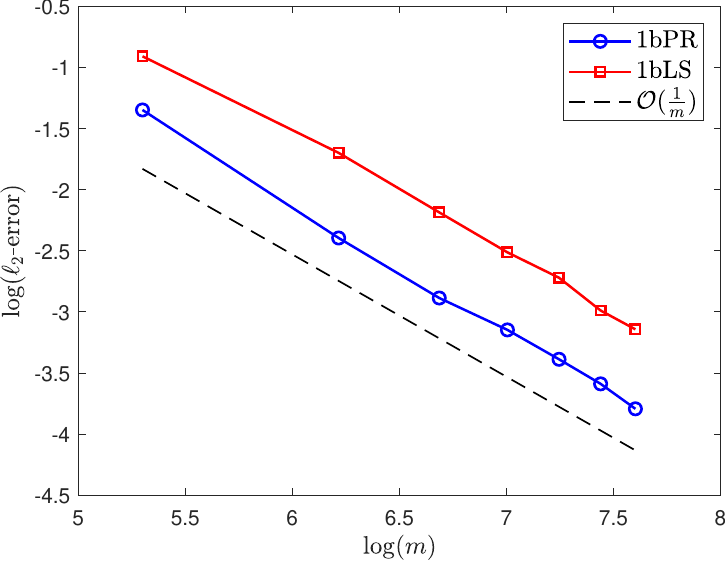} \quad \includegraphics[width=0.4\columnwidth]{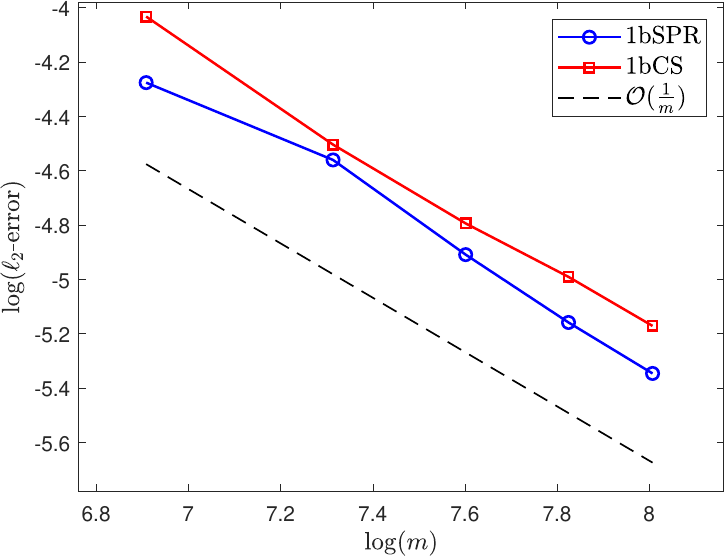}
		\par
	\end{centering}
	
	\caption{\label{fig:inessential}\small {\it Phases are non-essential  in solving 1-bit linear system (Left) and in 1-bit compressed sensing (Right).} The reported data points are averaged over 50 independent trials.  In the left figure we recover $\bx\in \mathbb{R}^{30}$ uniformly distributed over $\mathbb{S}^{29}$ from $m=200:300:2000$ bits produced by Gaussian ensemble (the same below if not specified). The right figure is for  the recovery of $\bx\in\Sigma^{500,*}_3$ from $m=1000:500:3000$  bits, whereas we feed {BIHT-1bSPR} and {SI-1bSPR} with a (slightly) looser sparsity $k=4$ to simulate an actual setting where $k$ is not precisely known. The signals have support uniformly drawn from $\binom{500}{3}$ possibilities and non-zero entries uniformly distributed over $\mathbb{S}^2$.} %Intriguingly, the estimation errors are notably lower in the phaseless settings, thereby empirically supporting the message that ``phases are non-essential in 1-bit compressed sensing''.}
\end{figure}

\paragraph{1-Bit Sparse Phase Retrieval v.s. 1-Bit Compressed Sensing:} 
We simulate sparse recovery and compare 1-bit compressed sensing (1bCS) and 1-bit sparse phase retrieval (1bSPR). The unique provably near-optimal 1bCS solver is NBIHT \cite{matsumoto2024binary}, which is initialized   by the PBP estimator \cite{plan2017high,xu2020quantized} for the sake of fairness. To solve 1-bit sparse phase retrieval, we perform {SI-1bSPR} to get $\bx^{(0)}$ and then execute {BIHT-1bSPR} with $\eta=\sqrt{\frac{\pi e}{2}}\tau$.  We show the experimental results in Figure \ref{fig:inessential}(Right) and provide further details in the caption. Again, the estimation errors in our phaseless setting are clearly smaller, which experimentally demonstrates {phases are non-essential for 1-bit compressed sensing}. %We observed that $\eta=\sqrt{\frac{\pi e}{2}}\tau$ and $\eta=\sqrt{\frac{\pi e}{2}}\|\bx^{(t-1)}\|_2$ oftentimes lead to similar numerical results, with the former being  slightly stabler. Thus, we simply adhere to $\eta=\sqrt{\frac{\pi e}{2}}\tau$ later.

\paragraph{The Impact of $\tau$ on 1b(S)PR:} The quantization threshold $\tau$ is an important tuning parameter, particularly in the sparse setting (cf. Remark~\ref{rem:tau}). Here we simulate 1b(S)PR under different values of $\tau$ to empirically assess its impact on reconstruction performance, focusing on signals $\bx \in \mathbb{S}^{n-1}$. The results in Figure~\ref{fig:tau} show that both overly small $\tau$ (e.g., $\tau \le 0.1$ in the unstructured case and $\tau \le 0.2$ in the sparse case) and overly large $\tau$ (e.g., $\tau \ge 2$ in the unstructured case and $\tau \ge 1.1$ in the sparse case) lead to significant performance degradation. At the same time, accurate reconstruction is achieved over a relatively wide range of $\tau$, notably $\tau \in [0.2,1.5]$ for unstructured signals and $\tau \in [0.3,1]$ for sparse signals. We also observe that sparse recovery is generally more sensitive to the choice of $\tau$.

\begin{figure}[ht!]
	\begin{centering}
		\includegraphics[width=0.4\columnwidth]{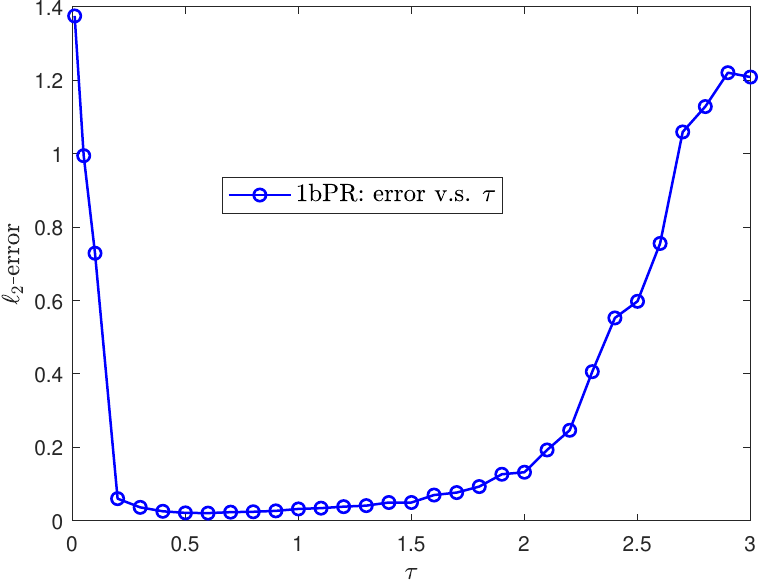} \quad \includegraphics[width=0.4\columnwidth]{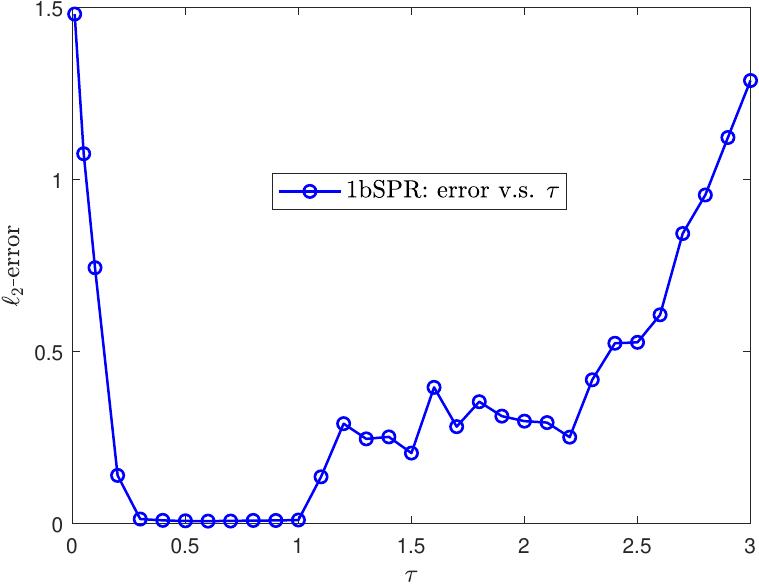}
		\par
	\end{centering}
	
	\caption{\label{fig:tau}\small {\it The impact of $\tau$ on 1bPR (Left) and 1bSPR (Right).} We stick with the experimental settings in Figure \ref{fig:inessential} but fix $m=1500$. We test  $\tau=0.1:0.1:3$ and two additional points $\tau=0.01,0.05$.}
\end{figure}
\paragraph{Recovering Signals  in an Annulus:}
We further demonstrate one's ability to recover signal norm in 1-bit phase retrieval. Our numerical results of recovering   signals in $\mathbbm{A}_{\alpha,\beta}$ with $\beta>\alpha$ are presented in Figure \ref{fig:ratio}. Fixing $\alpha=1$, we observe that increasing $\beta$ leads to larger estimation errors.  This appears natural as a greater number of phaseless hyperplanes are needed to achieve a fine tessellation over a larger signal set $\calK$. More interestingly, {GD-1bPR} maintains the optimal error rate $\calO(m^{-1})$   even when $\beta\gg\alpha$, as observed from $\beta=10\alpha$ and $\beta=15\alpha$ in Figure \ref{fig:ratio}(Left); in contrast,  the performance of   {BIHT-1bSPR} noticeably deteriorates even if $\frac{\beta}{\alpha}$ is only moderately larger than $1$, see $\beta=2.5\alpha$ and $\beta=3.5\alpha$ in  Figure \ref{fig:ratio}(Right).  These results are consistent with our Theorems \ref{thm:pgd_low}--\ref{thm:pgd_high}: compared to {GD-1bPR}, additional   conditions on the ratios $\frac{\alpha}{\tau}$ and $\frac{\beta}{\tau}$ are required to guarantee the near-optimality of {BIHT-1bSPR}. %It needs further investigation whether the ratio conditions in Theorem \ref{thm:pgd_high} is an essential need. %We further argue such necessity in Appendix \ref{app:ratio_con_nece}.  

\begin{figure}[ht!]
	\begin{centering}
		\includegraphics[width=0.4\columnwidth]{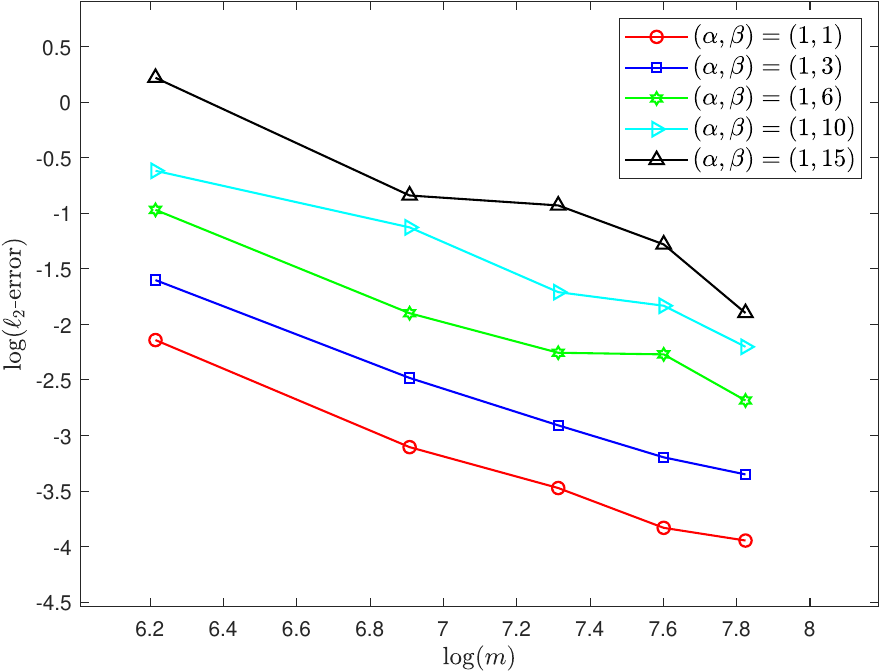} \quad \includegraphics[width=0.4\columnwidth]{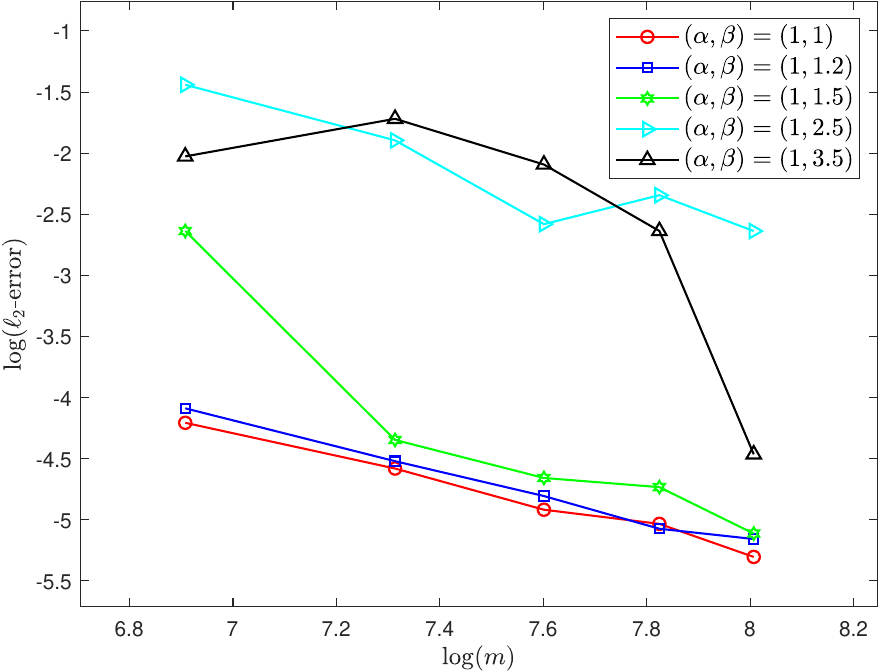}
		\par
	\end{centering}
	
	\caption{\label{fig:ratio}\small {\it Full Signal Reconstruction over $\mathbbm{A}_{\alpha,\beta}$ in 1-bit phase retrieval (Left) and 1-bit sparse phase retrieval (Right).} Each 
		data point is averaged over in $50$ independent trials. 
		In the left figure, we recover $\bx\in \mathbb{R}^{30}$ uniformly distributed over  $\mathbbm{A}_{\alpha,\beta}$ from $m=500:500:2500$ phaseless   bits under $\alpha=1$ and $\beta\in \{1,3,6,10,15\}$. In the right figure, we consider the same setting as in Figure \ref{fig:inessential}(Right) except that the signal $\bx$ is rescaled to $\lambda \bx$ with $\lambda$ uniformly distributed over $[\alpha,\beta]$, under $\alpha=1$ and $\beta\in\{1,1.2,1.5,2.5,3.5\}$. 
  }
\end{figure}

%^\subsection{Testing $\eta= c_\eta\sqrt{\frac{\pi e}{2}}\tau$ with Different $c_\eta$} 
%We present simulation results of the algorithms with $\eta = c_\eta \sqrt{\frac{\pi e}{2}}\tau$ under different $c_\eta$ in Figure \ref{fig:eta}. We run $500$ iterations to ensure the algorithms with $c_\eta<1$ well converge. Since the reconstruction errors under $c_\eta = 1,~0.75,~0.5,~0.1$ are comparable, our results empirically suggest that {GD-1bPR} and {BIHT-1bSPR} are   insensitive to using $c_\eta$ lower than $1$. Nonetheless, we note that  using $c_\eta<1$   takes more iterations to achieve the same error, and that the reconstruction notably worsens when using $c_\eta\ll 1$, as seen from $c_\eta = 0.02$ in Figure \ref{fig:eta}. In contrast,  the performance of both algorithms deteriorates noticeably when $c_\eta$ exceeds $1$. For instance, the curves under $c_\eta = 1.8,~1.6$ are much higher or even lose the optimal decaying rate of $\calO(m^{-1})$. Overall, the recommended 
%$c_\eta = 1$ represents a reasonably good choice.

\paragraph{Recovering Natural Images:} To demonstrate the applicability of our developed theories to  large-scale real-world problems, we move on to testing {GD-1bPR} in the recovery of natural images. We use the ``Stanford Main Quad'' image containing $320\times 1280$ pixels \cite{candes2015phase,chen2017solving} and the ``Milky Way Galaxy'' image containing $1080\times 1920$ pixels \cite{candes2015phase,wang2017solving,zhang2017nonconvex} that are color images with RGB three color bands. 
It is obviously prohibitive in computational and memory burden to use Gaussian sensing ensemble. As in \cite{chen2017solving,zhang2017nonconvex,wang2017solving},  we resort to a type of structured measurement called coded diffraction patterns (CDP) \cite{candes2015cdp}, setting 
\begin{align}\label{eq:cdp}
	\by^{(l)} = \sign(|\bF \bD^{(l)}\bx|-\tau),~~1\le l\le L,
\end{align}
with discrete Fourier transform (DFT) matrix $\bF$ and   diagonal matrices $\bD^{(l)}$ as random masks. Denote the complex unit by $\mathsf{j}$, we let the diagonal entries of the masks $\bD^{(l)}$ be i.i.d. uniformly drawn from $\{1,-1,\mathsf{j},-\mathsf{j}\}$. Note that using $L$ random patterns  generates $m=nL$ 1-bit measurements in total. We execute the sensing and reconstruction   separately for each color band, under the quantization threshold $\tau = \frac{1}{3}\cdot (\|\bx^{(r)}\|_2+\|\bx^{(g)}\|_2+\|\bx^{(b)}\|_2)$ where $\bx^{(r)},\bx^{(g)},\bx^{(b)}$ denote the (vectorized) RGB color bands.
Let us denote the $nL$ phaseless bits from CDP by 
$\{y_i = \sign(|\ba_i^*\bx|-\tau)\}_{i=1}^{nL}$ 
 for certain complex sensing vectors $\{\ba_i\in \mathbb{C}^n:i=1,\cdots,nL\}$.
By Wirtinger calculus \cite{candes2015phase,kreutz2009complex}, the (sub-)gradient of the one-sided $\ell_1$-loss $\calL(\bu)=\frac{1}{2nL}\sum_{i=1}^{nL}[||\ba_i^*\bu|-\tau|-y_i(|\ba_i^*\bu|-\tau)]$ reads
$\partial \calL(\bu) =\frac{1}{4nL}\sum_{i=1}^{nL} (\sign(|\ba_i^*\bu|-\tau)-y_i) \sign(\ba_i^* \bz)\ba_i.$ We execute $\bx^{(t)}=\bx^{(t-1)}-\eta\cdot \partial \calL(\bx^{(t-1)})$  with step size $\eta=\sqrt{2\pi e}\cdot\tau$, which can be regarded as the complex Wirtinger flow counterpart of {GD-1bPR}. As before, these gradient descent refinements are seeded with  spectral method, in which the leading eigenvector of $\hat{\bS}_{\bx}=\frac{1}{m}\sum_{i=1}^m y_i\ba_i\ba_i^*$ is approximately found by power method.  The images are displayed in Figure \ref{fig:stanford} for ``Stanford Main Quad'' and Figure  \ref{fig:galaxy} for ``Milky Way Galaxy''.  We achieve fairly accurate reconstruction from only $L=32$ or $L=64$ random patterns. Remarkably, the recovered ``Milky Way Galaxy'' using $L=64$ patterns and the original image are almost indistinguishable to our eyes, as seen by comparing (a) and (e) in Figure \ref{fig:galaxy}. These results are highly impressive because, compared to the parallel experiments  conducted in prior works that used {\it full-precision} phaseless measurements from $L=20$ \cite{candes2015phase}, $L=12$ \cite{chen2017solving,zhang2017nonconvex}, $L=6$ \cite{zhang2017nonconvex,wang2017solving} random patterns, our reasonably accurate reconstruction is achieved from a much lower number of phaseless bits.\footnote{Theoretically, it is not immediately clear how one can acquire an exact full-precision measurement via finite bits; while experimentally, each measurement in these works is stored as a double-precision floating point number (in Matlab) and requires $64$ bits of storage, and thus using $L=32,~64$ in our 1-bit setting amounts to using $L=0.5,~1$ in theirs in terms of the number of bits in measurements!} 
Furthermore, while our near-optimal guarantee for {GD-1bPR} is proved under Gaussian measurements, it is surprising that the relative error of recovering natural images from the 1-bit CDP measurements also  precisely follows the optimal rate; see Figure \ref{fig:cdprate}. This seems to reflect a type of  universality. %on the sensing vectors. 

\begin{figure}[ht!]
	\centering
	\begin{tabular}{c}
		\includegraphics[width=0.78\textwidth]{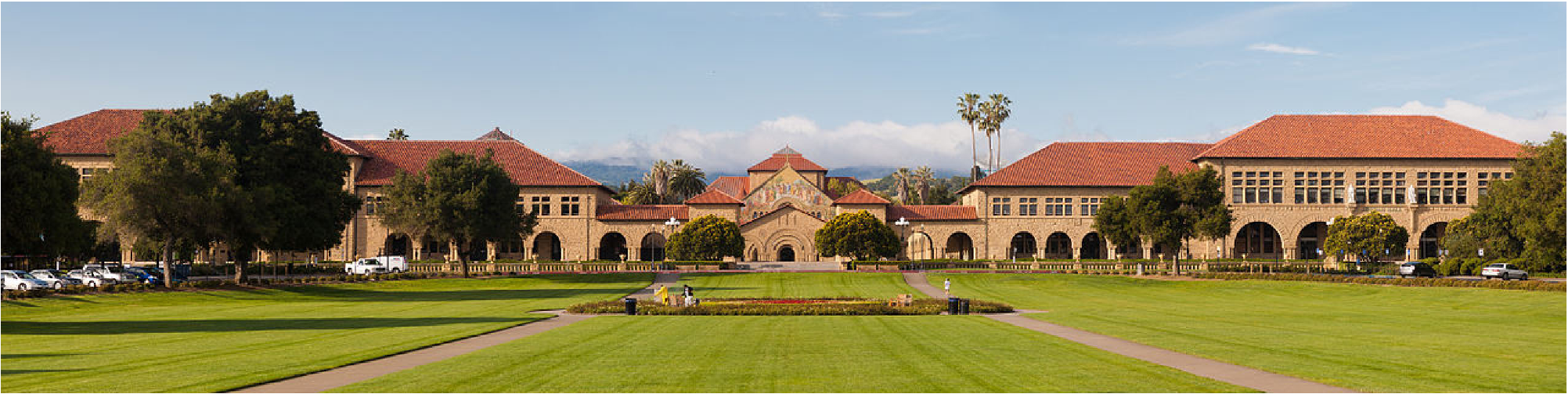} \tabularnewline
		{\footnotesize (a) Original image: Stanford Main Quad.} 
		\tabularnewline
		\includegraphics[width=0.78\textwidth]{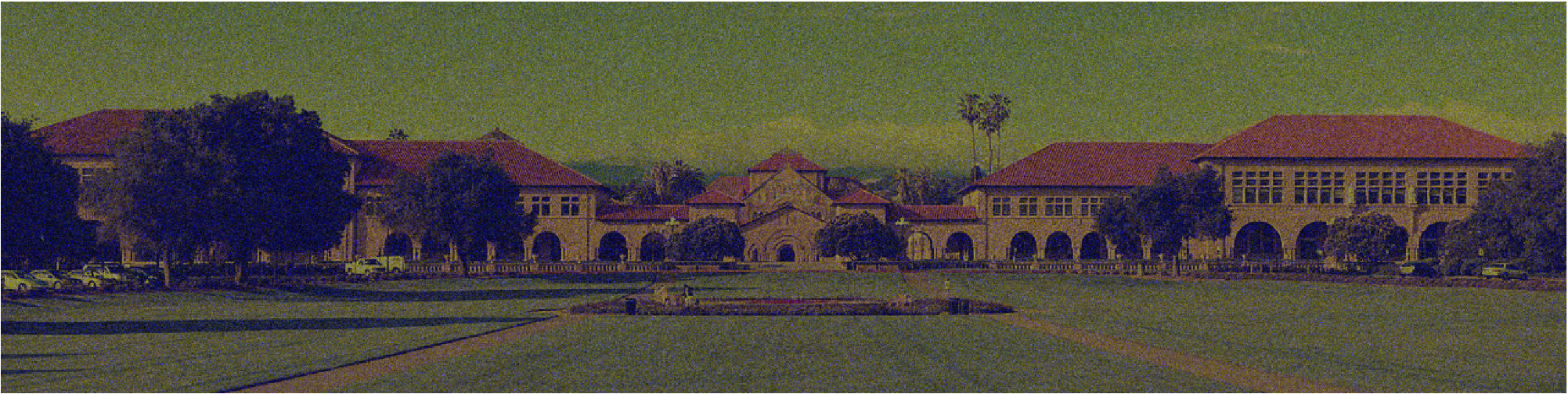} \tabularnewline
		{\footnotesize (b) Recovered image after {SI-1bPR} ($L=32$): relative error = 0.478, PSNR = 11.15.}   
		\tabularnewline
		\includegraphics[width=0.78\textwidth]{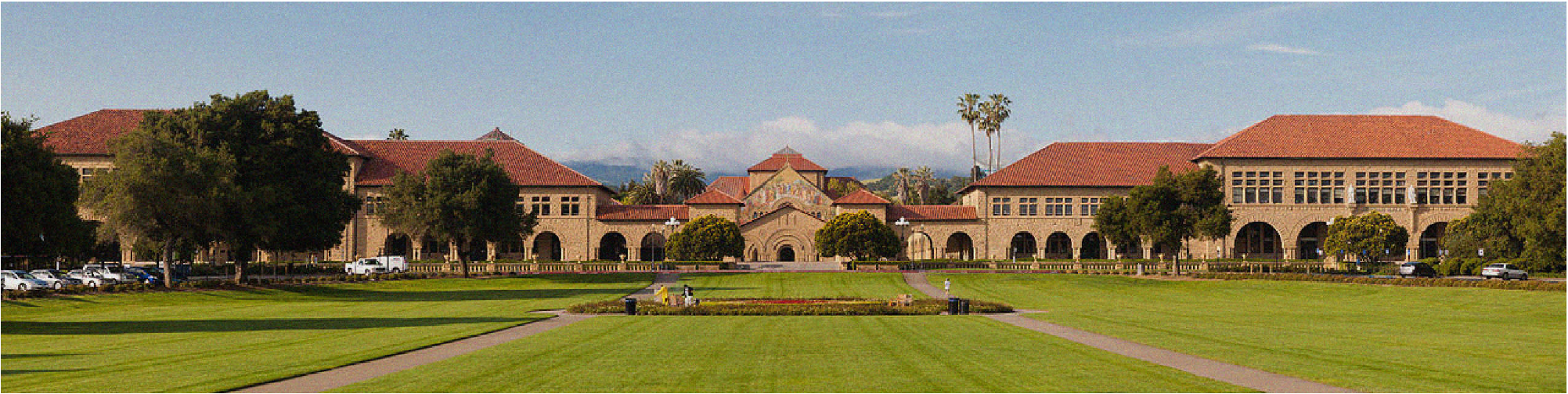} \tabularnewline
		{\footnotesize(c) Recovered image after {GD-1bPR} ($L=32$): relative error = 0.052, PSNR = 30.42.} 
		\tabularnewline
		\includegraphics[width=0.78\textwidth]{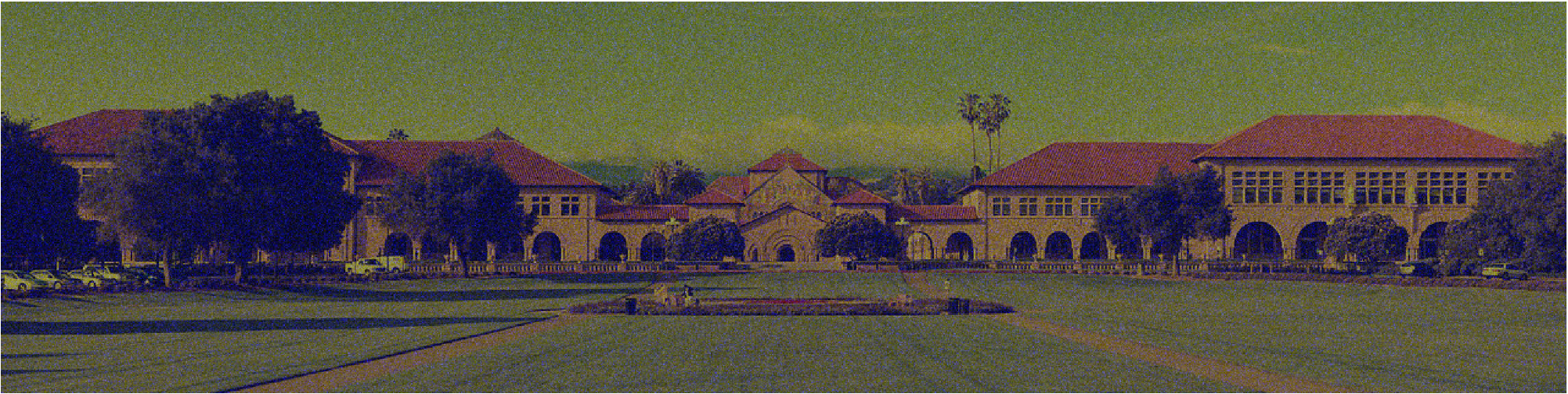} 
		\tabularnewline
		{\footnotesize(d) Recovered image after {SI-1bPR} ($L=64$):  relative error = 0.469, PSNR = 11.32.} 
		\tabularnewline
		\includegraphics[width=0.78\textwidth]{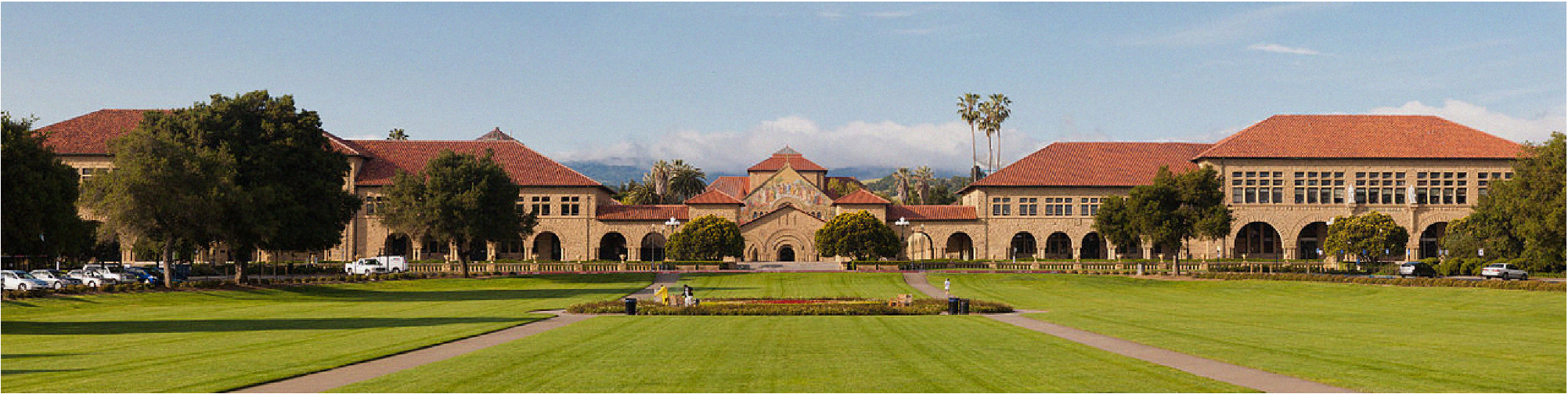} 
		\tabularnewline
		{\footnotesize(e) Recovered image after {GD-1bPR} ($L=64$):  relative error = 0.026, PSNR = 36.42.} 
	\end{tabular}
	\par
	
	\caption{ \small {\it Recovering the $320\times 1280\times 3$ Stanford Main Quad image from phaseless bits produced by CDP with $L=32,~64$ random patterns.} We run $50$ power method iterations for {SI-1bPR} and $100$ iterations for {GD-1bPR}, which involves $32\times 2 \times (50+100) = 9600,~64\times 2 \times (50+100) =19200$ FFTs (see \cite{candes2015phase,chen2017solving}) for each color band and completes in   a few minutes.    
		\label{fig:stanford}}
\end{figure}
\begin{figure}[ht!]
	\centering
	\begin{tabular}{c}
		\includegraphics[width=0.69\textwidth]{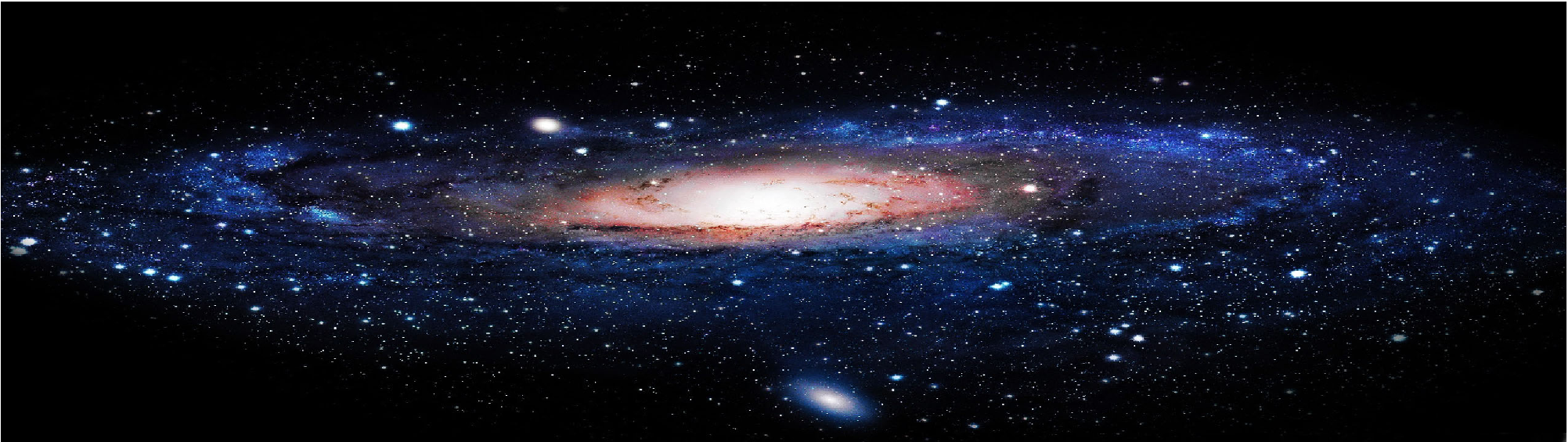} \tabularnewline
		{\footnotesize (a) Original image: Milky Way Galaxy.} 
		\tabularnewline
		\includegraphics[width=0.69\textwidth]{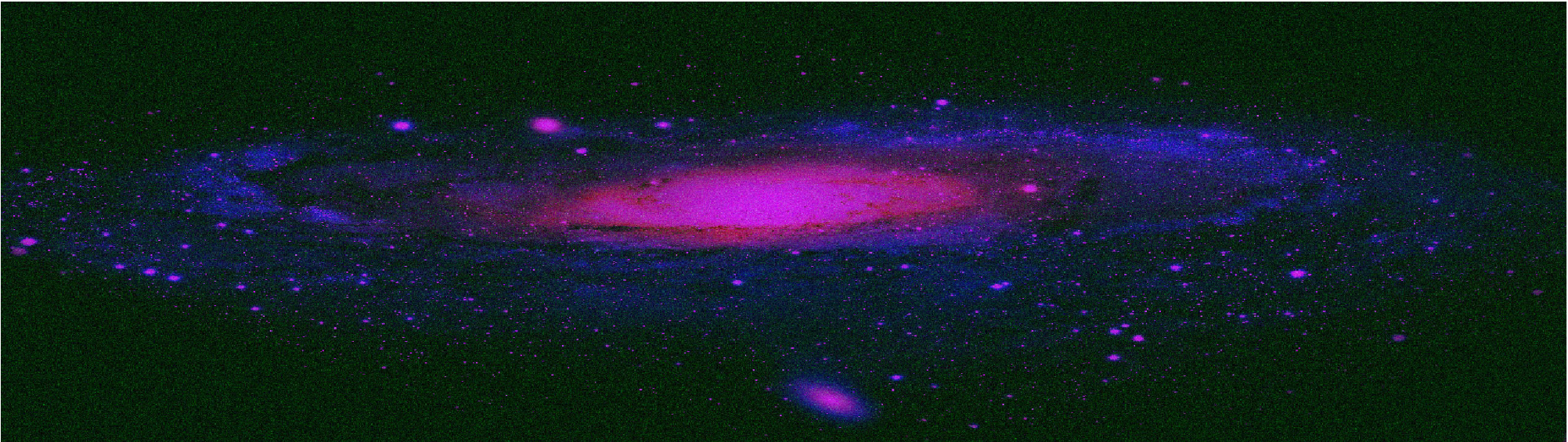} \tabularnewline
		{\footnotesize (b) Recovered image after \text{SI-1bPR} ($L=32$):  relative error = 0.652, PSNR = 17.49.}   
		\tabularnewline
		\includegraphics[width=0.69\textwidth]{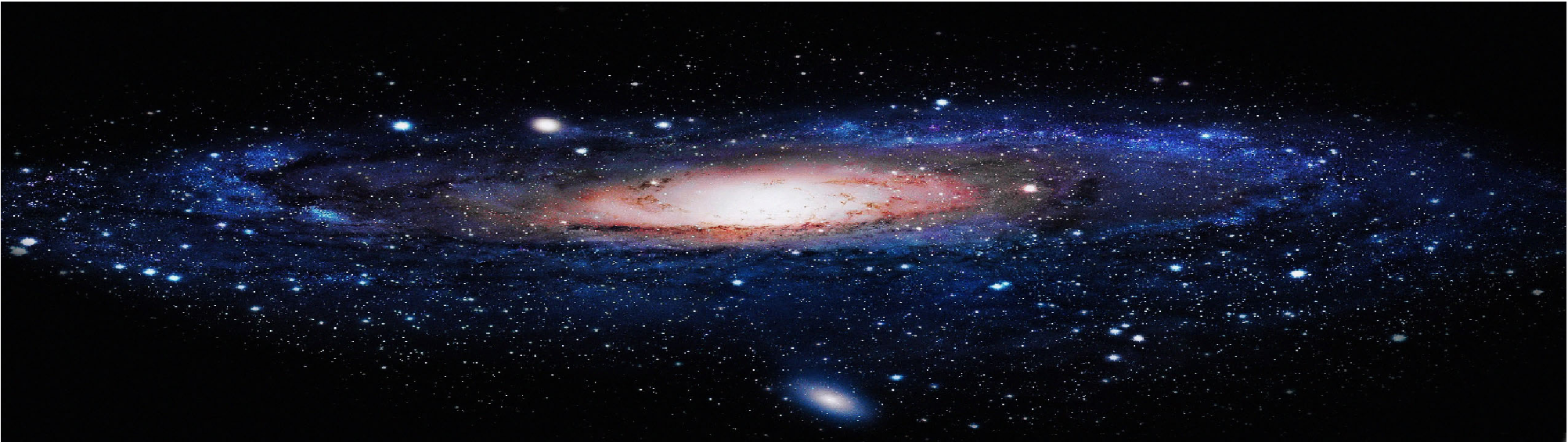} \tabularnewline
		{\footnotesize(c) Recovered image after {GD-1bPR} ($L=32$):  relative error = 0.058, PSNR = 38.52.} 
		\tabularnewline
		\includegraphics[width=0.69\textwidth]{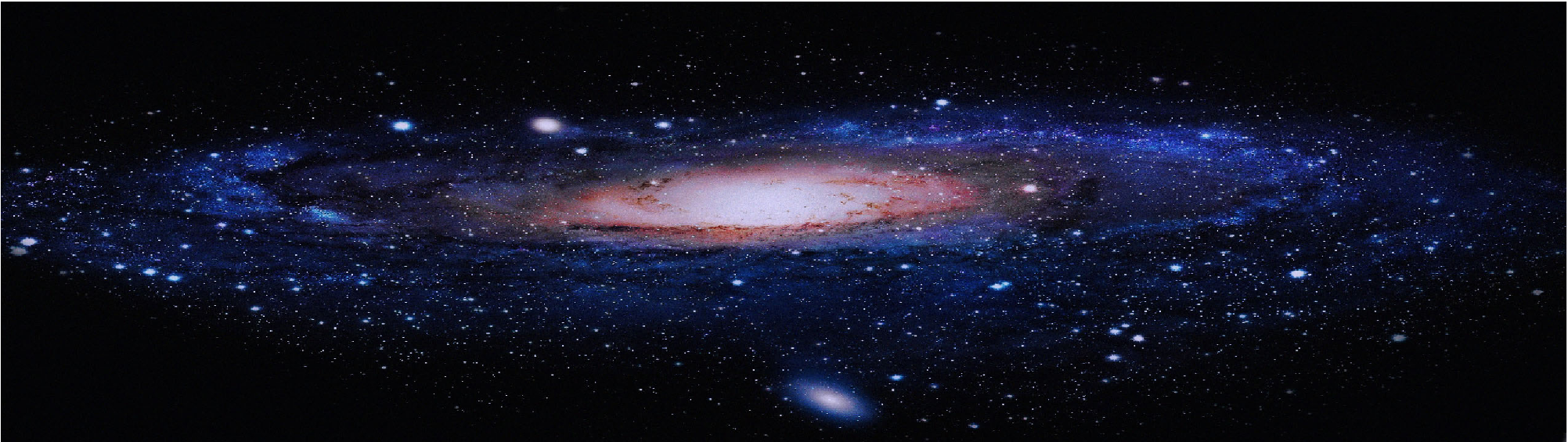} 
		\tabularnewline
		{\footnotesize(d) Recovered image after {SI-1bPR} ($L=64$):  relative error = 0.270, PSNR = 25.14.} 
		\tabularnewline
		\includegraphics[width=0.69\textwidth]{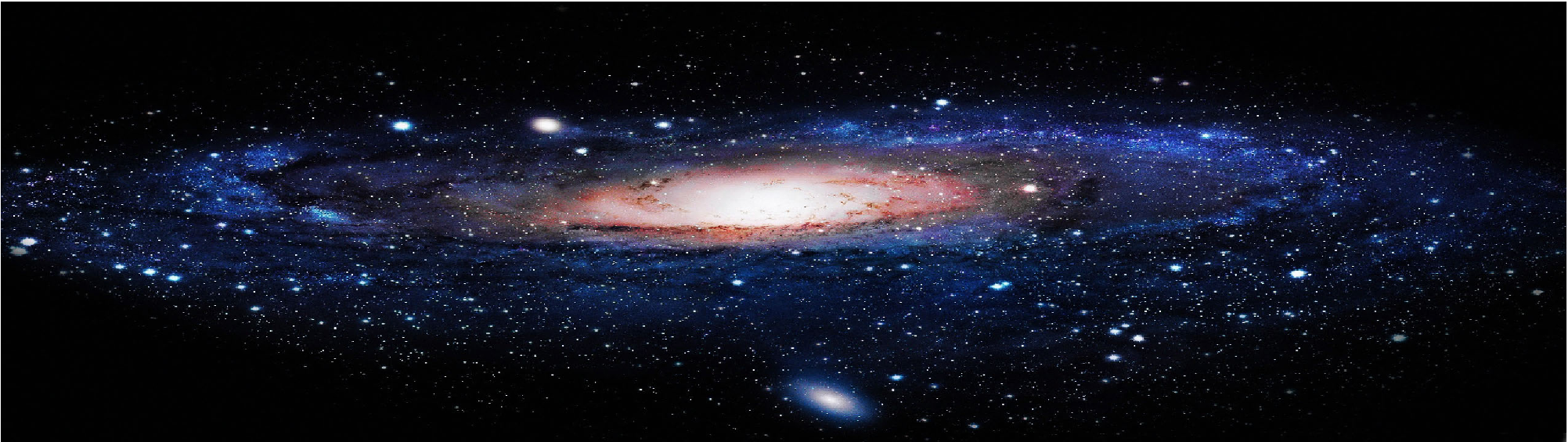} 
		\tabularnewline
		{\footnotesize(e) Recovered image after {GD-1bPR} ($L=64$):  relative error = 0.029, PSNR = 44.65.} 
	\end{tabular}
	
	\caption{ \small {\it Recovering the $1080\times 1980\times 3$ Milky Way Galaxy image from phaseless bits produced by CDP with $L=32,~64$ random patterns.} We run $50$ power method iterations for {SI-1bPR} and $100$ iterations for {GD-1bPR}, which only takes   a matter of minutes to complete. The images are subsampled to $540\times 1980\times 3$ for display.     
		\label{fig:galaxy}}
\end{figure}

\begin{figure}[h]
	\centering
	\includegraphics[width=0.35\textwidth]{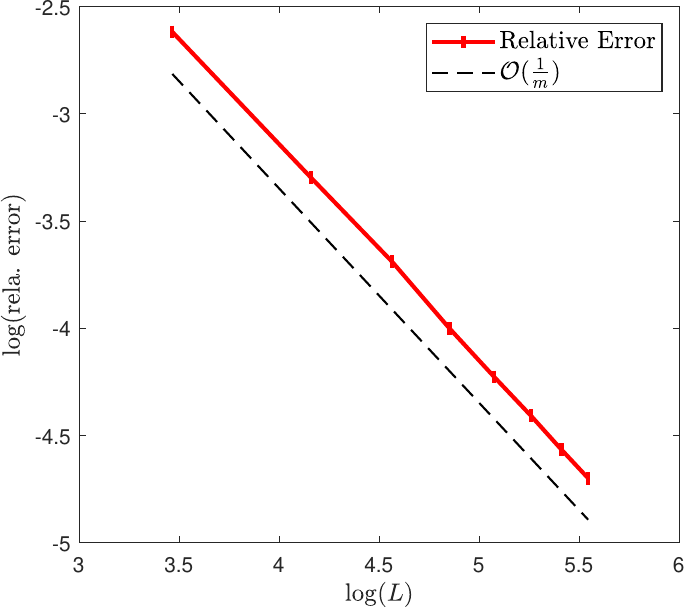}
	\caption{\small The relative error  of recovering the ``Stanford Main Quad'' from the 1-bit CDP measurements (\ref{eq:cdp}) decays with the optimal rate $\calO(m^{-1})$. We subsample the image to $160\times 640\times 3$ for experiments and test $L=32:32:256$.    
		\label{fig:cdprate}}
\end{figure}

\section{Concluding Remarks}\label{sec:conclusion} 
We studied the reconstruction of a signal $\bx\in \mathbb{R}$ from $m$ phaseless bits $\{y_i=\sign(|\ba_i^\top\bx|-\tau)\}_{i=1}^m$, referred to as the 1-bit phase retrieval problem. By developing a theory for Gaussian  phaseless hyperplane tessellation, we showed that generally structured signals can be uniformly and accurately recovered by constrained hamming distance minimization. 
This result specializes to the upper bounds $\calO(\frac{n}{m}\log(\frac{m}{n}))$ for recovering unstructured signals and $\calO(\frac{k}{m}\log(\frac{mn}{k^2}))$ for recovering $k$-sparse signals, which are tight up to logarithmic factors. 
We also investigated the computational aspect  by developing efficient near-optimal algorithms {SI-1bPR} and {GD-1bPR} for 1bPR of unstructured signals, and {SI-1bSPR} and {BIHT-1bSPR} for 1-bit sparse phase retrieval. We further demonstrated the practical merits of these algorithms through a set of numerical experiments.

%{\color{blue}[add robustness; and also state this earlier; star-shaped set?]}  
While we analyzed the efficient algorithms in a noiseless case for succinctness,  it should be noted that the robustness to adversarial bit flips can be obtained by slightly more work. Suppose we observe $\hat{\by}=\by+\be$ obeying $d_H(\hat{\by},\by=\sign(|\bA\bx|-\tau))=\|\be\|_0\le \zeta m$, then the corrupted gradient takes the form $\hat{\bh}(\bx^{(t-1)},\bx)$ where $\hat{\bh}(\bu,\bv)=\frac{1}{2m}\sum_{i=1}^m [\sign(|\ba_i^\top\bu|-\tau)- \sign(|\ba_i^\top\bv|-\tau)- e_i]\sign(\ba_i^\top\bu)\ba_i$. On the other hand, a corrupted version of the PLL-AIC that controls $\|\calP_{\calC_-}(\bu-\bv- \eta\cdot\hat{\bh}(\bu,\bv))\|_2$ similarly leads to convergence of Algorithms \ref{alg:pgd}--\ref{alg:pgd_high}. Since by Lemma \ref{lem:pro_closec} we have $\|\calP_{\calC_-}(\bu-\bv-\eta\cdot\hat{\bh}(\bu,\bv))\|_2\le\|\calP_{\calC_-}(\bu-\bv-\eta\cdot\bh(\bu,\bv))\|_2 + \|\calP_{\calC_-}(\frac{1}{2m}\sum_{i=1}^m e_i\sign(\ba_i^\top\bu)\ba_i)\|_2$, the corruption increments the reconstruction error by $\|\calP_{\calC_-}(\frac{1}{2m}\sum_{i=1}^m e_i\sign(\ba_i^\top\bu)\ba_i)\|_2$; this can be bounded as $\calO(\sqrt{\frac{\zeta \omega^2(\calC_{(1)})}{m}}+\zeta\sqrt{\log(\zeta^{-1})})$ by Lemma \ref{lem:max_ell_sum}, thus the final reconstruction error is incremented by $\calO(\zeta\sqrt{\log(\zeta^{-1})})$. Similarly, one can also show the spectral methods are robust to adversarial bit flips.

Our work is among the first rigorous studies of 1-bit phase retrieval, and points to many interesting open questions. First, can these results be extended to complex-valued measurement vectors and signals? We believe this extension is possible but also needs new techniques. Second, our practical experience seems to suggest that {GD-1bPR} with {\it random initialization} achieves the optimal rates. It would be of great interest to further explore this phenomenon by investigating the optimization landscape \cite{sun2018geometric} or the iteration dynamics \cite{chen2019gradient}. Also, our simulation results suggest that (for certain parameter regimes) 1-bit (sparse) phase retrieval can even outperform 1-bit (compressed) sensing. Explaining this empirical phenomenon would require a more precise error analysis of the algorithms, which we leave for future work (see, e.g., \cite{chandrasekher2023sharp}). Furthermore,  in what way can we go beyond the Gaussian design to preserve the signal recoverability or even the computational feasibility? What quantizer should be used if collecting multiple bits from each measurement is allowed? Can we potentially gain some privileges by replacing the {\it fixed} threshold $\tau$ with {\it random} dithers? We shall leave these tantalizing questions, among many others, for the future. %{\color{blue}[TODO: Sharp analysis]}
 
\bibliographystyle{siamplain}
\bibliography{libr}

@article{oymak2017sharp,
  title={Sharp time--data tradeoffs for linear inverse problems},
  author={Oymak, Samet and Recht, Benjamin and Soltanolkotabi, Mahdi},
  journal={IEEE Transactions on Information Theory},
  volume={64},
  number={6},
  pages={4129--4158},
  year={2017},
  publisher={IEEE}
}

@article{matsumoto2024binary,
  title={Binary iterative hard thresholding converges with optimal number of measurements for 1-bit compressed sensing},
  author={Matsumoto, Namiko and Mazumdar, Arya},
  journal={Journal of the ACM},
  volume={71},
  number={5},
  pages={1--64},
  year={2024},
  publisher={ACM New York, NY}
}

@article{chen2024optimal,
  title={Optimal Quantized Compressed Sensing via Projected Gradient Descent},
  author={Chen, Junren and Yuan, Ming},
  journal={arXiv preprint arXiv:2407.04951},
  year={2024}
}

@article{Cai2013AMC,
  title={A max-norm constrained minimization approach to 1-bit matrix completion},
  author={Tony Cai and Wen-Xin Zhou},
  journal={Journal of Machine Learning Research},
  year={2013},
  volume={14},
  pages={3619-3647}
}

@article{domel2023iteratively,
  title={Iteratively consistent one-bit phase retrieval},
  author={Domel-White, Dylan and Powell, Alexander M},
  journal={Analysis and Applications},
  volume={21},
  number={01},
  pages={255--277},
  year={2023},
  publisher={World Scientific}
}

@article{jacques2021importance,
  title={The importance of phase in complex compressive sensing},
  author={Jacques, Laurent and Feuillen, Thomas},
  journal={IEEE Transactions on Information Theory},
  volume={67},
  number={6},
  pages={4150--4161},
  year={2021},
  publisher={IEEE}
}

@article{mondelli2018fundamental,
  title={Fundamental limits of weak recovery with applications to phase retrieval},
  author={Mondelli, Marco and Montanari, Andrea},
  journal={Foundations of Computational Mathematics},
  volume={19},
  pages={703--773},
  year={2019}
}

@article{lu2020phase,
  title={Phase transitions of spectral initialization for high-dimensional non-convex estimation},
  author={Lu, Yue M and Li, Gen},
  journal={Information and Inference: A Journal of the IMA},
  volume={9},
  number={3},
  pages={507--541},
  year={2020},
  publisher={Oxford University Press}
}

@article{jaganathan2016phase,
  title={Phase retrieval: An overview of recent developments},
  author={Jaganathan, Kishore and Eldar, Yonina C and Hassibi, Babak},
  journal={Optical Compressive Imaging},
  pages={279--312},
  year={2016},
  publisher={CRC Press}
}

@article{waldspurger2018phase,
  title={Phase retrieval with random Gaussian sensing vectors by alternating projections},
  author={Waldspurger, Irene},
  journal={IEEE Transactions on Information Theory},
  volume={64},
  number={5},
  pages={3301--3312},
  year={2018},
  publisher={IEEE}
}

@article{waldspurger2015phase,
  title={Phase recovery, maxcut and complex semidefinite programming},
  author={Waldspurger, Irene and d’Aspremont, Alexandre and Mallat, St{\'e}phane},
  journal={Mathematical Programming},
  volume={149},
  pages={47--81},
  year={2015},
  publisher={Springer}
}

@article{bahmani2019solving,
  title={Solving equations of random convex functions via anchored regression},
  author={Bahmani, Sohail and Romberg, Justin},
  journal={Foundations of Computational Mathematics},
  volume={19},
  number={4},
  pages={813--841},
  year={2019},
  publisher={Springer}
}

@article{goldstein2018phasemax,
  title={Phasemax: Convex phase retrieval via basis pursuit},
  author={Goldstein, Tom and Studer, Christoph},
  journal={IEEE Transactions on Information Theory},
  volume={64},
  number={4},
  pages={2675--2689},
  year={2018},
  publisher={IEEE}
}

@article{dirksen2022covariance,
  title={Covariance estimation under one-bit quantization},
  author={Dirksen, Sjoerd and Maly, Johannes and Rauhut, Holger},
  journal={The Annals of Statistics},
  volume={50},
  number={6},
  pages={3538--3562},
  year={2022},
  publisher={Institute of Mathematical Statistics}
}

@article{davenport20141,
  title={1-bit matrix completion},
  author={Davenport, Mark A and Plan, Yaniv and Van Den Berg, Ewout and Wootters, Mary},
  journal={Information and Inference: A Journal of the IMA},
  volume={3},
  number={3},
  pages={189--223},
  year={2014},
  publisher={OUP}
}

@inproceedings{jaganathan2013sparse,
  title={Sparse phase retrieval: Convex algorithms and limitations},
  author={Jaganathan, Kishore and Oymak, Samet and Hassibi, Babak},
  booktitle={2013 IEEE International Symposium on Information Theory},
  pages={1022--1026},
  year={2013},
  organization={IEEE}
}

@article{hand2024compressive,
  title={Compressive phase retrieval: Optimal sample complexity with deep generative priors},
  author={Hand, Paul and Leong, Oscar and Voroninski, Vladislav},
  journal={Communications on Pure and Applied Mathematics},
  volume={77},
  number={2},
  pages={1147--1223},
  year={2024},
  publisher={Wiley Online Library}
}

@article{oymak2015simultaneously,
  title={Simultaneously structured models with application to sparse and low-rank matrices},
  author={Oymak, Samet and Jalali, Amin and Fazel, Maryam and Eldar, Yonina C and Hassibi, Babak},
  journal={IEEE Transactions on Information Theory},
  volume={61},
  number={5},
  pages={2886--2908},
  year={2015},
  publisher={IEEE}
}

@article{jagatap2019sample,
  title={Sample-efficient algorithms for recovering structured signals from magnitude-only measurements},
  author={Jagatap, Gauri and Hegde, Chinmay},
  journal={IEEE Transactions on Information Theory},
  volume={65},
  number={7},
  pages={4434--4456},
  year={2019},
  publisher={IEEE}
}

@article{lecue2015minimax,
author = {Guillaume Lecu{\'e} and Shahar Mendelson},
title = {{Minimax rate of convergence and the performance of empirical risk minimization in phase recovery}},
volume = {20},
journal = {Electronic Journal of Probability},
number = {none},
publisher = {Institute of Mathematical Statistics and Bernoulli Society},
pages = {1 -- 29},
keywords = {empirical process, minimax theory, phase recovery},
year = {2015}
}

@article{candes2015phase1,
  title={Phase retrieval via matrix completion},
  author={Candes, Emmanuel J and Eldar, Yonina C and Strohmer, Thomas and Voroninski, Vladislav},
  journal={SIAM review},
  volume={57},
  number={2},
  pages={225--251},
  year={2015},
  publisher={SIAM}
}

@article{huang2024performance,
  title={Performance bounds of the intensity-based estimators for noisy phase retrieval},
  author={Huang, Meng and Xu, Zhiqiang},
  journal={Applied and Computational Harmonic Analysis},
  volume={68},
  pages={101584},
  year={2024},
  publisher={Elsevier}
}

@article{eldar2014phase,
  title={Phase retrieval: Stability and recovery guarantees},
  author={Eldar, Yonina C and Mendelson, Shahar},
  journal={Applied and Computational Harmonic Analysis},
  volume={36},
  number={3},
  pages={473--494},
  year={2014},
  publisher={Elsevier}
}

@article{fienup1982phase,
  title={Phase retrieval algorithms: a comparison},
  author={Fienup, James R},
  journal={Applied optics},
  volume={21},
  number={15},
  pages={2758--2769},
  year={1982},
  publisher={Optica Publishing Group}
}

@article{shechtman2015phase,
  title={Phase retrieval with application to optical imaging: a contemporary overview},
  author={Shechtman, Yoav and Eldar, Yonina C and Cohen, Oren and Chapman, Henry Nicholas and Miao, Jianwei and Segev, Mordechai},
  journal={IEEE Signal Processing Magazine},
  volume={32},
  number={3},
  pages={87--109},
  year={2015},
  publisher={IEEE}
}

@book{reichenbach1998philosophic,
  title={Philosophic foundations of quantum mechanics},
  author={Reichenbach, Hans},
  year={1998},
  publisher={Courier Corporation}
}

@article{millane1990phase,
  title={Phase retrieval in crystallography and optics},
  author={Millane, Rick P},
  journal={JOSA A},
  volume={7},
  number={3},
  pages={394--411},
  year={1990},
  publisher={Optica Publishing Group}
}

@inproceedings{boufounos20081,
  title={1-bit compressive sensing},
  author={Boufounos, Petros T and Baraniuk, Richard G},
  booktitle={2008 42nd Annual Conference on Information Sciences and Systems},
  pages={16--21},
  year={2008},
  organization={IEEE}
}

@article{genzel2023unified,
  title={A unified approach to uniform signal recovery from nonlinear observations},
  author={Genzel, Martin and Stollenwerk, Alexander},
  journal={Foundations of Computational Mathematics},
  volume={23},
  number={3},
  pages={899--972},
  year={2023},
  publisher={Springer}
}

@article{friedlander2021nbiht,
  title={NBIHT: An efficient algorithm for 1-bit compressed sensing with optimal error decay rate},
  author={Friedlander, Michael P and Jeong, Halyun and Plan, Yaniv and Y{\i}lmaz, {\"O}zg{\"u}r},
  journal={IEEE Transactions on Information Theory},
  volume={68},
  number={2},
  pages={1157--1177},
  year={2021},
  publisher={IEEE}
}

@article{liu2019one,
  title={One-bit compressive sensing with projected subgradient method under sparsity constraints},
  author={Liu, Dekai and Li, Song and Shen, Yi},
  journal={IEEE Transactions on Information Theory},
  volume={65},
  number={10},
  pages={6650--6663},
  year={2019},
  publisher={IEEE}
}

@article{plan2016generalized,
  title={The generalized lasso with non-linear observations},
  author={Plan, Yaniv and Vershynin, Roman},
  journal={IEEE Transactions on Information Theory},
  volume={62},
  number={3},
  pages={1528--1537},
  year={2016},
  publisher={IEEE}
}

@article{plan2017high,
  title={High-dimensional estimation with geometric constraints},
  author={Plan, Yaniv and Vershynin, Roman and Yudovina, Elena},
  journal={Information and Inference: A Journal of the IMA},
  volume={6},
  number={1},
  pages={1--40},
  year={2017},
  publisher={Oxford University Press}
}

@article{plan2013one,
  title={One-bit compressed sensing by linear programming},
  author={Plan, Yaniv and Vershynin, Roman},
  journal={Communications on Pure and Applied Mathematics},
  volume={66},
  number={8},
  pages={1275--1297},
  year={2013},
  publisher={Wiley Online Library}
}

@article{chi2019nonconvex,
  title={Nonconvex optimization meets low-rank matrix factorization: An overview},
  author={Chi, Yuejie and Lu, Yue M and Chen, Yuxin},
  journal={IEEE Transactions on Signal Processing},
  volume={67},
  number={20},
  pages={5239--5269},
  year={2019},
  publisher={IEEE}
}

@article{chen2017solving,
  title={Solving random quadratic systems of equations is nearly as easy as solving linear systems},
  author={Chen, Yuxin and Cand{\`e}s, Emmanuel J},
  journal={Communications on Pure and Applied Mathematics},
  volume={70},
  number={5},
  pages={822--883},
  year={2017},
  publisher={Wiley Online Library}
}

@article{plan2012robust,
  title={Robust 1-bit compressed sensing and sparse logistic regression: A convex programming approach},
  author={Plan, Yaniv and Vershynin, Roman},
  journal={IEEE Transactions on Information Theory},
  volume={59},
  number={1},
  pages={482--494},
  year={2012},
  publisher={IEEE}
}

@article{li2013sparse,
  title={Sparse signal recovery from quadratic measurements via convex programming},
  author={Li, Xiaodong and Voroninski, Vladislav},
  journal={SIAM Journal on Mathematical Analysis},
  volume={45},
  number={5},
  pages={3019--3033},
  year={2013},
  publisher={SIAM}
}

@article{chen2019gradient,
  title={Gradient descent with random initialization: Fast global convergence for nonconvex phase retrieval},
  author={Chen, Yuxin and Chi, Yuejie and Fan, Jianqing and Ma, Cong},
  journal={Mathematical Programming},
  volume={176},
  pages={5--37},
  year={2019},
  publisher={Springer}
}

@article{sun2018geometric,
  title={A geometric analysis of phase retrieval},
  author={Sun, Ju and Qu, Qing and Wright, John},
  journal={Foundations of Computational Mathematics},
  volume={18},
  pages={1131--1198},
  year={2018},
  publisher={Springer}
}

@article{kreutz2009complex,
  title={The complex gradient operator and the CR-calculus},
  author={Kreutz-Delgado, Ken},
  journal={arXiv preprint arXiv:0906.4835},
  year={2009}
}

@article{wang2017solving,
  title={Solving systems of random quadratic equations via truncated amplitude flow},
  author={Wang, Gang and Giannakis, Georgios B and Eldar, Yonina C},
  journal={IEEE Transactions on Information Theory},
  volume={64},
  number={2},
  pages={773--794},
  year={2017},
  publisher={IEEE}
}

@article{candes2015cdp,
  title={Phase retrieval from coded diffraction patterns},
  author={Candes, Emmanuel J and Li, Xiaodong and Soltanolkotabi, Mahdi},
  journal={Applied and Computational Harmonic Analysis},
  volume={39},
  number={2},
  pages={277--299},
  year={2015},
  publisher={Elsevier}
}

@article{netrapalli2013phase,
  title={Phase retrieval using alternating minimization},
  author={Netrapalli, Praneeth and Jain, Prateek and Sanghavi, Sujay},
  journal={Advances in Neural Information Processing Systems},
  volume={26},
  year={2013}
}

@article{jaganathan2017sparse,
  title={Sparse phase retrieval: Uniqueness guarantees and recovery algorithms},
  author={Jaganathan, Kishore and Oymak, Samet and Hassibi, Babak},
  journal={IEEE Transactions on Signal Processing},
  volume={65},
  number={9},
  pages={2402--2410},
  year={2017},
  publisher={IEEE}
}

@article{cai2016optimal,
author = {T. Tony Cai and Xiaodong Li and Zongming Ma},
title = {{Optimal rates of convergence for noisy sparse phase retrieval via thresholded Wirtinger flow}},
volume = {44},
journal = {The Annals of Statistics},
number = {5},
publisher = {Institute of Mathematical Statistics},
pages = {2221--2251},
year = {2016}
}

@article{wang2017sparse,
  title={Sparse phase retrieval via truncated amplitude flow},
  author={Wang, Gang and Zhang, Liang and Giannakis, Georgios B and Ak{\c{c}}akaya, Mehmet and Chen, Jie},
  journal={IEEE Transactions on Signal Processing},
  volume={66},
  number={2},
  pages={479--491},
  year={2017},
  publisher={IEEE}
}

@article{davis1970rotation,
  title={The rotation of eigenvectors by a perturbation. III},
  author={Davis, Chandler and Kahan, William Morton},
  journal={SIAM Journal on Numerical Analysis},
  volume={7},
  number={1},
  pages={1--46},
  year={1970},
  publisher={SIAM}
}

@article{mendelson2016upper,
  title={Upper bounds on product and multiplier empirical processes},
  author={Mendelson, Shahar},
  journal={Stochastic Processes and their Applications},
  volume={126},
  number={12},
  pages={3652--3680},
  year={2016},
  publisher={Elsevier}
}

@article{zhang2017nonconvex,
  title={A nonconvex approach for phase retrieval: Reshaped wirtinger flow and incremental algorithms},
  author={Zhang, Huishuai and Zhou, Yi and Liang, Yingbin and Chi, Yuejie},
  journal={Journal of Machine Learning Research},
  volume={18},
  year={2017}
}

@inproceedings{awasthi2016learning,
  title={Learning and 1-bit compressed sensing under asymmetric noise},
  author={Awasthi, Pranjal and Balcan, Maria-Florina and Haghtalab, Nika and Zhang, Hongyang},
  booktitle={Conference on Learning Theory},
  pages={152--192},
  year={2016},
  organization={PMLR}
}

@article{knudson2016one,
  title={One-bit compressive sensing with norm estimation},
  author={Knudson, Karin and Saab, Rayan and Ward, Rachel},
  journal={IEEE Transactions on Information Theory},
  volume={62},
  number={5},
  pages={2748--2758},
  year={2016},
  publisher={IEEE}
}

@article{sahinoglou1991phase,
  title={On phase retrieval of finite-length sequences using the initial time sample},
  author={Sahinoglou, Haralambos and Cabrera, Sergio D},
  journal={IEEE Transactions on Circuits and Systems},
  volume={38},
  number={8},
  pages={954--958},
  year={1991},
  publisher={IEEE}
}

@article{candes2015phase,
  title={Phase retrieval via Wirtinger flow: Theory and algorithms},
  author={Candes, Emmanuel J and Li, Xiaodong and Soltanolkotabi, Mahdi},
  journal={IEEE Transactions on Information Theory},
  volume={61},
  number={4},
  pages={1985--2007},
  year={2015},
  publisher={IEEE}
}

@article{candes2013phaselift,
  title={Phaselift: Exact and stable signal recovery from magnitude measurements via convex programming},
  author={Candes, Emmanuel J and Strohmer, Thomas and Voroninski, Vladislav},
  journal={Communications on Pure and Applied Mathematics},
  volume={66},
  number={8},
  pages={1241--1274},
  year={2013},
  publisher={Wiley Online Library}
}

@article{jacques2013robust,
  title={Robust 1-bit compressive sensing via binary stable embeddings of sparse vectors},
  author={Jacques, Laurent and Laska, Jason N and Boufounos, Petros T and Baraniuk, Richard G},
  journal={IEEE Transactions on Information Theory},
  volume={59},
  number={4},
  pages={2082--2102},
  year={2013},
  publisher={IEEE}
}

@article{flum2006subexponential,
  title={Subexponential fixed-parameter tractability},
  author={Flum, J{\"o}rg and Grohe, Martin},
  journal={Parameterized Complexity Theory},
  pages={417--451},
  year={2006},
  publisher={Springer}
}

@article{dirksen2020one,
  title={One-bit compressed sensing with partial Gaussian circulant matrices},
  author={Dirksen, Sjoerd and Jung, Hans Christian and Rauhut, Holger},
  journal={Information and Inference: A Journal of the IMA},
  volume={9},
  number={3},
  pages={601--626},
  year={2020},
  publisher={Oxford University Press}
}

@article{baraniuk2017exponential,
  title={Exponential decay of reconstruction error from binary measurements of sparse signals},
  author={Baraniuk, Richard G and Foucart, Simon and Needell, Deanna and Plan, Yaniv and Wootters, Mary},
  journal={IEEE Transactions on Information Theory},
  volume={63},
  number={6},
  pages={3368--3385},
  year={2017},
  publisher={IEEE}
}

@article{dirksen2023robust,
  title={Robust one-bit compressed sensing with partial circulant matrices},
  author={Dirksen, Sjoerd and Mendelson, Shahar},
  journal={The Annals of Applied Probability},
  volume={33},
  number={3},
  pages={1874--1903},
  year={2023},
  publisher={Institute of Mathematical Statistics}
}

@article{chandrasekher2023sharp,
  title={Sharp global convergence guarantees for iterative nonconvex optimization with random data},
  author={Chandrasekher, Kabir Aladin and Pananjady, Ashwin and Thrampoulidis, Christos},
  journal={The Annals of Statistics},
  volume={51},
  number={1},
  pages={179--210},
  year={2023},
  publisher={Institute of Mathematical Statistics}
}

@article{plan2014dimension,
  title={Dimension reduction by random hyperplane tessellations},
  author={Plan, Yaniv and Vershynin, Roman},
  journal={Discrete \& Computational Geometry},
  volume={51},
  number={2},
  pages={438--461},
  year={2014},
  publisher={Springer}
}

@book{motwani1995randomized,
  title={Randomized algorithms},
  author={Motwani, Rajeev and Raghavan, Prabhakar},
  year={1995},
  publisher={Cambridge University Press}
}

@article{goemans1995improved,
  title={Improved approximation algorithms for maximum cut and satisfiability problems using semidefinite programming},
  author={Goemans, Michel X and Williamson, David P},
  journal={Journal of the ACM (JACM)},
  volume={42},
  number={6},
  pages={1115--1145},
  year={1995},
  publisher={ACM New York, NY, USA}
}

@article{oymak2015near,
  title={Near-optimal bounds for binary embeddings of arbitrary sets},
  author={Oymak, Samet and Recht, Ben},
  journal={arXiv preprint arXiv:1512.04433},
  year={2015}
}

@article{xu2020quantized,
  title={Quantized compressive sensing with rip matrices: The benefit of dithering},
  author={Xu, Chunlei and Jacques, Laurent},
  journal={Information and Inference: A Journal of the IMA},
  volume={9},
  number={3},
  pages={543--586},
  year={2020},
  publisher={Oxford University Press}
}

@book{vershynin2018high,
  title={High-dimensional probability: An introduction with applications in data science},
  author={Vershynin, Roman},
  volume={47},
  year={2018},
  publisher={Cambridge University Press}
}

@article{thrampoulidis2020generalized,
  title={The generalized lasso for sub-gaussian measurements with dithered quantization},
  author={Thrampoulidis, Christos and Rawat, Ankit Singh},
  journal={IEEE Transactions on Information Theory},
  volume={66},
  number={4},
  pages={2487--2500},
  year={2020},
  publisher={IEEE}
}

@article{dirksen2021non,
  title={Non-Gaussian hyperplane tessellations and robust one-bit compressed sensing},
  author={Dirksen, Sjoerd and Mendelson, Shahar},
  journal={Journal of the European Mathematical Society},
  volume={23},
  number={9},
  pages={2913--2947},
  year={2021}
}

@article{jung2021quantized,
  title={Quantized compressed sensing by rectified linear units},
  author={Jung, Hans Christian and Maly, Johannes and Palzer, Lars and Stollenwerk, Alexander},
  journal={IEEE Transactions on Information Theory},
  volume={67},
  number={6},
  pages={4125--4149},
  year={2021},
  publisher={IEEE}
}

@ARTICLE{chen2022high,
  author={Chen, Junren and Wang, Cheng-Long and Ng, Michael K. and Wang, Di},
  journal={IEEE Transactions on Information Theory}, 
  title={High Dimensional Statistical Estimation Under Uniformly Dithered One-Bit Quantization}, 
  year={2023},
  volume={69},
  number={8},
  pages={5151-5187}}

@inproceedings{dirksen2019quantized,
  title={Quantized compressed sensing: a survey},
  author={Dirksen, Sjoerd},
  booktitle={Compressed Sensing and Its Applications: Third International MATHEON Conference 2017},
  pages={67--95},
  year={2019},
  organization={Springer}
}

@article{zhu2019phase,
  title={Phase retrieval from quantized measurements via approximate message passing},
  author={Zhu, Jiang and Yuan, Qiumeng and Song, Chunyi and Xu, Zhiwei},
  journal={IEEE Signal Processing Letters},
  volume={26},
  number={7},
  pages={986--990},
  year={2019},
  publisher={IEEE}
}

@article{eamaz2022one,
  title={One-bit phase retrieval: More samples means less complexity?},
  author={Eamaz, Arian and Yeganegi, Farhang and Soltanalian, Mojtaba},
  journal={IEEE Transactions on Signal Processing},
  volume={70},
  pages={4618--4632},
  year={2022},
  publisher={IEEE}
}

@article{mroueh2013quantization,
  title={Quantization and greed are good: One bit phase retrieval, robustness and greedy refinements},
  author={Mroueh, Youssef and Rosasco, Lorenzo},
  journal={arXiv preprint arXiv:1312.1830},
  year={2013}
}

@article{mukherjee2018phase,
  title={Phase retrieval from binary measurements},
  author={Mukherjee, Subhadip and Seelamantula, Chandra Sekhar},
  journal={IEEE Signal Processing Letters},
  volume={25},
  number={3},
  pages={348--352},
  year={2018},
  publisher={IEEE}
}

@inproceedings{boufounos2015quantization,
  title={Quantization and compressive sensing},
  author={Boufounos, Petros T and Jacques, Laurent and Krahmer, Felix and Saab, Rayan},
  booktitle={Compressed Sensing and its Applications: MATHEON Workshop 2013},
  pages={193--237},
  year={2015},
  organization={Springer}
}

@article{domel2022phase,
  title={Phase Retrieval by Binary Questions: Which Complementary Subspace is Closer?},
  author={Domel-White, Dylan and Bodmann, Bernhard G},
  journal={Constructive Approximation},
  volume={56},
  number={1},
  pages={1--33},
  year={2022},
  publisher={Springer}
}

@inproceedings{kishore2020wirtinger,
  title={Wirtinger flow algorithms for phase retrieval from binary measurements},
  author={Kishore, Vinith and Seelamantula, Chandra Sekhar},
  booktitle={ICASSP 2020-2020 IEEE International Conference on Acoustics, Speech and Signal Processing (ICASSP)},
  pages={5750--5754},
  year={2020},
  organization={IEEE}
}

%% enabling this for separating
%% ****************************
%\end{document}
%% ****************************

\newpage  

{\centering \huge \bf Appendix \par}
\vspace{3mm} 
\paragraph{Outline:} We prove our information-theoretic results in Appendix \ref{app:proof_embed}; we prove that  PLL-AIC leads to local convergence of GD-1bPR and BIHT-1bSPR in Appendix \ref{app:aic2conver}; we establish PLL-AIC for Gaussian matrix in Appendix \ref{app:prove_pgd}; the technical lemmas which support our proofs are collected in Appendix \ref{app:tools}; the proofs with standard arguments that are not regarded as our main technical contributions
are relegated to Appendix \ref{app:standard}. 

\begin{appendix}
\section{Proofs for Information-Theoretic Bounds}\label{app:proof_embed}
This appendix includes the proofs of   Theorem \ref{thm:local_embed}, Lemma \ref{lem:Puv} and Theorem \ref{thm:lower}. 

\subsection{The Proof of Theorem \ref{thm:local_embed} (Local Binary Phaseless Embedding)}\label{app:prove_thm1}
We first provide some preliminaries. 
For fixed $\bu,\bv$, the performance of $d_H(\sign(|\bA\bu|-\tau),\sign(|\bA\bv|-\tau))\sim \text{Bin}(m,\sfP_{\bu,\bv})$ is precisely characterized by Chernoff bound. Since the high-probability events in Theorem \ref{thm:local_embed} hold universally for all pairs of $(\bu,\bv)$ obeying certain conditions, so we  invoke a covering argument to pursue the uniformity. The major challenge arises from the discontinuity of hamming distance, which makes it hard to assert the closeness between $d_H(\sign(|\bA\bu|-\tau),\sign(|\bA\bv|-\tau))$ and $d_H(\sign(|\bA\bp|-\tau),\sign(|\bA\bq|-\tau))$  even though   $\|\bu-\bp\|_2+\|\bv-\bq\|_2$ is fairly small. To overcome the difficulty, we  soften the phaseless separation as follows. 
\begin{definition}
	[$\theta$-Well-Separation] \label{def:well_separate} Given $\theta> 0$, we say a phaseless hyperplane $\calH_{|\ba|}$ $\theta$-well-separates $\bu,\bv\in \mathbb{R}^n$ if the following hold: 
 \begin{itemize}
		[leftmargin=2ex,topsep=0.25ex]
		\setlength\itemsep{-0.1em}
		\item $\calH_{|\ba|}$ separates $\bu,\bv$, i.e., $\sign(|\ba^\top\bu|-\tau)\ne\sign(|\ba^\top\bv|-\tau)$; 
		\item $|\ba^\top\bu|$ and $|\ba^\top\bv|$ are bounded away from the quantization threshold $\tau$ in that \begin{align}\label{eq:bound_away_tau}
		    ||\ba^\top\bu|-\tau|\ge\theta \cdot \dist(\bu,\bv)~~\text{and}~~||\ba^\top\bv|-\tau|\ge\theta \cdot \dist(\bu,\bv)
		\end{align} hold. 
	\end{itemize} 
\end{definition}
With the additional second dot point, $\theta$-well-separation is stable in the following sense. 

\begin{lem}
	[Stability of $\theta$-Well-Separation] \label{lem:sepa_im_sepa}
	Given a phaseless hyperplane $\calH_{|\ba|}$, we have the following two statements:
	\begin{itemize}
		[leftmargin=2ex,topsep=0.25ex]
		\setlength\itemsep{-0.1em}
		\item If $||\ba^\top\bu|-\tau|>2\min\big\{|\ba^\top(\bp-\bu)|,|\ba^\top(\bp+\bu)|\big\}$ holds,
		then $\bp$ and $\bu$ are not separated by  $\calH_{|\ba|}$. 
		
		\item As a consequence, if $\calH_{|\ba|}$ $\theta$-well-separate $\bu$ and $\bv$, and for some $\bp,\bq$ we have $\min\big\{|\ba^\top(\bp-\bu)|,|\ba^\top(\bp+\bu)|\big\} < \frac{\theta}{2}\cdot \dist(\bu,\bv)$ and $\min\big\{|\ba^\top(\bq-\bv)|,|\ba^\top(\bq+\bv)|\big\}<\frac{\theta}{2}\cdot \dist(\bu,\bv),$ 
		then $\calH_{|\ba|}$ separates $\bp$ and $\bq$.
	\end{itemize}
\end{lem}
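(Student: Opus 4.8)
The plan is to prove both bullets by elementary triangle‑inequality estimates, with the second being an immediate consequence of the first. The crux of the first bullet is the two‑sided bound
\[
	\big|\,|\ba^\top\bp|-|\ba^\top\bu|\,\big|\ \le\ \min\big\{|\ba^\top(\bp-\bu)|,\ |\ba^\top(\bp+\bu)|\big\},
\]
which follows from the elementary inequalities $\big||s|-|t|\big|\le|s-t|$ and $\big||s|-|t|\big|\le|s+t|$ applied with $s=\ba^\top\bp$, $t=\ba^\top\bu$. Write $\delta$ for the right‑hand side. Under the hypothesis $||\ba^\top\bu|-\tau|>2\delta$ we in particular have $|\ba^\top\bu|\ne\tau$, so $\sign(|\ba^\top\bu|-\tau)\in\{-1,+1\}$. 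First I would handle the case $|\ba^\top\bu|>\tau+2\delta$: then $|\ba^\top\bp|\ge|\ba^\top\bu|-\delta>\tau+\delta>\tau$, hence $\sign(|\ba^\top\bp|-\tau)=+1$. Symmetrically, if $|\ba^\top\bu|<\tau-2\delta$ then $|\ba^\top\bp|\le|\ba^\top\bu|+\delta<\tau-\delta<\tau$, hence $\sign(|\ba^\top\bp|-\tau)=-1$. In either case $\sign(|\ba^\top\bp|-\tau)=\sign(|\ba^\top\bu|-\tau)$, which is precisely the statement that $\calH_{|\ba|}$ does not separate $\bp$ and $\bu$.

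For the second bullet, I would first remark that since $\calH_{|\ba|}$ separates $\bu$ and $\bv$ while $\bu$ and $-\bu$ always lie on the same side of $\calH_{|\ba|}$, we must have $\dist(\bu,\bv)>0$. The $\theta$‑well‑separation hypothesis gives $||\ba^\top\bu|-\tau|\ge\theta\,\dist(\bu,\bv)$, whereas the hypothesis on $\bp$ gives $\min\{|\ba^\top(\bp-\bu)|,|\ba^\top(\bp+\bu)|\}<\tfrac{\theta}{2}\dist(\bu,\bv)$; doubling the latter and comparing shows $||\ba^\top\bu|-\tau|>2\min\{|\ba^\top(\bp-\bu)|,|\ba^\top(\bp+\bu)|\}$, so the first bullet applied to the pair $\bp,\bu$ yields $\sign(|\ba^\top\bp|-\tau)=\sign(|\ba^\top\bu|-\tau)$. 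Running the identical argument with $(\bq,\bv)$ in place of $(\bp,\bu)$ gives $\sign(|\ba^\top\bq|-\tau)=\sign(|\ba^\top\bv|-\tau)$. Combining these two identities with the separation property $\sign(|\ba^\top\bu|-\tau)\ne\sign(|\ba^\top\bv|-\tau)$ gives $\sign(|\ba^\top\bp|-\tau)\ne\sign(|\ba^\top\bq|-\tau)$, i.e.\ $\calH_{|\ba|}$ separates $\bp$ and $\bq$.

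There is no serious obstacle here: the lemma is a soft stability statement meant to feed the covering/discretization step in the proof of Theorem~\ref{thm:local_embed} (it lets one pass from well‑separation at net points to separation at nearby points), and its proof is short. The only place demanding a little care is arranging the case analysis so that the factor‑$2$ slack in the hypotheses is genuinely used and strict versus non‑strict inequalities are tracked correctly; once the two‑sided bound $\big||s|-|t|\big|\le\min\{|s-t|,|s+t|\}$ is in hand, both bullets drop out.
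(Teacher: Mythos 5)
Your proposal is correct and follows essentially the same route as the paper: the first bullet is proved by applying the reverse triangle inequality to show $\sign(|\ba^\top\bp|-\tau)$ must agree with $\sign(|\ba^\top\bu|-\tau)$ under the factor-$2$ slack, and the second bullet is derived by applying the first twice (to $(\bp,\bu)$ and $(\bq,\bv)$) and combining with the $\theta$-well-separation of $\bu,\bv$. The only cosmetic difference is that you state the two-sided bound $\big||\ba^\top\bp|-|\ba^\top\bu|\big|\le\min\{|\ba^\top(\bp-\bu)|,|\ba^\top(\bp+\bu)|\}$ explicitly and split on the sign of $|\ba^\top\bu|-\tau$, whereas the paper phrases the case split in terms of membership in $\calH_{|\ba|}^\pm$; the content is identical.
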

\begin{proof}
	The proof can be found in Appendix \ref{appother}.
\end{proof}

Note that $\theta$-well-separation is   stricter   than separation due to the additional conditions in (\ref{eq:bound_away_tau}). However, the probability of $\theta$-well-separation is order-wise the same as $\sfP_{\bu,\bv}$ if $\theta$ is sufficiently small.
%Note that $\theta$-well-separation is obviously more  strict than separation. Fortunately, the probability of $\theta$-well-separation is order-wise the same as $\sfP_{\bu,\bv}$ 
%given that  $\theta$ is small enough. 
\begin{lem}\label{lem:Pthetauv}
	Given $\theta>0$, $\bu,\bv\in\mathbb{R}^n$ and $\ba\sim\calN(0,\bI_n)$, we let $\sfP_{\theta,\bu,\bv}=\mathbbm{P}(\calH_{|\ba|}~\theta\text{-well-separates}~\bu,\bv).$
	Then there exist some $c_1,c_2,C_3>0$ depending on $(\alpha,\beta,\tau)$, such that if $0\le\theta<c_1$, it holds that 
	\begin{align*}
		c_2 \dist(\bu,\bv)\le \sfP_{\theta,\bu,\bv} \le C_3\dist(\bu,\bv),~\forall \bu,\bv\in \mathbbm{A}_{\alpha,\beta}.
	\end{align*}
\end{lem}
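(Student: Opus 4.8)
The plan is to sandwich $\sfP_{\theta,\bu,\bv}$ between $\sfP_{\bu,\bv}$ and a slightly smaller quantity, and then invoke Lemma~\ref{lem:Puv}, which already establishes $c\,\dist(\bu,\bv)\le\sfP_{\bu,\bv}\le C\,\dist(\bu,\bv)$ on $\mathbbm{A}_{\alpha,\beta}$. The upper bound is immediate: any $\calH_{|\ba|}$ that $\theta$-well-separates $\bu,\bv$ in particular separates them, so the event defining $\sfP_{\theta,\bu,\bv}$ is contained in the event defining $\sfP_{\bu,\bv}$, whence $\sfP_{\theta,\bu,\bv}\le\sfP_{\bu,\bv}\le C\,\dist(\bu,\bv)$ with $C_3=C$; no restriction on $\theta$ is needed here.

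For the lower bound I would write $\sfP_{\theta,\bu,\bv}=\sfP_{\bu,\bv}-D_\theta$, where the ``deficit'' $D_\theta=\sfP_{\bu,\bv}-\sfP_{\theta,\bu,\bv}$ is exactly the probability that $\calH_{|\ba|}$ separates $\bu$ and $\bv$ but at least one of the bound-away conditions (\ref{eq:bound_away_tau}) fails. Discarding the separation requirement and union-bounding over the two ways the conditions can fail gives
\[
D_\theta\le \mathbbm{P}\big(||\ba^\top\bu|-\tau|<\theta\,\dist(\bu,\bv)\big)+\mathbbm{P}\big(||\ba^\top\bv|-\tau|<\theta\,\dist(\bu,\bv)\big).
\]
Since $\ba\sim\calN(0,\bI_n)$ we have $\ba^\top\bu\sim\calN(0,\|\bu\|_2^2)$, whose density is bounded by $(\sqrt{2\pi}\,\|\bu\|_2)^{-1}\le(\sqrt{2\pi}\,\alpha)^{-1}$ because $\bu\in\mathbbm{A}_{\alpha,\beta}$; moreover $\{x\in\mathbb{R}:||x|-\tau|<\delta\}$ has Lebesgue measure at most $4\delta$. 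Hence each term is at most $4\theta\,\dist(\bu,\bv)/(\sqrt{2\pi}\,\alpha)$, so $D_\theta\le \frac{8\theta}{\sqrt{2\pi}\,\alpha}\,\dist(\bu,\bv)$.

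Choosing $c_1:=\frac{\sqrt{2\pi}\,\alpha\,c}{16}$, with $c$ the constant from Lemma~\ref{lem:Puv}, for every $0\le\theta<c_1$ we obtain $D_\theta\le\frac{c}{2}\,\dist(\bu,\bv)$ and therefore $\sfP_{\theta,\bu,\bv}\ge\big(c-\tfrac{8\theta}{\sqrt{2\pi}\,\alpha}\big)\dist(\bu,\bv)\ge\frac{c}{2}\,\dist(\bu,\bv)$, i.e.\ $c_2=c/2$. The degenerate case $\dist(\bu,\bv)=0$ (equivalently $\bu=\pm\bv$) is trivial, since then $|\ba^\top\bu|=|\ba^\top\bv|$ for every realization of $\ba$, so $\sfP_{\theta,\bu,\bv}=\sfP_{\bu,\bv}=0$; note also that $\theta=0$ just recovers Lemma~\ref{lem:Puv}, as expected.

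I do not anticipate a genuine obstacle here: the only new ingredient beyond Lemma~\ref{lem:Puv} is one-dimensional Gaussian anti-concentration (boundedness of the Gaussian density), and the single point requiring care is that $c_1$ must be taken proportional to $\alpha c$, so that it ``depends on $(\alpha,\beta,\tau)$'' exactly as asserted — through the lower radius $\alpha$ and through the constant $c$ produced by Lemma~\ref{lem:Puv}.
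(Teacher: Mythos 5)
Your proof is correct and follows the same route as the paper's: upper bound from monotonicity of events plus Lemma~\ref{lem:Puv}, lower bound by subtracting the "deficit" (the probability that at least one of the two bound-away conditions in~(\ref{eq:bound_away_tau}) fails), bounding that deficit via a union bound and one-dimensional Gaussian anti-concentration with density constant $(\sqrt{2\pi}\,\alpha)^{-1}$, and then taking $\theta$ small enough relative to the constant $c$ from Lemma~\ref{lem:Puv}. Even the numerical constants match the paper's (the threshold $c_1\propto\sqrt{2\pi}\,\alpha\,c$ and the resulting $c_2=c/2$), so there is no meaningful difference in approach.
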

\begin{proof}
The proof can be found in Appendix \ref{appother}. 
\end{proof}

\begin{proof}[Proof of Theorem \ref{thm:local_embed}] 
Given $r>0$ and let $r'=\frac{cr}{\log^{1/2}(r^{-1})}$ for some small enough $c$, we construct $\calN_{r'}$ as a minimal $r'$-net of $\calK$ with cardinality satisfying $\log|\calN_{r'}|=\scrH(\calK,r')$. For any $\bu$ and $\bv$ in $\calK$, we find $\bp$ and $\bq$ as the points in $\calN_{r'}$ being closest to $(\bu,\bv)$:  
	\begin{gather} \label{eq:foundpq}
		\bp = \text{arg}\min_{\bw\in \calN_{r'}}~\|\bw-\bu\|_2~~\text{and}~~
		\bq=\text{arg}\min_{\bw\in\calN_{r'}}~\|\bw-\bv\|_2.
	\end{gather} 
Note that $\|\bu-\bp\|_2\le r'$ and $\|\bv-\bq\|_2\le r'$ hold.

\paragraph{Deterministic arguments for establishing $E_s$:}   We work with $(\bu,\bv,\bp)$ satisfying $\|\bu-\bp\|_2\le r'$ and $\dist(\bv,\bp)\le \dist(\bv,\bu)+\dist(\bu,\bp)\le \frac{3r'}{2}$, then we proceed as
\begin{subequations}
	\begin{align}\nn
		& m-d_H(\sign(|\bA\bu|-\tau),\sign(|\bA\bp|-\tau)) = \sum_{i=1}^m\mathbbm{1}\big(\calH_{|\ba_i|}\text{ does not separate }\bu,\bp\big)\\\label{eq:imply_not_sepa}
		& \ge \sum_{i=1}^m \mathbbm{1}\Big(||\ba_i^\top \bp|-\tau|>r,~|\ba_i^\top(\bu-\bp)|\le \frac{r}{2} \Big) \\
		&\ge \sum_{i=1}^m \mathbbm{1}(||\ba_i^\top\bp|-\tau|>r) -\sum_{i=1}^m \mathbbm{1}\Big(|\ba_i^\top(\bu-\bp)|>\frac{r}{2}\Big); \label{eq:union3}
	\end{align}\label{eq:bound_u_p}
\end{subequations}
and 
\begin{subequations}
	\begin{align}\nn
		&m-d_H(\sign(|\bA\bv|-\tau),\sign(|\bA\bp|-\tau))  = \sum_{i=1}^m\mathbbm{1}\big(\calH_{|\ba_i|}\text{ does not separate }\bv,\bp\big) \\\label{eq:imply_not_sepa2}
		&\ge  \sum_{i=1}^m \mathbbm{1}\Big(||\ba_i^\top \bp|-\tau|>r,~\min\{|\ba_i^\top(\bv-\bp)|,|\ba_i^\top(\bv+\bp)|\}<\frac{r}{2}\Big)\\
		&\ge \sum_{i=1}^m \mathbbm{1}(||\ba_i^\top\bp|-\tau|>r) -\sum_{i=1}^m \mathbbm{1}\Big(\min\{|\ba_i^\top(\bv-\bp)|,|\ba_i^\top(\bv+\bp)|\}>\frac{r}{2}\Big),\label{eq:union4}
	\end{align}\label{eq:bound_v_p}
\end{subequations}
where (\ref{eq:imply_not_sepa}) and (\ref{eq:imply_not_sepa2}) hold   because $||\ba_i^\top \bp|-\tau|>r$ and $\min\{|\ba_i^\top(\bw-\bp)|,|\ba_i^\top(\bw+\bp)|\}\le\frac{r}{2}$   imply $||\ba_i^\top\bp|-\tau|>2\min\{|\ba_i^\top(\bw-\bp)|,|\ba_i^\top(\bw+\bp)|\}$, and hence further imply $\calH_{|\ba_i|}$ does not separate $\bp$ and $\bw$ due to the first statement in Lemma \ref{lem:sepa_im_sepa}. In  (\ref{eq:union3}) and (\ref{eq:union4})   we use union bound. 
Now we are able to deduce 
\begin{subequations}\label{eq:dete_upper_bound}
	\begin{align}\nn
		&d_H(\sign(|\bA\bu|-\tau),\sign(|\bA\bv|-\tau))
		\\\label{eq:use_tri_3}
		&\le d_H(\sign(|\bA\bu|-\tau),\sign(|\bA\bp|-\tau))+d_H(\sign(|\bA\bv|-\tau),\sign(|\bA\bp|-\tau))\\\nn
		&\le m- \sum_{i=1}^m \mathbbm{1}(||\ba_i^\top\bp|-\tau|>r) +\sum_{i=1}^m \mathbbm{1}\Big(|\ba_i^\top(\bu-\bp)|>\frac{r}{2}\Big)\\\label{eq:substitute1}
		&\quad + m -\sum_{i=1}^m \mathbbm{1}(||\ba_i^\top\bp|-\tau|>r) +\sum_{i=1}^m \mathbbm{1}\Big(\min\{|\ba_i^\top(\bv-\bp)|,|\ba_i^\top(\bv+\bp)|\}>\frac{r}{2}\Big)\\\nn
		& = 2 \Big(m-\sum_{i=1}^m \mathbbm{1}(||\ba_i^\top\bp|-\tau|> r)\Big)
		\\\nn&\quad +\sum_{i=1}^m \mathbbm{1}\Big(|\ba_i^\top(\bu-\bp)|>\frac{r}{2}\Big) + \sum_{i=1}^m \mathbbm{1}\Big(\min\{|\ba_i^\top(\bv-\bp)|,|\ba_i^\top(\bv+\bp)|\}>\frac{r}{2}\Big)\\
		&\le 2\underbrace{\sup_{\bp\in\calN_{r'}}\sum_{i=1}^m \mathbbm{1}(||\ba_i^\top\bp|-\tau|\le r)}_{:=\breve{T}_1} + 2 \underbrace{\sup_{\bw\in \calK_{(3r'/2)}}\sum_{i=1}^m \mathbbm{1}\Big(|\ba_i^\top\bw|>\frac{r}{2}\Big)}_{:=\breve{T}_2},\label{eq:sup_out1}
	\end{align}
\end{subequations}
where  (\ref{eq:use_tri_3}) is due to triangle inequality, in (\ref{eq:substitute1}) we substitute (\ref{eq:bound_u_p}) and (\ref{eq:bound_v_p}), in (\ref{eq:sup_out1}) we take the supremum over $\bp\in\calN_{r'}$ for $\breve{T}_1$, over 
$\bu-\bp\in \calK_{(3r'/2)}$ and  $\bv-\bp~\text{or }\bv+\bp\in \calK_{(3r'/2)}$ for $\breve{T}_2$ (note that  $\dist(\bv,\bp)\le\frac{3r'}{2}$ and $\calK$ is symmetric).

\paragraph{Deterministic arguments for establishing $E_l$:}    For given $\bu,\bv$ satisfying $\dist(\bu,\bv)\ge 2r$ the corresponding $\bp,~\bq\in \calN_{r'}$ found in (\ref{eq:foundpq}) satisfy \begin{align}\label{eq:tri_angle_important}
	\dist(\bp,\bq)\ge \dist(\bu,\bv)-\|\bu-\bp\|_2-\|\bv-\bq\|_2\ge \frac{1}{2}\dist(\bu,\bv)\ge r,
\end{align} where the last inequality holds because $\frac{1}{2}\dist(\bu,\bv)-\|\bu-\bp\|_2-\|\bv-\bq\|_2\ge r-2r'\ge 0$. By Lemma \ref{lem:Pthetauv}, we can pick a small enough   constant $\theta_0$ such that for some   constant $c_0$, it holds that 
\begin{align}\label{eq:choosetheta0}
	\sfP_{\theta_0,\bw,\bz}\ge c_0\dist(\bw,\bz),\quad\forall \bw,\bz\in \mathbbm{A}_{\alpha,\beta}.
\end{align} %holding for some absolute constant $c_0>0$ and for all $\bw,\bz\in \mathbbm{A}_{\alpha,\beta}$. 
Then we proceed as
\begin{subequations}\label{eq:dete_lower_bound}
	\begin{align}\nn
		&d_H(\sign(|\bA\bu|-\tau),\sign(|\bA\bv|-\tau))\\
		& \ge \sum_{i=1}^m \mathbbm{1}\Big(\calH_{|\ba_i|}~\theta_0\text{-well-separates }\bp\text{ and }\bq,~\label{eq:use_well_sepa} |\ba_i^\top(\bu-\bp)|<\frac{\theta_0r}{2},~|\ba_i^\top(\bv-\bq)|<\frac{\theta_0r}{2}\Big)\\\nn
		&\ge \sum_{i=1}^m\mathbbm{1}(\calH_{|\ba_i|}~\theta_0\text{-well-separates }\bp\text{ and }\bq)  -\sum_{i=1}^m \mathbbm{1}\Big(|\ba_i^\top(\bu-\bp)|\ge \frac{\theta_0r}{2}\Big) -\sum_{i=1}^m\mathbbm{1}\Big(|\ba_i^\top(\bv-\bq)|\ge \frac{\theta_0r}{2}\Big)\\
		&\ge  \underbrace{\sum_{i=1}^m\mathbbm{1}(\calH_{|\ba_i|}~\theta_0\text{-well-separates }\bp\text{ and }\bq)}_{:=\breve{T}_3}-2\underbrace{\sup_{\bw\in\calK_{(r')}}\sum_{i=1}^m \mathbbm{1}\Big(|\ba_i^\top\bw|\ge \frac{\theta_0r}{2}\Big)}_{:=\breve{T}_4},\label{eq:sup_out_3}
	\end{align}
\end{subequations} 
where (\ref{eq:use_well_sepa}) holds due to Lemma \ref{lem:sepa_im_sepa}, since $\calH_{|\ba_i|}$ $\theta_0$-well-separates $\bp,\bq$ and $
	\max\big\{|\ba_i^\top(\bu-\bp)|,|\ba_i^\top(\bv-\bq)|\big\} < \frac{\theta_0r}{2}\le\frac{\theta_0\dist(\bp,\bq)}{2}$ jointly imply $\bu,\bv$ being separated by $\calH_{|\ba_i|}$. Note that  in (\ref{eq:sup_out_3}), we take the supremum over $\bu-\bp\in \calK_{(r')}$ and $\bv-\bq\in \calK_{(r')}$.

With the above deterministic arguments, all that remains is to bound $\breve{T}_1$ and $\breve{T}_2$ in (\ref{eq:sup_out1}), and $\breve{T}_3$ and $\breve{T}_4$ in (\ref{eq:sup_out_3}): we bound $\breve{T}_1$ and $\breve{T}_3$ by Chernoff bound along with a union bound over $\bp\in \calN_{r'}$ and $\{(\bp,\bq)\in \calN_{r'}\times \calN_{r'}:\dist(\bp,\bq)\ge r\}$, then control $\breve{T}_2$ and $\breve{T}_4$ by Lemma \ref{lem:max_ell_sum}.

%, and bound $T_2$ by Lemma \ref{lem:max_ell_sum}.
%   In the remaining work, we lower bound $T_3$ by Chernoff bound for a fixed pair of $(\bp,\bq)$ and then a union bound over $\{(\bp,\bq)\in \calN_{r'}\times \calN_{r'}:\dist(\bp,\bq)\ge r\}$ (the property defining the set is due to (\ref{eq:tri_angle_important})). We bound $T_4$ by Lemma \ref{lem:max_ell_sum}. 
%The proof of Theorem \ref{thm:local_embed} will be further completed
%in Appendix \ref{app:prove_thm1}. 

\paragraph{Bounding $\breve{T}_1$:} Recall   $\calN_{r'}$ is a minimal $r'$-net of $\calK$ and   $\breve{T}_1 := \sup_{\bp\in\calN_{r'}}\sum_{i=1}^m \mathbbm{1}\big(||\ba_i^\top\bp|-\tau|\le r\big).$ For each given $\bp\in \calN_{r'}$, we have $$
    \sum_{i=1}^m \mathbbm{1}\big(||\ba_i^\top\bp|-\tau|\le r\big)\sim\text{Bin}\Big(m,\mathbbm{P}\big(||\ba_i^\top\bp|-\tau|\le r\big)\Big)$$ with the probability being bounded by
$$\mathbbm{P}\big(||\ba_i^\top\bp|-\tau|\le r\big)\le \mathbbm{P}\left(\Big||\calN(0,1)|-\frac{\tau}{\|\bp\|_2}\Big|\le \frac{r}{\alpha}\right)\le \sqrt{\frac{8}{\pi}}\frac{r}{\alpha}.$$ Therefore, for a  fixed $\bp \in\calN_{r'}$, the weakened Chernoff bound in Lemma \ref{lem:chernoff} gives  $$
    \mathbbm{P}\Big(\sum_{i=1}^m \mathbbm{1}\big(||\ba_i^\top\bp|-\tau|\le r\big) \ge \frac{6mr}{\alpha\sqrt{2\pi}} \Big)  \le \mathbbm{P}\Big(\text{Bin}\Big(m,\sqrt{\frac{8}{\pi}}\frac{r}{\alpha}\Big)\ge \frac{6mr}{\alpha\sqrt{2\pi}}\Big) \le \exp\Big(-\frac{mr}{\sqrt{18\pi}\alpha}\Big).$$ 
 We further take a union bound over $\bp\in \calN_{r'}$ to obtain  $$
    \mathbbm{P}\Big(\breve{T}_1\ge \frac{6mr}{\alpha\sqrt{2\pi}}\Big)\le \exp\Big(\scrH(\calK,r')-\frac{mr}{\sqrt{18\pi}\alpha}\Big)\le \exp\big(-\frac{mr}{6\sqrt{\pi}\alpha}\big),$$ where the last inequality holds due to $m\gtrsim \frac{\scrH(\calK,r')}{r}$ assumed in (\ref{eq:local_emb_sample}). Therefore, $\breve{T}_1<\frac{6mr}{\alpha\sqrt{2\pi}}$ holds with probability at least $1-\exp(-\frac{mr}{6\sqrt{\pi}\alpha})$.

\paragraph{Bounding $\breve{T}_2$ and $\breve{T}_4$:} Recall $$\breve{T}_2:= \sup_{\bw\in\calK_{(3r'/2)}}\sum_{i=1}^m\mathbbm{1}\big(|\ba_i^\top\bw|>\frac{r}{2}\big)$$ and $$\breve{T}_4:=\sup_{\bw\in \calK_{(r')}}\sum_{i=1}^m \mathbbm{1}\big(|\ba_i^\top\bw|\ge\frac{\theta_0r}{2}\big).$$ Let $c_3=\min\{\frac{1}{2},\frac{\theta_0}{2}\}$,  then it follows that $\max\{\breve{T}_2,\breve{T}_4\}\le \sup_{\bw\in \calK_{(3r'/2)}}\sum_{i=1}^m \mathbbm{1}(|\ba_i^\top\bw|\ge c_3r).$ We seek to prove   $$\sup_{\bw\in \calK_{(3r'/2)}}\sum_{i=1}^m \mathbbm{1}(|\ba_i^\top\bw|\ge c_3r)\le 2\tilde{c}rm$$ with $\tilde{c}:=\frac{c_0}{16}$, where $c_0$ is the   constant in (\ref{eq:choosetheta0}). Since we assume $rm\ge C$ for some constant $C$, we can proceed with $\tilde{c}rm\ge 1$. 
Now we invoke Lemma \ref{lem:max_ell_sum} with $\calW=\calK_{(3r'/2)}$ and $\ell = \lceil \tilde{c}rm\rceil$, yielding that 
\begin{align}\label{eq:apply_max_ell1}
    \sup_{\bw \in\calK_{(3r'/2)}}\max_{\substack{I\subset [m]\\|I|\le \lceil \tilde{c}rm\rceil}} \Big(\frac{1}{\lceil \tilde{c}rm\rceil}\sum_{i\in I}|\ba_i^\top\bw|^2\Big)^{1/2} \le C_4 \Big(\frac{\omega(\calK_{(3r'/2)})}{\sqrt{\tilde{c}rm}}+\frac{3r'}{2}\sqrt{\log \frac{e}{r}}\Big)
\end{align}
holds with probability at least $1-2\exp(-c_5rm \log(\frac{e}{r}))$. In our setting,
we note the following on the two sides of (\ref{eq:apply_max_ell1}): since (\ref{eq:local_emb_sample}) with sufficiently large $C_2$ implies  $C_4\frac{\omega(\calK_{(3r'/2)})}{\sqrt{\tilde{c}rm}}<\frac{c_3r}{2}$, $r'=\frac{cr}{\sqrt{\log(r^{-1})}}$ with small enough $c$ implies $\frac{3C_4r'}{2}\sqrt{\log(\frac{e}{r})}<\frac{c_3r}{2}$, we have that     the right-hand side of (\ref{eq:apply_max_ell1}) is smaller than $c_3r$;  it is not difficult to observe that   the left-hand side of (\ref{eq:apply_max_ell1}) is a uniform upper bound  on the $\lceil \tilde{c}rm\rceil$-th largest entry of the vector $|\bA\bw|$ for all $\bw\in \calK_{(3r'/2)}$. 
Therefore, universally for all $\bw \in\calK_{(3r'/2)}$, the $\lceil \tilde{c}rm\rceil$-th largest entry of $|\bA\bw|=(|\ba_1^\top\bw|,\cdots,|\ba_m^\top\bw|)^\top$ is smaller than $c_3r$ with probability at least $1-2\exp(-c_5rm\log(\frac{e}{r}))$. Phrasing differently, we obtain $$
    \sum_{i=1}^m\mathbbm{1}(|\ba_i^\top\bw|\ge c_3r)\le \lceil  \tilde{c}rm\rceil\le 2\tilde{c}rm$$ for any $\bw\in \calK_{(3r'/2)}$. 
Therefore, $\max\{\breve{T}_2,\breve{T}_4\}\le 2\tilde{c}rm$ holds with probability at least $1-2\exp(-c_5rm\log(\frac{e}{r}))$.    

\paragraph{Bounding $\breve{T}_3$:} 
Recall $\breve{T}_3 := \sum_{i=1}^m\mathbbm{1}(\calH_{|\ba_i|}~\theta_0\text{-well-separates }\bp\text{ and }\bq).$ By (\ref{eq:tri_angle_important}), $(\bp,\bq)$ in $\breve{T}_3$ live in the constraint set $$
    \calU_{r,r'}:=\{(\bp,\bq)\in \calN_{r'}\times \calN_{r'}:\dist(\bp,\bq)\ge r\}.$$ Hence, for a given pair of $(\bp,\bq)\in\calU_{r,r'}$ we have $\breve{T}_3 \sim \text{Bin}(m,\sfP_{\theta_0,\bp,\bq})$ with $\sfP_{\theta_0,\bp,\bq}\ge c_0\dist(\bp,\bq)$, and thus by Chernoff bound (the weakened version in Lemma \ref{lem:chernoff}) we obtain 
    \begin{align}
   & \nn\mathbbm{P}\Big(\breve{T}_3 \le \frac{c_0m\dist(\bp,\bq)}{2}\Big)\\
  \nn &\le\mathbbm{P}\Big(\text{Bin}(m,c_0\dist(\bp,\bq))\le \frac{c_0m\dist(\bp,\bq)}{2}\Big)\\&\le \exp\Big(-\frac{c_0m\dist(\bp,\bq)}{8}\Big)\le \exp\Big(-\frac{c_0mr}{8}\Big), \label{eq:fixed_pq}
\end{align} 
where the last inequality follows from $\dist(\bp,\bq)\ge r$. 
While (\ref{eq:fixed_pq}) is for a fixed pair of $(\bp,\bq)$, it can be extended to all $(\bp,\bq)$ in $\calU_{r,r'}$ by  union bound, which yields that the event 
\begin{align}
    \sum_{i=1}^m\mathbbm{1}(\calH_{|\ba_i|}~\theta_0\text{-well-separates }\bp\text{ and }\bq) \ge \frac{c_0m\dist(\bp,\bq)}{2},~\forall\,(\bp,\bq)\in \calU_{r,r'} \label{eq:T3_uniform}
\end{align}
holds with probability at least  
   $$1-|\calU_{r,r'}|\exp\Big(-\frac{c_0mr}{8}\Big)\ge 1-|\calN_{r'}|^2\exp\Big(-\frac{c_0mr}{8}\Big) =1-\exp\Big(2\scrH(\calK,r')-\frac{c_0mr}{8}\Big)\ge 
   1-\exp\Big(-\frac{c_0mr}{16}\Big),$$ where the last inequality follows from (\ref{eq:local_emb_sample}).

Now we are ready to establish the two statements in Theorem \ref{thm:local_embed}.   Under the sample complexity and promised probability    in   Theorem \ref{thm:local_embed}, we have $\breve{T}_1<\frac{6mr}{\alpha\sqrt{2\pi}}$ and $\max\{\breve{T}_2,\breve{T}_4\}\le 2\tilde{c}rm=\frac{c_0rm}{8}$. Then,   substituting $\breve{T}_1+\breve{T}_2=\calO(rm)$ into (\ref{eq:dete_upper_bound}) establishes the desired event $E_s$. To establish $E_l$,  for any given $\bu,\bv\in\calK$ with $\dist(\bu,\bv)\ge 2r$ and the corresponding $(\bp,\bq)\in \calU_{r,r'}$ found by (\ref{eq:foundpq}), we    substitute  (\ref{eq:T3_uniform})  and   $\breve{T}_4 \le \frac{c_0mr}{8}$ into (\ref{eq:dete_lower_bound}) to obtain 
   $$d_H(\sign(|\bA\bu|-\tau),\sign(|\bA\bv|-\tau))\ge \breve{T}_3-2\breve{T}_4\ge \frac{c_0m\dist(\bp,\bq)}{2}-\frac{c_0mr}{4} \ge \frac{c_0m\dist(\bp,\bq)}{4}.$$
Combining with $\dist(\bp,\bq)\ge \frac{1}{2}\dist(\bu,\bv)$ from  (\ref{eq:tri_angle_important}), we arrive at $$
    d_H(\sign(|\bA\bu|-\tau),\sign(|\bA\bv|-\tau))\ge \frac{c_0m\dist(\bu,\bv)}{8}.$$ Note that with the promised probability, the above arguments work for any $\bu,\bv\in\calK$ satisfying $\dist(\bu,\bv)\ge 2r$, thus we have shown the event $E_l$. The proof is complete.  
 \end{proof}
 
\subsection{The Proof of Lemma \ref{lem:Puv} ($\sfP_{\bu,\bv}\asymp \dist(\bu,\bv)$)}\label{app:sepa_prob_equ}
\begin{proof}
Since $\sfP_{\bu,\bv}=\sfP_{\bu,-\bv}=\sfP_{\bv,\bu}$ and $\dist(\bu,\bv)=\dist(\bu,-\bv)=\dist(\bv,\bu)$ hold, we can assume $\bu^\top\bv\ge 0$ and $\|\bu\|_2\ge\|\bv\|_2$ without loss of generality. Let us begin with 
          \begin{align}\nn
        \sfP_{\bu,\bv}&=  \mathbbm{P}\big(|\ba^\top\bu|\ge\tau,|\ba^\top\bv|<\tau\big)+ \mathbbm{P}\big(|\ba^\top\bu|<\tau , |\ba^\top\bv|\ge\tau\big)\nn\\\nn
        &\leq 2\mathbbm{P}\big(|\ba^\top\bu|\ge\tau,|\ba^\top\bv|<\tau\big)\\
        &=4\mathbbm{P}\big(\ba^\top\bu\geq \tau,|\ba^\top\bv|<\tau\big) \nn\\
        &\leq 8 \mathbbm{P}\big(\ba^\top\bu\geq \tau,0\leq \ba^\top\bv <\tau\big),\label{eq:uppuv}
    \end{align} 
     where the first inequality holds because $\mathbbm{P}(|\ba^\top\bu|<\tau,|\ba^\top\bv|\geq \tau)\leq \mathbbm{P}(|\ba^\top\bu|\geq \tau,|\ba^\top\bv|<\tau)$ that follows from $\|\bu\|_2\geq \|\bv\|_2$, in the second inequality we use $$\mathbbm{P}(\ba^\top\bu \geq \tau,-\tau\leq \ba^\top\bv <0) \leq \mathbbm{P}(\ba^\top\bu\geq\tau,0\leq \ba^\top\bv<\tau)$$ due to $\bu^\top\bv\geq 0$.
         Evidently, these two inequalities remain true with ``$\le$'' reversed to ``$\ge$'' if we do not introduce the multiplicative factor $2$. This observation provides \begin{align}\label{eq:lowpuv}
             \sfP_{\bu,\bv}\ge 2\mathbbm{P}(\ba^\top\bu\geq \tau,0\leq \ba^\top\bv <\tau).
         \end{align} 
         Combining (\ref{eq:uppuv}) and (\ref{eq:lowpuv})
         gives $\sfP_{\bu,\bv} \asymp \mathbbm{P}(\ba^\top\bu\ge\tau,~0\le \ba^\top\bv<\tau)$, thus  it suffices to show  $
             \mathbbm{P}(\ba^\top\bu\geq \tau,0\leq \ba^\top\bv <\tau)\asymp\dist(\bu,\bv)$  over $\mathbbm{A}_{\alpha,\beta}$. We let $\rho:= \frac{\bu^\top\bv}{\|\bu\|_2\|\bv\|_2}$ and deal with $\rho \in[0,1)$ and $\rho =1$ separately.

\subsubsection*{Case 1: $0\leq \rho<1$}
  Note that $w:= \ba^\top\frac{\bu}{\|\bu\|_2}$ and $z:= \ba^\top\frac{\bv}{\|\bv\|_2}$ have a joint P.D.F. $
        f(w,z) = \frac{1}{2\pi \sqrt{1-\rho^2}}\exp\big(-\frac{w^2-2\rho wz +z^2}{2(1-\rho^2)}\big)$, thus we can decompose the probability as 
    \begin{subequations}\label{eq:PtoIJ}
         \begin{align}
        &\mathbbm{P}\big(\ba^\top\bu\geq \tau,~0\leq \ba^\top\bv<\tau\big)  
        =  \mathbbm{P}\Big(w\geq \frac{\tau}{\|\bu\|_2},~0\leq z<\frac{\tau}{\|\bv\|_2}\Big)\nn\\\nn
        =& \frac{1}{2\pi \sqrt{1-\rho^2}}\int_0^{\frac{\tau}{\|\bv\|_2}}\int_{\frac{\tau}{\|\bu\|_2}}^\infty \exp\Big(-\frac{w^2-2\rho wz+z^2}{2(1-\rho^2)}\Big)~\text{d}w\text{d}z\\\nn
        =& \frac{1}{2\pi\sqrt{1-\rho^2}}\int_0^{\frac{\tau}{\|\bv\|_2}}e^{-\frac{z^2}{2}}\int_{\frac{\tau}{\|\bu\|_2}}^\infty \exp\Big(-\frac{(w-\rho z)^2}{2(1-\rho^2)}\Big)~\text{d}w\text{d}z \\
        =& \frac{1}{2\pi}\int_0^{\frac{\tau}{\|\bv\|_2}}e^{-\frac{z^2}{2}}\int_{\frac{\tau/\|\bu\|_2-\rho z}{\sqrt{1-\rho^2}}}^\infty e^{-\frac{w_1^2}{2}}~\text{d}w_1\text{d}z \label{eq:cha_var_1}\\
        =& \underbrace{\frac{1}{\sqrt{2\pi}}\int_{\tau/\|\bu\|_2}^{\frac{\tau}{\|\bv\|_2}}e^{-\frac{z^2}{2}}\cdot I (z)~\text{d}z}_{:=\sfI_{\bu,\bv}} +\underbrace{\frac{1}{\sqrt{2\pi}}\int_{0}^{\frac{\tau}{\|\bu\|_2}}e^{-\frac{z^2}{2}}\cdot I (z)~\text{d}z}_{:=\sfJ_{\bu,\bv}}, \label{eq:nota1}
    \end{align}
    \end{subequations}
where in (\ref{eq:cha_var_1}) we use a change of variable $w_1 = \frac{w-\rho z}{\sqrt{1-\rho^2}}$, in (\ref{eq:nota1}) we introduce the shorthand 
\begin{align}\label{eq:def_Iz}
    I (z): = \int_{\frac{\tau/\|\bu\|_2-\rho z}{\sqrt{1-\rho^2}}}^\infty \frac{1}{\sqrt{2\pi}}e^{-\frac{w^2}{2}}~\text{d}w =\mathbbm{P}\Big(\mathcal{N}(0,1)>\frac{\tau/\|\bu\|_2 - \rho z}{\sqrt{1-\rho^2}}\Big).
\end{align}   
In the remainder of the proof, our strategy is to show $\sfI_{\bu,\bv}\asymp \dist_{\rm n}(\bu,\bv)$ and $\sfJ_{\bu,\bv}\asymp \dist_{\rm d}(\bu,\bv)$ separately, then the desired statement follows from Lemma \ref{lem:norm_equa}.

\paragraph{Showing $\sfI_{\bu,\bv}\asymp \dist_{\rm n}(\bu,\bv)$:} The integral variable $z$ in $\sfI_{\bu,\bv}$ satisfies $z\geq \frac{\tau}{\|\bu\|_2}$, which provides lower bound on the lower limit of the integration $I(z)$: {$$
    \frac{\tau/\|\bu\|_2-\rho z}{\sqrt{1-\rho^2}} \leq \frac{{\tau}(1-\rho)/{\|\bu\|_2}}{\sqrt{1-\rho^2}}=\frac{\tau\sqrt{1-\rho}}{\|\bu\|_2\sqrt{1+\rho}}\leq \frac{\tau}{\alpha}.$$ Hence, we have  $ 1\geq    I(z)\geq \int_{\tau/\alpha}^\infty \frac{1}{\sqrt{2\pi}} e^{-\frac{w^2}{2}}~\text{d}w := C_0$ for some   constant $C_0$. Moreover, when $z\in [\frac{\tau}{\|\bu\|_2},\frac{\tau}{\|\bv\|_2}]\subset [\frac{\tau}{\beta},\frac{\tau}{\alpha}]$ we have $ e^{-\frac{z^2}{2}}\in [\exp(-\frac{\tau^2}{2\alpha^2}),\exp(-\frac{\tau^2}{2\beta^2})]$.} Taken collectively, we have shown that the integral function of $\sfI_{\bu,\bv}$ takes value in $[C_1,C_2]$ for some   constants $C_2>C_1>0$. Therefore, we have $$
    \sfI_{\bu,\bv} \asymp \Big|\frac{\tau}{\|\bu\|_2}-\frac{\tau}{\|\bv\|_2}\Big|= \frac{\tau}{\|\bu\|_2\|\bv\|_2}|\|\bu\|_2-\|\bv\|_2|\asymp \dist_{\rm n}(\bu,\bv).$$

%combining with $I(z)=\Theta(1)$ from (\ref{eq:Iz_Theta1}) gives 
%\begin{align}
 %   I_1 &=\int_{\frac{\tau}{\|\bu\|_2}}^{\frac{\tau}{\|\bv\|_2}}\frac{e^{-\frac{z^2}{2}}I(z)}{\sqrt{2\pi}}~\text{d}z=\Theta\Big(\frac{\tau}{\|\bv\|_2}-\frac{\tau}{\|\bu\|_2}\Big)\\
  %  &=\Theta \big(\|\bu\|_2-\|\bv\|_2\big)= \Theta(\dnor(\bu,\bv)). \label{eq:I1_scale}
%\end{align}

 \paragraph{Showing $\sfJ_{\bu,\bv}\asymp \dist_{\rm d}(\bu,\bv)$:}  Using Lemma \ref{lem:gaussian_tail} to bound $I(z)$ in (\ref{eq:def_Iz}), we obtain 
\begin{equation}\label{eq:low_up}
   \frac{1}{4}\exp \Big(-\frac{(\tau/ \|\bu\|_2- \rho z)^2}{1-\rho^2}\Big) \leq I(z) \leq \frac{1}{2} \exp \Big(-\frac{(\tau/\|\bu\|_2-\rho z)^2}{2(1-\rho^2)}\Big).
\end{equation}
Next, we bound $\sfJ_{\bu,\bv}$ from both sides via calculations.
 
     \paragraph{(i) Lower Bound:} Substituting the lower bound in (\ref{eq:low_up}) into $\sfJ_{\bu,\bv}$ along with some calculations, we proceed as \begin{subequations}
        \begin{align}\nn
    &\sfJ_{\bu,\bv} \geq \frac{1}{4\sqrt{2\pi}}\int_0^{\tau/\|\bu\|_2} e^{-\frac{z^2}{2}} \exp \Big(-\frac{(\tau/\|\bu\|_2-\rho z)^2}{1-\rho^2}\Big)~\text{d}z \\&\label{eq:alge1}= \frac{1}{4\sqrt{2\pi}}\exp\left(-\frac{\tau^2}{\|\bu\|_2^2 (1+\rho^2)}\right)\int_0^{\tau/\|\bu\|_2}\exp \left(-\frac{(1+\rho^2)[z-\frac{\tau}{\|\bu\|_2}\frac{2\rho}{(1+\rho^2)}]^2}{2(1-\rho^2)}\right)~\text{d}z \\
    &\geq \frac{1}{4\sqrt{2\pi}}\exp\Big(-\frac{\tau^2}{\alpha^2}\Big) \int_0^{\tau/\|\bu\|_2}\exp\left(-\Big[\frac{z- \frac{\tau}{\|\bu\|_2}\frac{2\rho}{1+\rho^2}}{\sqrt{1-\rho^2}}\Big]^2\right)~\text{d}z \label{eq:Theta1_1}\\
    & \label{eq:chan_vari_2}= \frac{\sqrt{1-\rho^2}}{4\sqrt{2\pi}} \exp\Big(-\frac{\tau^2}{\alpha^2}\Big)\int_{-\frac{2\rho\tau}{\|\bu\|_2 (1+\rho^2)\sqrt{1-\rho^2}}}^{\frac{\tau(1-\rho)^{3/2}}{\|\bu\|_2\sqrt{1+\rho}(1+\rho^2)}} e^{-w^2}~\text{d}w \\
    &\geq \label{eq:rho2} \frac{\sqrt{1-\rho^2}}{4\sqrt{2\pi}}\exp\Big(-\frac{\tau^2}{\alpha^2}\Big)\int_{-\frac{\rho \tau}{\beta}}^{\frac{\tau(1-\rho)^{3/2}}{2\sqrt{2}\beta}}e^{-w^2}~\text{d}w \\& \label{eq:Omega1}  \ge   \frac{1}{4\sqrt{2\pi}} \exp\Big(-\frac{\tau^2}{\alpha^2}\Big) \exp\Big(-\frac{\tau^2}{\beta^2}\Big)
    \Big(\frac{\tau(1-\rho)^{3/2}}{2\sqrt{2}\beta}+ \frac{\rho\tau}{\beta}\Big) \sqrt{1-\rho^2}\\&\ge c_3\sqrt{1-\rho^2},\label{eq:final_J_lower}
\end{align}\label{eq:lower_J}
    \end{subequations}
where  in (\ref{eq:alge1}) we complete the square; in (\ref{eq:Theta1_1}) we use $\|\bu\|_2\geq \alpha$ and $1+\rho^2\in [1,2]$; in (\ref{eq:chan_vari_2}) we apply a change of the variable $w= \frac{z-2\rho\tau/[\|\bu\|_2(1+\rho^2)]}{\sqrt{1-\rho^2}}$; in (\ref{eq:rho2}) we ``shrink'' the integral limits by $\sqrt{1+\rho}(1+\rho^2)\leq 2\sqrt{2}$ and $\frac{2}{(1+\rho^2)\sqrt{1-\rho^2}}\geq 1$; (\ref{eq:Omega1}) holds because $e^{-w^2}\ge \exp(-\frac{\tau^2}{\beta^2})$ when $w\in [-\frac{\rho\tau}{\beta},\frac{\tau(1-\rho)^{3/2}}{2\sqrt{2}\beta}]$; the final bound in (\ref{eq:final_J_lower}) holds for a constant $c_3$ independent of $\rho$ since $\inf_{\rho\in [0,1]} \frac{\tau(1-\rho)^{3/2}}{2\sqrt{2}\beta}+\frac{\rho\tau}{\beta}>0$. 
          
 \paragraph{(ii) Upper Bound:} Substituting the upper bound in (\ref{eq:low_up}) into $\sfJ_{\bu,\bv}$ yields 
\begin{subequations}
    \begin{align}\nn
    & \sfJ_{\bu,\bv}\leq \frac{1}{2\sqrt{2\pi}}\int_0^{\tau/\|\bu\|_2}\exp\Big(-\frac{z^2}{2}\Big)\exp\left(-\frac{(\tau/\|\bu\|_2-\rho z)^2}{2(1-\rho^2)}\right)~\text{d}z\\
    & \label{eq:alge2}= \frac{1}{2\sqrt{2\pi}}\exp\Big(-\frac{\tau^2}{2\|\bu\|_2^2}\Big)\int_0^{\tau/\|\bu\|_2}\exp\left(-\frac{[z-\frac{\rho\tau}{\|\bu\|_2}]^2}{2(1-\rho^2)}\right) ~\text{d}z \\
    & \label{eq:chan_vari_3} =\frac{\sqrt{1-\rho^2}}{2}\exp\Big(-\frac{\tau^2}{2\|\bu\|_2^2}\Big) \int_{-\frac{\rho\tau}{\|\bu\|_2\sqrt{1-\rho^2}}}^{\frac{\tau\sqrt{1-\rho}}{\|\bu\|_2\sqrt{1+\rho}}}\frac{1}{\sqrt{2\pi}}e^{-\frac{w^2}{2}}~\text{d}w\leq \frac{1}{2}\exp\Big(-\frac{\tau^2}{2\alpha^2}\Big)\sqrt{1-\rho^2},
\end{align}\label{eq:upper_J}
\end{subequations}
where (\ref{eq:alge2}) is obtained by completing the square;  in (\ref{eq:chan_vari_3}) we apply a change of the variable $w= \frac{z-\rho\tau/\|\bu\|_2}{\sqrt{1-\rho^2}}$. 
Combining the bounds in (\ref{eq:lower_J}), (\ref{eq:upper_J}), and $$
    \dist_{\rm d}(\bu,\bv)=\dist\big(\frac{\bu}{\|\bu\|_2},\frac{\bv}{\|\bv\|_2}\big)=\big(2-\frac{2|\bu^\top\bv|}{\|\bu\|_2\|\bv\|_2}\big)^{1/2}=\sqrt{2(1-|\rho|)},$$ we arrive at 
$
    \sfJ_{\bu,\bv}\asymp\sqrt{1-\rho^2}=\sqrt{1+|\rho|}\sqrt{1-|\rho|}\asymp \sqrt{1-|\rho|}\asymp \dist_{\rm d}(\bu,\bv).$ 
Further combining the bounds on $\sfI_{\bu,\bv}$, $\sfJ_{\bu,\bv}$ and (\ref{eq:PtoIJ}), we come to 
\begin{align*}
    \sfP_{\bu,\bv}\asymp \mathbbm{P}(\ba^\top\bu\ge \tau,0\le\ba^\top\bv<\tau)\asymp \sfI_{\bu,\bv}+\sfJ_{\bu,\bv} \asymp \dist_{\rm n}(\bu,\bv)+\dist_{\rm d}(\bu,\bv)\asymp\dist(\bu,\bv).  
\end{align*}

%\begin{align}
 %   I_2& =\Theta\big(\sqrt{1-\rho^2}\big)\\& =\Theta(\sqrt{1-|\rho|})= \Theta(\ddir(\bu,\bv)),\label{eq:212}
%\end{align} 
%where (\ref{eq:212}) follows from (\ref{eq:expand_rhod}). Moreover, recall from our initial analyses that $P_{\|\bu\|_2,\|\bv\|_2,\rho}=\Theta\big(I_1+I_2\big)$, then combining $I_1= \Theta(\dnor(\bu,\bv))$ and $I_2=\Theta(\ddir(\bu,\bv))$ and applying Lemma \ref{lem:norm_equ} we obtain \begin{align}
   % P_{\|\bu\|_2,\|\bv\|_2,\rho} = \Theta\big(\dnor(\bu,\bv)+\ddir(\bu,\bv)\big) = \Theta\big(\dist(\bu,\bv)\big),
%\end{align}
%the result follows. 

%\subsubsection*{Upper bound}

\subsubsection*{Case 2: $\rho=1$}
Recall that we assume $\|\bu\|_2\ge\|\bv\|_2$. By Cauchy-Schwarz inequality we know that  $\rho  = \frac{\bu^\top\bv}{\|\bu\|_2\|\bv\|_2}=1$ holds if and only if $\bu= \lambda \bv$ for some $\lambda\geq 1$, with $\lambda = \frac{\|\bu\|_2}{\|\bv\|_2}$. Thus, we can directly calculate the separation probability in this simpler case:   
\begin{subequations}
\begin{align}
    &\sfP_{\bu,\bv}= \mathbbm{P}\Big(\sign\big(\lambda|\ba^\top\bv|-\tau\big) \neq \sign\big(|\ba^\top\bv|-\tau\big)\Big)\label{eq:comp_2}=\mathbbm{P}\Big(\lambda|\ba^\top\bv|-\tau \geq 0,~|\ba^\top\bv|-\tau<0\Big)\\
    & = \mathbbm{P}\Big(\frac{\tau}{\lambda\|\bv\|_2}\leq \Big|\ba^\top\frac{\bv}{\|\bv\|_2}\Big|<\frac{\tau}{\|\bv\|_2}\Big) \label{eq:plug_lam} \asymp\frac{\tau}{\|\bv\|_2}\Big(1-\frac{1}{\lambda}\Big)\asymp \big|\|\bu\|_2-\|\bv\|_2\big|,
\end{align}
\end{subequations}
where (\ref{eq:comp_2}) holds because $\lambda|\ba^\top\bv|\geq |\ba^\top\bv|$; (\ref{eq:plug_lam}) holds because the P.D.F. of $\ba^\top\frac{\bv}{\|\bv\|_2}\sim\calN(0,1)$ satisfies $$\frac{1}{\sqrt{2\pi}}\exp\big(-\frac{\tau^2}{2\alpha^2}\big)\le p(z)=\frac{1}{\sqrt{2\pi}}\exp\big(-\frac{z^2}{2}\big)\le\frac{1}{\sqrt{2\pi}}$$ when $z\in [-\frac{\tau}{\alpha},\frac{\tau}{\alpha}]$. Note that in this case we have $\dist(\bu,\bv) = |\|\bu\|_2-\|\bv\|_2|$, thus we obtain $\sfP_{\bu,\bv}\asymp \dist(\bu,\bv)$ again. The proof is complete. 
\end{proof}
 
 \subsection{The Proof of Theorem \ref{thm:lower} (Lower Bound)}\label{app:prove_lower}
\begin{proof}
We continue from the proof sketch provided in Section \ref{sec:ITproof}, where we let $\calV_0:=\{(u_1,...,u_\nu,u_{\nu+1},...,u_n)^\top:u_i=0,\forall \,\nu+1\le i\le n\}$ be a specific $\nu$-dimensional subspace in $\mathbb{R}^n$ and seek to construct a packing set of $\calV_0\cap \mathbbm{A}_{\alpha,\beta}$ with large enough cardinality. Since the packing is only relevant to the Euclidean metric, we can  identify $\calV_0 \cap\mathbbm{A}_{\alpha,\beta}$ with  $
	\calK_\nu:=  \{\bu\in\mathbb{R}^\nu:\alpha\le\|\bu\|_2\le \beta\}.$ Restricting our attention to the metric $\dist(\cdot,\cdot)$, we say $\calS$ is an $\epsilon$-packing set of $\calK_\nu$ if $\calS\subset \calK_{\nu}$ and any two distinct points $\bu,\bv$ in $\calS$ satisfy $\dist(\bu,\bv)\ge \nu$. We    
denote by $\scrP(\calK_{\nu},\epsilon;\dist)$   the $\epsilon$-packing number of $\calK_{\nu}$ under the metric $\dist(\cdot,\cdot)$ that is defined as the maximal cardinality of an $\epsilon$-packing set of $\calK_{\nu}$. By \cite[Lem. 4.2.8]{vershynin2018high}, we can first lower bound the packing number by the covering number: $\scrP(\calK_{\nu},\epsilon;\dist)\ge \scrN(\calK_{\nu},\epsilon;\dist)$. Then, we seek a lower bound on the $\scrN(\calK_{\nu},\epsilon;\dist)$ by  a standard volume argument. We let $\calN_\epsilon$ be a minimal $\epsilon$-net of $\calK_\nu$ satisfying $|\calN_\epsilon|=\scrN(\calK_{\nu},\epsilon;\dist)$. By  the equivalence between $\dist(\bu,\bv)\le \epsilon$ and $\bu\in \mathbb{B}_2^\nu(\bv;\epsilon)\cup \mathbb{B}_2^\nu(-\bv;\epsilon)$, we can   return to the $\ell_2$-distance metric by 
$\calK_{\nu}\subset\bigcup_{\bu\in \calN_\epsilon}\{\mathbb{B}_2^\nu(\bu;\epsilon)\cup \mathbb{B}_2^\nu(-\bu;\epsilon)\}.$ Next, comparing the volume yields 
	$$
		\text{Vol}(\calK_\nu)\le \sum_{\bu\in \calN_\epsilon}\big(\text{Vol}(\mathbb{B}_2^\nu(\bu;\epsilon))+\text{Vol}(\mathbb{B}_2^\nu(-\bu;\epsilon))\big)=2\scrN(\calK_\nu,\epsilon;\dist)\cdot \text{Vol}(\mathbb{B}_2^\nu(\epsilon)),$$ which leads to $(\beta^\nu-\alpha^\nu)\cdot\text{Vol}(\mathbb{B}_2^\nu)\le 2\scrN(\calK_{\nu},\epsilon;\dist)\cdot \epsilon^\nu\cdot \text{Vol}(\mathbb{B}_2^\nu)$, and hence $
      \scrN(\calK_{\nu},\epsilon;\dist) \ge \frac{\beta^\nu-\alpha^\nu}{2\epsilon^\nu}.$ Combining with $|\calM_{\bA}|\le (\frac{2em}{\nu})^\nu$ established in the main text, for the existence of an estimator $\hat{\bx}$ to achieve $\sup_{\bx\in\calK_{\nu}}\dist(\hat{\bx},\bx)\le\frac{\epsilon}{2}$, we must have $$(\frac{2em}{\nu})^\nu \ge \frac{\beta^\nu-\alpha^\nu}{2\epsilon^\nu},$$ yielding the desired claim. 
 \end{proof}

 \section{PLL-AIC Implies Convergence of Algorithms \ref{alg:pgd}--\ref{alg:pgd_high}} \label{app:aic2conver}
This appendix includes the proofs of Lemma \ref{lem:low_aic2err}  and  Lemma \ref{lem:high_aic2err}. 
\subsection{The Proof of Lemma \ref{lem:low_aic2err} (PLL-AIC Implies Convergence of {GD-1bPR})} \label{app:pllaic_lem}
\begin{proof}
Without loss of generality, we assume   $\bx^{(0)}$ is closer to $\bx$ than $-\bx$,  namely   $\dist(\bx^{(0)},\bx)=\|\bx^{(0)}-\bx\|_2\le \delta_4$. We consider GD-1bPR with step size $\eta$ and define $$\hat{t} = \inf\big\{t\in\mathbb{Z}_{\ge 0}:\|\bx^{(t)}-\bx\|_2\le E(\delta_1,\delta_2,\delta_3)\big\}$$ as the smallest $t$ such  that $\bx^{(t)}$ enters $\mathbb{B}_2^n(\bx;E(\delta_1,\delta_2,\delta_3))$, with the convention that $\hat{t}=\infty$ if such $t$ does not exist. We claim that $\hat{t}<\infty$ and the iterates  enter $\mathbb{B}_2^n(\bx;E(\delta_1,\delta_2,\delta_3))$ with no more than $\lceil \frac{\log E(\delta_1,\delta_2,\delta_3)}{\log((1+\delta_1)/2)}\rceil$ iterations. If $\|\bx^{(0)}-\bx\|_2\le E(\delta_1,\delta_2,\delta_3)$, then $\hat{t}=0$ and the claim is trivial, thus we can simply assume $\|\bx^{(0)}-\bx\|_2>E(\delta_1,\delta_2,\delta_3)$. This along with $E(\delta_1,\delta_2,\delta_3)=\max\{\frac{16\delta_2}{(1-\delta_1)^2},\frac{4\delta_3}{1-\delta_1}\}$ implies $  \delta_3\le \frac{1-\delta_1}{4}\|\bx^{(0)}-\bx\|_2 $ and $\sqrt{\delta_2\|\bx^{(0)}-\bx\|_2}\le\frac{1-\delta_1}{4}\|\bx^{(0)}-\bx\|_2$.  Combining $\delta_4<\frac{\alpha}{2}$, $\|\bx^{(0)}-\bx\|_2\le \delta_4$ and $\bx\in\mathbbm{A}_{\alpha,\beta}$, we obtain $\bx^{(0)}\in \mathbbm{A}_{\alpha/2,2\beta}$, hence we can use PLL-AIC to deduce
      \begin{align}\label{eq:low_021}
      \|\bx^{(1)}-\bx\|_2  = \|\bx^{(0)}-\bx-\eta\bh(\bx^{(0)},\bx)\|_2 \le \delta_1 \|\bx^{(0)}-\bx\|_2 + \sqrt{\delta_2\|\bx^{(0)}-\bx\|_2}+\delta_3 \le \frac{1+\delta_1}{2} \|\bx^{(0)}-\bx\|_2, 
  \end{align} 
  If $\|\bx^{(1)}-\bx\|_2 >E(\delta_1,\delta_2,\delta_3)$, since (\ref{eq:low_021}) gives $\|\bx^{(1)}-\bx\|_2\le \|\bx^{(0)}-\bx\|_2\le \delta_4$, by re-iterating the arguments in (\ref{eq:low_021}) we obtain $$
     \|\bx^{(2)}-\bx\|_2\le \frac{1+\delta_1}{2}\|\bx^{(1)}-\bx\|_2\le \Big(\frac{1+\delta_1}{2}\Big)^2\|\bx^{(0)}-\bx\|_2\le \Big(\frac{1+\delta_1}{2}\Big)^2.$$ 
     By induction, if $\hat{t}>\hat{t}_l$ holds (i.e., $\|\bx^{(t)}-\bx\|_2>E(\delta_1,\delta_2,\delta_3)$ holds for any $t\le \hat{t}_l$), then we have
$
 E(\delta_1,\delta_2,\delta_3) < \|\bx^{(\hat{t}_l)}-\bx\|_2\le (\frac{1+\delta_1}{2})^{\hat{t}_l}.$
 This leads to 
 $ \hat{t}_l< \frac{\log E(\delta_1,\delta_2,\delta_3)}{\log((1+\delta_1)/2)},$ which further implies $\hat{t}\le \lceil \frac{\log E(\delta_1,\delta_2,\delta_3)}{\log((1+\delta_1)/2)}\rceil$, 
   stating that the iterates enter  $\mathbb{B}_2^n(\bx;E(\delta_1,\delta_2,\delta_3))$ with no more than $\lceil \frac{\log E(\delta_1,\delta_2,\delta_3)}{\log((1+\delta_1)/2)}\rceil$ iterations. It remains to show $\|\bx^{(t)}-\bx\|_2\le E(\delta_1,\delta_2,\delta_3)$ for any $t\ge \hat{t}$. Note that $\|\bx^{(\hat{t})}-\bx\|\le E(\delta_1,\delta_2,\delta_3)<\delta_4<\frac{\alpha}{2}$ gives $\bx^{(\hat{t})}\in \mathbbm{A}_{\alpha/2,2\beta}$, together with $\delta_3 \le \frac{1-\delta_1}{4}E(\delta_1,\delta_2,\delta_3)$ and $\delta_2\le \frac{(1-\delta_1)^2}{16}E(\delta_1,\delta_2,\delta_3)$, we come to
      \begin{align*}
      &\|\bx^{(\hat{t}+1)} - \bx\|_2  = \|\bx^{(\hat{t})}-\bx-\eta\cdot \bh(\bx^{(\hat{t})},\bx)\|_2 \\& \le \delta_1 \|\bx^{(\hat{t})}-\bx\|_2 +\sqrt{\delta_2\cdot \|\bx^{(\hat{t})}-\bx\|_2}+\delta_3  \\
     & \le \frac{1+3\delta_1}{4}E(\delta_1,\delta_2,\delta_3) + \sqrt{\delta_2\cdot E(\delta_1,\delta_2,\delta_3)}\\& \le E(\delta_1,\delta_2,\delta_3),  
 \end{align*}
  By induction we can show $\bx^{(t)}\in \mathbb{B}_2^n(\bx;E(\delta_1,\delta_2,\delta_3))$ for any $t\ge \hat{t}$, which completes the proof.  
\end{proof}

\subsection{The Proof of Lemma \ref{lem:high_aic2err} (PLL-AIC Implies Convergence of {BIHT-1bSPR})}\label{app:pllraic2conver}

\begin{proof}
    Without loss of generality, we assume $\dist(\bx^{(0)},\bx)=\|\bx^{(0)}-\bx\|_2\le\delta_4$. We consider {BIHT-1bSPR} with step size $\eta$ and define $$\hat{t} = \inf\big\{t\in\mathbb{Z}_{\ge 0}:\|\bx^{(t)}-\bx\|_2\le E(\delta_1,\delta_2,\delta_3)\big\}$$ as the smallest $t$ such that $\bx^{(t)}$ enters $\mathbb{B}_2^n(\bx;E(\delta_1,\delta_2,\delta_3))$, with the convention that $\hat{t}=\infty$ if such $t$ does not exist. 
    We first show $\hat{t}\le \lceil \frac{\log E(\delta_1,\delta_2,\delta_3)}{\log(1/2+\delta_1)}\rceil$, which means that {BIHT-1bSPR} enters $\mathbb{B}_2^n(\bx;E(\delta_1,\delta_2,\delta_3))$ with no more than $ \lceil \frac{\log E(\delta_1,\delta_2,\delta_3)}{\log(1/2+\delta_1)}\rceil$ iterations. We assume $\|\bx^{(0)}-\bx\|_2>E(\delta_1,\delta_2,\delta_3)$, otherwise we have $\hat{t}=0$ and  we are done. Note that $\|\bx^{(0)}-\bx\|_2>E(\delta_1,\delta_2,\delta_3)=\max\{\frac{64\delta_2}{(1-2\delta_1)^2},\frac{8\delta_3}{1-2\delta_1}\}$ gives $ 2\sqrt{\delta_2\|\bx^{(0)}-\bx\|_2}\le \frac{1-2\delta_1}{4}\|\bx^{(0)}-\bx\|_2$ and $2\delta_3\le \frac{1-2\delta_1}{4}\|\bx^{(0)}-\bx\|_2$. By $\delta_4<0.01\alpha$, $\|\bx^{(0)}-\bx\|_2\le\delta_4$ and $\bx\in \mathbbm{A}_{\alpha,\beta}$, we have $\bx^{(0)}\in\mathbbm{A}_{0.99,1.01\beta}$. 
    By identifying $\calT_{(k)}$ as the projection onto $\Sigma^n_k$ and then use Lemma \ref{lem:pro_contain} and PLL-AIC, we proceed as 
        \begin{align} 
           \nn &\|\bx^{(1)}-\bx\|_2 = \|\calT_{(k)}(\tilde{\bx}^{(1)})-\bx\|_2 = \|\calP_{\Sigma^n_k}(\tilde{\bx}^{(1)})-\bx\|_2\\&\nn=\|\calP_{\Sigma^n_k-\bx}(\bx^{(0)}-\bx-\eta\cdot\bh(\bx^{(0)},\bx))\|_2\\&\nn\le 2\|\calP_{\Sigma^n_{2k}}(\bx^{(0)}-\bx- \eta\cdot \bh(\bx^{(0)},\bx))\|_2\\
            &\le 2\delta_1\|\bx^{(0)}-\bx\|_2 + 2\sqrt{\delta_2\cdot \|\bx^{(0)}-\bx\|_2}+2\delta_3
            \nn \\& \le \Big(\frac{1}{2}+\delta_1\Big)\|\bx^{(0)}-\bx\|_2,\label{eq:0to1sparse}
        \end{align}  
Suppose $\hat{t}>1$, then we have $\|\bx^{(1)}-\bx\|_2>E(\delta_1,\delta_2,\delta_3)$, and by (\ref{eq:0to1sparse}) we have $\bx^{(1)}\in \mathbbm{A}_{0.99\alpha,1.01\beta}$. Hence, we can re-iterate the arguments in (\ref{eq:0to1sparse}) to obtain $$
    \|\bx^{(2)}-\bx\|_2 \le \Big(\frac{1}{2}+\delta_1\Big)\|\bx^{(1)}-\bx\|_2\le \Big(\frac{1}{2}+\delta_1\Big)^2\|\bx^{(0)}-\bx\|_2\le \Big(\frac{1}{2}+\delta_1\Big)^2.$$ 
    By induction, if $\hat{t}>\hat{t}_l$ holds (that is, $\|\bx^{(t)}-\bx\|_2>E(\delta_1,\delta_2,\delta_3)$ holds for any $t\le\hat{t}_l$), then we have $E(\delta_1,\delta_2,\delta_3)<\|\bx^{(\hat{t}_l)}-\bx\|_2 \le \Big(\frac{1}{2}+\delta_1\Big)^{\hat{t}_l}$ which leads to $\hat{t}_l < \frac{\log E(\delta_1,\delta_2,\delta_3)}{\log(\frac{1}{2}+\delta_1)}$. 
Since this holds for any $\hat{t}_l$ smaller than $\hat{t}$, we have $\hat{t}\le \lceil \frac{\log (\delta_1,\delta_2,\delta_3)}{\log(1/2+\delta_1)}\rceil$. All that remains is to show $\|\bx^{(t)}-\bx\|_2\le E(\delta_1,\delta_2,\delta_3)$ for any $t\ge \hat{t}$. Note that $\|\bx^{(\hat{t})}-\bx\|_2\le E(\delta_1,\delta_2,\delta_3)<0.01\alpha$ gives $\bx^{(\hat{t})}\in\mathbbm{A}_{0.99\alpha,1.01\beta}$. Taken collectively with $\bx,\bx^{(\hat{t})}\in\Sigma^n_k$, Lemma \ref{lem:pro_contain} and PLL-AIC, we obtain 
    \begin{align*}
       & \|\bx^{(\hat{t}+1)}-\bx\|_2=\|\calT_{(k)}(\tilde{\bx}^{(\hat{t})})-\bx\|_2\\&
       =\|\calP_{\Sigma^n_k}(\bx^{(\hat{t})}-\eta\cdot\bh(\bx^{(\hat{t})},\bx))-\bx\|_2
       \\&= \|\calP_{\Sigma^n_k-\bx}(\bx^{(\hat{t})}-\bx-\eta\cdot\bh(\bx^{(\hat{t})},\bx))\|_2\\&\le 2\|\calP_{\Sigma^n_{2k}}(\bx^{(\hat{t})}-\bx- \eta \cdot \bh(\bx^{(\hat{t})},\bx))\|_2  \\
       &\le 2\delta_1\|\bx^{(\hat{t})}-\bx\|_2 + 2\sqrt{\delta_2\cdot \|\bx^{(\hat{t})}-\bx\|_2} + 2\delta_3\\&\le \frac{1+6\delta_1}{4}E(\delta_1,\delta_2,\delta_3)+ 2\sqrt{\delta_2\cdot E(\delta_1,\delta_2,\delta_3)}\\& \le E(\delta_1,\delta_2,\delta_3),  
    \end{align*} 
By induction, it follows that $\|\bx^{(t)}-\bx\|_2\le E(\delta_1,\delta_2,\delta_3)$ holds for any $t\ge \hat{t}$. The proof is complete. 
\end{proof}
\section{Gaussian Matrix Respects PLL-AIC}\label{app:prove_pgd}
Based on the proof outline in Section \ref{sec:pgdproof}, this appendix is devoted to proving Theorem \ref{thm:raic}. 
\subsection{Bounding $\|\calP_{\calC_-}(\bu_1-\bv_1-\eta\cdot\bh(\bu_1,\bv_1))\|_2$ (Step 1 in Large-distance regime)} \label{app:large_main_term}
It is sufficient to first separately bound    $T_1^{\bp,\bq},T_2^{\bp,\bq},T_3^{\bp,\bq}$ and the higher order term $\|\bh_2(\bp,\bq)\|_2$  for  fixed $(\bp,\bq)\in \calN_{r,\delta_4}^{(2)}$  and then take  union bound over $(\bp,\bq)\in \calN_{r,\delta_4}^{(2)}$. 

\subsubsection*{Step 1.1: Bounding $T_1^{\bp,\bq}$ over $\calN_{r,\delta_4}^{(2)}$}
Substituting $\bh_1(\bp,\bq)=\frac{1}{m}\sum_{i\in\bR_{\bp,\bq}}\sign(\ba_i^\top(\bp-\bq))\ba_i$  into $T_1^{\bp,\bq}$   yields $$
    T_1^{\bp,\bq} = \Big|\|\bp-\bq\|_2- \frac{\eta}{m}\sum_{i\in\bR_{\bp,\bq}}\frac{|\ba_i^\top(\bp-\bq)|}{\|\bp-\bq\|_2}\Big|:=\big|\|\bp-\bq\|_2 - \eta\cdot T_{11}^{\bp,\bq}\big|,$$ 
    where in the second equality we introduce $$T_{11}^{\bp,\bq}:=\frac{1}{m}\sum_{i\in \bR_{\bp,\bq}}\frac{|\ba_i^\top(\bp-\bq)|}{\|\bp-\bq\|_2}=\frac{1}{m}\sum_{i\in \bR_{\bp,\bq}}|\ba_i^\top\bbeta_1|.$$  
Conditioning on  $|\bR_{\bp,\bq}| = |\{i\in[m]:\sign(|\ba_i^\top\bp|-\tau)\neq \sign(|\ba_i^\top\bq|-\tau)\}|=r_{\bp,\bq},$ 
 the distribution of $T_{11}^{\bp,\bq}$ is identical with 
$T_{11}^{\bp,\bq}\big|\{|\bR_{\bp,\bq}|=r_{\bp,\bq}\} \sim   \frac{1}{m}\sum_{i=1}^{r_{\bp,\bq}}Z_i^{\bp,\bq}$, 
where the i.i.d. random variables $Z_1^{\bp,\bq},...,Z_{r_{\bp,\bq}}^{\bp,\bq}$ follow the conditional distribution 
\begin{align}\label{eq:Zipq}
    Z_i^{\bp,\bq} \sim  |\ba_i^\top\bbeta_1|\Big|\big\{\sign(|\ba_i^\top\bp|-\tau)\ne \sign(|\ba_i^\top\bq|-\tau)\big\} 
\end{align}
with standard Gaussian vector $\ba_i\sim \calN(0,\bI_n)$. We proceed to derive the P.D.F. of $Z_i^{\bp,\bq}$.

\paragraph{P.D.F. of $Z_i^{\bp,\bq}$:}
Recall (\ref{eq:beta1beta2}) that holds for some orthonormal $(\bbeta_1=\frac{\bp-\bq}{\|\bp-\bq\|_2},\bbeta_2)$ and coordinates  $(u_1,u_2,v_1)$ satisfying $u_1>v_1,u_2\ge 0$.   Combining with the rotational invariance of $\ba$, we can let $a_1,a_2$  be independent $\calN(0,1)$ variables and deduce that $Z_i^{\bp,\bq}$ in (\ref{eq:Zipq}) has the same distribution as  
    \begin{align*}
 &|\ba_i^\top\bbeta_1|\Big|\{\sign(|u_1\ba_i^\top\bbeta_1+u_2\ba_i^\top\bbeta_2|-\tau)\ne \sign(|v_1\ba_i^\top\bbeta_1+u_2\ba_i^\top\bbeta_2|-\tau)\}\\
   &\sim |a_1| \Big| \{\sign(|a_1u_1+a_2u_2|-\tau)\neq \sign(|a_1v_1+a_2u_2|-\tau)\}, 
\end{align*}    Define the event $\hat{E}:=\{\sign(|a_1u_1+a_2u_2|-\tau)\ne\sign(|a_1v_1+a_2u_2|-\tau)\} $  that happens with probability $  \mathbbm{P}(\hat{E})=\mathbbm{P}(\calH_{|\ba|}\text{ separates }\bp,\bq)=\sfP_{\bp,\bq},$
 Bayes' Theorem establishes the P.D.F. of $Z_i^{\bp,\bq}$ as 
\begin{align}\label{eq:pdf_a1}
    f_{Z_i^{\bp,\bq}}(z) = \frac{f_{|a_1|}(z)\cdot\mathbbm{P}(\hat{E}||a_1|=z)}{\mathbbm{P}(\hat{E})}= \sqrt{\frac{2}{\pi}}\frac{\exp(-\frac{z^2}{2})\cdot\mathbbm{P}(\hat{E}||a_1|=z)}{\sfP_{\bp,\bq}},~z\ge 0. 
\end{align}
Since $|a_1||\hat{E}$ and $Z_i^{\bp,\bq}$ are identically distributed,
we shall further calculate $\mathbbm{P}(\hat{E}||a_1|=z)$ to get the P.D.F. of $Z_i^{\bp,\bq}$. %To preserve the presentation flow, the proofs of the lemmas in this appendix are deferred to Appendix \ref{app:cal_dis}.

\begin{lem}[The P.D.F. of $Z_i^{\bp,\bq}$]\label{lem:T1pdfprob}
   For any $z>0$, we have $\mathbbm{P}\big(\hat{E}||a_1|=z\big) = P_1+P_2$ where \footnote{Note that this might be inexact at $z=\frac{\tau}{|u_1|},\frac{\tau}{|v_1|}$  when $u_2=0$ but does not affect our subsequent use of (\ref{eq:lem_pdf_zipq}) as the P.D.F. of $Z_i^{\bp,\bq}$. A similar note should also be made for Lemma \ref{lem:prob_cal_T2} below.}   $$P_1:=\mathbbm{P}\Big(\max\{\tau-zu_1,-\tau-zv_1\}<\calN(0,u_2^2)<\tau-zv_1\Big)$$ and $$P_2:=\mathbbm{P}\Big(\max\{zu_1-\tau,zv_1+\tau\}<\calN(0,u_2^2)<\tau+zu_1\Big).$$ 
   Substituting this into (\ref{eq:pdf_a1}) yields the P.D.F. of $Z_i^{\bp,\bq}$:\begin{align}
       \label{eq:lem_pdf_zipq}f_{Z_i^{\bp,\bq}}(z) = \sqrt{\frac{2}{\pi}} \frac{\exp(-\frac{z^2}{2})[P_1+P_2]}{\sfP_{\bp,\bq}},~~z\ge 0. 
   \end{align}
\end{lem}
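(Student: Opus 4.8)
The plan is to compute $\mathbbm{P}(\hat E \mid |a_1| = z)$ explicitly and then substitute it into the Bayes formula (\ref{eq:pdf_a1}). A first reduction is that conditioning on $|a_1| = z$ is the same as conditioning on $a_1 = z$: if $a_1 = -z$, then replacing the independent standard Gaussian $a_2$ by $-a_2$ (which has the same law and is still independent of $a_1$) negates both $u_1a_1 + u_2a_2$ and $v_1a_1 + u_2a_2$, while $\hat E$ sees these only through their absolute values; hence $\mathbbm{P}(\hat E \mid a_1 = -z) = \mathbbm{P}(\hat E \mid a_1 = z)$. So it suffices to work conditionally on $a_1 = z$, and I set $w := u_2 a_2 \sim \mathcal{N}(0, u_2^2)$.

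With $a_1 = z$ fixed, $\hat E$ is the symmetric difference of $A := \{|zu_1 + w| \ge \tau\}$ and $B := \{|zv_1 + w| \ge \tau\}$, and since $A \triangle B = A^c \triangle B^c$, I would instead work with the complements $A^c = (-\tau - zu_1,\ \tau - zu_1)$ and $B^c = (-\tau - zv_1,\ \tau - zv_1)$, two open intervals of length $2\tau$ centered at $-zu_1$ and $-zv_1$. Since $u_1 > v_1$ (because $u_1 - v_1 = \|\bp - \bq\|_2 > 0$) and $z > 0$, the center of $A^c$ lies strictly to the left of that of $B^c$. A short interval computation — distinguishing the overlapping case $-\tau - zv_1 < \tau - zu_1$ from the non-overlapping one — then shows that $A^c \triangle B^c = I_1 \cup I_2$, a disjoint union of $I_1 = \big(-\tau - zu_1,\ \min\{\tau - zu_1,\ -\tau - zv_1\}\big)$ and $I_2 = \big(\max\{\tau - zu_1,\ -\tau - zv_1\},\ \tau - zv_1\big)$ (the endpoints are irrelevant: $w$ is continuous when $u_2 > 0$, and the exceptional values $z = \tau/|u_1|, \tau/|v_1|$ when $u_2 = 0$ form a null set, cf.\ the footnote).

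It then remains to match the two pieces with $P_1$ and $P_2$. The interval $I_2$ is literally the one appearing in $P_1$, so $\mathbbm{P}(w \in I_2) = P_1$. For $I_1$, I use the symmetry $w \dequal -w$ to write $\mathbbm{P}(w \in I_1) = \mathbbm{P}(w \in -I_1)$ and compute $-I_1 = \big(\max\{zu_1 - \tau,\ zv_1 + \tau\},\ \tau + zu_1\big)$, which is exactly the interval defining $P_2$; hence $\mathbbm{P}(w \in I_1) = P_2$. Since $I_1$ and $I_2$ are disjoint, $\mathbbm{P}(\hat E \mid a_1 = z) = \mathbbm{P}(w \in I_1) + \mathbbm{P}(w \in I_2) = P_1 + P_2$, which is the first claim; substituting into (\ref{eq:pdf_a1}) then gives (\ref{eq:lem_pdf_zipq}).

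The only real bookkeeping is keeping the $\min$/$\max$ straight in the two-interval symmetric-difference computation and noting that only $u_1 > v_1$ and $z > 0$ enter (no case split on the signs of $u_1$ or $v_1$ is needed). I do not anticipate a genuine obstacle here; it is an elementary, if slightly fiddly, calculation, and the main value of the lemma is that it produces a closed form of $f_{Z_i^{\bp,\bq}}$ amenable to the later moment and sub-Gaussianity estimates.
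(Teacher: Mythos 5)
Your proposal is correct and takes essentially the same route as the paper: both reduce conditioning on $|a_1|=z$ to conditioning on $a_1=z$ via a sign-symmetry argument on $a_2$, both exploit $u_1>v_1$ and $z>0$ to resolve the sign ambiguities, and both use the symmetry of $\calN(0,u_2^2)$ to rewrite the left-hand piece into the form $P_2$. Your symmetric-difference-of-two-intervals bookkeeping is just a cleaner way of organizing the paper's casework on $\{|zu_1+w|\gtrless\tau\}$ and $\{|zv_1+w|\gtrless\tau\}$.
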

\begin{proof}In view of (\ref{eq:pdf_a1}),
    it is enough to calculate $\mathbbm{P}(\hat{E}||a_1|=z)$: 
    \begin{subequations}
        \begin{align}\nn
       & \mathbbm{P}\big(\sign(|a_1u_1+a_2u_2|-\tau)\ne\sign(|a_1v_1+a_2u_2|-\tau)\big||a_1|=z\big)\\\nn
       & = \mathbbm{P}\big(\sign(||a_1|u_1+\sign(a_1)a_2u_2|-\tau)\ne \sign(||a_1|v_1+\sign(a_1)a_2u_2|-\tau)\big||a_1|=z\big)\\
       & = \mathbbm{P}\big(\sign(|zu_1+a_2u_2|-\tau)\ne \sign(|zv_1+a_2u_2|-\tau)\big)\label{eq:remove_con_1}\\
       & = \mathbbm{P}\big(|zu_1+a_2u_2|>\tau,|zv_1+a_2u_2|<\tau\big)+ \mathbbm{P}\big(|zu_1+a_2u_2|<\tau,|zv_1+a_2u_2|>\tau\big)\label{eq:leave_out_equal}:=P_1+P_2,
    \end{align}
    \end{subequations}
    where (\ref{eq:remove_con_1}) holds because $\sign(a_1)a_2| \{|a_1|=z\}$ and $a_2$ are identically distributed,
            (\ref{eq:leave_out_equal}) might not hold exactly when $u_2=0$ due to the cases of 
            $|zu_1|-\tau=0$ and $|zv_1|-\tau = 0$, but such inexactness could only occur at $z=\frac{\tau}{|u_1|}$ and $z= \frac{\tau}{|v_1|}$ and hence will not 
            affect our subsequent use of  the P.D.F. of the continuous random variable $Z_i^{\bp,\bq}$. 
       It remains to evaluate $P_1$ and $P_2$. With respect to $P_1$, since $zu_1+a_2u_2> zv_1+a_2u_2$, we have $$P_1 = \mathbbm{P}\big(zu_1+a_2u_2 >\tau,-\tau<zv_1+a_2u_2<\tau\big)=\mathbbm{P}\big(\max\{\tau-zu_1,-\tau-zv_1\}<a_2u_2<\tau-zv_1\big).$$ Similarly, $P_2$ simplifies to 
       \begin{align*}
           &P_2 = \mathbbm{P}\big(-\tau<zu_1+a_2u_2<\tau,~zv_1+a_2u_2<-\tau\big)\\&= \mathbbm{P}\big(-\tau-zu_1<a_2u_2< \min\{\tau-zu_1,-\tau-zv_1\}\big)\\&= \mathbbm{P}\big(\max\{zu_1-\tau,zv_1+\tau\}<a_2u_2<\tau+zu_1\big).
       \end{align*}
       By further noting $a_2u_2\sim \calN(0,u_2^2)$, we complete the proof. 
\end{proof}
\paragraph{Sub-Gaussianity of $Z_i^{\bp,\bq}$:} Next, we seek to show $Z_i^{\bp,\bq}$ is $\calO(1)$-sub-Gaussian. We accomplish this by showing the tail of $Z_i^{\bp,\bq}$ is dominated by the one of some Gaussian variable. To this end, we need to meticulously discuss all the cases of $(u_1,u_2,v_1)$ since the distribution of $Z_i^{\bp,\bq}$     essentially hinges on the relative position of $(\bp,\bq)$, with some typical cases depicted in  Figure \ref{fig:relaposi_D2}. In particular, we need to separately treat $u_2$ bounded away from $0$ and $u_2$ being close to $0$ here. %See Figure \ref{fig:relaposi_D2}.  

%Note that this is also the case for Lemma \ref{lem:SGT2hatZ} below.
%{\color{blue}[TODO: Think more.]}  

%the proof involves some meticulous discussions to cover all the possible cases of $(u_1,u_2,v_1)$.  

%These goals are accomplished in   the forthcoming lemmas, whose proofs are deferred Appendix \ref{app:cal_dis}.
\begin{lem}[$Z_i^{\bp,\bq}$ is $\calO(1)$-Sub-Gaussian] \label{lem:T1SG}
    There exists some  constant $C_0$ depending on $(\alpha,\beta,\tau)$ such that  $\|Z_i^{\bp,\bq}\|_{\psi_2}\le C_0$ holds for any $(\bp,\bq)\in \calN_{r,\delta_4}^{(2)}$.
\end{lem}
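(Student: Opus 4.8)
The plan is to prove the sub-Gaussian tail estimate $\mathbbm{P}(Z_i^{\bp,\bq}>t)\le C_1\exp(-c_1t^2)$ for all $t\ge0$, with $C_1,c_1$ depending only on $(\alpha,\beta,\tau)$ and uniform over $(\bp,\bq)\in\calN_{r,\delta_4}^{(2)}$; since $Z_i^{\bp,\bq}\ge0$, this is equivalent, up to absolute constants, to the claimed $\|Z_i^{\bp,\bq}\|_{\psi_2}\le C_0$. Starting from the P.D.F. in (\ref{eq:lem_pdf_zipq}) we have
\begin{align*}
\mathbbm{P}(Z_i^{\bp,\bq}>t)=\frac{1}{\sfP_{\bp,\bq}}\int_t^\infty\sqrt{\tfrac{2}{\pi}}\,e^{-z^2/2}\big(P_1(z)+P_2(z)\big)\,\mathrm{d}z,
\end{align*}
where, by Lemma \ref{lem:T1pdfprob}, $P_1(z)=\mathbbm{P}\big(\calN(0,u_2^2)\in I_1(z)\big)$ and $P_2(z)=\mathbbm{P}\big(\calN(0,u_2^2)\in I_2(z)\big)$ with $I_1(z)=\big(\max\{\tau-zu_1,-\tau-zv_1\},\,\tau-zv_1\big)$ and $I_2(z)=\big(\max\{zu_1-\tau,\,zv_1+\tau\},\,\tau+zu_1\big)$; each has length $\min\{z(u_1-v_1),2\tau\}\le z(u_1-v_1)$, and $u_1-v_1=\|\bp-\bq\|_2$ by (\ref{eq:beta1beta2}). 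The first step is an algebraic observation about the coordinates: for $(\bp,\bq)\in\calN_{r,\delta_4}^{(2)}$ with $r$ (hence $\delta_4$) small enough, $u_1v_1+u_2^2\ge0$ holds, so that $\|\bp+\bq\|_2^2=(u_1-v_1)^2+4(u_1v_1+u_2^2)\ge\|\bp-\bq\|_2^2$ and therefore $\dist(\bp,\bq)=u_1-v_1$. (If $u_1,v_1$ had opposite signs, then $u_1-v_1=\|\bp-\bq\|_2\le2\delta_4$ forces $|u_1|,|v_1|\le2\delta_4$, whence $u_2^2\ge\alpha^2-4\delta_4^2$ and $u_1v_1+u_2^2\ge u_2^2-4\delta_4^2\ge0$; for the same sign it is immediate.) Combining with Lemma \ref{lem:Puv} gives $\sfP_{\bp,\bq}\asymp u_1-v_1$. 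The guiding idea below is that the length factor $u_1-v_1$ coming out of any bound on $P_1(z)+P_2(z)$ cancels against this $\sfP_{\bp,\bq}$, which is precisely what keeps the final constants independent of $r$.

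Next I would fix a small constant $c_0\in(0,\alpha/2]$ (say $c_0=\alpha/10$) and split into two regimes. \emph{Regime $u_2\ge c_0$.} Here the crude density bound $P_1(z)+P_2(z)\le\frac{2z(u_1-v_1)}{\sqrt{2\pi}\,u_2}\le\frac{2z(u_1-v_1)}{\sqrt{2\pi}\,c_0}$ together with $\sfP_{\bp,\bq}\ge c(u_1-v_1)$ already gives
\begin{align*}
\mathbbm{P}(Z_i^{\bp,\bq}>t)\le\frac{1}{c(u_1-v_1)}\int_t^\infty\sqrt{\tfrac{2}{\pi}}\,e^{-z^2/2}\,\frac{2z(u_1-v_1)}{\sqrt{2\pi}\,c_0}\,\mathrm{d}z=\frac{2}{\pi c c_0}\,e^{-t^2/2},
\end{align*}
which is of the required form.

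\emph{Regime $u_2<c_0$.} Now $u_1^2+u_2^2\ge\alpha^2$ and $v_1^2+u_2^2\ge\alpha^2$ force $|u_1|,|v_1|\ge\alpha_0:=\sqrt{\alpha^2-c_0^2}\ge\alpha/2$. I would show that, with $z_0:=8\tau/\alpha_0$, for every $z\ge z_0$ the intervals $I_1(z),I_2(z)$ lie at distance at least $\alpha_0z/4$ from the origin. This is the step requiring one to go through the sign configurations of $(u_1,v_1)$ (opposite signs being excluded above), but the mechanism is uniform: since $u_1-v_1\le2\delta_4$ is tiny, each interval is short and has an endpoint equal to one of $\pm\tau-zu_j$, $\pm\tau-zv_j$ up to an $O(z\delta_4)$ error, hence of magnitude at least $\alpha_0z-\tau\ge\tfrac{7}{8}\alpha_0z$ once $z\ge z_0$. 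Since the density of $\calN(0,u_2^2)$ on $I_j(z)$ is at most $\frac{1}{\sqrt{2\pi}u_2}\exp\big(-\dist(0,I_j(z))^2/(2u_2^2)\big)$, this yields $P_1(z)+P_2(z)\le\frac{2z(u_1-v_1)}{\sqrt{2\pi}\,u_2}\exp\big(-\frac{\alpha_0^2z^2}{32u_2^2}\big)$ for $z\ge z_0$. Plugging this into the ratio, with $\gamma:=\tfrac{1}{2}+\frac{\alpha_0^2}{32u_2^2}$, for $t\ge z_0$,
\begin{align*}
\mathbbm{P}(Z_i^{\bp,\bq}>t)\le\frac{1}{c(u_1-v_1)}\cdot\frac{2(u_1-v_1)}{\pi u_2}\int_t^\infty z\,e^{-\gamma z^2}\,\mathrm{d}z=\frac{1}{\pi c\,u_2\gamma}\,e^{-\gamma t^2}\le\frac{32\,u_2}{\pi c\,\alpha_0^2}\,e^{-\alpha_0^2t^2/(32u_2^2)}\le\frac{32\,c_0}{\pi c\,\alpha_0^2}\,e^{-\alpha_0^2t^2/(32c_0^2)},
\end{align*}
the factor $u_1-v_1$ cancelling, $\frac{1}{u_2\gamma}\le\frac{32u_2}{\alpha_0^2}$ absorbing the $1/u_2$, and $u_2\le c_0$ used in the prefactor and the exponent; for $t<z_0$ the trivial bound $\mathbbm{P}(Z_i^{\bp,\bq}>t)\le1\le e^{\alpha_0^2z_0^2/(32c_0^2)}e^{-\alpha_0^2t^2/(32c_0^2)}$ finishes, and the degenerate case $u_2=0$ is even simpler ($P_1(z)=P_2(z)=0$ for $z\ge z_0$). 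Combining the two regimes gives $\mathbbm{P}(Z_i^{\bp,\bq}>t)\le C_1e^{-c_1t^2}$ with $C_1,c_1$ depending only on $(\alpha,\beta,\tau)$, hence Lemma \ref{lem:T1SG}.

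I expect the main obstacle to be the geometric step in the regime $u_2<c_0$: checking, for every sign and magnitude configuration of $(u_1,v_1)$ compatible with $\bp,\bq\in\mathbbm{A}_{\alpha,\beta}$ and $\|\bp-\bq\|_2\le2\delta_4$, that the conditioning intervals $I_1(z),I_2(z)$ recede linearly from the origin as $z$ grows. The conceptual content is lighter: the length factor $u_1-v_1$ in the numerator is exactly $\asymp\sfP_{\bp,\bq}$, so it cancels and the $\psi_2$-bound is uniform in $r$; without exploiting this cancellation one would only obtain $\|Z_i^{\bp,\bq}\|_{\psi_2}\lesssim\log^{1/2}(r^{-1})$, which would be too weak for the subsequent concentration arguments.
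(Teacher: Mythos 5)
Your proof is correct and follows essentially the same route as the paper's: you use the same P.D.F. representation, the same cancellation of the interval-length factor $u_1-v_1$ against $\sfP_{\bp,\bq}\asymp u_1-v_1$, and the same case split according to whether $u_2$ is bounded away from zero (your Regime~1, the paper's Case~1) or small (your Regime~2, the paper's Cases~2--3). The only real additions are cosmetic or incremental: you phrase the argument via tail probabilities rather than the density directly, and you explicitly verify that $\dist(\bp,\bq)=\|\bp-\bq\|_2=u_1-v_1$ for $(\bp,\bq)\in\calN_{r,\delta_4}^{(2)}$ (so that Lemma~\ref{lem:Puv} applies to $u_1-v_1$ rather than just to $\dist(\bp,\bq)$), a small point the paper leaves implicit; your small-$u_2$ bound with the $z^2/u_2^2$ exponent is a bit stronger than the paper's fixed-$\tau^2/(2u_2^2)$ bound, but both suffice.
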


\begin{proof}
    By Lemma \ref{lem:T1pdfprob} the P.D.F. of $Z_i^{\bp,\bq}$ is given by $$
        f_{Z_i^{\bp,\bq}}(z) = \sqrt{\frac{2}{\pi}}\frac{\exp(-\frac{z^2}{2})[P_1+P_2]}{\sfP_{\bp,\bq}},\qquad\forall \,z>0,$$ and we have
$P_1\le \mathbbm{P}(\tau-zu_1<\calN(0,u_2^2)<\tau-zv_1)$ and $P_2\le \mathbbm{P}(\tau+zv_1<\calN(0,u_2^2)<\tau+zu_1)$.   
   Recall $u_2\ge 0$, we separately  treat  several cases in the following. 
   \begin{figure}[ht!]
    \begin{centering}
        \includegraphics[width=0.20\columnwidth]{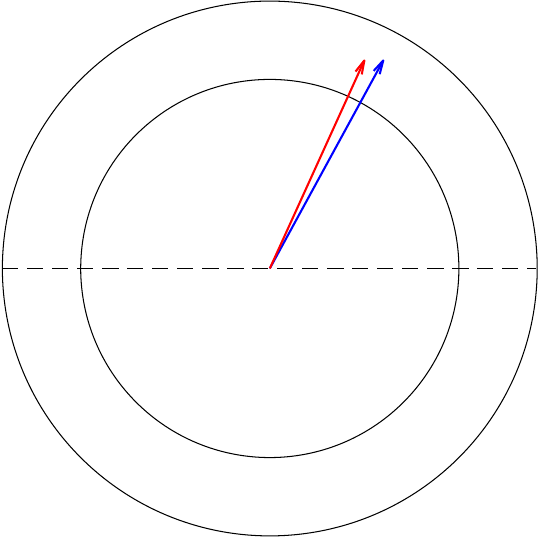} \quad\quad \includegraphics[width=0.20\columnwidth]{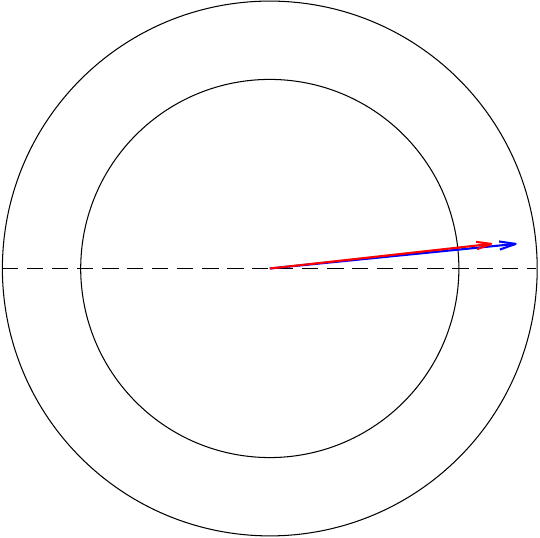}\quad \quad
        \includegraphics[width=0.20\columnwidth]{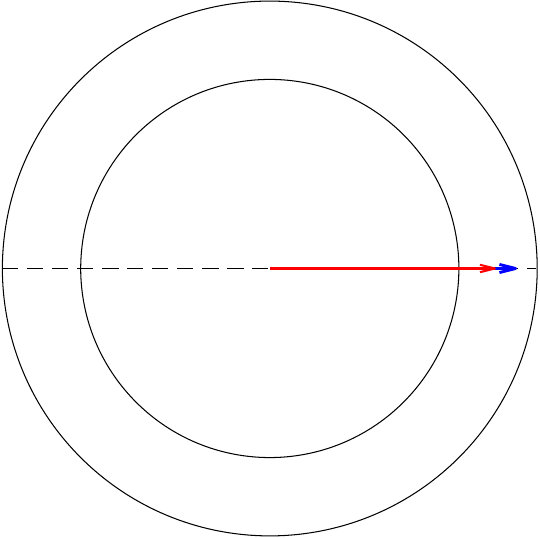}
        \par
        \vspace{1mm}
       \hspace{41mm}(a) \hspace{35mm} (b) \hspace{35mm} (c)
    \end{centering}
    
     \caption{\label{fig:relaposi_D2} \small We parametrize $\bp,\bq\in \mathbbm{A}_{\alpha,\beta}$ as $(u_1,u_2)$ and $(v_1,u_2)$ that obey $u_1>v_1$ and $u_2\ge0$, with $(u_1,u_2)$ and $(v_1,u_2)$ being alluded by the blue arrow and red arrow respectively in this figure. (a): $u_2$ is bounded away from $0$. (b): $u_2$ is close to $0$. (c): The ``degenerate'' case where $u_2=0$.}
\end{figure}
  We first deal with $u_2>0.$
    In this case, we have the following bound: 
        \begin{align}\nn
             P_1+P_2&\le \mathbbm{P}\Big(\frac{\tau-zu_1}{u_2}<\calN(0,1)< \frac{\tau-zv_1}{u_2}\Big)+\mathbbm{P}\Big(\frac{\tau+zv_1}{u_2}<\calN(0,1)<\frac{\tau+zu_1}{u_2}\Big)\\\nn
            & \le  \frac{1}{\sqrt{2\pi}}\exp\Big(-\frac{\min\{(\tau-zu_1)^2,(\tau-zv_1)^2\}}{2u_2^2}\Big) \frac{z(u_1-v_1)}{u_2}\\& +   \frac{1}{\sqrt{2\pi}}\exp\Big(-\frac{\min\{(\tau+zu_1)^2,(\tau+zv_1)^2\}}{2u_2^2}\Big) \frac{z(u_1-v_1)}{u_2}; \label{eq:P1P2upper}
        \end{align} 
     Since $\|\bp-\bq\|_2= u_1-v_1$, by Lemma \ref{lem:Puv} there exists some $c_0$ depending only on $(\alpha,\beta,\tau)$ such that $\sfP_{\bp,\bq}\ge c_0(u_1-v_1)$.  Substituting this and (\ref{eq:P1P2upper}) into the P.D.F. yields 
     \begin{align}\label{eq:pdf_relax1}
         f_{Z_i^{\bp,\bq}}(z)\le \frac{z\exp(-\frac{z^2}{2})}{c_0\pi u_2} \Big[\exp\Big(-\frac{\min\{(\tau-zu_1)^2,(\tau-zv_1)^2\}}{2u_2^2}\Big) +\exp\Big(-\frac{\min\{(\tau+zu_1)^2,(\tau+zv_1)^2\}}{2u_2^2}\Big)\Big].
     \end{align}
    It remains to show the right-hand side of (\ref{eq:pdf_relax1}) is dominated by some  Gaussian variable for large enough $z$. %Note that we can simply consider large enough $z$, since the part of $z=\calO(1)$ only makes $\calO(1)$ contribution to $\|Z_i^{\bp,\bq}\|_{\psi_2}$. 
    To this end, we separately treat the cases that $u_2$ bounded away from $0$ and $u_2$ being close to $0$. 
 
        \paragraph{(Case 1: $u_2\ge \frac{\alpha}{2}$, e.g., Figure \ref{fig:relaposi_D2}(a))} For any $z\in \mathbb{R}$, (\ref{eq:pdf_relax1}) implies $ f_{Z_i^{\bp,\bq}}(z)\le \frac{4 }{c_1\alpha\pi}z\exp\big(-\frac{z^2}{2}\big).$ When $z$ is larger than some  constant, $z\exp(-\frac{z^2}{2})\le \exp(-\frac{z^2}{4})$ holds,  implying that $\|Z_i^{\bp,\bq}\|_{\psi_2}\le C_1$ for some  constant $C_1$ which depends on $(\alpha,\beta,\tau)$.  
    \paragraph{(Case 2: $u_2\in(0,\frac{\alpha}{2})$, e.g., Figure \ref{fig:relaposi_D2}(b))}   From $\bp \in \mathbbm{A}_{\alpha,\beta}$ we have  $u_1^2+u_2^2\ge \alpha^2$, leading to $|u_1|>\frac{\alpha}{2}$. Note that $|v_1|\ge \frac{\alpha}{2}$ follows similarly. Thus, when $z\ge \frac{4\tau}{\alpha}$, we have $|\tau \pm zu_1|\ge |zu_1|-\tau \ge \tau$ and $|\tau \pm zv_1|  \ge |zv_1|-\tau \ge \tau$. Substituting into (\ref{eq:pdf_relax1}) gives $$
           f_{Z_i^{\bp,\bq}}(z)\le \frac{2}{c_0\pi}\frac{1}{u_2}\exp\big(-\frac{\tau^2}{2u_2^2}\big)\cdot z\exp\big(-\frac{z^2}{2}\big).$$  
           Note that for large enough $z$ we have $z\exp(-\frac{z^2}{2})\le \exp(-\frac{z^2}{4})$, and also $\sup_{u_2>0}\frac{1}{u_2}\exp(-\frac{\tau^2}{2u_2^2})\le C_2$ for some  constant $C_2$. Therefore, the probability tail of $Z_i^{\bp,\bq}$ is dominated by some Gaussian variable, implying the desired conclusion $\|Z_i^{\bp,\bq}\|_{\psi_2}=\calO(1)$.

    \paragraph{(Case 3: $u_2=0$, e.g., Figure \ref{fig:relaposi_D2}(c))} All that remains is to deal with $u_2=0$.    By Lemma \ref{lem:parameterization} $u_2=0$ occurs if and only if $\bp \parallel \bq$. By Lemma \ref{lem:T1pdfprob} we have $$P_1\le \mathbbm{P}(z>0,-\tau-zv_1<0<\tau-zv_1)=\mathbbm{P}(z>0,z|v_1|<\tau)$$ and $$P_2\le \mathbbm{P}(z>0,-\tau+zu_1<0<\tau+zu_1)=\mathbbm{P}(z>0,z|u_1|<\tau).$$ 
          Noticing $\|\bp\|_2 = |u_1|\ge \alpha$ and  $\|\bq\|_2 = |v_1| \ge\alpha$, hence we arrive 
at $P_1\le \mathbbm{1}\big(0<z<\frac{\tau}{|v_1|}\big)\le \mathbbm{1}\big(0<z<\frac{\tau}{\alpha}\big)$ and $P_2\le \mathbbm{1}\big(0<z<\frac{\tau}{|u_1|}\big)\le \mathbbm{1}\big(0<z<\frac{\tau}{\alpha}\big)$. Therefore, $Z_i^{\bp,\bq}$ is a random variable bounded as $|Z_i^{\bp,\bq}|\le \frac{\tau}{\alpha}$, which evidently possesses $\calO(1)$ sub-Gaussian norm.

    Note that the above discussions cover all possible cases of $(u_1,u_2,v_1)$. Hence, there exists some  constant $C_3$ such that $\|Z_i^{\bp,\bq}\|_{\psi_2}\le C_3$ for any $(\bp,\bq)\in\calN_{r,\delta_4}^{(2)}$, as claimed.   
\end{proof}

\paragraph{Concentration Bound on $T_1^{\bp,\bq}$:} We seek to establish tight and uniform concentration bound on $T_1^{\bp,\bq}$ by the following strategy:
\begin{itemize}
	[leftmargin=2ex,topsep=0.25ex]
	\setlength\itemsep{-0.1em}
    \item  Armed with Lemma \ref{lem:T1SG}, we first establish the   concentration   of $\frac{1}{m}\sum_{i\in r_{\bp,\bq}}Z_i^{\bp,\bq}$ about its mean, referred to as the   concentration bound conditioning on $\{|\bR_{\bp,\bq}|=r_{\bp,\bq}\}$;
    \item 
 We   further analyze the binomial random variable $|\bR_{\bp,\bq}|$ to get the unconditional concentration bound, and then take a union bound over $\calN_{r,\delta_4}^{(2)}$. 
\end{itemize}
\begin{lem}
    [Bounding $T_1^{\bp,\bq}$ over $\calN_{r,\delta_4}^{(2)}$] \label{lem:final_T11}Suppose $mr\ge C_0\scrH(\calC_{\alpha,\beta},r)$ for some sufficiently large $C_0$. Recall that $\calN_{r,\delta_4}^{(2)}$ is the constraint set defined in (\ref{eq:Nrxi}). Then for some  constants $C_1$ and $c$, the event 
    \begin{align}\label{eq:finaluniformT1}
         T_1^{\bp,\bq}\le C_1\eta\sqrt{\frac{ \|\bp-\bq\|_2\scrH(\calC_{\alpha,\beta},r)}{m}}+\mathsf{T}_\eta^{\bp,\bq},\quad\forall (\bp,\bq)\in \calN_{r,\delta_4}^{(2)}
    \end{align}
     holds   with probability at least $1-4\exp(-2\scrH(\calC_{\alpha,\beta},r))$, where $  \mathsf{T}_\eta^{\bp,\bq} := \big|\|\bp-\bq\|_2-\eta \sfP_{\bp,\bq}\mathbbm{E}(Z_i^{\bp,\bq})\big|.$  
\end{lem}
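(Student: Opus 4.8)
The plan is to combine the sub-Gaussianity from Lemma \ref{lem:T1SG} with a standard conditional Bernstein/Hoeffding argument and then remove the conditioning via a Chernoff analysis of the binomial $|\bR_{\bp,\bq}|$. First, fix a pair $(\bp,\bq)\in\calN_{r,\delta_4}^{(2)}$ and write $T_1^{\bp,\bq}=\big|\,\|\bp-\bq\|_2-\eta\cdot T_{11}^{\bp,\bq}\big|$ with $T_{11}^{\bp,\bq}=\frac1m\sum_{i\in\bR_{\bp,\bq}}|\ba_i^\top\bbeta_1|$. Conditioning on $\{|\bR_{\bp,\bq}|=r_{\bp,\bq}\}$, the quantity $mT_{11}^{\bp,\bq}$ is a sum of $r_{\bp,\bq}$ i.i.d.\ copies of $Z_i^{\bp,\bq}$, which by Lemma \ref{lem:T1SG} have sub-Gaussian norm $\le C_0$. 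A sub-Gaussian concentration inequality (e.g.\ general Hoeffding, \cite{vershynin2018high}) gives, for any $\lambda>0$,
\begin{align}\label{eq:condconc}
\mathbbm{P}\Big(\Big|\tfrac1m\textstyle\sum_{i=1}^{r_{\bp,\bq}}Z_i^{\bp,\bq}-\tfrac{r_{\bp,\bq}}{m}\mathbbm{E}(Z_i^{\bp,\bq})\Big|\ge \lambda\ \Big|\ |\bR_{\bp,\bq}|=r_{\bp,\bq}\Big)\le 2\exp\Big(-\tfrac{c m^2\lambda^2}{r_{\bp,\bq}}\Big),
\end{align}
for a constant $c$ depending on $C_0$. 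Multiplying through by $\eta$ and using the triangle inequality, on the complement of this event we have $T_1^{\bp,\bq}\le \big|\|\bp-\bq\|_2-\eta\tfrac{r_{\bp,\bq}}{m}\mathbbm{E}(Z_i^{\bp,\bq})\big|+\eta\lambda$.

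Second, I would control $r_{\bp,\bq}=|\bR_{\bp,\bq}|$, which is $\mathrm{Bin}(m,\sfP_{\bp,\bq})$ with $\sfP_{\bp,\bq}\asymp\dist(\bp,\bq)=\|\bp-\bq\|_2$ by Lemma \ref{lem:Puv}. A Chernoff bound (Lemma \ref{lem:chernoff}) shows that with probability at least $1-2\exp(-c'm\|\bp-\bq\|_2)$ one has $r_{\bp,\bq}\le 2m\sfP_{\bp,\bq}\lesssim m\|\bp-\bq\|_2$; on this event we may replace $r_{\bp,\bq}$ by its typical order in \eqref{eq:condconc}, so choosing $\lambda \asymp\sqrt{\|\bp-\bq\|_2\scrH(\calC_{\alpha,\beta},r)/m}$ makes the conditional failure probability $\le 2\exp(-c''\scrH(\calC_{\alpha,\beta},r))$ provided $mr\ge C_0\scrH(\calC_{\alpha,\beta},r)$ (here I also use $\|\bp-\bq\|_2\ge r$ from the definition of $\calN_{r,\delta_4}^{(2)}$ to absorb the $\|\bp-\bq\|_2$ factor that makes the binomial deviation bound match). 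Also note $\big|\|\bp-\bq\|_2-\eta\tfrac{r_{\bp,\bq}}{m}\mathbbm{E}(Z_i^{\bp,\bq})\big|\le \mathsf{T}_\eta^{\bp,\bq}+\eta\mathbbm{E}(Z_i^{\bp,\bq})\big|\tfrac{r_{\bp,\bq}}{m}-\sfP_{\bp,\bq}\big|$, and the last term is again handled by the Chernoff bound for $r_{\bp,\bq}$, contributing a term of the same order as $\lambda$ after using $\mathbbm{E}(Z_i^{\bp,\bq})=\calO(1)$ (which follows from Lemma \ref{lem:T1SG}).

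Third, I take a union bound over all $(\bp,\bq)\in\calN_{r,\delta_4}^{(2)}\subset\calN_r\times\calN_r$, of which there are at most $|\calN_r|^2=\exp(2\scrH(\calC_{\alpha,\beta},r))$ pairs. Since each individual failure probability is $\le C\exp(-c''\scrH(\calC_{\alpha,\beta},r))$ with $c''$ as large as we like once $C_0$ is large enough (this is exactly where the hypothesis $mr\ge C_0\scrH(\calC_{\alpha,\beta},r)$ with sufficiently large $C_0$ is used), the union bound leaves probability at least $1-4\exp(-2\scrH(\calC_{\alpha,\beta},r))$, yielding \eqref{eq:finaluniformT1}.

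The main obstacle is bookkeeping the exponents so that the $\|\bp-\bq\|_2$-dependence in the binomial deviation bound, the $r_{\bp,\bq}^{-1}$ in the conditional sub-Gaussian bound, and the target deviation $\lambda\asymp\sqrt{\|\bp-\bq\|_2\scrH(\calC_{\alpha,\beta},r)/m}$ are all mutually consistent uniformly over the range $r\le\|\bp-\bq\|_2\le 2\delta_4$; the key leverage is that $\|\bp-\bq\|_2\ge r$ and $mr\ge C_0\scrH(\calC_{\alpha,\beta},r)$ together force every failure probability below $\exp(-2\scrH(\calC_{\alpha,\beta},r))$ with room to spare. A minor subtlety is that $\mathsf{T}_\eta^{\bp,\bq}$ is left as an explicit (non-random) quantity in the statement, so I only need to bound the fluctuation of $T_1^{\bp,\bq}$ around $\big|\|\bp-\bq\|_2-\eta\sfP_{\bp,\bq}\mathbbm{E}(Z_i^{\bp,\bq})\big|$, not evaluate $\mathsf{T}_\eta^{\bp,\bq}$ itself — that evaluation is deferred to a later lemma.
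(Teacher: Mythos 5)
Your proposal is correct and follows essentially the same route as the paper's proof: condition on $|\bR_{\bp,\bq}|$, apply a sub-Gaussian concentration bound to the conditional sum using Lemma~\ref{lem:T1SG}, decompose the bias term via the triangle inequality into $\mathsf{T}_\eta^{\bp,\bq}$ plus a term proportional to $|\tfrac{r_{\bp,\bq}}{m}-\sfP_{\bp,\bq}|$, control the binomial $|\bR_{\bp,\bq}|$ via Chernoff (using $\sfP_{\bp,\bq}\asymp\|\bp-\bq\|_2\ge r$ and the hypothesis $mr\ge C_0\scrH(\calC_{\alpha,\beta},r)$ to make the deviation parameter valid), and close with a union bound over $|\calN_r|^2$ pairs. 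The only cosmetic difference is that the paper parametrizes the conditional tail by $t=C_1\sqrt{\scrH/r_{\bp,\bq}}$ (so the exponent is $r_{\bp,\bq}$-free) rather than by a fixed target deviation $\lambda$, but the two bookkeeping schemes are interchangeable.
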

\begin{proof}
      We have argued that $T_{11}^{\bp,\bq}|\{|\bR_{\bp,\bq}|=r_{\bp,\bq}\}$ has the same distribution as $\frac{1}{m}\sum_{i=1}^{r_{\bp,\bq}}Z_{i}^{\bp,\bq}$, where $Z_{1}^{\bp,\bq},\cdots, Z_{r^{\bp,\bq}}^{\bp,\bq}$ are i.i.d. random variables satisfying $\|Z_i^{\bp,\bq}\|=\calO(1)$. By centering \cite[Lem. 2.6.8]{vershynin2018high}, we have $\|Z_i^{\bp,\bq}-\mathbbm{E}Z_i^{\bp,\bq}\|_{\psi_2}=\calO(1)$, and further   \cite[Prop. 2.6.1]{vershynin2018high} gives  
         \begin{align}
         \Big\|\frac{1}{m}\sum_{i=1}^{r_{\bp,\bq}}(Z_i^{\bp,\bq}-\mathbbm{E}Z_i^{\bp,\bq})\Big\|_{\psi_2}^2=\frac{1}{m^2}\Big\|\sum_{i=1}^{r_{\bp,\bq}}(Z_i^{\bp,\bq}-\mathbbm{E}Z_i^{\bp,\bq})\Big\|_{\psi_2}^2\lesssim \frac{1}{m^2}\sum_{i=1}^{r_{\bp,\bq}}\|Z_i^{\bp,\bq}-\mathbbm{E}Z_i^{\bp,\bq}\|_{\psi_2}^2\lesssim \frac{r_{\bp,\bq}}{m^2}.\label{eq:psi2sum}
     \end{align} 
The standard tail bound for sub-Gaussian variable implies   $$\mathbbm{P} \Big(\frac{1}{m}\sum_{i=1}^{r_{\bp,\bq}}(Z_i^{\bp,\bq}-\mathbbm{E}Z_i^{\bp,\bq})\ge \frac{r_{\bp,\bq}t}{m}\Big)\le 2\exp(-cr_{\bp,\bq}t^2)$$ for some  constant $c$ and any $t\ge 0$. Conditioning on $\{|\bR_{\bp,\bq}|=r_{\bp,\bq}\}$, by triangle inequality, $$\mathbbm{E}(T_{11}^{\bp,\bq}|\{|\bR_{\bp,\bq}|=r_{\bp,\bq}\})=\frac{r_{\bp,\bq}\cdot \mathbbm{E}(Z_i^{\bp,\bq})}{m}\,,\qquad \sfT_\eta^{\bp,\bq}=|\|\bp-\bq\|_2-\eta\sfP_{\bp,\bq}\mathbbm{E}(Z_i^{\bp,\bq})|$$ and that $T_{11}^{\bp,\bq}|\{|\bR_{\bp,\bq}|=r_{\bp,\bq}\}$ has the same distribution as $\frac{1}{m}\sum_{i=1}^{r_{\bp,\bq}}Z_i^{\bp,\bq}$, we obtain  
        \begin{align*}
       T_{1}^{\bp,\bq}  &\le \Big|\|\bp-\bq\|_2- \frac{\eta r_{\bp,\bq}\cdot\mathbbm{E}(Z_i^{\bp,\bq})}{m}\Big| +\eta \cdot\big|T_{11}^{\bp,\bq}-\mathbbm{E}(T_{11}^{\bp,\bq}|\{|\bR_{\bp,\bq}|= r_{\bp,\bq}\})\big| \\
        &\le \sfT_\eta^{\bp,\bq}+\frac{\eta \mathbbm{E}(Z_i^{\bp,\bq})|r_{\bp,\bq}-m\sfP_{\bp,\bq}|}{m} + \frac{\eta r_{\bp,\bq}t}{m},
    \end{align*} 
 where the second inequality holds with probability at least $1-2\exp(-cr_{\bp,\bq}t^2)$. Therefore, conditioning on $|\bR_{\bp,\bq}|=r_{\bp,\bq}>0$, 
 we set $t= C_1\sqrt{\frac{\scrH(\calC_{\alpha,\beta},r)}{r_{\bp,\bq}}}$ with sufficiently large $C_1$ to obtain 
        \begin{align}  
  & \mathbbm{P}\Big( T_1^{\bp,\bq} \le \sfT_\eta^{\bp,\bq}+\frac{\eta \mathbbm{E}(Z_i^{\bp,\bq})|r_{\bp,\bq}-m\sfP_{\bp,\bq}|}{m} + \frac{C_1\eta\sqrt{r_{\bp,\bq}\scrH(\calC_{\alpha,\beta},r)}}{m}\Big| |\bR_{\bp,\bq}|=r_{\bp,\bq}\Big)\nn\\&\ge 1-2\exp(-4\scrH(\calC_{\alpha,\beta},r)).  \label{eq:condi_devi_chosent}
    \end{align} 
    We then analyze the behaviour of $|\bR_{\bp,\bq}|\sim \text{Bin}(m,\sfP_{\bp,\bq})$. For any  $\delta\in(0,1)$, the weakened Chernoff bound in Lemma \ref{lem:chernoff} gives  $$\mathbbm{P}\big(\big||\bR_{\bp,\bq}|-m\sfP_{\bp,\bq}\big|\ge \delta m\sfP_{\bp,\bq}\big)\le 2\exp\big(-\frac{\delta^2m\sfP_{\bp,\bq}}{3}\big).$$ 
    Because $\sfP_{\bp,\bq}\ge c_0\|\bp-\bq\|_2\ge c_0r$ for some  constant $c_0$ (by Lemma \ref{lem:Puv} and $(\bp,\bq)\in \calN_{r,\delta_4}^{(2)}$) and $mr\gtrsim\scrH(\calC_{\alpha,\beta},r)$ with large enough implied constant, we have $ \sqrt{\frac{12\scrH(\calC_{\alpha,\beta},r)}{m\sfP_{\bp,\bq}}}\in (0,1)$. Thus, we can set $\delta= \sqrt{\frac{12\scrH(\calC_{\alpha,\beta},r)}{m\sfP_{\bp,\bq}}}$ in the Chernoff bound to obtain 
    \begin{align}\label{eq:chernoff_twosided}
       \mathbbm{P}\Big( ||\bR_{\bp,\bq}|-m\sfP_{\bp,\bq}| < \sqrt{12m \sfP_{\bp,\bq}\scrH(\calC_{\alpha,\beta},r)} \Big)\ge 1- 2\exp(-4\scrH(\calC_{\alpha,\beta},r)). 
    \end{align} 
    Also note that this high-probability event implies $|\bR_{\bp,\bq}|\le 2m\sfP_{\bp,\bq}$.
    Given $(\bp,\bq)\in\calN_{r,\delta_4}^{(2)}$, we combine (\ref{eq:condi_devi_chosent}) and (\ref{eq:chernoff_twosided}), together with $\mathbbm{E}(Z_i^{\bp,\bq})\lesssim 1$ implied by Lemma \ref{lem:T1SG} and $\sfP_{\bp,\bq}\lesssim \|\bp-\bq\|_2$ from Lemma \ref{lem:Puv}, we come to the unconditional bound $$T_1^{\bp,\bq}\le C_2\eta\sqrt{\frac{\|\bp-\bq\|_2\scrH(\calC_{\alpha,\beta},r)}{m}}+\sfT_\eta^{\bp,\bq}$$ that holds with probability at least $1-4\exp(-4\scrH(\calC_{\alpha,\beta},r))$. Taking a union bound over $(\bp,\bq)\in \calN_{r,\delta_4}^{(2)}$ completes the proof.     
\end{proof}

\paragraph{Calculating $\mathbbm{E}(Z_i^{\bp,\bq})$ and $\sfT_\eta^{\bp,\bq}$:} 
%The impact of the step size $\eta$ on the second term $\mathsf{T}_\eta^{\bp,\bq}$ as per (\ref{eq:eta_chosen}) needs further investigation. Specifically, due to the projection onto the non-convex set $\mathbbm{A}_{\alpha,\beta}$, there appears a multiplicative factor of $2$ in the right-hand side of (\ref{eq:single_bound}). Because $T_1^{\bp,\bq}$ is only one piece in our decomposition of $\|\bu-\bv-\eta \cdot \bh(\bu,\bv)\|_2$,     (approximately) there will be a term $2\mathsf{T}_\eta^{\bx^{(t-1)},\bx}$ appearing in the final bound on the right-hand side of (\ref{eq:single_bound}). Therefore, to establish an effective contraction, we must choose suitable step size $\eta=\eta_t$ such that $\sfT_\eta^{\bx^{(t-1)},\bx}<\frac{1}{2}\|\bx^{(t-1)}-\bx\|_2$ (approximately) holds. 
In the right-hand side of (\ref{eq:finaluniformT1}), $\eta$   has minimal impact on the first term since it appears as a leading factor. In   contrast, $\eta$ should be chosen such that another term $\sfT_\eta^{\bp,\bq}$ is small enough 
to establish an effective contraction. We note in advance that the selection of $\eta$ in our problem is indeed more delicate: in our analysis of $T_2^{\bp,\bq}$,   another term proportional to $\|\bp-\bq\|_2$ arises and   competes with $\sfT_\eta^{\bp,\bq}$ (see $\sfP_{\bp,\bq}|\mathbbm{E}\hat{Z}_i^{\bp,\bq}|$ in Lemma \ref{lem:integral2}). Here, it is imperative to accurately evaluate $\sfT_\eta^{\bp,\bq}=|\|\bp-\bq\|_2-\eta\sfP_{\bp,\bq}\mathbbm{E}(Z_i^{\bp,\bq})|$. We thus proceed to the calculations of  $\sfP_{\bp,\bq}\mathbbm{E}(Z_i^{\bp,\bq})$ that contain a number of technical ingredients: first,  it looks difficult (if not impossible) to seek an exact formula for $\sfP_{\bp,\bq}\mathbbm{E}(Z_i^{\bp,\bq})$ due to the intricate P.D.F. of $Z_i^{\bp,\bq}$, and our remedy is to allow for the imprecision of $o(\|\bp-\bq\|_2)$, which will not affect the selection of $\eta$ and can deliver several notable reductions; second, since the distribution of $Z_i^{\bp,\bq}$ essentially varies with the relative position of $(\bp,\bq)$,  we  have to provide separate treatments   to the two cases of $0<u_2<\hat{\xi}$ and $u_2>\hat{\xi}$ where   $\hat{\xi}=\frac{\tau}{2\sqrt{\log(u_1-v_1)^{-1}}}$. 

\begin{lem}
    [Calculating $\mathbbm{E}(Z_i^{\bp,\bq})$]\label{lem:cal_Tpq} We consider $(\bp,\bq)\in\calN_{r,\delta_4}^{(2)}$ parameterized as $\bp=u_1\bbeta_1+u_2\bbeta_2,\bq=v_1\bbeta_1+u_2\bbeta_2$ for some orthonormal $(\bbeta_1=\frac{\bp-\bq}{\|\bp-\bq\|_2},\bbeta_2)$ and some $u_1>v_1$ and $u_2\ge 0$, where $\|\bp-\bq\|_2=u_1-v_1$ is sufficiently small. By  (\ref{eq:explicit_uuv}), $u_1,v_1,u_2$ are determined by $(\bp,\bq)$ through $u_1=\frac{\langle \bp,\bp-\bq\rangle}{\|\bp-\bq\|_2}$, $v_1=\frac{\langle\bq, \bp-\bq\rangle}{\|\bp-\bq\|_2}$ and $u_2=\frac{(\|\bp\|_2^2\|\bq\|_2^2-(\bp^\top\bq)^2)^{1/2}}{\|\bp-\bq\|_2}$.  We further define $$g_\eta(a,b)=\sqrt{\frac{2}{\pi}}\exp\Big(-\frac{\tau^2}{2(a^2+b^2)}\Big)\frac{\tau^2a^2+b^2(a^2+b^2)}{(a^2+b^2)^{5/2}}.$$ Then we have $$\sfP_{\bp,\bq}\mathbbm{E}(Z_i^{\bp,\bq})=(u_1-v_1)\cdot(g_\eta(u_1,u_2)+\sfE_{\bp,\bq})$$ 
    where $\sfE_{\bp,\bq}$ is a higher order term satisfying $  |\sfE_{\bp,\bq}|\le C_0(u_1-v_1)\log((u_1-v_1)^{-1})$
    for some  constant $C_0$ only depending on $(\alpha,\beta,\tau)$.
\end{lem}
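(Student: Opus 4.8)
The plan is to turn $\sfP_{\bp,\bq}\mathbbm{E}(Z_i^{\bp,\bq})$ into a single integral, collapse that integral to closed form via Fubini plus a Gaussian--convolution identity, and then extract the leading term by a first--order Taylor expansion. From the P.D.F.\ in Lemma~\ref{lem:T1pdfprob},
\[
\sfP_{\bp,\bq}\mathbbm{E}(Z_i^{\bp,\bq})=\sqrt{\tfrac2\pi}\int_0^\infty z\,e^{-z^2/2}\big(P_1(z)+P_2(z)\big)\,\mathrm{d}z ,
\]
where, as soon as $z(u_1-v_1)<2\tau$, Lemma~\ref{lem:T1pdfprob} simplifies to $P_1(z)=\mathbbm{P}\big(\tau-zu_1<\calN(0,u_2^2)<\tau-zv_1\big)$ and $P_2(z)=\mathbbm{P}\big(\tau+zv_1<\calN(0,u_2^2)<\tau+zu_1\big)$. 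For $z(u_1-v_1)\ge2\tau$ the true $P_1,P_2$ differ, but since $\int_{2\tau/(u_1-v_1)}^{\infty}z\,e^{-z^2/2}\,\mathrm{d}z=e^{-2\tau^2/(u_1-v_1)^2}$ and each $P_i\le 1$, using the simplified formulas on all of $(0,\infty)$ changes the integral only by an amount exponentially small in $(u_1-v_1)^{-2}$, which I carry along silently.

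Next I would interchange the $z$-- and $w$--integrals (Tonelli; the integrand is nonnegative). For each fixed $w$ the admissible set of $z>0$ is an interval with endpoints among $\{(\tau-w)/u_1,(\tau-w)/v_1\}$, on which $\int z\,e^{-z^2/2}\,\mathrm{d}z$ is evaluated from the primitive $-e^{-z^2/2}$; which constraint binds at which endpoint is dictated solely by the signs of $u_1,v_1$, and every configuration (the generic $u_1>v_1>0$ and $0>u_1>v_1$, and the degenerate $u_1=0$ and $u_1>0>v_1$) collapses the $P_1$--plus--$P_2$ contribution to the same expression. Then the outer integral is the Gaussian convolution $\int\varphi_{u_2}(w)e^{-(w-\tau)^2/(2t^2)}\,\mathrm{d}w=\tfrac{|t|}{\sqrt{t^2+u_2^2}}\exp\!\big(-\tfrac{\tau^2}{2(t^2+u_2^2)}\big)$ (with $\varphi_{u_2}$ the $\calN(0,u_2^2)$ density, a point mass at $0$ when $u_2=0$), so writing $G(t):=\tfrac{t}{\sqrt{t^2+u_2^2}}\exp\!\big(-\tfrac{\tau^2}{2(t^2+u_2^2)}\big)$ one arrives at
\[
\sfP_{\bp,\bq}\mathbbm{E}(Z_i^{\bp,\bq})=\sqrt{\tfrac2\pi}\,\big(G(u_1)-G(v_1)\big)+O\!\big(e^{-c/(u_1-v_1)^2}\big).
\]
A one--line differentiation shows $\sqrt{2/\pi}\,G'(t)=g_\eta(t,u_2)$ (both equal $\sqrt{2/\pi}\,e^{-\tau^2/(2\rho^2)}\tfrac{\tau^2t^2+u_2^2\rho^2}{\rho^5}$, $\rho^2:=t^2+u_2^2$), so Taylor's theorem gives $G(u_1)-G(v_1)=(u_1-v_1)G'(u_1)-\tfrac{(u_1-v_1)^2}{2}G''(\xi)$ for some $\xi$ between $v_1$ and $u_1$, i.e.\ $\sfP_{\bp,\bq}\mathbbm{E}(Z_i^{\bp,\bq})=(u_1-v_1)\big(g_\eta(u_1,u_2)+\sfE_{\bp,\bq}\big)$ with $\sfE_{\bp,\bq}=-\sqrt{2/\pi}\,\tfrac{u_1-v_1}{2}\,G''(\xi)+O\big(e^{-c/(u_1-v_1)^2}/(u_1-v_1)\big)$.

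The remaining point is a uniform bound on $G''(\xi)$, and here the key observation is that it is $\rho(\xi)^2=\xi^2+u_2^2$, not $\xi$ or $u_2$ separately, that never degenerates: since $\bbeta_1\perp\bbeta_2$ are orthonormal, $u_1^2+u_2^2=\|\bp\|_2^2$ and $v_1^2+u_2^2=\|\bq\|_2^2$, so $\xi^2+u_2^2$ lies between $\|\bq\|_2^2$ and $\|\bp\|_2^2$ --- except in the single case $v_1<0<u_1$, where $\xi^2+u_2^2\ge u_2^2\ge\|\bp\|_2^2-u_1^2\ge\alpha^2-(u_1-v_1)^2$; in every case $\rho(\xi)^2\in[\alpha^2/2,\beta^2]$ once $u_1-v_1$ is small, and on that compact region $|G''|$ is bounded by some $M(\alpha,\beta,\tau)$. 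Hence $|\sfE_{\bp,\bq}|\le C_0(u_1-v_1)$, which --- since $u_1-v_1<1/e$ throughout $\calN_{r,\delta_4}^{(2)}$ for small $r$ --- is $\le C_0(u_1-v_1)\log((u_1-v_1)^{-1})$, giving the claim (indeed a sharper bound). The step I expect to be fussiest is the Fubini bookkeeping: tracking the binding endpoint across all sign configurations of $(u_1,v_1)$ and verifying the $e^{-\Omega(1/(u_1-v_1)^2)}$ control of the large-$z$ tails. This viewpoint is also what renders the $u_2\to0$ limit harmless (one sees directly that $\rho(\xi)^2\asymp\|\bp\|_2^2$), sidestepping the $u_2<\hat\xi$ versus $u_2>\hat\xi$ split and the extra logarithmic factor it would otherwise force.
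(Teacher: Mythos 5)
Your proposal is correct and takes a genuinely different route from the paper's own proof. The paper splits according to whether $u_2$ is smaller or larger than $\tau/(2\sqrt{\log(u_1-v_1)^{-1}})$: when $u_2$ is small it drops $P_2'$ as exponentially small and does Fubini only on $P_1'$; when $u_2$ is large it Taylor-expands $\Phi$ directly, and the remainder $\tfrac{z^2(u_1-v_1)^2}{2u_2^2}$ integrated against $ze^{-z^2/2}$ contributes $\tfrac{(u_1-v_1)^2}{u_2^2}=O\big((u_1-v_1)^2\log(u_1-v_1)^{-1}\big)$ --- this is precisely what forces the logarithmic factor in the paper's $\sfE_{\bp,\bq}$. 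You instead carry the Fubini computation through for $P_1'+P_2'$ jointly, with no split on $u_2$: for each sign configuration of $(u_1,v_1)$ the inner $z$-integral evaluates in closed form, and after the $w\mapsto -w$ reflection the outer Gaussian integral collapses to $\sqrt{2/\pi}\,(G(u_1)-G(v_1))$ with $G(t)=\tfrac{t}{\sqrt{t^2+u_2^2}}e^{-\tau^2/(2(t^2+u_2^2))}$ and $\sqrt{2/\pi}\,G'=g_\eta(\cdot,u_2)$. I checked the bookkeeping you flagged as the fussiest step across the cases $u_1>v_1>0$, $v_1<u_1<0$, $u_1>0>v_1$, and $u_1v_1=0$, and the $P_1'+P_2'$ contribution does collapse to the same closed form each time (the subtle point being that the convolution identity produces $|t|$, but writing $G$ with $t$ in the numerator makes the signs align at both Taylor endpoints). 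The only error is then a pure second-order Taylor remainder in $G$, and its uniform control rests on the observation that $\rho(\xi)^2=\xi^2+u_2^2\ge\alpha^2/2$ --- a direct consequence of $u_1^2+u_2^2=\|\bp\|_2^2\ge\alpha^2$, $v_1^2+u_2^2=\|\bq\|_2^2\ge\alpha^2$, and the small-$\|\bp-\bq\|_2$ handling of the $v_1<0<u_1$ case. The net effect is a cleaner argument that dispenses with the paper's two-case analysis and gives the strictly sharper bound $|\sfE_{\bp,\bq}|\le C_0(u_1-v_1)$, which trivially implies the claimed $C_0(u_1-v_1)\log((u_1-v_1)^{-1})$ since $u_1-v_1$ is small.
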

\begin{rem}
 Since $\|\bp-\bq\|_2=u_1-v_1$ is sufficiently small, we can write $\sfE_{\bp,\bq}=\calO(\sqrt{\|\bp-\bq\|_2})$ and then substitute $\sfP_{\bp,\bq}\mathbbm{E}(Z_i^{\bp,\bq})=(u_1-v_1)\cdot(g_\eta(u_1,u_2)+\sfE_{\bp,\bq})$ to obtain 
    \begin{align}\label{eq:Teta_calculate}
        \sfT_\eta^{\bp,\bq} = \|\bp-\bq\|_2\cdot \Big|1-\eta \big(g_\eta(u_1,u_2)+\calO(\sqrt{\|\bp-\bq\|_2})\big)\Big|. 
    \end{align}    
\end{rem}
\begin{proof}
    [Proof of Lemma \ref{lem:cal_Tpq}] To establish the desired $$\sfP_{\bp,\bq}\mathbbm{E}(Z_i^{\bp,\bq})=(u_1-v_1)\cdot(g_\eta(u_1,u_2)+\sfE_{\bp,\bq}),$$ it is sufficient to prove $$\sfP_{\bp,\bq}\mathbbm{E}(Z_i^{\bp,\bq}) = (u_1-v_1)\cdot \big(g_\eta(v_1,u_2)+\sfE_{\bp,\bq}\big).$$
This observation follows from Taylor's theorem: with sufficiently small $u_1-v_1$ and $\alpha^2\le u_1^2+u_2^2,v_1^2+u_2^2\le\beta^2$, we have $\|(\theta u_1+(1-\theta)v_1,u_2)\|_2\in[\frac{\alpha}{2},\beta]$ holding for all $\theta\in[0,1]$; we further observe that $|\frac{\partial g_\eta(u_1,u_2)}{\partial u_1}|$ is uniformly bounded over $\frac{\alpha^2}{4}\le u_1^2+u_2^2\le\beta^2$, then invoke Taylor's theorem to obtain $|g_\eta(v_1,u_2) - g_\eta(u_1,u_2)| = \calO(u_1-v_1)$. Note that $\calO(u_1-v_1)$ can be absorbed into $\sfE_{\bp,\bq}$, hence substituting this into $\sfP_{\bp,\bq}\mathbbm{E}(Z_i^{\bp,\bq}) = (u_1-v_1)\cdot \big(g_\eta(v_1,u_2)+\sfE_{\bp,\bq}\big)$ yields the desired $\sfP_{\bp,\bq}\mathbbm{E}(Z_i^{\bp,\bq})=(u_1-v_1)\cdot(g_\eta(u_1,u_2)+\sfE_{\bp,\bq})$.

Therefore, it remains to show $\sfP_{\bp,\bq}\mathbbm{E}(Z_i^{\bp,\bq}) = (u_1-v_1)\cdot \big(g_\eta(v_1,u_2)+\sfE_{\bp,\bq}\big)$. 
We first consider $u_2>0$ and will separately deal with $u_2=0$. By Lemma \ref{lem:T1pdfprob} we have 
   \begin{align}\label{eq:integral1_1}
       \sfP_{\bp,\bq}\mathbbm{E}(Z_i^{\bp,\bq}) = \sqrt{\frac{2}{\pi}}\int_0^\infty z\exp\Big(-\frac{z^2}{2}\Big)\big(P_1+P_2\big)\text{d}z
   \end{align}
 where under $u_2>0$ we have $$P_1=\mathbbm{P}\Big(\frac{\max\{\tau-zu_1,-\tau-zv_1\}}{u_2}<\calN(0,1)<\frac{\tau-zv_1}{u_2}\Big) $$ and $$P_2=\mathbbm{P}\Big(\frac{\max\{zu_1-\tau,zv_1+\tau\}}{u_2}<\calN(0,1)<\frac{\tau+zu_1}{u_2}\Big).$$  
   Now we introduce $$ P_1'  = \mathbbm{P}\Big(\frac{\tau-zu_1}{u_2}<\calN(0,1)<\frac{\tau-zv_1}{u_2}\Big)$$ and $$ P_2' = \mathbbm{P}\Big(\frac{zv_1+\tau}{u_2}<\calN(0,1)< \frac{zu_1+\tau}{u_2}\Big)$$ 
   and show that we can proceed with the simpler $P_1'+P_2'$ (instead of $P_1+P_2$) without affecting the desired claim. Specifically, observe that $-\tau-zv_1>\tau-zu_1$ holds if and only if $z>\frac{2\tau}{u_1-v_1}$, hence we have $P_1=P_1'$ for $z\le\frac{2\tau}{u_1-v_1}$. Similarly, we have $P_2=P_2'$ for $z\le \frac{2\tau}{u_1-v_1}$. Therefore, we can bound the deviation induced by substituting $P_1+P_2$ with $P_1'+P_2'$ in (\ref{eq:integral1_1}) as follows: 
        \begin{align*}
       &\Big|\sqrt{\frac{2}{\pi}}\int_0^\infty z\exp\Big(-\frac{z^2}{2}\Big)\big([P_1+P_2]-[P_1'+P_2']\big)\text{d}z\Big| \\&\le 2\sqrt{\frac{2}{\pi}}\int_{\frac{2\tau}{u_1-v_1}}^\infty z\exp\Big(-\frac{z^2}{2}\Big)\text{d}z\\& =2\sqrt{\frac{2}{\pi}}\exp\Big(-\frac{2\tau^2}{(u_1-v_1)^2}\Big).
   \end{align*}
    For sufficiently small $u_1-v_1$, such deviation is of order $\calO(u_1-v_1)$ and hence can be absorbed into $\sfE_{\bp,\bq}$. Therefore, we only need to show $$\sqrt{\frac{2}{\pi}}\int_0^\infty z\exp\Big(-\frac{z^2}{2}\Big)(P_1'+P_2')~\text{d}z=(u_1-v_1)\cdot(g_\eta(v_1,u_2)+\sfE_{\bp,\bq}).$$ To this end, we separately deal with  ``small $u_2$'' and ``large $u_2$''. 

   \subsubsection*{Case 1: $0\le u_2<\frac{\tau}{2\sqrt{\log(u_1-v_1)^{-1}}}$}
  In this case $u_2$ is sufficiently small, thus we can assume $|u_1|$ and $|v_1|$ are bounded away from $0$, say, $|u_1|\ge \frac{\alpha}{2}$ and $|v_1|>\frac{\alpha}{2}$. We can also assume $u_1$ and $v_1$ have the same sign since $u_1-v_1$ is small enough (see Figure \ref{fig:relaposi_D2}(b) for a graphical illustration). We will only consider $u_1>v_1\ge \frac{\alpha}{2}$, and arguments for the case $-\frac{\alpha}{2}\ge u_1>v_1$ are parallel. 
   We first show that $\sqrt{\frac{2}{\pi}}\int_0^\infty z\exp(-\frac{z^2}{2})P_2'~\text{d}z$ can be absorbed into $\sfE_{\bp,\bq}$. By Lemma \ref{lem:gaussian_tail}, for $z\ge 0$ we have  $$ P_2' \le \mathbbm{P}\Big(\calN(0,1)>\frac{\tau}{u_2}\Big)\le \frac{1}{2}\exp\Big(-\frac{\tau^2}{2u_2^2}\Big)\le \frac{(u_1-v_1)^2}{2},$$ 
   and thus $$
       \sqrt{\frac{2}{\pi}}\int_0^\infty z\exp\Big(-\frac{z^2}{2}\Big)P_2'~\text{d}z\le \frac{(u_1-v_1)^2}{\sqrt{2\pi}}\int_0^\infty z\exp\Big(-\frac{z^2}{2}\Big)\text{d}z=\frac{(u_1-v_1)^2}{\sqrt{2\pi}},$$ which can be incorporated into $(u_1-v_1)\sfE_{\bp,\bq}$.  
   Therefore, all that remains is to show $$\sqrt{\frac{2}{\pi}}\int_0^\infty z\exp\Big(-\frac{z^2}{2}\Big)P_1'~\text{d}z=(u_1-v_1)\cdot(g_\eta(v_1,u_2)+\sfE_{\bp,\bq}).$$ To this end, we proceed as follows: 
   \begin{subequations}
       \begin{align}\nn
           &\sqrt{\frac{2}{\pi}}\int_0^\infty z\exp\Big(-\frac{z^2}{2}\Big)P_1'~\text{d}z= \frac{1}{\pi}\int_0^\infty z\exp\Big(-\frac{z^2}{2}\Big)\Big(\int_{\frac{\tau-zu_1}{u_2}}^{\frac{\tau-zv_1}{u_2}}\exp\Big(-\frac{w^2}{2}\Big)\text{d}w\Big)\text{d}z\nn\\
           \label{eq:symmetry_e}&= \frac{1}{\pi}\int_0^\infty z\exp\Big(-\frac{z^2}{2}\Big)\Big(\int_{\frac{zv_1-\tau}{u_2}}^{\frac{zu_1-\tau}{u_2}}\exp\Big(-\frac{w^2}{2}\Big)\text{d}w\Big)\text{d}z \\
           \label{eq:exchange1}&= \frac{1}{\pi}\int_{-\frac{\tau}{u_2}}^\infty \exp\Big(-\frac{w^2}{2}\Big)\Big(\int_{\frac{wu_2+\tau}{u_1}}^{\frac{wu_2+\tau}{v_1}}z\exp\Big(-\frac{z^2}{2}\Big)\text{d}z\Big)\text{d}w\\&\label{eq:use_zez_inte}= 
           \frac{1}{\pi}\int_{-\frac{\tau}{u_2}}^\infty \exp\Big(-\frac{w^2}{2}\Big)\Big[\exp\Big(-\frac{(wu_2+\tau)^2}{2u_1^2}\Big)-\exp\Big(-\frac{(wu_2+\tau)^2}{2v_1^2}\Big)\Big]\text{d}w \\
          \label{eq:approxi2} &\approx \frac{1}{\pi}\int_{-\infty}^\infty \exp\Big(-\frac{w^2}{2}\Big)\Big[\exp\Big(-\frac{(wu_2+\tau)^2}{2u_1^2}\Big)-\exp\Big(-\frac{(wu_2+\tau)^2}{2v_1^2}\Big)\Big]\text{d}w\\
          \label{eq:shorthandFtu2}:&=  F_1(u_1,u_2)-F_1(v_1,u_2)
       \end{align}\label{eq:inte1_difference}
   \end{subequations}
   where (\ref{eq:symmetry_e}) is due to the symmetry of $\exp(-\frac{w^2}{2})$; in (\ref{eq:exchange1}) we formulate the domain of integration $\{z>0,~\frac{zv_1-\tau}{u_2}<w<\frac{zu_1-\tau}{u_2}\}$ as $\{w>-\frac{\tau}{u_2},~\frac{wu_2+\tau}{u_1}<z<\frac{wu_2+\tau}{v_1}\}$, and then exchange the order of integration; in (\ref{eq:use_zez_inte}) we use the closed-form integral $\int_a^b z\exp(-\frac{z^2}{2})\text{d}z = \exp(-\frac{a^2}{2})-\exp(-\frac{b^2}{2})$; (\ref{eq:approxi2}) is an imprecise step (thus we slightly abuse the notation ``$\approx$'') but will not affect our desired claim, since the difference between (\ref{eq:use_zez_inte}) and (\ref{eq:approxi2}) is bounded by $\frac{1}{\pi}\int_{-\infty}^{-\frac{\tau}{u_2}}\exp\big(-\frac{w^2}{2}\big)\text{d}w\le \sqrt{\frac{2}{\pi}}\exp\big(-\frac{\tau^2}{2u_2^2}\big)\le \sqrt{\frac{2}{\pi}}(u_1-v_1)^2,$ and can hence be incorporated into the higher order term $(u_1-v_1)\sfE_{\bp,\bq}$; in (\ref{eq:shorthandFtu2}) we introduce the shorthand $F_1(t,u_2):=\frac{1}{\pi}\int_{-\infty}^\infty \exp\big(-\frac{w^2}{2}\big)\exp\big(-\frac{(wu_2+\tau)^2}{2t^2}\big)\text{d}w,$ and since $t$ will be substituted by $u_1$ and $v_1$ we can proceed with the constraint $\frac{\alpha}{2}\le t\le \beta$. 
   We proceed to calculate $F_1(t,u_2)$ as follows: 
        \begin{align*}
           &F_1(t,u_2)= \frac{1}{\pi}\int_{-\infty}^\infty \exp\Big(-\frac{w^2}{2}-\frac{(wu_2+\tau)^2}{2t^2}\Big)\text{d}w\\
           &= \frac{1}{\pi}\exp\Big(-\frac{\tau^2}{2(t^2+u_2^2)}\Big)\int_{-\infty}^{\infty}\exp \Big(-\frac{t^2+u_2^2}{2t^2}\Big(w+\frac{\tau u_2}{t^2+u_2^2}\Big)^2\Big)\text{d}w\\
           & =  \frac{1}{\pi}\exp\Big(-\frac{\tau^2}{2(t^2+u_2^2)}\Big)\int_{-\infty}^{\infty}\exp\Big(-\frac{(t^2+u_2^2)w^2}{2t^2}\Big)\text{d}w =  \sqrt{\frac{2}{\pi}}\frac{t}{\sqrt{t^2+u_2^2}}\exp\Big(-\frac{\tau^2}{2(t^2+u_2^2)}\Big).
       \end{align*}
   Then we compute the derivatives to find $   \frac{\partial F_1(t,u_2)}{\partial t}  = g_\eta(t,u_2)$, and also 
$|\frac{\partial^2F_1(t,u_2)}{\partial t^2}|\le C_1$ holds uniformly for some   constant $C_1$ depending on $(\alpha,\beta,\tau)$, since $t\in [\frac{\alpha}{2},\beta]$ and $u_2\le\beta$. 
  Now we continue from (\ref{eq:inte1_difference}) and use Taylor's theorem to obtain 
       \begin{align*}
          &\sqrt{\frac{2}{\pi}}\int_0^\infty z\exp\Big(-\frac{z^2}{2}\Big)P_1'~\text{d}z = F_1(u_1,u_2)-F_1(v_1,u_2)=  \frac{\partial F_1(v_1,u_2)}{\partial t}(u_1-v_1)+ O \big((u_1-v_1)^2\big)\\
          &= (u_1-v_1)\cdot \Big(g_\eta(v_1,u_2)+\calO(u_1-v_1)\Big),
      \end{align*}
 as desired.

  \paragraph{The case  that $u_2=0$:} In this case, without loss of generality we assume $u_1>v_1\ge\frac{\alpha}{2}$ (see Figure \ref{fig:relaposi_D2}(c)), then we have 
\begin{align*}
    &\mathbbm{P}\big(\sign(|a_1u_1+a_2u_2|-\tau)\ne\sign(|a_1v_1+a_2u_2|-\tau)\big||a_1|=z\big)\\
    &=\mathbbm{P}\big(\sign(zu_1-\tau)\ne \sign(zv_1-\tau)\big)\\&= \mathbbm{1}\big(\frac{\tau}{u_1}\le z<\frac{\tau}{v_1}\big)
\end{align*}
   Thus, using the P.D.F. of $Z_i^{\bp,\bq}$ and applying Taylor's theorem to $F_1(t):=\exp(-\frac{\tau^2}{2t^2})$ which possesses uniformly bounded second derivative, we obtain  
      \begin{align*}
     &\sfP_{\bp,\bq}\mathbbm{E}(Z_i^{\bp,\bq})=\sqrt{\frac{2}{\pi}}\int_{\frac{\tau}{u_1}}^{\frac{\tau}{v_1}}z\exp\Big(-\frac{z^2}{2}\Big)\text{d}z\\&=\sqrt{\frac{2}{\pi}}\Big[\exp\Big(-\frac{\tau^2}{2u_1^2}\Big)-\exp\Big(-\frac{\tau^2}{2v_1^2}\Big)\Big]\\
     &= \sqrt{\frac{2}{\pi}}(u_1-v_1)\cdot\Big[\exp\Big(-\frac{\tau^2}{2v_1^2}\Big)\frac{\tau^2}{v_1^3}+\calO(u_1-v_1)\Big]\\&=(u_1-v_1)\cdot \big(g_\eta(v_1,0)+\calO(u_1-v_1)\big),
 \end{align*}
as desired.

       \subsubsection*{Case 2: $u_2\ge\frac{\tau}{2\sqrt{\log(u_1-v_1)^{-1}}}$}
See Figure \ref{fig:relaposi_D2}(b) for a graphical illustration of this case. 
As we pointed out, it suffices to show $$\sqrt{\frac{2}{\pi}}\int_0^\infty z\exp\Big(-\frac{z^2}{2}\Big)(P_1'+P_2')~\text{d}z=(u_1-v_1)(g_\eta(v_1,u_2)+\calO(u_1-v_1)).$$ Note that we can write $P_1'= \Phi\big(\frac{\tau-zv_1}{u_2}\big)-\Phi\big(\frac{\tau-zu_1}{u_2}\big)=\Phi\big(\frac{zu_1-\tau}{u_2}\big)-\Phi\big(\frac{zv_1-\tau}{u_2}\big)$ and $ P_2'=\Phi\big(\frac{zu_1+\tau}{u_2}\big)-\Phi\big(\frac{zv_1+\tau}{u_2}\big)$, where $\Phi(t)$ is the C.D.F. of $\calN(0,1)$. Then because 
$\Phi''(t)=-\frac{t}{\sqrt{2\pi}}\exp(-\frac{t^2}{2})$ is uniformly bounded by  $1$, 
Taylor's theorem gives $$P_1' = \Phi'\Big(\frac{zv_1-\tau}{u_2}\Big)\frac{z(u_1-v_1)}{u_2} + P_{1,r}'\quad \text{and} \quad P_2' = \Phi'\Big(\frac{zv_1+\tau}{u_2}\Big)\frac{z(u_1-v_1)}{u_2} + P_{2,r}',$$
 where the remainders satisfy $|P'_{1,r}|\le \frac{z^2(u_1-v_1)^2}{2u_2^2}$ and $|P'_{2,r}|\le \frac{z^2(u_1-v_1)^2}{2u_2^2}$. 
 We now show that the remainders only have minimal impact in the sense that it can only contribute to the higher order term $(u_1-v_1)\sfE_{\bp,\bq}$. Specifically, we can bound the effect of $P_{1,r}'$ by $$\sqrt{\frac{2}{\pi}}\int_0^\infty z\exp\big(-\frac{z^2}{2}\big)|P'_{1,r}|~\text{d}z\le  \sqrt{\frac{2}{\pi}}\frac{(u_1-v_1)^2}{2u_2^2}\int_0^\infty z^3\exp\big(-\frac{z^2}{2}\big)\text{d}z=(u_1-v_1)\cdot O\Big(\frac{u_1-v_1}{\log(u_1-v_1)^{-1}}\Big),$$ where in the last equality we substitute $u_2\ge \frac{\tau}{2\sqrt{\log(u_1-v_1)^{-1}}}$. Analogously, the same bound holds similarly for $P'_{2,r}$. Therefore, we can safely replace $P_1'+P_2'$ with $\Phi'(\frac{zv_1-\tau}{u_2})\frac{z(u_1-v_1)}{u_2} + \Phi'(\frac{zv_1+\tau}{u_2})\frac{z(u_1-v_1)}{u_2}$, and all that remains is to prove 
\begin{align*}
    \sqrt{\frac{2}{\pi}}\frac{u_1-v_1}{u_2}\int_0^\infty z^2\exp\Big(-\frac{z^2}{2}\Big)\Big[\Phi'\Big(\frac{zv_1-\tau}{u_2}\Big)+\Phi'\Big(\frac{zv_1+\tau}{u_2}\Big)\Big]\text{d}z = (u_1-v_1)\cdot \Big(g_\eta(v_1,u_2)+\calO(u_1-v_1)\Big)
\end{align*} 
To this end,
by substituting $\Phi'(t)=\frac{1}{\sqrt{2\pi}}\exp(-\frac{t^2}{2})$ we proceed as follows: 
\begin{subequations}
    \begin{align}\nn
    &\frac{u_1-v_1}{\pi u_2} \int_0^\infty z^2\exp\Big(-\frac{z^2}{2}\Big)\Big[\exp \Big(-\frac{(zv_1-\tau)^2}{2u_2^2}\Big)+\exp\Big(-\frac{(zv_1+\tau)^2}{2u_2^2}\Big)\Big]\text{d}z\\
    &= \frac{u_1-v_1}{\pi u_2} \int_{-\infty}^\infty z^2\exp\Big(-\frac{z^2}{2}\Big)\exp\Big(-\frac{(zv_1+\tau)^2}{2u_2^2}\Big)\text{d}z \nn\\\nn
    & = \frac{u_1-v_1}{\pi u_2}\exp \Big(-\frac{\tau^2}{2(u_2^2+v_1^2)}\Big)\int_{-\infty}^\infty z^2 \exp \Big(-\frac{u_2^2+v_1^2}{2u_2^2}\Big(z+\frac{\tau v_1}{u_2^2+v_1^2}\Big)^2\Big)\text{d}z \\
    & \label{eq:change_va_z_zminus}= \frac{u_1-v_1}{\pi u_2}\exp \Big(-\frac{\tau^2}{2(u_2^2+v_1^2)}\Big)\int_{-\infty}^\infty \Big(z-\frac{\tau v_1}{u_2^2+v_1^2}\Big)^2 \exp \Big(-\frac{(u_2^2+v_1^2)z^2}{2u_2^2}\Big)\text{d}z  \\\nn
    & = \frac{u_1-v_1}{\pi u_2}\exp \Big(-\frac{\tau^2}{2(u_2^2+v_1^2)}\Big)\int_{-\infty}^\infty \Big(z^2+\frac{\tau ^2v_1^2}{(u_2^2+v_1^2)^2}\Big) \exp \Big(-\frac{(u_2^2+v_1^2)z^2}{2u_2^2}\Big)\text{d}z \\
    & \label{eq:change_va_w_wminus} = \frac{u_1-v_1}{\pi \sqrt{u_2^2+v_1^2}}\exp\Big(-\frac{\tau^2}{2(u_2^2+v_1^2)}\Big) \int_{-\infty}^\infty \Big(\frac{u_2^2w^2}{u_2^2+v_1^2}+\frac{\tau^2v_1^2}{(u_2^2+v_1^2)^2}\Big)\exp\Big(-\frac{w^2}{2}\Big)\text{d}w\\\nn
    & = (u_1-v_1)\cdot \Big(\sqrt{\frac{2}{\pi}}\exp \Big(-\frac{\tau^2}{2(u_2^2+v_1^2)}\Big)\frac{u_2^2(u_2^2+v_1^2)+\tau^2v_1^2}{(u_2^2+v_1^2)^{5/2}}\Big)=(u_1-v_1)\cdot g_\eta(v_1,u_2),  
\end{align}
\end{subequations}
where in (\ref{eq:change_va_z_zminus}) we employ a change of the variable that substitutes $z+\frac{\tau v_1}{u_2^2+v_1^2}$ with $z$; in (\ref{eq:change_va_w_wminus}) we utilize a change of the variable $w=\frac{\sqrt{u_2^2+v_1^2}\cdot z}{u_2}$ to facilitate the comparison with $\calN(0,1)$. 
Thus, we obtain the claim again. 
Note that the above discussions cover all the situations, completing the proof. 
\end{proof}

% As  we shall see, the selection of the step size in our problem is rather delicate:   there will be an additional term of similar scaling competing with $\sfT_\eta^{\bp,\bq}$, thus to achieve a useful contraction   
% we must take $\eta$ that draws a reasonable trade-off between $\sfT_\eta^{\bp,\bq}$ and this additional term.
\subsubsection*{Step 1.2: Bounding $|T_2^{\bp,\bq}|$ over $\calN_{r,\delta_4}^{(2)}$}\label{app:boundT2}
Next, we proceed to bounding $|T_2^{\bp,\bq}|$. The strategy is similar to that for bounding $T_1^{\bp,\bq}$ but many details differ. 
Substituting $\bh_1(\bp,\bq)=\frac{1}{m}\sum_{i\in\bR_{\bp,\bq}}\sign(\ba_i^\top(\bp-\bq))\ba_i$ into $T_2^{\bp,\bq}$ yields $$T_2^{\bp,\bq} = \frac{1}{m}\sum_{i\in\bR_{\bp,\bq}}\sign(\ba_i^\top(\bp-\bq))\ba_i^\top\bbeta_2= \frac{1}{m}\sum_{i\in\bR_{\bp,\bq}}\sign(\ba_i^\top\bbeta_1)\ba_i^\top\bbeta_2$$ where $\bbeta_1=\frac{\bp-\bq}{\|\bp-\bq\|_2}$ and  $\bbeta_2$ are orthonormal, and they parameterize $(\bp,\bq)$ as $\bp=u_1\bbeta_1+u_2\bbeta_2$ and $\bq=v_1\bbeta_1+u_2\bbeta_2$ for some $(u_1,u_2,v_1)$ satisfying $u_1>v_1,~u_2\ge 0$.
Conditioning on $|\bR_{\bp,\bq}|=|\{i\in [m]:\sign(|\ba_i^\top\bp|-\tau)\ne\sign(|\ba_i^\top\bq|-\tau)\}|=r_{\bp,\bq},$ we identify the distribution of $T_2^{\bp,\bq}$ with     
$
    T_2^{\bp,\bq}|\{|\bR_{\bp,\bq}|=r_{\bp,\bq}\} \sim \frac{1}{m}\sum_{i=1}^{r_{\bp,\bq}}\hat{Z}_i^{\bp,\bq}$ 
where the i.i.d. random variables $\hat{Z}_1^{\bp,\bq},\cdots,\hat{Z}_{r_{\bp,\bq}}^{\bp,\bq}$ follow the conditional distribution $$\hat{Z}_i^{\bp,\bq} {\sim}\sign(\ba^\top \bbeta_1)\ba^\top\bbeta_2\big|\big\{\sign(|\ba^\top\bp|-\tau)\ne \sign(|\ba^\top\bq|-\tau)\big\}$$
with standard Gaussian vector $\ba\sim\calN(0,\bI_n)$.

\paragraph{The P.D.F. of $\hat{Z}_i^{\bp,\bq}$:} By $\bp=u_1\bbeta_1+u_2\bbeta_2$ and $\bq=v_1\bbeta_1+u_2\bbeta_2$ and the rotational invariance of $\ba\sim \calN(0,\bI_n)$ we have 
    \begin{align*}
    &\hat{Z}_i^{\bp,\bq} 
    =\sign(a_1)a_2 \Big| \big\{\sign(|a_1u_1+a_2u_2|-\tau)\ne\sign(|a_1v_1+a_2u_2|-\tau)\big\}\\
    &= \sign(a_1)a_2\Big| \big\{\sign(||a_1|u_1+\sign(a_1)a_2u_2|-\tau)\ne \sign(||a_1|v_1+\sign(a_1)a_2u_2|-\tau)\big\}\\
    &\sim a_2\Big|\big\{\sign(||a_1|u_1+a_2u_2|-\tau)\ne \sign(||a_1|v_1+a_2u_2|-\tau)\big\}, 
\end{align*} 
where $a_1,a_2$ are independent $\calN(0,1)$ variables, and we can replace $\sign(a_1)a_2$ by $a_2$ in the last line because $(a_1,a_2)$ and $(a_1,\sign(a_1)a_2)$ have the same distribution. We define the event $\tilde{E}=\{\sign(||a_1|u_1+a_2u_2|-\tau)\ne \sign(||a_1|v_1+a_2u_2|-\tau)\}$ and evaluate its probability by 
    \begin{align*}
        &\mathbbm{P}(\tilde{E}) = \mathbbm{P}\big(\sign(||a_1|u_1+a_2u_2|-\tau)\ne \sign(||a_1|v_1+a_2u_2|-\tau)\big)\\
        & = \mathbbm{P}\big(\sign(|a_1u_1+a_2\sign(a_1)u_2|-\tau)\ne\sign(|a_1v_1+a_2\sign(a_1)u_2|-\tau)\big) \\
       & =   \mathbbm{P}\big(\sign(|a_1u_1+a_2u_2|-\tau)\ne\sign(|a_1v_1+a_2u_2|-\tau)\big) =\sfP_{\bp,\bq}.
    \end{align*} 
 Now we can formulate the P.D.F. of $\hat{Z}_i^{\bp,\bq}$ by Bayes' Theorem as  
        \begin{align}
        &f_{\hat{Z}_i^{\bp,\bq}}(z)  = \frac{f_{a_2}(z)\cdot\mathbbm{P}(\tilde{E}|a_2=z)}{\mathbbm{P}(\tilde{E})}  = \frac{1}{\sqrt{2\pi}}\frac{\exp(-\frac{z^2}{2})\cdot\mathbbm{P}(\tilde{E}|a_2=z)}{\sfP_{\bp,\bq}}\nn\\&=  \frac{\exp(-\frac{z^2}{2})\cdot\mathbbm{P}(\sign(||a_1|u_1+zu_2|-\tau)\ne \sign(||a_1|v_1+zu_2|-\tau))}{\sqrt{2\pi}\cdot\sfP_{\bp,\bq}},~z\in\mathbb{R}.\label{eq:T2pdf}
    \end{align}  
%which is also the P.D.F. of $\hat{Z}_i^{\bp,\bq}$. 
\begin{lem}
    [The P.D.F. of $\hat{Z}_i^{\bp,\bq}$] \label{lem:prob_cal_T2}Given $(\bp,\bq)\in\calN_{r,\delta_4}^{(2)}$, for $z\in \mathbb{R}$   we have $\mathbbm{P}(\tilde{E}|a_2=z) = P_3+P_4$ where $$  P_3:= \mathbbm{P}\Big(|a_1|u_1+zu_2 >\tau,~-\tau< |a_1|v_1+zu_2 <\tau\Big)$$ and $$ P_4:= \mathbbm{P}\Big(-\tau<|a_1|u_1+zu_2<\tau,~|a_1|v_1+zu_2<-\tau\Big)$$ with $a_1\sim \calN(0,1)$. Substituting  into (\ref{eq:T2pdf}) yields the P.D.F. of $\hat{Z}_i^{\bp,\bq}$ as $$f_{\hat{Z}_i^{\bp,\bq}}(z) = \frac{\exp(-\frac{z^2}{2})\cdot[P_3+P_4]}{\sqrt{2\pi}\cdot\sfP_{\bp,\bq}},~~z\in\mathbb{R}.$$ 
\end{lem}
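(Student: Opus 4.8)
The plan is to imitate closely the proof of Lemma~\ref{lem:T1pdfprob}. Since the distributional identity $\hat Z_i^{\bp,\bq}\sim a_2\,|\,\tilde E$ has already absorbed the sign factor $\sign(\ba^\top\bbeta_1)$, the only thing left to establish is the explicit form of the conditional probability $\mathbbm{P}(\tilde E\mid a_2=z)$ appearing in the density~(\ref{eq:T2pdf}). First I would fix $a_2=z$ and use the independence $a_1\indep a_2$ to write $\mathbbm{P}(\tilde E\mid a_2=z)=\mathbbm{P}\big(\sign(||a_1|u_1+zu_2|-\tau)\ne\sign(||a_1|v_1+zu_2|-\tau)\big)$. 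The sign-mismatch event then decomposes into the two disjoint pieces $\{\,||a_1|u_1+zu_2|>\tau,\ ||a_1|v_1+zu_2|<\tau\,\}$ and $\{\,||a_1|u_1+zu_2|<\tau,\ ||a_1|v_1+zu_2|>\tau\,\}$, whose probabilities I will name $P_3$ and $P_4$.

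The one elementary ingredient needed to simplify each piece is the ordering $|a_1|u_1+zu_2\ge|a_1|v_1+zu_2$, which holds because $u_1>v_1$ and $|a_1|\ge0$. On the first piece the larger quantity $|a_1|u_1+zu_2$ lies outside $[-\tau,\tau]$ while the smaller $|a_1|v_1+zu_2$ lies inside; a value below $-\tau$ cannot exceed one that is $\ge-\tau$, so necessarily $|a_1|u_1+zu_2>\tau$ and $-\tau<|a_1|v_1+zu_2<\tau$, which is exactly $P_3$. Symmetrically, on the second piece the smaller quantity $|a_1|v_1+zu_2$ is outside $[-\tau,\tau]$ and the larger is inside; a value exceeding $\tau$ cannot be dominated by one that is $\le\tau$, so $|a_1|v_1+zu_2<-\tau$ and $-\tau<|a_1|u_1+zu_2<\tau$, which is $P_4$. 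Adding the two contributions gives $\mathbbm{P}(\tilde E\mid a_2=z)=P_3+P_4$, and inserting this into~(\ref{eq:T2pdf}) produces the asserted density of $\hat Z_i^{\bp,\bq}$.

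Finally, exactly as in the footnote to Lemma~\ref{lem:T1pdfprob}, I would note that when $u_2=0$ the passage from the closed sign conditions to the strict inequalities defining $P_3$ and $P_4$ may be inexact on the set $\{z:|u_1|\,|z|=\tau\}\cup\{z:|v_1|\,|z|=\tau\}$; since this is a finite set of values of $z$, it does not affect the use of the formula as the density of the continuous random variable $\hat Z_i^{\bp,\bq}$. There is no real obstacle in this lemma: the whole content is the two short case analyses above, so the only care required is in checking that the disjoint decomposition of the sign-mismatch event is exhaustive and in handling the $u_2=0$ boundary bookkeeping.
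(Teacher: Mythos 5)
Your proof is correct and follows essentially the same route as the paper's: condition on $a_2=z$ and remove the $\sign(a_1)$ factor using independence, exploit the almost-sure ordering $|a_1|u_1+zu_2\ge|a_1|v_1+zu_2$ coming from $u_1>v_1$, and split the sign-mismatch event into the two disjoint pieces $E_3$ and $E_4$, giving $P_3+P_4$. The only minor quibble is your description of the $u_2=0$ boundary set: under the conditioning $a_2=z$ (not $|a_1|=z$), the event no longer depends on $z$ at all when $u_2=0$, so the relevant negligible set is $\{|a_1|\,|u_1|=\tau\}\cup\{|a_1|\,|v_1|=\tau\}$, which has $a_1$-probability zero for every fixed $z$ -- but this does not affect the validity of the argument.
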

\begin{proof}
 We only need to calculate $$\mathbbm{P}(\tilde{E}|a_2=z)=\mathbbm{P}\big(\sign(||a_1|u_1+zu_2|-\tau)\ne \sign(||a_1|v_1+zu_2|-\tau)\big)$$  
    where $a_1\sim\calN(0,1)$. Then, $|a_1|u_1>|a_1|v_1$ holds almost surely. Thus, without affecting the subsequent probability calculation the event $\{\sign(||a_1|u_1+zu_2|-\tau)\ne \sign(||a_1|v_1+zu_2|-\tau)\}$ holds if and only if either 
    $$E_3:=\{|a_1|u_1+zu_2>\tau,~-\tau<|a_1|v_1+zu_2<\tau\}$$ or $$E_4:=\{-\tau<|a_1|u_1+zu_2<\tau,~|a_1|v_1+zu_2<-\tau\}$$ holds. Note that $E_3$ and $E_4$ are disjoint, and thus we can write $\mathbbm{P}(\tilde{E}|a_2=z) = \mathbbm{P}(E_3)+\mathbbm{P}(E_4):=P_3+P_4,$ as desired. 
 %Note that we ignore the cases $|a_1|u_1+zu_2=\pm\tau$  and   $|a_1|v_1+zu_2=\pm\tau$ in (\ref{eq:E3_E41}). 
 %This has no effect when $u_1\ne 0$ and $v_1\ne 0$, since the probability of the continuous random variables $|a_1|u_1$ and $|a_1|v_1$ taking specific value is zero. When $u_1=0$ or $v_1=0$, this might affect the correctness of $\mathbbm{P}(\tilde{E}|a_2=z)=P_3+P_4$ at most at two points, namely $z=\pm\frac{\tau}{u_2}$. Thus, this  
  %  will not affect our subsequent use of (\ref{eq:lem_hatZpdf}) as the P.D.F. of $\hat{Z}_i^{\bp,\bq}$. 
\end{proof}

\paragraph{Sub-Gaussianity of $\hat{Z}_i^{\bp,\bq}$:} Next, we   show in Lemma \ref{lem:SGT2hatZ} that $\hat{Z}_i^{\bp,\bq}$ is $\calO(1)$-sub-Gaussian. Again, a careful examination of all possible $(u_1,u_2,v_1)$ is in order. 
\begin{lem}
    [$\hat{Z}_i^{\bp,\bq}$ is $\calO(1)$-Sub-Gaussian]\label{lem:SGT2hatZ} Suppose $\delta_4\le\frac{\alpha}{2}$, there exists an  constant $C_0$ depending on $(\alpha,\beta,\tau)$ such that    $\|\hat{Z}_i^{\bp,\bq}\|_{\psi_2}\le C_0$ holds for any $(\bp,\bq)\in \calN_{r,\delta_4}^{(2)}$. 
\end{lem}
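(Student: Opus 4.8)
\textbf{Proof proposal for Lemma \ref{lem:SGT2hatZ}.}

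The plan is to mirror the structure of the proof of Lemma \ref{lem:T1SG}, showing that the probability density $f_{\hat{Z}_i^{\bp,\bq}}(z)$ is, for $|z|$ beyond some constant threshold, dominated by the tail of a fixed centered Gaussian. By Lemma \ref{lem:prob_cal_T2}, $f_{\hat{Z}_i^{\bp,\bq}}(z) = \frac{1}{\sqrt{2\pi}}\exp(-\frac{z^2}{2})(P_3+P_4)/\sfP_{\bp,\bq}$, and since $\sfP_{\bp,\bq}\ge c_0\|\bp-\bq\|_2 = c_0(u_1-v_1)$ by Lemma \ref{lem:Puv}, it suffices to show $P_3+P_4 \lesssim (u_1-v_1)\cdot e^{-z^2/4}$ (say) whenever $|z|$ exceeds a constant depending only on $(\alpha,\beta,\tau)$; the factor $\exp(-z^2/2)$ in front of $P_3+P_4$ together with this extra decay then produces the sub-Gaussian tail. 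The key probabilistic observation is that $P_3$ and $P_4$ each constrain $|a_1|$ — which is independent of $z$ — to lie in an interval of length $\calO\big(\frac{u_1-v_1}{\min\{|u_1|,|v_1|\}}\big)$ (for $P_3$ the constraint is $\tau-zu_2 < |a_1|u_1$ with $|a_1|v_1 < \tau-zu_2$, i.e.\ $|a_1|\in(\frac{\tau-zu_2}{u_1},\frac{\tau-zu_2}{v_1})$ when $u_1,v_1>0$, and analogous intervals in the other sign cases), so $P_3+P_4 \le \sup_t \varphi(t)\cdot \calO\big(\frac{u_1-v_1}{\min\{|u_1|,|v_1|\}}\big) \cdot (\text{tail cutoff from }z)$, where $\varphi$ is the standard normal density and the last factor decays in $z$.

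The execution would split, exactly as in Lemma \ref{lem:T1SG}, according to the size of $u_2$ relative to $\alpha$, using the parametrization in Figure \ref{fig:relaposi_D2}. \emph{Case $u_2 \ge \alpha/2$:} here I would instead not factor out $|a_1|$ but observe directly that both events $E_3,E_4$ force $a_2 = z$ to satisfy $|z|u_2 \le \tau + |a_1|\max\{|u_1|,|v_1|\}$-type bounds; more cleanly, one bounds $P_3+P_4$ by first integrating out $a_1$ to obtain an interval for $z u_2$ of controlled length, giving a density bound $f_{\hat{Z}_i^{\bp,\bq}}(z) \lesssim \frac{u_1-v_1}{u_2\sfP_{\bp,\bq}}|z|\exp(-z^2/2) \lesssim |z|\exp(-z^2/2)$, which is $\calO(1)$-sub-Gaussian. \emph{Case $0<u_2<\alpha/2$:} then $\bp\in\mathbbm{A}_{\alpha,\beta}$ forces $|u_1|>\alpha/2$ and similarly $|v_1|>\alpha/2$, so for $|z|\ge \frac{4\tau}{\alpha}$ (using $\delta_4\le\alpha/2$ so that $u_1,v_1$ are close and the intervals are genuinely short) the events $E_3,E_4$ require $|a_1|u_j + zu_2$ to straddle $\pm\tau$, which forces $|a_1|$ into an interval of length $\lesssim \frac{u_1-v_1}{\alpha}$ located near $\frac{|z|u_2-\tau}{|u_j|}$ (or similar), and Gaussian-density boundedness of $a_1$ plus the $\exp(-z^2/2)$ prefactor closes it. \emph{Case $u_2=0$:} then $\bp\parallel\bq$ and $\hat{Z}_i^{\bp,\bq}\sim a_2$ conditioned on an event concerning only $a_1$, hence $\hat{Z}_i^{\bp,\bq}$ is literally a standard Gaussian variable (independent of the conditioning), trivially $\calO(1)$-sub-Gaussian. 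Taking $C_0$ to be the maximum of the constants produced in the three cases finishes the proof.

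The main obstacle I anticipate is the careful bookkeeping in the case $0<u_2<\alpha/2$ with $u_2$ very small: there the conditional density $f_{\hat{Z}_i^{\bp,\bq}}$ can have \emph{two} separated bumps (one near $z$ such that $zu_2\approx \tau$ via $u_1$, one via $v_1$, plus reflections through $-\tau$), so one must verify that each bump has mass $\calO((u_1-v_1)/\alpha)$ and that between and beyond the bumps the density decays at least like a Gaussian. This is where the hypothesis $\delta_4\le\alpha/2$ is genuinely used — it guarantees $u_1$ and $v_1$ have the same sign and are comparable, preventing the intervals from being long. Concretely I would write $P_3 \le \mathbbm{P}(|a_1|\in I_3(z))$ and $P_4\le\mathbbm{P}(|a_1|\in I_4(z))$ with $I_3,I_4$ explicit intervals, bound $\mathbbm{P}(|a_1|\in I) \le \sqrt{2/\pi}\,|I|$ and, for $|z|$ past the relevant cutoff, also extract an $\exp(-(|z|u_2-\tau)^2/(2\max\{u_1^2,v_1^2\}))$-type factor (as in \eqref{eq:pdf_relax1}) which, combined with $\exp(-z^2/2)$, dominates any Gaussian; the remaining bounded-$z$ region is handled by the crude bound $f_{\hat Z_i^{\bp,\bq}}(z)\lesssim \frac{(u_1-v_1)/\alpha}{\sfP_{\bp,\bq}}\exp(-z^2/2)\lesssim \exp(-z^2/2)$.
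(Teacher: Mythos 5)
Your overall plan is the right one and matches the paper's strategy in broad strokes: express the density via Bayes, split by the size of $u_2$, and for each regime dominate the density by a Gaussian via interval-length bounds and/or Gaussian tails of $|a_1|$. Your treatment of $u_2=0$ (conditioning event involves only $a_1$, so $\hat{Z}_i^{\bp,\bq}\sim\calN(0,1)$ exactly) and of $0<u_2<\alpha/2$ (both $|u_1|,|v_1|\gtrsim\alpha$, so the $|a_1|$-interval has length $\lesssim (\tau+|z|u_2)(u_1-v_1)/\alpha^2$ and the density is $\lesssim(1+|z|)e^{-z^2/2}$) are essentially what the paper does.

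The gap is in the case $u_2\ge\alpha/2$. Your claimed bound $P_3+P_4\lesssim\frac{u_1-v_1}{u_2}|z|$, and hence $f_{\hat Z_i^{\bp,\bq}}(z)\lesssim|z|e^{-z^2/2}$, does not follow from the length of the $|a_1|$-interval: when $u_1>v_1>0$, that interval has length at most $(\tau-zu_2)_+\cdot\frac{u_1-v_1}{u_1v_1}$, and the factor $\frac{1}{u_1v_1}$ is uncontrolled once $v_1$ can be much smaller than $u_1$ (e.g.\ $v_1\to 0^+$ with $u_1\asymp 1$) — and the ``integrating out $a_1$ to get a $z$-interval of controlled length'' step computes a marginal quantity that does not bound the conditional density at a fixed $z$. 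Concretely, take $u_1\asymp 1$, $v_1\ll u_1$, $u_2\ge\alpha/2$: at $z=0$ one has $P_3(0)=\mathbbm{P}(\tau/u_1<|a_1|<\tau/v_1)$, a constant-order probability, while your bound predicts $0$. (Sub-Gaussianity only needs the tail, but your derivation would have to be modified even at large $|z|$, where the interval endpoint $\frac{\tau-zu_2}{v_1}$ is enormous and interval length alone is useless.) The paper's proof handles precisely this regime (its Cases 1.2 and 2.1--2.3) differently: it does \emph{not} try to make $P_3+P_4$ proportional to $u_1-v_1$; instead it shows $P_3+P_4\lesssim e^{-\tau^2/(2u_1^2)}$ (a Gaussian tail of $|a_1|$ at the left endpoint $\frac{\tau-zu_2}{u_1}$, independent of $v_1$), then observes that in this regime $\sfP_{\bp,\bq}\gtrsim u_1-v_1\gtrsim u_1$ (respectively $\gtrsim\max\{u_1,|v_1|\}$ in the opposite-sign case), so the ratio is $\lesssim u_1^{-1}e^{-\tau^2/(2u_1^2)}$ which is uniformly bounded in $u_1>0$. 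You would need to add this sub-case split (comparable vs.\ non-comparable $u_1,v_1$, and same-sign vs.\ opposite-sign) and the accompanying ``tail $\times$ boosted $\sfP_{\bp,\bq}$'' argument; as written, your $u_2\ge\alpha/2$ case only covers the sub-case where $u_1,v_1$ have the same sign and $v_1\gtrsim u_1$.
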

\begin{proof}
  By Lemma \ref{lem:prob_cal_T2}  the P.D.F. of $\hat{Z}_i^{\bp,\bq}$ is given by
    \begin{align}
        f_{\hat{Z}_i^{\bp,\bq}}(z)= \frac{\exp(-\frac{z^2}{2})(P_3+P_4)}{\sqrt{2\pi}\sfP_{\bp,\bq}}\le \frac{C_0\exp(-\frac{z^2}{2})(P_3+P_4)}{u_1-v_1},~z\in \mathbbm{R}, \label{eq:usePpq}
    \end{align}
    where  the inequality holds for some  constant $C_0$ because $\sfP_{\bp,\bq}\asymp \dist(\bp,\bq)=\|\bp-\bq\|_2$   for $(\bp,\bq)\in \calN_{r,\delta_4}^{(2)}$. We show $\|\hat{Z}_i^{\bp,\bq}\|_{\psi_2}=\calO(1)$  by discussing different cases of $(u_1,u_2,v_1)$.  
       \begin{figure}[ht!]
    \begin{centering}
        \includegraphics[width=0.2\columnwidth]{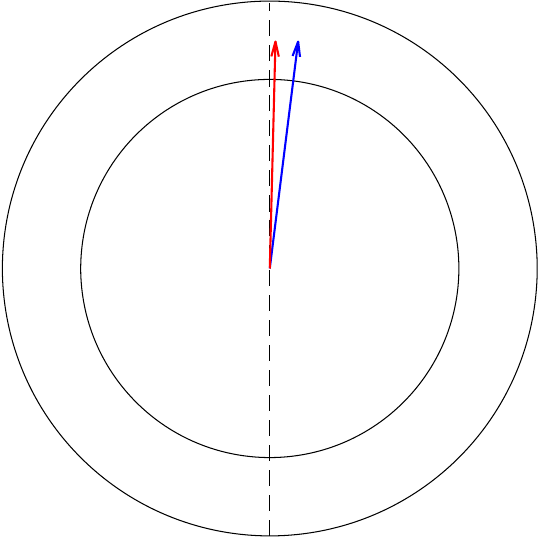} \quad \includegraphics[width=0.2\columnwidth]{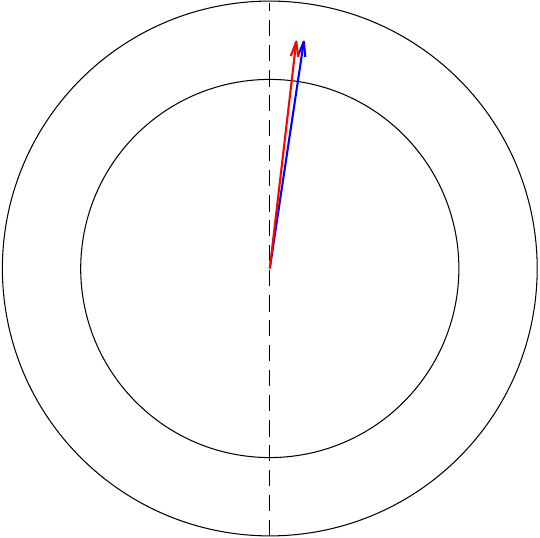}\quad  
        \includegraphics[width=0.2\columnwidth]{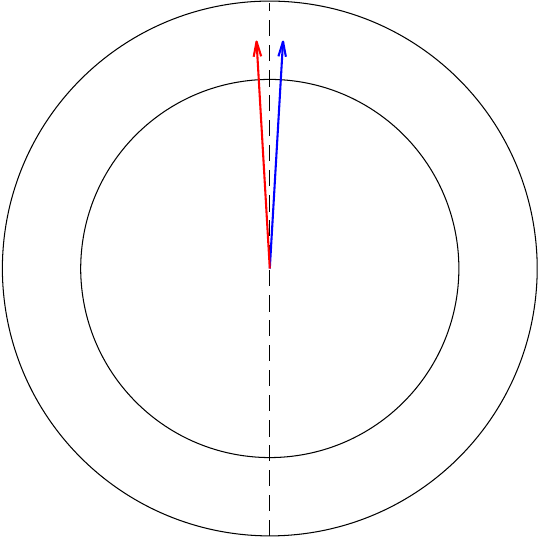}\quad
           \includegraphics[width=0.2\columnwidth]{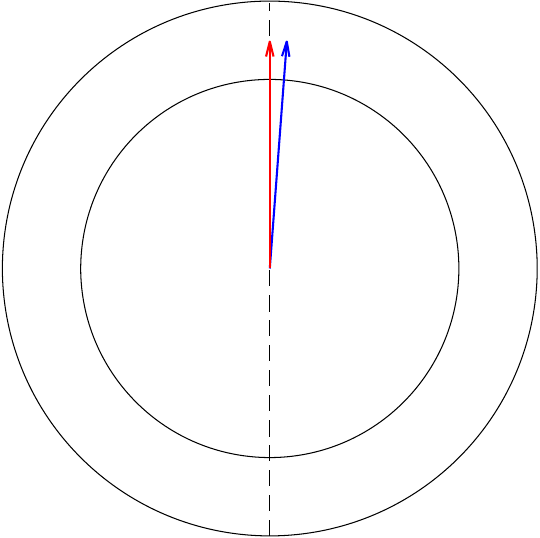}
     \par
     \par 
     \par
     \par
     \vspace{1mm}
    ~(a) \hspace{30mm} ~(b) \hspace{30mm} ~(c)  \hspace{32mm}(d)
           
        \par
    \end{centering}
    
     \caption{\label{fig:relaposi_D9} \small We parametrize $\bp,\bq\in \mathbbm{A}_{\alpha,\beta}$ as $(u_1,u_2)$ and $(v_1,u_2)$ that obey $u_1>v_1$ and $u_2\ge0$, with $(u_1,u_2)$ and $v_1,u_2$ being alluded to by the blue arrow and red arrow respectively in this figure. The above four sub-figures are concerned with different cases of ``$u_2$ being bounded away from $0$'' (say, $u_2\ge\frac{\alpha}{2}$): (a)  $0<v_1<\frac{u_1}{2}$; (b) $\frac{u_1}{2}\le v_1<u_1$; (c) $v_1<0<u_1$; (d) $v_1=0<u_1$.} %With $u_1-v_1=\|\bp-\bq\|_2$ being small enough,  we must have $|u_1|$ and $|v_1|$ are also sufficiently small in (a), (c) and (d) --- this might not be true for (b), see Figure \ref{fig:relaposi_D2}(a) for instance.
\end{figure}

    \subsubsection*{Case 1: $u_1>v_1>0$ or $v_1<u_1<0$}
    We  only analyze $u_1>v_1>0$ since the case $v_1<u_1<0$ can be done analogously. Under $u_1>v_1>0$, $P_3$ and $P_4$ in Lemma \ref{lem:prob_cal_T2} simplify to $  P_3   = \mathbbm{P}\big(|a_1|>\frac{\tau-zu_2}{u_1},~|a_1|>\frac{-\tau-zu_2}{v_1},~|a_1|<\frac{\tau-zu_2}{v_1}\big)$ and $ P_4   = \mathbbm{P}\big(|a_1|>\frac{-\tau-zu_2}{u_1},~|a_1|<\frac{\tau-zu_2}{u_1},~|a_1|<\frac{-\tau-zu_2}{v_1}\big)$.  
   We proceed to analyze the several cases separately. 
 
     \paragraph{(Case 1.1: $0\le u_2\le \frac{\alpha}{2}$)} In this case, $u_2$ might be very small or even exactly equal to $0$, see Figure \ref{fig:relaposi_D2}(b) and Figure \ref{fig:relaposi_D2}(c). Due to $u_1^2+u_2^2\ge \alpha^2$ and $v_1^2+u_2^2\ge \alpha^2$, we in turn have $u_1\ge \frac{\alpha}{2}$ and $v_1\ge \frac{\alpha}{2}$. Since the P.D.F. of $|a_1|$ is given by $f_{|a_1|}(z) = \sqrt{\frac{2}{\pi}}e^{-\frac{z^2}{2}}$ for $z\ge 0$, we have    
           \begin{gather*}
               P_3 \le \sqrt{\frac{2}{\pi}}\Big|\frac{\tau-zu_2}{v_1}-\frac{\tau-zu_2}{u_1}\Big|\le \sqrt{\frac{2}{\pi}}\frac{(u_1-v_1)|\tau-zu_2|}{u_1v_1}\le \sqrt{\frac{32}{\pi}}\frac{(\tau+\alpha|z|)(u_1-v_1)}{\alpha^2};\\
               P_4   \le \sqrt{\frac{2}{\pi}}\Big|\frac{-\tau-zu_2}{v_1}-\frac{-\tau-zu_2}{u_1}\Big|\le \sqrt{\frac{2}{\pi}}\frac{(u_1-v_1)|\tau+zu_2|}{u_1v_1}\le \sqrt{\frac{32}{\pi}}\frac{(\tau+\alpha|z|)(u_1-v_1)}{\alpha^2}. 
           \end{gather*}
        Substituting these into (\ref{eq:usePpq}) yields $
           f_{\hat{Z}_i^{\bp,\bq}}(z)\le \sqrt{\frac{2}{\pi}}\frac{8C_0}{\alpha^2}(\tau+\alpha|z|)\exp\big(-\frac{z^2}{2}\big)\le \exp\big(-\frac{z^2}{4}\big)$ where the second inequality holds when $z$ is larger than some  constant. This establishes $\|\hat{Z}_i^{\bp,\bq}\|_{\psi_2}=\calO(1)$.

       \paragraph{(Case 1.2: $u_2>\frac{\alpha}{2},~0<v_1<\frac{u_1}{2}$)} See Figure \ref{fig:relaposi_D9}(a) for a graphical illustration of this case. Note that $0<v_1<\frac{u_1}{2}$ gives $u_1-v_1\ge \frac{u_1}{2}$, hence (\ref{eq:usePpq}) implies \begin{align}\label{eq:case1_2pdf}
           f_{\hat{Z}_i^{\bp,\bq}}(z)\le \frac{2C_0\exp(-\frac{z^2}{2})(P_3+P_4)}{u_1},\quad z\in \mathbb{R}.
       \end{align} 
       Since $\frac{\tau-zu_2}{v_1}\le \frac{\tau-\frac{\alpha z}{2}}{v_1}\le 0$, we have $P_3 =0$ when $z\ge \frac{2\tau}{\alpha}$. When $z\le 0$, we use Lemma \ref{lem:gaussian_tail} to obtain $$ P_3\le \mathbbm{P}\big(|a_1|>\frac{\tau-zu_2}{u_1}\big)\le \mathbbm{P}\big(|a_1|>\frac{\tau}{u_1}\big)\le \exp\big(-\frac{\tau^2}{2u_1^2}\big).$$ Thus we come to $   P_3\le \exp\big(-\frac{\tau^2}{2u_1^2}\big)$ when $|z|>\frac{2\tau}{\alpha}$. 
       Similarly, we have $P_4=0$ when $z\ge 0$. When $z\le -\frac{4\tau}{\alpha}$, we have $$P_4 \le \mathbbm{P}\big(|a_1|>\frac{-\tau-zu_2}{u_1}\big)\le \mathbbm{P}\big(|a_1|> \frac{-\tau +\frac{4\tau}{\alpha}\cdot \frac{\alpha}{2}}{u_1}\big)= \mathbbm{P}\big(|a_1|>\frac{\tau}{u_1}\big)\le \exp\big(-\frac{\tau^2}{2u_1^2}\big).$$ Therefore, we arrive at $ P_4 \le \exp \big(-\frac{\tau^2}{2u_1^2}\big)$ when $|z|>\frac{4\tau}{\alpha}. $
       Substituting the bounds on $P_3,P_4$ into  (\ref{eq:case1_2pdf}) yields   $$   f_{\hat{Z}_i^{\bp,\bq}}(z) \le  \frac{4C_0}{u_1}\exp\big(-\frac{\tau^2}{2u_1^2}\big)\exp\big(-\frac{z^2}{2}\big)\le C_1e^{-\frac{z^2}{2}}$$ when $|z|>\frac{4\tau}{\alpha}$,       
       where the second inequality holds because $\frac{1}{u_1}\exp(-\frac{\tau^2}{2u_1^2})$ is uniformly bounded for $u_1>0$. Thus we obtain  $\|\hat{Z}_i^{\bp,\bq}\|_{\psi_2}=\calO(1)$.

        \paragraph{(Case 1.3: $u_2>\frac{\alpha}{2},~v_1\ge \frac{u_1}{2}$)} Readers can have Figure \ref{fig:relaposi_D9}(b) in mind. 
       As shown in Case 1.2, $P_3=P_4=0$ when $z>\frac{2\tau}{\alpha}$, we can hence focus on   $z<0$. Suppose $z\le-\frac{4\tau}{\alpha}$, then $\tau-zu_2>-\tau-zu_2 \ge -\tau + \frac{4\tau}{\alpha}\frac{\alpha}{2}\ge \tau$, and moreover
      \begin{subequations}
          \begin{align}\nn
              &P_3 \le \mathbbm{P}\left(\frac{\tau-zu_2}{u_1}<|a_1|<\frac{\tau-zu_2}{v_1}\right)\le\sqrt{\frac{2}{\pi}}\exp \Big(-\frac{(\tau-zu_2)^2}{2u_1^2}\Big)\Big|\frac{\tau-zu_2}{v_1}-\frac{\tau-zu_2}{u_1}\Big|\\
              &\le \sqrt{\frac{2}{\pi}}\exp\Big(-\frac{\tau^2}{2u_1^2}\Big)\frac{(u_1-v_1)|\tau-zu_2|}{u_1v_1} \le \sqrt{\frac{8}{\pi}}\exp \Big(-\frac{\tau^2}{2u_1^2}\Big)\frac{(u_1-v_1)(\tau+|z|\beta)}{u_1^2};\label{eq:usev1lower1}\\\nn
              &P_4 \le \mathbbm{P}\left(\frac{-\tau-zu_2}{u_1}<|a_1|<\frac{-\tau-zu_2}{v_1}\right)\le \sqrt{\frac{2}{\pi}}\exp\Big(-\frac{(\tau+zu_2)^2}{2u_1^2}\Big)\Big|\frac{-\tau-zu_2}{v_1}-\frac{-\tau-zu_2}{u_1}\Big|\\
              &\le \sqrt{\frac{2}{\pi}}\exp \Big(-\frac{\tau^2}{2u_1^2}\Big)\frac{(u_1-v_1)|\tau+zu_2|}{u_1v_1}\le\sqrt{\frac{8}{\pi}}\exp \Big(-\frac{\tau^2}{2u_1^2}\Big)\frac{(u_1-v_1)(\tau+|z|\beta)}{u_1^2}, \label{eq:usev1lower2}
          \end{align}
      \end{subequations}
      where in (\ref{eq:usev1lower1}) and (\ref{eq:usev1lower2}) we use $v_1\ge\frac{u_1}{2}$, $|u_2|\le\beta$ and triangle inequality. Combining these pieces, we arrive at $$  P_3+P_4\le 8\sqrt{\frac{2}{\pi}}\frac{\beta|z|}{u_1^2}\exp\big(-\frac{\tau^2}{2u_1^2}\big)(u_1-v_1)\le C_2 |z|(u_1-v_1)$$ when $|z|>\frac{4\tau}{\alpha}$,
          where we use $\tau+|z|\beta\le 2|z|\beta$ in the first inequality, and then the second inequality holds for some  constant $C_2$ since $\frac{1}{u_1^2}\exp(-\frac{\tau^2}{2u_1^2})$ is uniformly bounded  for $u_1>0$. Now we substitute this bound into (\ref{eq:usePpq}) to obtain 
          $$ f_{\hat{Z}_i^{\bp,\bq}}(z)\le C_0C_2|z|\exp\big(-\frac{z^2}{2}\big)\le \exp\big(-\frac{z^2}{4}\big)$$ when $|z|$ is large enough. We thus obtain $\|\hat{Z}_i^{\bp,\bq}\|_{\psi_2}=\calO(1)$.  
 
   \subsubsection*{Case 2: $u_1\ge 0\ge v_1$}
  In this case we have $\frac{\alpha}{2}\ge\delta_4\ge \|\bp-\bq\|_2=u_1-v_1 \ge \max\{u_1,|v_1|\}$, and hence we have $u_2>\frac{\alpha}{2}$ due to $u_1^2+u_2^2\ge \alpha^2$ and $v_1^2+u_2^2\ge\alpha^2$. Combining with (\ref{eq:usePpq}) yields 
  \begin{align}
      \label{eq:t2pdfdifferentsign}
      f_{\hat{Z}_i^{\bp,\bq}}(z)\le \frac{C_0\exp(-\frac{z^2}{2})(P_3+P_4)}{\max\{u_1,|v_1|\}},\quad z\in\mathbb{R}. 
  \end{align}
  We proceed to discuss the several cases.  
 
      \paragraph{(Case 2.1: $u_1>0>v_1$)} Readers can find a graphical illustration in Figure \ref{fig:relaposi_D9}(c). In this case we have $$  P_3:=\mathbbm{P}\Big(|a_1|>\frac{\tau-zu_2}{u_1},~|a_1|>\frac{zu_2-\tau}{|v_1|},~|a_1|<\frac{\tau+zu_2}{|v_1|}\Big)$$ and $$ P_4:=\mathbbm{P}\Big(|a_1|>\frac{-\tau-zu_2}{u_1},~|a_1|>\frac{\tau+zu_2}{|v_1|},~|a_1|<\frac{\tau-zu_2}{u_1}\Big).$$ Since $\frac{\tau+zu_2}{|v_1|}\le 0$,  it follows that $P_3=0$ when $z<-\frac{2\tau}{\alpha}$ . When $z\ge \frac{4\tau}{\alpha}$ by Lemma \ref{lem:gaussian_tail} we have $$ P_3\le \mathbbm{P}\big(|a_1|>\frac{zu_2-\tau}{|v_1|}\big)\le \mathbbm{P}\big(|a_1|>\frac{\tau}{|v_1|}\big)\le \exp\big(-\frac{\tau^2}{2v_1^2}\big).$$ Therefore, we come to $  P_3\le \exp \big(-\frac{\tau^2}{2v_1^2}\big)$  when $|z|\ge\frac{4\tau}{\alpha}.$ 
      Similarly, $P_4=0$ when $z\ge\frac{2\tau}{\alpha}$ because  $\frac{\tau-zu_2}{u_1}\le 0$. When $z\le -\frac{4\tau}{\alpha}$, by Lemma \ref{lem:gaussian_tail} we proceed as $$ P_4 \le\mathbbm{P}\big(|a_1|>\frac{-\tau-zu_2}{u_1}\big)\le \mathbbm{P}\big(|a_1|\ge \frac{\tau}{u_1}\big)\le \exp\big(-\frac{\tau^2}{2u_1^2}\big).$$ Thus, we arrive at $P_4\le \exp\big(-\frac{\tau^2}{2u_1^2}\big)$ when $|z|\ge \frac{4\tau}{\alpha}$. 
      Combining these bounds we obtain $$ P_3+P_4 \le 2\exp\big(-\frac{\tau^2}{2(\max\{u_1,|v_1|\})^2}\big),\qquad \forall |z|\ge\frac{4\tau}{\alpha}.$$   
    Substituting this into (\ref{eq:t2pdfdifferentsign}) yields $$f_{\hat{Z}_i^{\bp,\bq}}(z)\le \frac{2C_0\exp(-\frac{z^2}{2})}{\max\{u_1,|v_1|\}}\exp \big(-\frac{\tau^2}{2(\max\{u_1,|v_1|\})^2}\big)\le C_3\exp\big(-\frac{z^2}{2}\big)$$ for $|z|\ge \frac{4\tau}{\alpha}$,
    where the second inequality is due to the uniform boundedness of the function $g(t)=\frac{1}{t}\exp(-\frac{\tau^2}{2t^2})$. We come to $\|\hat{Z}_{i}^{\bp,\bq}\|_{\psi_2}=\calO(1)$. 
      
       \paragraph{(Case 2.2: $u_1>0=v_1$)} See Figure \ref{fig:relaposi_D9}(d) for an intuitive illustration. When $|z|>\frac{2\tau}{\alpha}$ we have $|zu_2|=|z(v_1^2+u_2^2)^{1/2}|\ge |\alpha z|>\tau$ and hence $P_3=0$. Also,  when $z\ge 0$ we have $zu_2\ge 0$, which leads to $P_4=0$. Moreover, when $z<-\frac{4\tau}{\alpha}$ it follows that $$   P_4\le \mathbbm{P}\big(|a_1|>\frac{-\tau-zu_2}{u_1}\big)\le \mathbbm{P}\big(|a_1|>\frac{\tau}{u_1}\big)\le \exp\big(-\frac{\tau^2}{2u_1^2}\big).$$ Therefore, we have shown $P_3+P_4 \le \exp(-\frac{\tau^2}{2u_1^2})$ when $|z|>\frac{4\tau}{\alpha}$, then analogously to the previous cases, we can show the desired claim by substituting this into (\ref{eq:t2pdfdifferentsign}). 
       \paragraph{(Case 2.3: $u_1=0>v_1$)} Note that $|zu_2|>\tau$ holds when $|z|>\frac{2\tau}{\alpha}$, which implies $P_4=0$. Also observe that $P_3=0$ when $z\le 0$. Moreover, when $z\ge \frac{4\tau}{\alpha}$ we can bound $P_3$ by Lemma \ref{lem:gaussian_tail} as $$P_3\le \mathbbm{P}\big(|a_1|v_1+zu_2<\tau\big) = \mathbbm{P}\Big(|a_1|>\frac{zu_2-\tau}{|v_1|}\Big)\le \mathbbm{P}\Big(|a_1|>\frac{\tau}{|v_1|}\Big)\le \exp\big(-\frac{\tau^2}{2v_1^2}\big).$$ Combining all the pieces, we arrive at $P_3+P_4\le \exp(-\frac{\tau^2}{2v_1^2})$ when $|z|\ge\frac{4\tau}{\alpha}$, and substituting this into (\ref{eq:t2pdfdifferentsign}) proves the desired claim. 
  Note that we have established  the claim in all the cases that might appear, which completes the proof. 
\end{proof}

%A direct outcome is the following conditional concentration of $T_2^{\bp,\bq}$ about its mean, which immediately implies Lemma \ref{lem:con_devi_T2}. 
%Furthermore, we remove the conditioning  by analyzing   $|\bR_{\bp,\bq}|$ with Chernoff bound, which yields the  unconditional bound in Lemma \ref{lem:uncondevi_T2}. Then we can get the final uniform bound in Lemma \ref{lem:final_bound_T2} by union bounding. 

\paragraph{Concentration Bound on $|T_2^{\bp,\bq}|$:} We have the following that is obtained by a proof strategy similar to that of  Lemma \ref{lem:final_T11}. 
\begin{lem}
    [Bounding $|T_2^{\bp,\bq}|$ over $\calN_{r,\delta_4}^{(2)}$] \label{lem:final_bound_T2}
Suppose $mr \ge C \scrH(\calC_{\alpha,\beta},r)$ holds for some sufficiently large $C$, then for some  constants $C_1$  the event 
       \begin{align}
        \label{eq:uncon_bound_T2_forall}
        |T_2^{\bp,\bq}|\le C_1\sqrt{\frac{ \|\bp-\bq\|_2 \scrH(\calC_{\alpha,\beta},r)}{m}} + \sfP_{\bp,\bq}|\mathbbm{E}(\hat{Z}_i^{\bp,\bq})|,\quad\forall (\bp,\bq)\in \calN_{r,\delta_4}^{(2)}
    \end{align}
holds with probability at least $1-4\exp(-2\scrH(\calC_{\alpha,\beta},r))$. 
\end{lem}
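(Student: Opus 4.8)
The plan is to mirror the proof of Lemma~\ref{lem:final_T11} essentially line by line, since $|T_2^{\bp,\bq}|$ is built from the same ingredients as $T_1^{\bp,\bq}$: conditioning on $|\bR_{\bp,\bq}| = r_{\bp,\bq}$, the random variable $T_2^{\bp,\bq}$ has the same distribution as $\frac{1}{m}\sum_{i=1}^{r_{\bp,\bq}} \hat{Z}_i^{\bp,\bq}$ with i.i.d. conditional summands $\hat{Z}_i^{\bp,\bq}$, which by Lemma~\ref{lem:SGT2hatZ} satisfy $\|\hat{Z}_i^{\bp,\bq}\|_{\psi_2} = \calO(1)$ uniformly over $(\bp,\bq)\in \calN_{r,\delta_4}^{(2)}$ (using $\delta_4 \le \frac{\alpha}{2}$, which holds since $\delta_4 = \frac{c_5}{\log^{1/2}(r^{-1})}$ is small for small $r$). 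First I would center: by \cite[Lem.~2.6.8]{vershynin2018high}, $\|\hat{Z}_i^{\bp,\bq} - \mathbbm{E}\hat{Z}_i^{\bp,\bq}\|_{\psi_2} = \calO(1)$, and then \cite[Prop.~2.6.1]{vershynin2018high} gives $\big\|\frac{1}{m}\sum_{i=1}^{r_{\bp,\bq}}(\hat{Z}_i^{\bp,\bq} - \mathbbm{E}\hat{Z}_i^{\bp,\bq})\big\|_{\psi_2}^2 \lesssim \frac{r_{\bp,\bq}}{m^2}$, exactly as in \eqref{eq:psi2sum}. The standard sub-Gaussian tail bound then yields, conditionally on $\{|\bR_{\bp,\bq}| = r_{\bp,\bq}\}$, that $\big|\frac{1}{m}\sum_{i=1}^{r_{\bp,\bq}}(\hat{Z}_i^{\bp,\bq} - \mathbbm{E}\hat{Z}_i^{\bp,\bq})\big| \le \frac{r_{\bp,\bq} t}{m}$ with probability $\ge 1 - 2\exp(-c r_{\bp,\bq} t^2)$.

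Next I would handle the gap between the conditional mean $\mathbbm{E}(T_2^{\bp,\bq}\mid |\bR_{\bp,\bq}| = r_{\bp,\bq}) = \frac{r_{\bp,\bq}\,\mathbbm{E}(\hat{Z}_i^{\bp,\bq})}{m}$ and the target quantity $\sfP_{\bp,\bq}\,\mathbbm{E}(\hat{Z}_i^{\bp,\bq})$. By the triangle inequality,
\begin{align*}
|T_2^{\bp,\bq}| &\le \Big|\frac{r_{\bp,\bq}\,\mathbbm{E}(\hat{Z}_i^{\bp,\bq})}{m}\Big| + \Big|T_2^{\bp,\bq} - \frac{r_{\bp,\bq}\,\mathbbm{E}(\hat{Z}_i^{\bp,\bq})}{m}\Big| \\
&\le \sfP_{\bp,\bq}|\mathbbm{E}(\hat{Z}_i^{\bp,\bq})| + \frac{|\mathbbm{E}(\hat{Z}_i^{\bp,\bq})|\,|r_{\bp,\bq} - m\sfP_{\bp,\bq}|}{m} + \frac{r_{\bp,\bq} t}{m}.
\end{align*}
Then I would, as in the proof of Lemma~\ref{lem:final_T11}, choose $t = C_1\sqrt{\scrH(\calC_{\alpha,\beta},r)/r_{\bp,\bq}}$ so the last term becomes $\frac{C_1\sqrt{r_{\bp,\bq}\scrH(\calC_{\alpha,\beta},r)}}{m}$ with conditional failure probability $\le 2\exp(-4\scrH(\calC_{\alpha,\beta},r))$, and apply the weakened Chernoff bound (Lemma~\ref{lem:chernoff}) to $|\bR_{\bp,\bq}| \sim \text{Bin}(m,\sfP_{\bp,\bq})$ with $\delta = \sqrt{12\scrH(\calC_{\alpha,\beta},r)/(m\sfP_{\bp,\bq})}$ — which lies in $(0,1)$ because $\sfP_{\bp,\bq} \ge c_0 r$ (Lemma~\ref{lem:Puv} with $(\bp,\bq)\in\calN_{r,\delta_4}^{(2)}$) and $mr \gtrsim \scrH(\calC_{\alpha,\beta},r)$ — to get $|r_{\bp,\bq} - m\sfP_{\bp,\bq}| < \sqrt{12 m\sfP_{\bp,\bq}\scrH(\calC_{\alpha,\beta},r)}$ with probability $\ge 1 - 2\exp(-4\scrH(\calC_{\alpha,\beta},r))$, an event which also forces $r_{\bp,\bq} \le 2m\sfP_{\bp,\bq}$.

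Combining these two events on a fixed $(\bp,\bq)$, and using $|\mathbbm{E}(\hat{Z}_i^{\bp,\bq})| \lesssim 1$ (from Lemma~\ref{lem:SGT2hatZ}) together with $\sfP_{\bp,\bq} \lesssim \|\bp-\bq\|_2$ (from Lemma~\ref{lem:Puv}), the middle term is $\lesssim \frac{\sqrt{m\sfP_{\bp,\bq}\scrH(\calC_{\alpha,\beta},r)}}{m} \lesssim \sqrt{\frac{\|\bp-\bq\|_2\scrH(\calC_{\alpha,\beta},r)}{m}}$ and the last term is $\lesssim \frac{\sqrt{m\sfP_{\bp,\bq}\scrH(\calC_{\alpha,\beta},r)}}{m}$ of the same order; so $|T_2^{\bp,\bq}| \le C_1\sqrt{\frac{\|\bp-\bq\|_2\scrH(\calC_{\alpha,\beta},r)}{m}} + \sfP_{\bp,\bq}|\mathbbm{E}(\hat{Z}_i^{\bp,\bq})|$ holds with probability $\ge 1 - 4\exp(-4\scrH(\calC_{\alpha,\beta},r))$ for fixed $(\bp,\bq)$. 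Finally I would take a union bound over $\calN_{r,\delta_4}^{(2)} \subset \calN_r \times \calN_r$, which has cardinality $\le |\calN_r|^2 = \exp(2\scrH(\calC_{\alpha,\beta},r))$, so the bound holds uniformly with probability $\ge 1 - 4\exp(2\scrH(\calC_{\alpha,\beta},r) - 4\scrH(\calC_{\alpha,\beta},r)) = 1 - 4\exp(-2\scrH(\calC_{\alpha,\beta},r))$, as claimed. The only genuinely new input over Lemma~\ref{lem:final_T11} is the sub-Gaussianity of $\hat{Z}_i^{\bp,\bq}$, which is already established in Lemma~\ref{lem:SGT2hatZ}; the main care point is that the proof does \emph{not} need an analogue of the delicate mean computation of Lemma~\ref{lem:cal_Tpq} here, since $\sfP_{\bp,\bq}|\mathbbm{E}(\hat{Z}_i^{\bp,\bq})|$ is simply carried through as-is on the right-hand side (its explicit evaluation is deferred to a later lemma, cf.\ the reference to $\sfP_{\bp,\bq}|\mathbbm{E}\hat{Z}_i^{\bp,\bq}|$ in Lemma~\ref{lem:integral2}).
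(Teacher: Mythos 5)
Your proposal is correct and reproduces the paper's proof essentially step for step: the same conditioning on $|\bR_{\bp,\bq}|$, the same centering and sub-Gaussian tail bound via Lemma~\ref{lem:SGT2hatZ}, the same triangle-inequality split isolating $\sfP_{\bp,\bq}|\mathbbm{E}\hat{Z}_i^{\bp,\bq}|$, the same Chernoff control of $|\bR_{\bp,\bq}|$, and the same union bound over $\calN_{r,\delta_4}^{(2)}$. Your closing observation — that no analogue of the delicate mean computation is needed here because the conditional mean is simply carried as-is into the bound and evaluated later in Lemma~\ref{lem:integral2} — matches the paper's structure exactly.
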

\begin{proof}
   We have shown that $T_2^{\bp,\bq}-\mathbbm{E}[T_2^{\bp,\bq}|\{|\bR_{\bp,\bq}|=r_{\bp,\bq}\}]$ has the same distribution as $\frac{1}{m}\sum_{i=1}^{r_{\bp,\bq}}\big(\hat{Z}_i^{\bp,\bq}-\mathbbm{E}[\hat{Z}_i^{\bp,\bq}]\big)$ where $\hat{Z}_1^{\bp,\bq},\cdots, \hat{Z}_{r_{\bp,\bq}}^{\bp,\bq}$ are i.i.d. random variables with $\calO(1)$ sub-Gaussian norm.  By centering (see \cite[Lem. 2.6.8]{vershynin2018high}) we obtain $$\|\hat{Z}_i^{\bp,\bq}-\mathbbm{E}[\hat{Z}_i^{\bp,\bq}]\|_{\psi_2}=\calO(1),$$ and similarly to (\ref{eq:psi2sum}) we can use \cite[Prop. 2.6.1]{vershynin2018high} to show $$
        \big\|\frac{1}{m}\sum_{i=1}^{r_{\bp,\bq}}(\hat{Z}_i^{\bp,\bq}-\mathbbm{E}[\hat{Z}_i^{\bp,\bq}])\big\|_{\psi_2}\lesssim\frac{\sqrt{r_{\bp,\bq}}}{m}.$$ By the tail bound of sub-Gaussian variable and $\mathbbm{E}[T_2^{\bp,\bq}|\{|\bR_{\bp,\bq}|=r_{\bp,\bq}\}]=\frac{r_{\bp,\bq}}{m}\mathbbm{E}[\hat{Z}_i^{\bp,\bq}]$,  we have 
        $$\mathbbm{P}\Big(\big|T_2^{\bp,\bq}-\frac{r_{\bp,\bq}\mathbbm{E}[\hat{Z}_i^{\bp,\bq}]}{m}\big|\le \frac{r_{\bp,\bq}t}{m}\Big| \{|\bR_{\bp,\bq}|=r_{\bp,\bq}\}\Big)\ge 1-2 \exp(-cr_{\bp,\bq}t^2),$$ which leads to $$|T_2^{\bp,\bq}|\le \frac{r_{\bp,\bq}}{m}(|\mathbbm{E}\hat{Z}_i^{\bp,\bq}|+t)\le \sfP_{\bp,\bq}|\mathbbm{E}\hat{Z}_i^{\bp,\bq}|+\Big|\frac{r_{\bp,\bq}}{m}-\sfP_{\bp,\bq}\Big|\cdot |\mathbbm{E}\hat{Z}_i^{\bp,\bq}|+\frac{r_{\bp,\bq}t}{m}$$ holding with the same probability when conditioning on $\{|\bR_{\bp,\bq}|=r_{\bp,\bq}\}$. For every positive $r_{\bp,\bq}$, we set $t=C_1\sqrt{\frac{\scrH(\calC_{\alpha,\beta},r)}{r_{\bp,\bq}}}$ with large enough $C_1$ to come to 
        \begin{align}
            &\mathbbm{P}\Big(|T_2^{\bp,\bq}|\le \sfP_{\bp,\bq}|\mathbbm{E}\hat{Z}_i^{\bp,\bq}|+
            \Big|\frac{|\bR_{\bp,\bq}|}{m}-\sfP_{\bp,\bq}\Big||\mathbbm{E}\hat{Z}_i^{\bp,\bq}|
            +\frac{C_1\sqrt{|\bR_{\bp,\bq}|\scrH(\calC_{\alpha,\beta},r)}}{m}\Big||\bR_{\bp,\bq}|\Big)\label{hphp}\nn\\&\ge 1-2\exp(-4\scrH(\calC_{\alpha,\beta},r)). 
        \end{align}
    Next, we analyze the behaviour of $|\bR_{\bp,\bq}|\sim \text{Bin}(m,\sfP_{\bp,\bq})$, which has been done in the proof of Lemma \ref{lem:final_T11}. Specifically, under $mr\ge C_2 \scrH(\calC_{\alpha,\beta},r)$ with sufficiently large $C_2$, $$||\bR_{\bp,\bq}|-m\sfP_{\bp,\bq}|<\sqrt{12m\sfP_{\bp,\bq}\scrH(\calC_{\alpha,\beta},r)}\quad\text{and}\quad|\bR_{\bp,\bq}|<2m\sfP_{\bp,\bq}$$ hold with probability at least $1-2\exp(-4\scrH(\calC_{\alpha,\beta},r))$. Taken collectively with (\ref{hphp}) and $\sfP_{\bp,\bq}\lesssim \|\bp-\bq\|_2$ and $\mathbbm{E}|\hat{Z}_i^{\bp,\bq}|\lesssim 1$, we arrive at the unconditional bound $$|T_2^{\bp,\bq}|\le C_3\sqrt{\frac{\|\bp-\bq\|_2\scrH(\calC_{\alpha,\beta},r)}{m}}+\sfP_{\bp,\bq}|\mathbbm{E}\hat{Z}_i^{\bp,\bq}|$$ holds with probability at least $1-4\exp(-4\scrH(\calC_{\alpha,\beta},r))$. Taking a union bound over $\calN_{r,\delta_4}^{(2)}$ yields the claim. 
\end{proof}

\paragraph{Calculating $\sfP_{\bp,\bq}|\mathbbm{E}\hat{Z}_i^{\bp,\bq}|$:} We provide
a closed-form expression for $\sfP_{\bp,\bq}|\mathbbm{E}\hat{Z}_i^{\bp,\bq}|$ which possibly contains an imprecision of $\calO(\|\bp-\bq\|_2^2)=o(\|\bp-\bq\|_2)$.   

\begin{lem}
    [Calculating $\sfP_{\bp,\bq}|\mathbbm{E}\hat{Z}_i^{\bp,\bq}|$] \label{lem:integral2}In the setting of Lemma \ref{lem:cal_Tpq}, we define 
$$h_\eta(a,b)=\sqrt{\frac{2}{\pi}}\exp\Big(-\frac{\tau^2}{2(a^2+b^2)}\Big)\frac{ab(a^2+b^2-\tau^2)}{(a^2+b^2)^{5/2}},$$
then we have $$\sfP_{\bp,\bq}|\mathbbm{E}\hat{Z}_i^{\bp,\bq}| = \|\bp-\bq\|_2\big| h_\eta(u_1,u_2)+\calO(\|\bp-\bq\|_2)\big|.$$ 
\end{lem}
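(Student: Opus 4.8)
The plan is to mimic the proof of Lemma~\ref{lem:cal_Tpq} essentially step for step, replacing the P.D.F.\ of $Z_i^{\bp,\bq}$ by that of $\hat Z_i^{\bp,\bq}$ provided by Lemma~\ref{lem:prob_cal_T2}, and replacing the integral over $z\ge 0$ by one over $z\in\mathbb{R}$ since $\hat Z_i^{\bp,\bq}$ is real-valued. By Lemma~\ref{lem:prob_cal_T2},
\[
	\sfP_{\bp,\bq}\,\mathbbm{E}\hat Z_i^{\bp,\bq} = \frac{1}{\sqrt{2\pi}}\int_{-\infty}^{\infty} z\exp\Big(-\frac{z^2}{2}\Big)\big(P_3+P_4\big)\,\mathrm{d}z .
\]
Since $\partial_a h_\eta(a,b)$ is uniformly bounded on $\{\tfrac{\alpha^2}{4}\le a^2+b^2\le\beta^2\}$, Taylor's theorem gives $|h_\eta(u_1,u_2)-h_\eta(v_1,u_2)|=\calO(u_1-v_1)=\calO(\|\bp-\bq\|_2)$, so it suffices to establish $\sfP_{\bp,\bq}\,\mathbbm{E}\hat Z_i^{\bp,\bq}=(u_1-v_1)\big(\pm h_\eta(v_1,u_2)+\calO(u_1-v_1)\big)$, the sign being irrelevant because the statement only asks for $|\mathbbm{E}\hat Z_i^{\bp,\bq}|$.

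First I would pass from $(P_3,P_4)$ to simplified probabilities $(P_3',P_4')$ in which the $\max/\min$ appearing in the conditioning intervals of Lemma~\ref{lem:prob_cal_T2} is dropped; the discarded constraint is inactive except on a $z$-range that forces $|a_1|$ or $z$ to be of order $(u_1-v_1)^{-1}$, so the resulting correction is of order $\exp(-c(u_1-v_1)^{-2})$ --- exactly the $(P_1,P_2)\to(P_1',P_2')$ reduction in Lemma~\ref{lem:cal_Tpq}. Then I would split on the size of $u_2$ relative to $\hat\xi:=\frac{\tau}{2\sqrt{\log((u_1-v_1)^{-1})}}$, and within each regime perform the sub-case analysis over the signs and relative magnitudes of $u_1,v_1$ already set up in Lemma~\ref{lem:SGT2hatZ} (so that in particular, when $u_2<\hat\xi$ one has $|u_1|,|v_1|\ge\alpha/2$ and $u_1,v_1$ of the same sign).

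In the regime $u_2\ge\hat\xi$ I would expand the (half-normal) CDF differences $P_3',P_4'$ to first order in the relevant interval width, with second-order remainder that is $o(u_1-v_1)$ after using $u_2\gtrsim\hat\xi$; substitute the leading linear term, interchange the order of integration between $z$ and $|a_1|$ (extending the $z$-range to $\mathbb{R}$ at cost $\calO((u_1-v_1)^2)$), compute the inner integral via $\int_a^b z\,e^{-z^2/2}\,\mathrm{d}z=e^{-a^2/2}-e^{-b^2/2}$, complete the square in $-\tfrac{z^2}{2}-\tfrac{(\tau\pm zu_2)^2}{2v_1^2}$, and collect the resulting Gaussian moments --- the odd/even bookkeeping, and the difference between the two square-completions coming from $P_3'$ and $P_4'$, is what produces the factor $(v_1^2+u_2^2-\tau^2)$ in $h_\eta(v_1,u_2)$. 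In the regime $0\le u_2<\hat\xi$ I would argue, as for $P_2'$ in Lemma~\ref{lem:cal_Tpq}, that one of $P_3',P_4'$ contributes only $\calO((u_1-v_1)^2)$ (its $z$-support is confined to $|z|\gtrsim\tau/u_2$, where $z\,e^{-z^2/2}$ carries a factor $e^{-\tau^2/2u_2^2}\le(u_1-v_1)^2$), then for the dominant term interchange the orders of integration, apply the closed-form $z$-integral, extend to $\mathbb{R}$, and Taylor-expand the resulting function $F(t,u_2)$ at $t=v_1$, where $\partial_t F(\cdot,u_2)=-h_\eta(\cdot,u_2)$ has a uniformly bounded second derivative. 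The degenerate sub-case $u_2=0$ is immediate: then the separation event involves only $\ba^\top\bbeta_1$ while $\ba^\top\bbeta_2\sim\calN(0,1)$ is independent of it, so $\hat Z_i^{\bp,\bq}\sim\calN(0,1)$ has mean $0$, matching $h_\eta(v_1,0)=0$.

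The main obstacle is the same bookkeeping that dominates Lemma~\ref{lem:cal_Tpq}: across the sign configurations of $(u_1,v_1)$ and the two $u_2$-regimes one must track which of the constraints defining $P_3,P_4$ are active and verify that every truncation and Taylor remainder is genuinely $o(\|\bp-\bq\|_2)$ so it can be folded into the $\calO(\|\bp-\bq\|_2)$ term --- and, unlike the $T_1$ calculation, because $\hat Z_i^{\bp,\bq}$ is signed there is essential cancellation between the $z>0$ and $z<0$ parts of the integral, so neither $P_3$ nor $P_4$ may be discarded by symmetry and the leading term $h_\eta(v_1,u_2)$ only emerges after carrying both pieces through the Gaussian integral.
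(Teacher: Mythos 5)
Your overall skeleton is right — use the P.D.F.\ from Lemma~\ref{lem:prob_cal_T2}, reduce $(P_3,P_4)$ to $(P_3',P_4')$ at cost $\calO(e^{-c(u_1-v_1)^{-2}})$, compute a closed form, Taylor in the radial coordinate, and dispose of $u_2=0$ separately (your independence observation there is a cleaner justification than the paper's ``$P_3,P_4$ do not depend on $z$''). But your plan diverges from the paper's on the point that makes the $T_2$ calculation easier than the $T_1$ one, and one branch of your plan as written is internally inconsistent.

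The paper does \emph{not} split on $u_2\lessgtr\hat\xi$ for Lemma~\ref{lem:integral2}. The reason the split was needed in Lemma~\ref{lem:cal_Tpq} is that the factor $z$ in $\int z e^{-z^2/2}P_1'(z)\,\mathrm{d}z$ is the $|a_1|$ coordinate while the constraints in $P_1'$ are on $a_2$, so the variable carrying the closed-form weight and the variable being intervally constrained are \emph{different}, forcing a case analysis. For $T_2$ the situation is reversed: $\hat Z_i^{\bp,\bq}\propto a_2$, so the weight $z$ sits on $a_2$, and for any $u_2>0$ the events $E_3,E_4$ from Lemma~\ref{lem:prob_cal_T2} can be solved \emph{for $z=a_2$} as a single interval. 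Writing the double integral over $(z,w)=(a_2,|a_1|)$, passing to $E_3',E_4'$, and interchanging, the inner $z$-integral is $\int_a^b z e^{-z^2/2}\,\mathrm{d}z=e^{-a^2/2}-e^{-b^2/2}$ in closed form for \emph{every} $u_2>0$; after the $w\mapsto-w$ symmetrization one lands on a single function $F_2(t,u_2)=\sqrt{2/\pi}\,\tfrac{u_2}{\sqrt{u_2^2+t^2}}\exp(-\tfrac{\tau^2}{2(u_2^2+t^2)})$ with $\partial_t F_2=-h_\eta$, and Taylor in $t$ finishes. No $u_2$-regime split, no CDF Taylor expansion. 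Your ``small $u_2$'' branch would in fact reproduce exactly this (and your extension of the $w$-range is fine for $u_2<\hat\xi$), so the extra ``large $u_2$'' branch is not merely redundant — it reflects having missed that the interchange works uniformly.

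Separately, your description of the $u_2\ge\hat\xi$ branch mixes two incompatible strategies. ``Expand the (half-normal) CDF differences $P_3',P_4'$ to first order in the interval width'' eliminates the inner $|a_1|$-integral in favor of a density value evaluated at an endpoint; after that substitution there is no $|a_1|$-integral left to interchange with $z$, and there is no interval in $z$ to which $\int_a^b z e^{-z^2/2}\,\mathrm{d}z$ would apply. Indeed the exponent you would then complete the square in is $-\tfrac{z^2}{2}-\tfrac{(\tau\pm zu_2)^2}{2v_1^2}$ (note the $v_1^2$, coming from the CDF Taylor route), not $-\tfrac{w^2}{2}-\tfrac{(\tau\pm wv_1)^2}{2u_2^2}$ (the interchange route). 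Either route can be made to close, but you cannot do both at once, and as written the step would not compile into a valid calculation. Dropping the regime split and running the interchange argument uniformly removes the issue entirely.
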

\begin{proof}
       By the P.D.F. of $\hat{Z}_i^{\bp,\bq}$ and $P_3,P_4$ given in Lemma \ref{lem:prob_cal_T2} we have 
        \begin{align*} 
             \sfP_{\bp,\bq}|\mathbbm{E}(\hat{Z}_i^{\bp,\bq}) | =\Big|\int_{-\infty}^\infty \frac{z\exp(-\frac{z^2}{2})[P_3+P_4]}{\sqrt{2\pi}} ~\text{d}z \Big|= \Big|\int_0^\infty\int_{-\infty}^\infty \frac{z\exp(-\frac{z^2+w^2}{2})[\mathbbm{1}(E_3)+\mathbbm{1}(E_4)]}{\pi}~\text{d}z\text{d}w\Big|
        \end{align*}  
    where $E_3$ and $E_4$ are disjoint events given by $ E_3  = \big\{wu_1+zu_2>\tau,-\tau<wv_1+zu_2<\tau\big\}  $ and $E_4 =\big\{-\tau<wu_1+zu_2<\tau,wv_1+zu_2<-\tau\big\} $.
We first consider $u_2>0$ and then provide separate treatment to $u_2=0$. 

\subsubsection*{Case 1: $u_2>0$}
In this case we have $E_3     =\big\{\frac{\max\{\tau-wu_1,-\tau-wv_1\}}{u_2}<z<\frac{\tau-wv_1}{u_2}\big\}$ and $E_4 = \big\{\frac{-\tau - wu_1}{u_2}<z<\frac{\min\{\tau-wu_1,-\tau-wv_1\}}{u_2}\big\}$. 
   Now we show that it suffices to prove $$\Big|\int_0^\infty\int_{-\infty}^\infty \frac{z}{\pi}\exp\Big(-\frac{z^2+w^2}{2}\Big)[\mathbbm{1}(E_3')+\mathbbm{1}(E_4')]~\text{d}z\text{d}w\Big|=\|\bp-\bq\|_2|h_\eta(u_1,u_2)+\calO(\|\bp-\bq\|_2)|$$ where $E_3'$ and $E_4'$ are the simpler events defined as 
   $ E_3'   = \big\{\frac{\tau-wu_1}{u_2}<z<\frac{\tau-wv_1}{u_2}\big\}$ and $ E_4'   = \big\{\frac{-\tau-wu_1}{u_2}<z<\frac{-\tau-wv_1}{u_2}\big\}$.  
    To show this claim, we observe that $E_3=E_3'$ and $E_4=E_4'$ when $w<\frac{2\tau}{u_1-v_1}$. Hence, by  Lemma \ref{lem:gaussian_tail} the difference induced by replacing $\mathbbm{1}(E_3)+\mathbbm{1}(E_4)$ with $\mathbbm{1}(E_3')+\mathbbm{1}(E_4')$ is bounded by 
         \begin{align*}
            &\Big|\frac{1}{\pi}\int_0^\infty e^{-\frac{w^2}{2}}\int_{-\infty}^\infty ze^{-\frac{z^2}{2}}[\mathbbm{1}(E_3)+\mathbbm{1}(E_4)-\mathbbm{1}(E_3')-\mathbbm{1}(E_4')]\text{d}z\text{d}w\Big|\\
            &= \Big|\frac{1}{\pi}\int_{\frac{2\tau}{u_1-v_1}}^\infty e^{-\frac{w^2}{2}}\int_{-\infty}^\infty ze^{-\frac{z^2}{2}}[\mathbbm{1}(E_3)+\mathbbm{1}(E_4)-\mathbbm{1}(E_3')-\mathbbm{1}(E_4')]\text{d}z\text{d}w\Big|\\
            &\le \frac{2}{\pi}\int_{\frac{2\tau}{u_1-v_1}}^\infty e^{-\frac{w^2}{2}}\Big(\int_{-\infty}^\infty |ze^{-\frac{z^2}{2}}|\text{d}z\Big)\text{d}w  \\&= \frac{4}{\pi}\int_{\frac{2\tau}{u_1-v_1}}^\infty e^{-\frac{w^2}{2}}\text{d}w \\&\le 2\sqrt{\frac{2}{\pi}}\exp\Big(-\frac{\tau^2}{2(u_1-v_1)^2}\Big),
        \end{align*}
     With $\|\bp-\bq\|_2=u_1-v_1$ being sufficiently small, the difference   can be incorporated into $\calO(\|\bp-\bq\|_2^2)$. 
    Therefore, all that remains is to calculate  $|\int_0^\infty\int_{-\infty}^\infty \frac{z}{\pi}\exp(-\frac{z^2+w^2}{2})[\mathbbm{1}(E_3')+\mathbbm{1}(E_4')]~\text{d}z\text{d}w|$, and we proceed as
        \begin{align*}
       & \Big|\frac{1}{\pi}\int_0^\infty e^{-\frac{w^2}{2}} \int_{-\infty}^\infty ze^{-\frac{z^2}{2}}[\mathbbm{1}(E_3')+\mathbbm{1}(E_4')]~\text{d}z\text{d}w\Big|\\
       &=\Big|\frac{1}{\pi}\int_0^\infty e^{-\frac{w^2}{2}}\Big[\exp\Big(-\frac{(\tau-wu_1)^2}{2u_2^2}\Big)+\exp\Big(-\frac{(\tau+wu_1)^2}{2u_2^2}\Big) \nn \\
       &\quad\quad\quad\quad\quad\quad - \exp\Big(-\frac{(\tau-wv_1)^2}{2u_2^2}\Big)-\exp\Big(-\frac{(\tau+wv_1)^2}{2u_2^2}\Big)\Big]\text{d}w \Big| \\ 
       &=\Big|\frac{1}{\pi}\int_{-\infty}^\infty e^{-\frac{w^2}{2}}\Big[\exp\Big(-\frac{(\tau+wu_1)^2}{2u_2^2}\Big)-\exp\Big(-\frac{(\tau+wv_1)^2}{2u_2^2}\Big)\Big]\text{d}w \Big|\\
       &= |F_2(u_1,u_2)-F_2(v_1,u_2)|, 
    \end{align*}
    where in the last equality we introduce the short hand $F_2(t,u_2)$ and now further calculate it as 
             \begin{align*}
                &F_2(t,u_2) := \frac{1}{\pi}\int_{-\infty}^\infty \exp\Big(-\frac{w^2}{2}-\frac{(\tau+wt)^2}{2u_2^2}\Big)\text{d}w\\
                & = \frac{1}{\pi}\exp\Big(-\frac{\tau^2}{2(u_2^2+t^2)}\Big)\int_{-\infty}^\infty \exp\Big(-\frac{u_2^2+t^2}{2u_2^2}\Big(w+\frac{\tau t}{u_2^2+t^2}\Big)^2\Big)\text{d}w
                \\
                &= \sqrt{\frac{2}{\pi}}\frac{u_2}{\sqrt{u_2^2+t^2}}\exp\Big(-\frac{\tau^2}{2(u_2^2+t^2)}\Big).
            \end{align*}
            Note that we will use $t=u_1,v_1$ in $F_2(t,u_2)$, hence we can consider $F_2(t,u_2)$ under the constraint $t^2+u_2^2\in[\alpha^2,\beta^2]$. By calculating the  derivatives of $F_2(t,u_2)$ we find $  \frac{\partial F_2(t,u_2)}{\partial t}= -h_\eta(t,u_2)$, and also  $\frac{\partial^2F_2(t,u_2)}{\partial t^2}$ is uniformly bounded under the constraint $t^2+u_2^2\in[\alpha^2,\beta^2]$.   Hence,  Taylor's theorem gives $$ |F_2(u_1,u_2)-F_2(v_1,u_2) |=\big|\frac{\partial F_2(u_1,u_2)}{\partial t}(u_1-v_1) +\calO((u_1-v_1)^2)\big|=\|\bp-\bq\|_2 |h_\eta(u_1,u_2)+\calO(\|\bp-\bq\|_2)|,$$  as desired. 

           \subsubsection*{Case 2: $u_2=0$} 
        When  $u_2=0$, we have $\sfP_{\bp,\bq}|\mathbbm{E}(\hat{Z}_i^{\bp,\bq})|=|\int_{-\infty}^\infty \frac{z\exp(-\frac{z^2}{2})[P_3+P_4]}{\sqrt{2\pi}}~\text{d}z|=0$ since 
        $P_3= \mathbbm{P}_{a_1\sim\calN(0,1)}(|a_1|>\tau,~-\tau<|a_1|v_1<\tau)$ and $P_4= \mathbbm{P}_{a_1\sim\calN(0,1)}(-\tau<|a_1|u_1<\tau,~|a_1|v_1<-\tau)$ do not depend on $z$. Note that this case is accommodated by our statement since $h_\eta(u_1,0)=0$. The proof is complete.
\end{proof}

\subsubsection*{Step 1.3: Bounding $T_3^{\bp,\bq}$ over $\calN_{r,\delta_4}^{(2)}$}
Recall that $(\bbeta_1=\frac{\bp-\bq}{\|\bp-\bq\|_2},\bbeta_2)$ are orthonormal and can express $(\bp,\bq)$ as $\bp = u_1\bbeta_1+u_2\bbeta_2$ and $\bq = v_1\bbeta_1+u_2\bbeta_2$, and that we have $\bh_1(\bp,\bq)=\frac{1}{m}\sum_{i\in\bR_{\bp,\bq}}\sign(\ba_i^\top\bbeta_1)\ba_i$ and $\bh_1^\bot(\bp,\bq):= 
        \bh_1(\bp,\bq) - \langle \bh_1(\bp,\bq),\bbeta_1\rangle\bbeta_1 -\langle \bh_1(\bp,\bq),\bbeta_2\rangle\bbeta_2$. Some algebra finds
        \begin{align*}
    \bh_1^\bot(\bp,\bq) = \frac{1}{m}\sum_{i\in \bR_{\bp,\bq}}\sign(\ba_i^\top\bbeta_1)\Big[\ba_i-(\ba_i^\top\bbeta_1)\bbeta_1- (\ba_i^\top\bbeta_2)\bbeta_2\Big]:= \frac{1}{m}\sum_{i\in\bR_{\bp,\bq}} \bG_i^{\bp,\bq}; 
\end{align*}
          our goal is to control $T_3^{\bp,\bq}=\|\calP_{\calC_-}(\bh_1^\bot(\bp,\bq))\|_2$ uniformly over $(\bp,\bq)\in \calN_{r,\delta_4}^{(2)}$. 
Conditioning on $|\bR_{\bp,\bq}|=r_{\bp,\bq}$, the random vector $\bh_1^\bot(\bp,\bq)$ has the same distribution as $\frac{1}{m}\sum_{i=1}^{r_{\bp,\bq}}\tilde{\bZ}^{\bp,\bq}_i,$ where the i.i.d. random vectors $\tilde{\bZ}_1^{\bp,\bq},\cdots, \tilde{\bZ}_{r_{\bp,\bq}}^{\bp,\bq}$  follows the conditional distribution $\bG_{i}^{\bp,\bq}\big|\big\{\sign(|\ba_i^\top\bp|-\tau)\ne \sign(|\ba_i^\top\bq|-\tau)\big\}$.

\paragraph{Distribution of $\tilde{\bZ}_i^{\bp,\bq}$:} There exists an orthogonal matrix $\bP$ whose first two rows are $\bbeta_1$ and $\bbeta_2$ respectively, such that $\bP \bp = u_1\be_1+u_2\be_2$ and $\bP\bq = v_1\be_1+u_2\be_2$.   
We then let $\tilde{\ba}_i=\bP\ba_i$ to arrive at $\bG_{i}^{\bp,\bq} = \bP^\top \sign(\tilde{\ba}_i^\top\be_1)\big[\tilde{\ba}_i -(\tilde{\ba}_i^\top\be_1)\be_1-(\tilde{\ba}_i^\top\be_2)\be_2\big]$ and $$\{\sign(|\ba_i^\top\bp|-\tau)\ne \sign(|\ba_i^\top\bq|-\tau)\} = \{\sign(|\tilde{\ba}_i^\top (u_1\be_1+u_2\be_2)|-\tau)\ne \sign(|\tilde{\ba}_i^\top (v_1\be_1+u_2\be_2)|-\tau)\}.$$
By rotational invariance we have $\tilde{\ba}_i:=(a_i)_{i\in[n]}\sim\calN(0,\bI_n)$, 
then  $\tilde{\bZ}_i^{\bp,\bq}$ follows the conditional distribution  
    \begin{align*}
    &\bP^\top \sign(a_1)(0,0,a_3,\cdots,a_n)^\top\big|\{\sign(|a_1u_1+a_2u_2|-\tau)\ne\sign(|a_1v_1+a_2u_2|-\tau)\}
    \\ 
        &\sim\bP^\top (0,0,a_3,\cdots,a_n)^\top \big|\{\sign(|a_1u_1+a_2u_2|-\tau)\ne\sign(|a_1v_1+a_2u_2|-\tau)\}
        \\&\sim \bP^\top(0,0,a_3,\cdots ,a_n)^\top. 
    \end{align*}  
Letting $\ba_i$ be i.i.d. $\calN(0,\bI_n)$ vectors and $\ba_i^{[3:n]}$ be the vector obtained by setting the first two entries of $\ba_i$ to $0$, we come to 
\begin{align}\label{eq:h1bot_dis}
 \bh_1^\bot(\bp,\bq)\big|\{|\bR_{\bp,\bq}|=r_{\bp,\bq}\}\sim \frac{1}{m}\sum_{i=1}^{r_{\bp,\bq}}\bP^\top\ba_i^{[3:n]} \sim \bP^\top\begin{bmatrix}
        0 \\
        0 \\
        \calN(0,\frac{r_{\bp,\bq}}{m^2}\bI_{n-2})
    \end{bmatrix}. 
\end{align}
\paragraph{Concentration Bound on $T_3^{\bp,\bq}$:} We have the following.  
%Equipped with (\ref{eq:h1bot_dis}), we are now able to establish a conditional bound on $T_3^{\bp,\bq}=\|\bh_1^\bot(\bp,\bq)\|_2$. 
\begin{lem}
    [Bounding $T_3^{\bp,\bq}$ over $\calN_{r,\delta_4}^{(2)}$] \label{lem:T3_final} Suppose $mr\ge C_0 \scrH(\calC_{\alpha,\beta},r)$ for some large enough $C_0$, then for some  constant $C$ the event $$
          T_3^{\bp,\bq}\le C_1\sqrt{\frac{\|\bp-\bq\|_2[\omega^2(\calC_{(1)})+\scrH(\calC_{\alpha,\beta},r)]}{m}},~~\forall \,(\bp,\bq)\in \calN_{r,\delta_4}^{(2)}$$ 
    holds   with probability at least $1-4\exp(-2\scrH(\calC_{\alpha,\beta},r))$. 
\end{lem}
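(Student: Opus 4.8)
The plan is to exploit the explicit conditional law (\ref{eq:h1bot_dis}): conditionally on $\{|\bR_{\bp,\bq}|=r_{\bp,\bq}\}$, the vector $\bh_1^\bot(\bp,\bq)$ is \emph{exactly} a rescaled Gaussian supported on the $(n-2)$-dimensional subspace orthogonal to $\spn\{\bbeta_1,\bbeta_2\}$; concretely $\bh_1^\bot(\bp,\bq)\sim\frac{\sqrt{r_{\bp,\bq}}}{m}\,\bP^\top\bg^{[3:n]}$ with $\bg\sim\calN(0,\bI_n)$, $\bP$ orthogonal, and $\bg^{[3:n]}$ obtained from $\bg$ by zeroing its first two coordinates. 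So $T_3^{\bp,\bq}$ is, conditionally, the value of a Gaussian process, and no messy density calculation is needed (unlike the $T_1^{\bp,\bq}$ and $T_2^{\bp,\bq}$ analyses). First I would reduce the projection onto the secant set to a linear functional: since $\calC$ is a cone, $\calC_-=\calC-\calC$ is closed under nonnegative scaling, so first-order optimality of $\calP_{\calC_-}$ (Lemma \ref{lem:pro_closec}) gives $\langle\calP_{\calC_-}(\bz),\bz-\calP_{\calC_-}(\bz)\rangle=0$, whence $\|\calP_{\calC_-}(\bz)\|_2\le\sup_{\bw\in\calC_{(1)}}\langle\bz,\bw\rangle$ with $\calC_{(1)}=(\calC-\calC)\cap\mathbb{B}_2^n$. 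Thus it suffices to bound $\sup_{\bw\in\calC_{(1)}}\langle\bh_1^\bot(\bp,\bq),\bw\rangle$.

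Next, working conditionally on $\{|\bR_{\bp,\bq}|=r_{\bp,\bq}\}$, a one-line Jensen argument bounds the mean: inserting an independent copy $\bg^{[1:2]}$ of the removed coordinates inside the supremum and using that the Gaussian width is invariant under the orthogonal map $\bP$ gives
\begin{align*}
\mathbbm{E}\Big[\sup_{\bw\in\calC_{(1)}}\langle\bh_1^\bot(\bp,\bq),\bw\rangle\,\Big|\,|\bR_{\bp,\bq}|=r_{\bp,\bq}\Big]\le\frac{\sqrt{r_{\bp,\bq}}}{m}\,\mathbbm{E}_{\bg}\sup_{\bw\in\calC_{(1)}}\langle\bg,\bP\bw\rangle\le\frac{\sqrt{r_{\bp,\bq}}}{m}\,\omega(\calC_{(1)}).
\end{align*}
Since $\bg^{[3:n]}\mapsto\sup_{\bw\in\calC_{(1)}}\langle\frac{\sqrt{r_{\bp,\bq}}}{m}\bP^\top\bg^{[3:n]},\bw\rangle$ is $\frac{\sqrt{r_{\bp,\bq}}}{m}\,\rad(\calC_{(1)})\le\frac{\sqrt{r_{\bp,\bq}}}{m}$-Lipschitz, Gaussian Lipschitz concentration yields, for every $t\ge0$,
\begin{align*}
\mathbbm{P}\Big(\sup_{\bw\in\calC_{(1)}}\langle\bh_1^\bot(\bp,\bq),\bw\rangle\ge\frac{\sqrt{r_{\bp,\bq}}}{m}\big(\omega(\calC_{(1)})+t\big)\,\Big|\,|\bR_{\bp,\bq}|=r_{\bp,\bq}\Big)\le\exp\Big(-\tfrac{t^2}{2}\Big).
\end{align*}
Choosing $t=C_1\sqrt{\scrH(\calC_{\alpha,\beta},r)}$, this conditional bound is independent of $r_{\bp,\bq}$, hence holds unconditionally; a union bound over $(\bp,\bq)\in\calN_{r,\delta_4}^{(2)}$, whose cardinality is at most $|\calN_r|^2=\exp(2\scrH(\calC_{\alpha,\beta},r))$, shows that with probability at least $1-2\exp(-2\scrH(\calC_{\alpha,\beta},r))$ one has $T_3^{\bp,\bq}\le\frac{\sqrt{|\bR_{\bp,\bq}|}}{m}\big(\omega(\calC_{(1)})+C_1\sqrt{\scrH(\calC_{\alpha,\beta},r)}\big)$ for all $(\bp,\bq)\in\calN_{r,\delta_4}^{(2)}$.

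Finally I would remove the dependence on $|\bR_{\bp,\bq}|$ exactly as in the proof of Lemma \ref{lem:final_T11}: since $|\bR_{\bp,\bq}|\sim\mathrm{Bin}(m,\sfP_{\bp,\bq})$ with $\sfP_{\bp,\bq}\asymp\|\bp-\bq\|_2\ge r$ on $\calN_{r,\delta_4}^{(2)}$ (Lemma \ref{lem:Puv}) and $mr\gtrsim\scrH(\calC_{\alpha,\beta},r)$, a weakened Chernoff bound (Lemma \ref{lem:chernoff}) together with a union bound give $|\bR_{\bp,\bq}|\le2m\sfP_{\bp,\bq}\lesssim m\|\bp-\bq\|_2$ simultaneously over $\calN_{r,\delta_4}^{(2)}$ with probability at least $1-2\exp(-2\scrH(\calC_{\alpha,\beta},r))$. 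Intersecting the two high-probability events and using $\sqrt{a}+\sqrt{b}\le\sqrt{2(a+b)}$ yields $T_3^{\bp,\bq}\lesssim\sqrt{\tfrac{\|\bp-\bq\|_2[\omega^2(\calC_{(1)})+\scrH(\calC_{\alpha,\beta},r)]}{m}}$ for all $(\bp,\bq)\in\calN_{r,\delta_4}^{(2)}$ with probability at least $1-4\exp(-2\scrH(\calC_{\alpha,\beta},r))$, which is the claim. I expect the only delicate points to be the reduction $\mathbbm{E}\sup_{\bw\in\calC_{(1)}}\langle\bP^\top\bg^{[3:n]},\bw\rangle\le\omega(\calC_{(1)})$ — i.e.\ that deleting two coordinates and conjugating by an orthogonal matrix cannot enlarge the Gaussian width — and the bookkeeping that lets the Gaussian-concentration tail be made uniform in the conditioning variable $r_{\bp,\bq}$ before the union bound; once these are settled the remaining steps are routine and parallel the $T_1^{\bp,\bq}$, $T_2^{\bp,\bq}$ arguments.
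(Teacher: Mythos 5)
Your proposal is correct and follows the paper's argument step for step: both exploit the exact conditional Gaussian law of $\bh_1^\bot(\bp,\bq)$ from (\ref{eq:h1bot_dis}), reduce $T_3^{\bp,\bq}$ to $\sup_{\bw\in\calC_{(1)}}\langle\bh_1^\bot(\bp,\bq),\bw\rangle$ via Lemma \ref{lem:pro_closec}, concentrate that supremum, and then remove the conditioning on $|\bR_{\bp,\bq}|$ via the same Chernoff-plus-union-bound bookkeeping as in Lemma \ref{lem:final_T11}. The only difference is cosmetic: the paper invokes the sub-Gaussian process bound of Lemma \ref{lem:tala} in one shot, while you split it into a Jensen bound on the conditional mean (deleting two coordinates and conjugating by the orthogonal $\bP$ cannot enlarge the Gaussian width, as you correctly note) plus Borell--TIS Lipschitz concentration --- an equally valid route since the conditional law here is exactly Gaussian.
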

\begin{proof}
    For some orthogonal matrix $\bP$, the conditional distribution   $\bh_1^\bot(\bp,\bq)|\{|\bR_{\bp,\bq}|=r_{\bp,\bq}\}$ is identical to  $\frac{\sqrt{r_{\bp,\bq}}}{m}\bP^\top \ba'$ with $\ba'\sim(0,0,\calN(0,\bI_{n-2}))^\top$ in (\ref{eq:h1bot_dis}). By Lemma \ref{lem:pro_closec} we can write $$T_3^{\bp,\bq}=\|\calP_{\calC_-}(\bh_1^\bot(\bp,\bq))\|_2=\sup_{\bw\in \calC_-\cap \mathbb{B}_2^n}\bw^\top \bh_1^\bot(\bp,\bq)=\sup_{\bw\in\calC_{(1)}}\bw^\top\bh_1^\bot(\bp,\bq).$$ Thus, when conditioning on $\{|\bR_{\bp,\bq}|=r_{\bp,\bq}\}$ we have $$\|\calP_{\calC_-}(\bh_1^\bot(\bp,\bq))\|_2 \sim \frac{\sqrt{r_{\bp,\bq}}}{m}\sup_{\bw\in\calC_{(1)}}\bw^\top \bP^\top\ba' = \frac{\sqrt{r_{\bp,\bq}}}{m}\sup_{\bw\in \bP\calC_{(1)}}\bw^\top \ba'. $$
For any $\bw_1,\bw_2 \in \bP\calC_{(1)}$ we have $\|\bw_1^\top\ba'-\bw_2^\top\ba'\|_{\psi_2} \le \|\ba'\|_{\psi_2}\|\bw_1-\bw_2\|_2\le C \|\bw_1-\bw_2\|_2$ for some absolute constant $C.$
  Therefore, for any $t\ge 0$, Lemma \ref{lem:tala} yields    $$\mathbbm{P}\Big(\sup_{\bw\in\bP\calC_{(1)}}\bw^\top\ba' \le C_1 [\omega(\calC_{(1)})+t]|\{|\bR_{\bp,\bq}|=r_{\bp,\bq}\}\Big)\ge 1-2\exp(-t^2).$$ Taken collectively,  we set $t=2\sqrt{\scrH(\calC_{\alpha,\beta},r)}$ to arrive at 
  \begin{align*}
      \mathbbm{P}\Big(T_3^{\bp,\bq}\le \frac{C_1\sqrt{r_{\bp,\bq}}\cdot\big[\omega(\calC_{(1)})+2\sqrt{\scrH(\calC_{\alpha,\beta},r)}\big]}{m}\Big|\{|\bR_{\bp,\bq}|=r_{\bp,\bq}\}\Big)\ge 1-2\exp(-4\scrH(\calC_{\alpha,\beta},r)).
  \end{align*}
  The behaviour of $|\bR_{\bp,\bq}|\sim \text{Bin}(m,\sfP_{\bp,\bq})$ has been done in the proof of Lemma \ref{lem:final_T11}. Particularly, if $mr\ge C_0 \scrH(\calC_{\alpha,\beta},r)$ with sufficiently large $C_0$, then     $$||\bR_{\bp,\bq}|-m\sfP_{\bp,\bq}|\le\sqrt{12m\sfP_{\bp,\bq}\scrH(\calC_{\alpha,\beta},r)}\quad\text{and}\quad|\bR_{\bp,\bq}|\le2m\sfP_{\bp,\bq}$$ hold with probability at least $1-2\exp(-4\scrH(\calC_{\alpha,\beta},r))$. Combining the conditional bound on $T_3^{\bp,\bq}$ and the behaviour of $|\bR_{\bp,\bq}|$, along with $\sfP_{\bp,\bq}\lesssim \|\bp-\bq\|_2$, we obtain   the unconditional bound $$T_3^{\bp,\bq} \le C_2\sqrt{\frac{\|\bp-\bq\|_2[\omega^2(\calC_{(1)})+\scrH(\calC_{\alpha,\beta},r)]}{m}}$$ which holds with probability at least $1-4\exp(-4\scrH(\calC_{\alpha,\beta},r))$. Taking a union bound over $(\bp,\bq)\in\calN_{r,\delta_4}^{(2)}$ completes the proof. 
\end{proof}

 \subsubsection*{Step 1.4: Bounding $\|\calP_{\calC_-}(\bh_2(\bp,\bq))\|_2$ over $\calN_{r,\delta_4}^{(2)}$}
We seek to control $\sup_{(\bp,\bq)\in\calN_{r,\delta_4}^{(2)}}\|\calP_{\calC_-}(\bh_2(\bp,\bq))\|_2$ where $  \bh_2(\bp,\bq):=\frac{1}{m}\sum_{i\in \bR_{\bp,\bq}\cap \bL_{\bp,\bq}} \big[\sign(\ba_i^\top(\bp+\bq))-\sign(\ba_i^\top(\bp-\bq))\big]\ba_i$ with the index sets being given by $  \bR_{\bp,\bq}: = \big\{i\in[m]:\sign(|\ba_i^\top\bp|-\tau)\neq \sign(|\ba_i^\top\bq|-\tau)\big\}$ and $ \bL_{\bp,\bq}: = \big\{i\in[m]:\sign(\ba_i^\top\bp)\ne\sign(\ba_i^\top\bq)\big\}.$
As we shall see, $\|\bh_2(\bp,\bq)\|_2$ is a negligible higher-order term compared to $\|\bp-\bq-\eta\cdot \bh_1(\bp,\bq)\|_2$.  
The key intuition is that $|\bR_{\bp,\bq}\cap\bL_{\bp,\bq}|$ is essentially smaller than $|\bR_{\bp,\bq}|$. More precisely, note that $(\bp,\bq)\in \calN_{r,\delta_4}^{(2)}$ has small $\|\bp-\bq\|_2$; compared to $\mathbbm{P}(i\in\bR_{\bp,\bq}) =\sfP_{\bp,\bq}$ that is order-wise equivalent to $\|\bp-\bq\|_2$ (Lemma \ref{lem:Puv}), the ``double separation probability'' $\mathbbm{P}(i\in \bR_{\bp,\bq}\cap\bL_{\bp,\bq})$ decays with $\|\bp-\bq\|_2$ in a much faster exponential way (Lemma \ref{lem:double_sepa_prob}).  

\paragraph{Probability of Double Separation:}
 For $\bu,\bv\in \mathbb{R}^n$ and $\ba\sim\calN(0,\bI_n)$, the probability of $\bu$ and $\bv$ being separated by both the hyperplane $\calH_{\ba,0}$ and the phaseless hyperplane $\calH_{|\ba|}$ can be formulated as 
    \begin{align*}
        \sfP^{(2)}_{\bu,\bv} = \mathbbm{P}\begin{pmatrix}
            \sign(|\ba^\top\bu|-\tau)\ne \sign(|\ba^\top\bv|-\tau)\\
            \sign(\ba^\top\bu)\ne \sign(\ba^\top\bv)
        \end{pmatrix}, 
    \end{align*}
    referred to as double separation probability henceforth. 

\begin{lem} \label{lem:double_sepa_prob}  For any $\bu,\bv\in\mathbb{R}^n$ we have $ \sfP_{\bu,\bv}^{(2)}\le 4\exp\big(-\frac{\tau^2}{2\|\bu-\bv\|_2^2}\big).$ 
\end{lem}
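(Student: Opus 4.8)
\textbf{Proof proposal for Lemma \ref{lem:double_sepa_prob}.}

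The plan is to bound the double separation probability directly, exploiting the fact that double separation forces both $|\ba^\top\bu|$ and $|\ba^\top\bv|$ to land on opposite sides of the threshold $\tau$ \emph{and} $\ba^\top\bu$, $\ba^\top\bv$ to have opposite signs. The crucial observation is that, since $\sign(\ba^\top\bu)\ne\sign(\ba^\top\bv)$, the point $0$ lies between $\ba^\top\bu$ and $\ba^\top\bv$ on the real line; hence $\min\{|\ba^\top\bu|,|\ba^\top\bv|\}\le|\ba^\top\bu-\ba^\top\bv|=|\ba^\top(\bu-\bv)|$. On the other hand, since $\calH_{|\ba|}$ separates $\bu$ and $\bv$, one of $|\ba^\top\bu|,|\ba^\top\bv|$ is $\ge\tau$ while the other is $<\tau$; in particular the larger of the two is $\ge\tau$. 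First I would combine these: on the double-separation event, the larger of $|\ba^\top\bu|,|\ba^\top\bv|$ is at least $\tau$, but it differs from the smaller one by at most $|\ba^\top(\bu-\bv)|$, so in fact $|\ba^\top(\bu-\bv)|\ge\tau-\min\{|\ba^\top\bu|,|\ba^\top\bv|\}$. That alone is not quite enough, so instead I would argue more carefully: the smaller magnitude is $\le|\ba^\top(\bu-\bv)|$, and the larger is $\ge\tau$; thus $|\ba^\top(\bu-\bv)|=|\text{larger}|\pm|\text{smaller}|$ (sign depending on orientation) — but since $0$ separates them, $|\ba^\top(\bu-\bv)|=|\text{larger}|+|\text{smaller}|\ge\tau$. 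This gives the clean containment: the double-separation event is contained in $\{|\ba^\top(\bu-\bv)|\ge\tau\}$.

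Wait — that would give a bound $2\exp(-\tau^2/(2\|\bu-\bv\|_2^2))$, already essentially the claim. Let me reconsider the constant. With $\ba\sim\calN(0,\bI_n)$, $\ba^\top(\bu-\bv)\sim\calN(0,\|\bu-\bv\|_2^2)$, so $\mathbbm{P}(|\ba^\top(\bu-\bv)|\ge\tau)=2\mathbbm{P}(\calN(0,1)\ge\tau/\|\bu-\bv\|_2)$, and by the standard Gaussian tail bound (Lemma \ref{lem:gaussian_tail} in the excerpt) this is at most $\exp(-\tau^2/(2\|\bu-\bv\|_2^2))$, or at worst $2\exp(-\tau^2/(2\|\bu-\bv\|_2^2))$ depending on which form of the tail bound one invokes. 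Either way the stated bound $4\exp(-\tau^2/(2\|\bu-\bv\|_2^2))$ follows with room to spare. So the whole proof reduces to: (i) establish the set inclusion $\{\text{double separation}\}\subseteq\{|\ba^\top(\bu-\bv)|\ge\tau\}$ via the elementary real-line observation above; (ii) apply the Gaussian tail bound to $\ba^\top(\bu-\bv)$.

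The main (and really only) obstacle is getting step (i) airtight: one must verify that opposite signs of $\ba^\top\bu,\ba^\top\bv$ together with opposite sides of the threshold for $|\ba^\top\bu|,|\ba^\top\bv|$ genuinely force $|\ba^\top\bu-\ba^\top\bv|\ge\tau$. The argument is: write $s=\ba^\top\bu$, $t=\ba^\top\bv$ with $s,t$ of opposite sign, so WLOG $s\ge0\ge t$ and $|s-t|=s+|t|$. Phaseless separation says exactly one of $s\ge\tau$, $|t|\ge\tau$ holds (the other being $<\tau$); in either case $\max\{s,|t|\}\ge\tau$, hence $s+|t|\ge\max\{s,|t|\}\ge\tau$. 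This handles the degenerate boundary cases ($s=0$ or $t=0$) harmlessly since they have probability zero anyway. I would present this as a short displayed chain of inequalities and then immediately cite the Gaussian tail bound, keeping the whole proof to a few lines.
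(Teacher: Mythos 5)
Your proof is correct, and it is a cleaner version of the same underlying idea. Both your argument and the paper's reduce to the Gaussian tail of $\ba^\top(\bu-\bv)\sim\calN(0,\|\bu-\bv\|_2^2)$, but you get there by establishing directly the set inclusion $\{\text{double separation}\}\subseteq\{|\ba^\top(\bu-\bv)|\ge\tau\}$: since $\ba^\top\bu$ and $\ba^\top\bv$ have opposite signs, the magnitudes add across zero, and since at least one magnitude exceeds $\tau$ the gap exceeds $\tau$. The paper instead applies the law of total probability to split $\sfP^{(2)}_{\bu,\bv}$ into four disjoint sub-events, relaxes each to a two-sided constraint, and then observes that each of the four relaxed events sits inside $\{|\ba^\top(\bu-\bv)|\ge\tau\}$; this wastes a factor of $4$ by bounding a sum of disjoint probabilities termwise. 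Your route gives the sharper $\sfP^{(2)}_{\bu,\bv}\le\mathbbm{P}(|\ba^\top(\bu-\bv)|\ge\tau)\le\exp(-\tau^2/(2\|\bu-\bv\|_2^2))$, which of course implies the stated bound with the factor $4$.

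One small point worth phrasing carefully when you write this up: phaseless separation guarantees at least one of $|\ba^\top\bu|,|\ba^\top\bv|$ is $\ge\tau$ (you wrote ``exactly one,'' which is also true up to the null boundary case, but ``at least one'' is all you use and avoids fussing over $=\tau$). With $s=\ba^\top\bu\ge 0\ge t=\ba^\top\bv$ you then have $|\ba^\top(\bu-\bv)|=s+|t|\ge\max\{s,|t|\}\ge\tau$, exactly as you say. The Gaussian tail step is the same as the paper's, via Lemma \ref{lem:gaussian_tail}.
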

\begin{proof}
     We proceed with the following estimations: 
    \begin{subequations}
        \begin{align}\nn
              &\sfP^{(2)}_{\bu,\bv} = \mathbbm{P}\begin{pmatrix}
            \sign(|\ba^\top\bu|-\tau)\ne \sign(|\ba^\top\bv|-\tau)\\
            \sign(\ba^\top\bu)\ne \sign(\ba^\top\bv)
        \end{pmatrix} \\
        &= \mathbbm{P}\begin{pmatrix}
            \sign(\ba^\top\bu-\tau)\ne \sign(-\ba^\top\bv-\tau)\\
           \ba^\top\bu \ge 0,~\ba^\top\bv <0
        \end{pmatrix}  +\mathbbm{P}\begin{pmatrix}
            \sign(-\ba^\top\bu-\tau)\ne \sign(\ba^\top\bv-\tau)\\
           \ba^\top\bu < 0,~\ba^\top\bv \ge0
        \end{pmatrix}\label{eq:lawtotal1} \\
        & = \mathbbm{P}\begin{pmatrix}
            \ba^\top\bu\ge \tau,~\ba^\top\bv>-\tau\\
           \ba^\top\bu \ge 0,~\ba^\top\bv <0
        \end{pmatrix}+\mathbbm{P}\begin{pmatrix}
           \ba^\top\bu< \tau,~\ba^\top\bv\le -\tau\\
           \ba^\top\bu \ge 0,~\ba^\top\bv <0
        \end{pmatrix} \nn
        \\\label{eq:lawtotal2}&\quad\quad+\mathbbm{P}\begin{pmatrix}
          \ba^\top\bu\le -\tau,~\ba^\top\bv<\tau\\
           \ba^\top\bu < 0,~\ba^\top\bv \ge0
        \end{pmatrix} +\mathbbm{P}\begin{pmatrix}
            \ba^\top\bu> -\tau,~\ba^\top\bv\ge\tau\\
           \ba^\top\bu < 0,~\ba^\top\bv \ge0
        \end{pmatrix} \\
        &\le  \mathbbm{P}\begin{pmatrix}
            \ba^\top\bu\ge\tau\\
            \ba^\top\bv<0
        \end{pmatrix}+\mathbbm{P}\begin{pmatrix}
            \ba^\top\bu\ge0\\
            \ba^\top\bv\le -\tau
        \end{pmatrix}+\mathbbm{P}\begin{pmatrix}
            \ba^\top\bu\le -\tau\\
            \ba^\top\bv\ge 0
        \end{pmatrix}+\mathbbm{P}\begin{pmatrix}
            \ba^\top\bu<0\\
            \ba^\top\bv\ge \tau
        \end{pmatrix}\\
        &\le 4 \mathbbm{P}(|\ba^\top(\bu-\bv)|\ge \tau)  \le 8 \mathbbm{P}\Big(\calN(0,1) \ge \frac{\tau}{\|\bu-\bv\|_2}\Big) \le 4\exp\Big(-\frac{\tau^2}{2\|\bu-\bv\|_2^2}\Big), \label{eq:usegautail} 
         \end{align} 
    \end{subequations}
    where (\ref{eq:lawtotal1}) and (\ref{eq:lawtotal2}) are due to the law of total probability, and in (\ref{eq:usegautail}) we use the Gaussian tail bound in Lemma \ref{lem:gaussian_tail}. The proof is complete.  
\end{proof}

\paragraph{Bounding $\|\calP_{\calC_-}(\bh_2(\bp,\bq))\|_2$:} We shall show that a logarithmically small $\delta_4$ ensures $\sfP_{\bp,\bq}^{(2)}\ll \sfP_{\bp,\bq}$ for $(\bp,\bq)\in\calN_{r,\delta_4}^{(2)}$ and consequently allows for a tight enough bound on $\|\calP_{\calC_-}(\bh_2(\bp,\bq))\|_2$.

%Inspired by Claim \ref{claim:aic_low}, we already specified $\xi\asymp (\log(\frac{m}{n}))^{-1/2}$ when formulating the desired PLL-AIC in (\ref{eq:desired_aic_low}). Nonetheless, it is not immediately obvious why we use such logarithmically small $\xi$, for instance, one may ask whether    $\xi$  smaller than some  constant (adopted in \cite{candes2015phase,zhang2017nonconvex}) would be sufficient here. In fact, such logarithmically small $\xi$ is enlightened by our attempts to control $\|\bh_2(\bp,\bq)\|_2$, and it can not be relaxed to $\xi\asymp 1$. To elucidate this point,  we  first establish the desired (\ref{eq:desired_h2bound}) under an additional (unjustified) assumption (\ref{eq:capscaling}). 

\begin{lem}\label{lem:final_h2}Suppose $mr\ge C [\scrH(\calC_{\alpha,\beta},r)+\omega^2(\calC_{(1)})]$ and $\delta_4 \le \frac{c}{\log^{1/2}(r^{-1})}$ hold for some large enough $C$ and small enough $c$, then the event 
$$\|\calP_{\calC_-}(\bh_2(\bp,\bq))\|_2 \le C_1r,\qquad\forall\,  (\bp,\bq)\in\calN_{r,\delta_4}^{(2)}$$ holds with probability at least $1-3\exp(-cmr)$.   
\end{lem}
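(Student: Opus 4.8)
The plan is to exploit that $\bh_2(\bp,\bq)=\frac1m\sum_{i\in\bR_{\bp,\bq}\cap\bL_{\bp,\bq}}\big[\sign(\ba_i^\top(\bp+\bq))-\sign(\ba_i^\top(\bp-\bq))\big]\ba_i$ has only $|\bR_{\bp,\bq}\cap\bL_{\bp,\bq}|$ nonzero terms, each carrying a scalar prefactor in $\{-2,0,2\}$. Hence, writing $N$ for a uniform (over $(\bp,\bq)\in\calN_{r,\delta_4}^{(2)}$) upper bound on $|\bR_{\bp,\bq}\cap\bL_{\bp,\bq}|$, Lemma~\ref{lem:pro_closec} gives
\[
\|\calP_{\calC_-}(\bh_2(\bp,\bq))\|_2\le\frac2m\sup_{\bw\in\calC_{(1)}}\sum_{i\in\bR_{\bp,\bq}\cap\bL_{\bp,\bq}}|\ba_i^\top\bw|\le\frac{2N}{m}\sup_{\bw\in\calC_{(1)}}\max_{|I|\le N}\Big(\frac1N\sum_{i\in I}|\ba_i^\top\bw|^2\Big)^{1/2},
\]
so everything reduces to (i) producing a good $N$ and (ii) applying Lemma~\ref{lem:max_ell_sum} to the last quantity. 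I will take $N=\ell^{\ast}:=\lceil c_0\,mr/\log(r^{-1})\rceil$ for a suitably small constant $c_0$.

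Step (i). A double separation of $\bp,\bq$ forces $|\ba_i^\top(\bp-\bq)|>\tau$: e.g. if $\ba_i^\top\bp\ge\tau$, $\ba_i^\top\bq<0$ and $|\ba_i^\top\bq|<\tau$, then $\ba_i^\top(\bp-\bq)>\tau$, and the remaining cases are symmetric. Thus $\bR_{\bp,\bq}\cap\bL_{\bp,\bq}\subseteq\{i\in[m]:|\ba_i^\top(\bp-\bq)|>\tau\}$, and since $\bp-\bq\in\calC_-$ (as $\calC$ is a cone) with $\|\bp-\bq\|_2\le 2\delta_4$, we get $|\bR_{\bp,\bq}\cap\bL_{\bp,\bq}|\le\sup_{\bw\in\calC_{(2\delta_4)}}|\{i:|\ba_i^\top\bw|>\tau\}|$ with $\calC_{(2\delta_4)}=2\delta_4\calC_{(1)}$, so $\omega(\calC_{(2\delta_4)})=2\delta_4\omega(\calC_{(1)})$ and $\rad(\calC_{(2\delta_4)})\le 2\delta_4$. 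Applying Lemma~\ref{lem:max_ell_sum} to $\calW=\calC_{(2\delta_4)}$ with parameter $\ell^{\ast}$, the $\ell^{\ast}$-th largest entry of $(|\ba_i^\top\bw|)_{i\in[m]}$ is, uniformly over $\bw\in\calC_{(2\delta_4)}$, at most $C\big(\tfrac{2\delta_4\omega(\calC_{(1)})}{\sqrt{\ell^{\ast}}}+2\delta_4\sqrt{\log(em/\ell^{\ast})}\big)$; using $\delta_4\le c\log^{-1/2}(r^{-1})$, $mr\gtrsim\omega^2(\calC_{(1)})$ and the choice of $\ell^{\ast}$, this is $\le\tau$ once $c$ (hence the implied constants) is small enough, whence $|\bR_{\bp,\bq}\cap\bL_{\bp,\bq}|<\ell^{\ast}$ uniformly over $\calN_{r,\delta_4}^{(2)}$ on the corresponding event. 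The point is that this costs no union bound over $\calN_{r,\delta_4}^{(2)}$, because Lemma~\ref{lem:max_ell_sum} already supplies uniformity over the continuous set $\calC_{(2\delta_4)}$; this is exactly where the logarithmic smallness of $\delta_4$ enters — it shrinks the radius $2\delta_4$ just enough that the $\rad\cdot\sqrt{\log(em/\ell)}$ term survives being pushed down to $\ell=\ell^{\ast}=\Theta(mr/\log(r^{-1}))$, which is the geometric incarnation of the ``double separation probability is exponentially smaller than the separation probability'' heuristic (cf. Lemma~\ref{lem:double_sepa_prob} and the remark after Theorem~\ref{thm:pgd_low}).

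Step (ii) and probability. On that event $N=\ell^{\ast}$ is admissible in the display, and a second application of Lemma~\ref{lem:max_ell_sum}, now with $\calW=\calC_{(1)}$ (radius $\le 1$) and parameter $\ell^{\ast}$, yields $\|\calP_{\calC_-}(\bh_2(\bp,\bq))\|_2\le\frac{2C}{m}\big(\sqrt{\ell^{\ast}}\,\omega(\calC_{(1)})+\ell^{\ast}\sqrt{\log(em/\ell^{\ast})}\big)$. Substituting $\ell^{\ast}\asymp mr/\log(r^{-1})$ and $\omega^2(\calC_{(1)})\le mr/C$ (from the sample-size hypothesis), the first term is $\lesssim r/\sqrt{\log(r^{-1})}$ and the second is $\lesssim \frac{r}{\log(r^{-1})}\sqrt{\log(r^{-1})}=r/\sqrt{\log(r^{-1})}$, both $\le C_1 r$ for small $r$. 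Since $\ell^{\ast}\log(em/\ell^{\ast})\gtrsim mr$, each application of Lemma~\ref{lem:max_ell_sum} fails with probability $\le\exp(-cmr)$; together with the deterministic inclusion for double separations (and, if one wants a direct confirmation that $N\lesssim mr$, a Chernoff bound over $\calN_{r,\delta_4}^{(2)}$ fed by Lemma~\ref{lem:double_sepa_prob} as the third event), this gives the stated $1-3\exp(-cmr)$.

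The main obstacle is the quantitative balancing in Step~(i): choosing $c$ — equivalently, how fast $\delta_4\to0$ relative to $r$ — small enough that $2\delta_4\omega(\calC_{(1)})/\sqrt{\ell^{\ast}}$ and $2\delta_4\sqrt{\log(em/\ell^{\ast})}$ are simultaneously $\le\tau/(4C)$ at $\ell^{\ast}=\Theta(mr/\log(r^{-1}))$, while keeping $\ell^{\ast}$ small enough that Step~(ii) produces an $O(r)$ bound rather than $O(r\sqrt{\log(r^{-1})})$; this forces a careful ordering of the constants ($c_0$ small, then $c$ small relative to $c_0$, then the sample-size constant $C$ large). Everything else — the set inclusion characterizing double separations, the crude $|c_i|\le2$ reduction, and the two invocations of Lemma~\ref{lem:max_ell_sum} — is routine.
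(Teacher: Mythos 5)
Your proof is correct, but your Step (i) takes a genuinely different route from the paper's. The paper bounds $|\bR_{\bp,\bq}\cap\bL_{\bp,\bq}|$ probabilistically: it observes $|\bR_{\bp,\bq}\cap\bL_{\bp,\bq}|\sim\mathrm{Bin}(m,\sfP_{\bp,\bq}^{(2)})$, invokes Lemma~\ref{lem:double_sepa_prob} to get $\sfP_{\bp,\bq}^{(2)}\le 4\exp(-\tau^2/(8\delta_4^2))\le r^{C_2}$, applies the sharp Chernoff bound, and then takes a union bound over the finite set $\calN_{r,\delta_4}^{(2)}$ (absorbed by $mr\gtrsim\scrH(\calC_{\alpha,\beta},r)$). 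You instead replace the entire probabilistic step with the \emph{deterministic} inclusion $\bR_{\bp,\bq}\cap\bL_{\bp,\bq}\subseteq\{i:|\ba_i^\top(\bp-\bq)|>\tau\}$ (which one can check exhaustively: WLOG $\ba_i^\top\bp>0>\ba_i^\top\bq$, and then either $\ba_i^\top\bp\ge\tau$ or $\ba_i^\top\bq\le-\tau$, so $\ba_i^\top(\bp-\bq)>\tau$), and then obtain a uniform count of exceedances over the \emph{continuous} set $\calC_{(2\delta_4)}$ by a single application of Lemma~\ref{lem:max_ell_sum}, so no union bound over $\calN_{r,\delta_4}^{(2)}$ is needed. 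This makes Lemma~\ref{lem:double_sepa_prob} entirely dispensable for the present lemma, which is a nice structural simplification; the logarithmic smallness of $\delta_4$ enters via the balance $2\delta_4\sqrt{\log(em/\ell^\ast)}\lesssim c$ rather than via $r^{C_2}$. What the paper's route buys is a cleanly separated probabilistic lemma (Lemma~\ref{lem:double_sepa_prob}) reused conceptually elsewhere and a slightly stronger probability $1-\exp(-cmr\sqrt{\log(r^{-1})})$ for the size bound; what yours buys is one fewer independent tool and no combinatorial union bound. Both then close Step~(ii) identically via the same second application of Lemma~\ref{lem:max_ell_sum}, and both probability tails are $\ge 1-O(\exp(-cmr))$, so the lemma's conclusion is unaffected.
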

\begin{proof}
 We first bound $|\bR_{\bp,\bq}\cap\bL_{\bp,\bq}|$ for a fixed $(\bp,\bq)\in \calN_{r,\delta_4}^{(2)}$, which satisfies $\|\bp-\bq\|_2\le 2\delta_4$ for small enough $c$. Note that  $|\bR_{\bp,\bq}\cap \bL_{\bp,\bq}|\sim \text{Bin}(m,\sfP_{\bp,\bq}^{(2)})$ where the double separation probability is bounded by  Lemma \ref{lem:double_sepa_prob} as $$  \sfP_{\bp,\bq}^{(2)}\le 4\exp\big(-\frac{\tau^2(2\delta_4)^{-2}}{2}\big)\le r^{C_2}$$      
     for some  constant $C_2$ that can be made sufficiently large,   due to    $\delta_4\le\frac{c}{\log^{1/2}(r^{-1})}$ with small enough $c$. By the first statement in Lemma \ref{lem:chernoff} we obtain 
       \begin{align*}
           &\mathbbm{P}\Big(|\bR_{\bp,\bq}\cap\bL_{\bp,\bq}|\ge \frac{C_1mr}{\sqrt{\log( r^{-1})}}\Big)\\&\le \mathbbm{P}\Big(\text{Bin}(m,r^{C_2})\ge \frac{C_1mr}{\sqrt{\log( r^{-1})}}\Big)  \\ 
           &\le \exp\Big(-\frac{C_1mr}{2\sqrt{\log (r^{-1})}}\log \Big(\frac{1}{r^{C_2-2}}\Big)\Big) \\&\le \exp (-c_3mr\sqrt{\log (r^{-1})}). 
       \end{align*} 
 We then take a union bound over $(\bp,\bq)\in\calN_{r,\delta_4}^{(2)}$, showing that  $$\sup_{(\bp,\bq)\in\calN_{r,\delta_4}^{(2)}}|\bR_{\bp,\bq}\cap\bL_{\bp,\bq}| < \frac{C_1mr}{\sqrt{\log(r^{-1})}}$$
   holds with probability at least $$1- \exp(2\scrH(\calC_{\alpha,\beta},r)-c_3mr\sqrt{\log (r^{-1})})\ge 1-\exp(-c_4mr\sqrt{\log (r^{-1})}),$$ where we use $mr\ge C\scrH(\calC_{\alpha,\beta},r)$ with large enough $C$.   Armed with the uniform bound on $|\bR_{\bp,\bq}\cap\bL_{\bp,\bq}|$, we use Lemma \ref{lem:pro_closec} to get started and proceed as 
         \begin{align}\nn
             &\sup_{(\bp,\bq)\in\calN_{r,\delta_4}^{(2)}}\|\bh_2(\bp,\bq)\|_2 =\sup_{(\bp,\bq)\in\calN_{r,\delta_4}^{(2)}}\sup_{\bw\in \calC_{(1)}} \bw^\top \bh_2(\bp,\bq) \\
             &\quad\nn = \sup_{(\bp,\bq)\in\calN_{r,\delta_4}^{(2)}}\sup_{\bw\in \calC_{(1)}}\frac{1}{m}\sum_{i\in \bR_{\bp,\bq}\cap\bL_{\bp,\bq}}\big[\sign(\ba_i^\top(\bp+\bq))-\sign(\ba_i^\top(\bp-\bq))\big]\ba_i^\top\bw \\
             &\quad \nn \le   \sup_{(\bp,\bq)\in\calN_{r,\delta_4}^{(2)}}\sup_{\bw\in \calC_{(1)}}\frac{2}{m}\sum_{i\in \bR_{\bp,\bq}\cap\bL_{\bp,\bq}}|\ba_i^\top\bw|\le \sup_{\bw \in \calC_{(1)}}\max_{\substack{\calS\subset [m]\\ |\calS|\le \frac{C_1mr}{ \log ^{1/2}(r^{-1})}}} \frac{2}{m}\sum_{i\in\calS}|\ba_i^\top\bw| \\\nn
             &\quad = \frac{2C_1r}{\sqrt{\log(r^{-1})}}\sup_{\bw \in \calC_{(1)}}\max_{\substack{\calS\subset [m]\\ |\calS|\le  \frac{C_1mr}{\log^{1/2}(r^{-1})}}} \frac{\sqrt{\log( r^{-1})}}{C_1mr}\sum_{i\in\calS}|\ba_i^\top\bw| \\
             &\quad \nn\le \frac{2C_1r}{\sqrt{\log (r^{-1})}}\sup_{\bw \in \calC_{(1)}}\max_{\substack{\calS\subset [m]\\ |\calS|\le  \frac{C_1mr}{\log^{1/2}(r^{-1})}}}\Big(\frac{\sqrt{\log (r^{-1})}}{C_1mr}\sum_{i\in\calS}|\ba_i^\top\bw|^2\Big)^{1/2}\\
             &\quad \label{eq:boundh2_ranvec}\le C_2r \Big(\Big(\frac{\omega^2(\calC_{(1)})}{mr\sqrt{\log r^{-1}}}\Big)^{1/2}+C_3\Big)\le C_2r(1+C_3)
         \end{align} 
     where    in the last line, the first inequality  holds with probability at least $1-2\exp(-cmr )$ due to Lemma \ref{lem:max_ell_sum},  the second inequality follows from $mr\ge C\omega^2(\calC_{(1)})$ with large enough $C$. The proof is complete. 
\end{proof}
\subsection{Bounding $\|\calP_{\calC_-}(\bh(\bu,\bv)-\bh(\bu_1,\bv_1))\|_2$ (Step 2 in Large-distance regime)}\label{app:difference_h}

We make use of (\ref{eq:decom_hminush}), Lemma \ref{lem:pro_closec} and triangle inequality to obtain
\begin{align}\label{startminus}
    \|\calP_{\calC_-}(\bh(\bu,\bv)-\bh(\bu_1,\bv_1))\|_2= \sup_{\bw\in\calC_{(1)}}~\bw^\top\big(\bh(\bu,\bv)-\bh(\bu_1,\bv_1)\big)\le\sup_{\bw\in\calC_{(1)}}~\frac{3}{m}\sum_{i\in I_{\bu,\bu_1,\bv,\bv_1}}|\ba_i^\top\bw|
\end{align}
where we let $I_{\bu,\bu_1,\bv,\bv_1}:=\bR_{\bv,\bv_1}\cup \bR_{\bu,\bu_1}\cup \bL_{\bu,\bu_1}$. We then use technique similar to bounding $\|\calP_{\calC_-}(\bh_2(\bp,\bq))\|_2$, that is, we first establish the uniform bound on $|I_{\bu,\bu_1,\bv,\bv_1}|$ that can be achieved by uniformly controlling $|\bR_{\bv,\bv_1}|$, $|\bR_{\bu,\bu_1}|$ and $|\bL_{\bu,\bu_1}|$, and then invoke Lemma \ref{lem:max_ell_sum}. Interestingly, bounding $|\bL_{\bp,\bq}|$ is due to the {\it local binary embedding for 1-bit compressed sensing} in \cite{oymak2015near}, which we restate in Lemma \ref{lem:binaryembed}. 
 \begin{lem}
    \label{lem:final_hdifference}  Suppose $mr\ge C[\omega^2(\calC_{(1)})+\scrH(\calC_{\alpha,\beta},r)]$ holds with large enough $C_1$, then the event $$\|\calP_{\calC_-}(\bh(\bu,\bv)-\bh(\bu_1,\bv_1)) \|_2\le C_2r\log(r^{-1}),~~\forall \bu,\bv\in \calC_{\alpha,\beta}$$
    holds with probability at least $1- \exp(-c_3mr)$.  
\end{lem}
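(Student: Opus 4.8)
The plan is to start from the inequality (\ref{startminus}), already derived in the main text from the recoupling identity (\ref{eq:decom_hminush}), Lemma \ref{lem:pro_closec} and the triangle inequality, which reduces the claim to a uniform bound on
\[
\sup_{\bw\in\calC_{(1)}}\ \frac{1}{m}\sum_{i\in I_{\bu,\bu_1,\bv,\bv_1}}|\ba_i^\top\bw|,\qquad I_{\bu,\bu_1,\bv,\bv_1}=\bR_{\bv,\bv_1}\cup\bR_{\bu,\bu_1}\cup\bL_{\bu,\bu_1},
\]
where $\bu_1,\bv_1\in\calN_r$ are the net points closest to $\bu,\bv$, so $\|\bu-\bu_1\|_2\le r$ and $\|\bv-\bv_1\|_2\le r$. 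The whole argument rests on the fact that, although $\bh(\bu,\bv)$ and $\bh(\bu_1,\bv_1)$ separately carry $\Theta(\|\bu_1-\bv_1\|_2\,m)$ nonzero terms and $\|\bu_1-\bv_1\|_2$ may be as large as $\delta_4\gg r$, the rearrangement (\ref{eq:decom_hminush}) confines all surviving contributors to the three ``close-pair'' index sets above, each governed by the small scale $r$.

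First I would produce a uniform bound on $|I_{\bu,\bu_1,\bv,\bv_1}|$. For the phaseless index sets, which satisfy $|\bR_{\bu,\bu_1}|=d_H(\sign(|\bA\bu|-\tau),\sign(|\bA\bu_1|-\tau))$ and likewise for $(\bv,\bv_1)$, I would invoke the event $E_s$ of Theorem \ref{thm:local_embed}. Since the $E_s$-window $\tfrac{r'}{2}=\tfrac{c_1 r}{2\log^{1/2}(r^{-1})}$ is smaller than $r$, a constant inflation of the radius is insufficient; instead I would apply Theorem \ref{thm:local_embed} at the inflated scale $\hat r\asymp r\log^{1/2}(r^{-1})$, chosen so that $\hat r'\ge 2r\ge\|\bu-\bu_1\|_2+\|\bv-\bv_1\|_2$. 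One checks that $\hat r'\asymp r$, that $\scrH(\calC_{\alpha,\beta},\hat r')\asymp\scrH(\calC_{\alpha,\beta},r)$, and — since $\calC$ is a cone, so $\calC_-$ is a cone and $\calC_{(s)}=s\,\calC_{(1)}$ — that $\omega^2(\calC_{(3\hat r'/2)})\asymp(\hat r')^2\,\omega^2(\calC_{(1)})$; hence the sample-complexity hypothesis $mr\gtrsim\omega^2(\calC_{(1)})+\scrH(\calC_{\alpha,\beta},r)$ of the present lemma implies requirement (\ref{eq:local_emb_sample}) at scale $\hat r$, and $E_s$ yields $\max\{|\bR_{\bu,\bu_1}|,|\bR_{\bv,\bv_1}|\}\le C_5\hat r m\lesssim rm\log^{1/2}(r^{-1})$ on an event of probability at least $1-C\exp(-c\hat r m)$. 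For the linear index set $\bL_{\bu,\bu_1}$ I would use the local binary embedding for $1$-bit compressed sensing of \cite{oymak2015near}, restated in Lemma \ref{lem:binaryembed}, which in the same way gives $|\bL_{\bu,\bu_1}|\lesssim rm\log^{1/2}(r^{-1})$ uniformly. Combining, $|I_{\bu,\bu_1,\bv,\bv_1}|\le\ell:=C\,rm\log^{1/2}(r^{-1})$ uniformly over $\bu,\bv\in\calC_{\alpha,\beta}$.

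With the cardinality bound in hand, (\ref{startminus}) gives $\|\calP_{\calC_-}(\bh(\bu,\bv)-\bh(\bu_1,\bv_1))\|_2\le 3\sup_{\bw\in\calC_{(1)}}\max_{|\calS|\le\ell}\tfrac1m\sum_{i\in\calS}|\ba_i^\top\bw|$. Bounding the $\ell^1$-sum by $\sqrt{\ell}$ times the corresponding $\ell^2$-sum and applying Lemma \ref{lem:max_ell_sum} with $\calW=\calC_{(1)}$ (radius $1$) exactly as in (\ref{eq:apply_max_ell1}) produces, with probability at least $1-2\exp(-c\ell\log(em/\ell))$, a bound of the form $\tfrac{C_4}{m}\big(\sqrt{\ell}\,\omega(\calC_{(1)})+\ell\sqrt{\log(em/\ell)}\big)$. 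Using $mr\gtrsim\omega^2(\calC_{(1)})$ (hence $\omega(\calC_{(1)})\lesssim\sqrt{mr}$) the first term is $\lesssim r\log^{1/4}(r^{-1})$, and since $m/\ell\asymp 1/(r\log^{1/2}(r^{-1}))$ one has $\log(em/\ell)\lesssim\log(r^{-1})$, so the second term is $\lesssim r\log^{1/2}(r^{-1})\cdot\log^{1/2}(r^{-1})=r\log(r^{-1})$; together these give the claimed bound $\le C_2 r\log(r^{-1})$. A union bound over the constantly many events (from Theorem \ref{thm:local_embed}, Lemma \ref{lem:binaryembed} and Lemma \ref{lem:max_ell_sum}) keeps the total failure probability at $\le\exp(-c_3 mr)$, since each exponent is at least of order $mr$.

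The hard part will be Step 1: making the embedding theorems do \emph{uniform} work at the correct scale. One must pass to $\hat r\asymp r\log^{1/2}(r^{-1})$ rather than a constant multiple of $r$, verify that the sample-complexity and probability budgets of the lemma survive this change of scale (using the cone structure of $\calC$ to control $\omega(\calC_{(\cdot)})$ and standard entropy estimates for $\scrH$), and then track the logarithmic factors carefully so that the $\log^{1/2}(r^{-1})$ from the cardinality bound and the $\sqrt{\log(em/\ell)}$ tail of Lemma \ref{lem:max_ell_sum} multiply to precisely $\log(r^{-1})$ and not a larger power.
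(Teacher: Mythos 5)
Your proposal follows the paper's argument essentially step for step: start from the recoupling inequality (\ref{startminus}), bound $|\bR_{\bu,\bu_1}|$, $|\bR_{\bv,\bv_1}|$ via Theorem \ref{thm:local_embed} (event $E_s$) at the inflated scale $\hat r\asymp r\log^{1/2}(r^{-1})$, bound $|\bL_{\bu,\bu_1}|$ via Lemma \ref{lem:binaryembed} at the same scale (after passing to normalized vectors with Lemma \ref{lem:norm_equa}), then apply Lemma \ref{lem:max_ell_sum} with $\ell\asymp mr\log^{1/2}(r^{-1})$, exactly producing the $r\log(r^{-1})$ bound as in (\ref{eq:boundT4vec}). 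The bookkeeping of the logarithmic factors, the sample-complexity check at scale $\hat r$, and the failure-probability union bound are all handled correctly and match the paper's proof.
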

\begin{proof}
We first bound $|\bR_{\bv,\bv_1}|$ and $|\bR_{\bu,\bu_1}|$. 
By Theorem \ref{thm:local_embed} with $\calK=\calC_{\alpha,\beta}=\calC\cap \mathbbm{A}_{\alpha,\beta}$ and parameters $(r',r)$ therein set as $(2r,C_0r\sqrt{\log (r^{-1})})$ for large enough constant $C_0$, 
 we come to the following: if $mr \ge C_1 [\omega^2(\calC_{(1)})+\scrH(\calC_{\alpha,\beta},r)]$ for sufficiently large $C_1$, which is assumed in our statement, then the event $$|\bR_{\bp,\bq}| \le C_2mr\sqrt{\log (r^{-1})},\qquad\forall \,\bp,\bq\in\calC_{\alpha,\beta}~\text{ obeying }~\|\bp-\bq\|_2\le r$$ 
holds with probability at least $1-\exp(-c_0mr)$. Next, we proceed to bound $|\bL_{\bu,\bu_1}|$.  By  $\|\bu_1-\bu\|_2\le r$ with small enough $r$ and Lemma \ref{lem:norm_equa}, we obtain $$
         \Big\|\frac{\bu_1}{\|\bu_1\|_2} - \frac{\bu}{\|\bu\|_2}\Big\|_2 =\dist_{\rm d}(\bu_1,\bu)\le \frac{1}{\alpha}\dist(\bu_1,\bu) \le \frac{r}{\alpha}.$$ Then we invoke Lemma \ref{lem:binaryembed} with $\calK=\calC^*:=\calC\cap \mathbb{S}^{n-1}$ and parameters $(r',r)$ therein set as $(\frac{r}{\alpha},C_4r\sqrt{\log (r^{-1})})$ with sufficiently large $C_4$, yielding the following: if $mr\ge C_5[\omega^2(\calC_{(1)})+\scrH(\calC^*,\frac{r}{\alpha})]$ with sufficiently large $C_5$, which is assumed in our statement, then 
    $$ d_H\big(\sign(\bA\bp),\sign(\bA\bq)\big) \le C_6 r\sqrt{\log (r^{-1})}m,\qquad\forall \,\bp,\bq\in \calC^*~\text{ obeying }~\|\bp-\bq\|_2 \le \frac{r}{\alpha}$$
     holds with probability at least $1- \exp(-c_7r\sqrt{\log (r^{-1})}\cdot m)$.  Combined with $\|\frac{\bu_1}{\|\bu_1\|_2}-\frac{\bu}{\|\bu\|_2}\|_2\le \frac{r}{\alpha}$ this event implies $$|\bL_{\bu,\bu_1}|=d_H\big(\sign\big(\bA\frac{\bu_1}{\|\bu_1\|_2}\big),\sign\big(\bA\frac{\bu}{\|\bu\|_2}\big)\big)\le C_6mr\sqrt{\log (r^{-1})},\qquad \forall\,\bu\in\calC_{\alpha,\beta}.$$ Therefore, universally for all $\bu,\bv\in\calC_{\alpha,\beta}$ we have $$|I_{\bu,\bu_1,\bv,\bv_1}|\le |\bR_{\bv,\bv_1}|+|\bR_{\bu,\bu_1}|+|\bL_{\bu,\bu_1}|\le C_8mr\sqrt{\log(r^{-1})}$$ holds with promised probability. With this bound we continue from (\ref{startminus}) to proceed as   
     \begin{align}
        &\|\calP_{\calC_-}(\bh(\bu,\bv)-\bh(\bu_1,\bv_1))\|_2\le \sup_{\bw\in\calC_{(1)}}~\frac{3}{m}\sum_{i\in I_{\bu,\bu_1,\bv,\bv_1}}|\ba_i^\top\bw|\nn \\
        & \le 3 C_8r\sqrt{\log(r^{-1})}\cdot\sup_{\bw\in \calC_{(1)}}\max_{\substack{\calS\subset [m]\\|\calS|\le C_8mr\sqrt{\log(r^{-1})}}} \frac{1}{C_8mr\sqrt{\log(r^{-1})}}\sum_{i\in \calS}|\ba_i^\top\bw| \nn\\
        & \le 3 C_8r\sqrt{\log(r^{-1})}\cdot\sup_{\bw\in \calC_{(1)}}\max_{\substack{\calS\subset [m]\\|\calS|\le C_8mr\sqrt{\log (r^{-1})}}} \Big(\frac{1}{C_8mr\sqrt{\log(r^{-1})}}\sum_{i\in \calS}|\ba_i^\top\bw|^2\Big)^{1/2} \nn\\ 
        &\le C_9 r\sqrt{\log(r^{-1})}\Big(\frac{\omega(\calC_{(1)})}{\sqrt{mr}}+\sqrt{\log(r^{-1})}\Big) \le C_{10} r\log (r^{-1}),\label{eq:boundT4vec}
    \end{align} 
where in the last line, the first inequality holds with the promised probability due to Lemma \ref{lem:max_ell_sum}, the second inequality follows from $mr \gtrsim \omega^2(\calC_{(1)})$ that we assume. The proof is complete.   
\end{proof}

\subsection{Bounding $\|\calP_{\calC_-}(\bh(\bu,\bv))\|_2$ (Small-distance regime)}\label{app:small_distance}
For controlling $\|\bh(\bu,\bv)\|_2$ in the small-distance regime, 
a tight enough bound can again be derived by the techniques  in (\ref{eq:boundh2_ranvec}) and (\ref{eq:boundT4vec}). 
\begin{lem} 
\label{lem:smalldis} 
Suppose $mr\ge C_1\scrH(\calC_{\alpha,\beta},r)$ holds with large enough $C_1$, then  $$
      \|\calP_{\calC_-}(\bh(\bu,\bv))\|_2\le C_2r\log(r^{-1}),\qquad\forall\, \bu,\bv\in\calC_{\alpha,\beta}~\,\text{obeying}~\,\|\bu-\bv\|_2\le 3r$$
    holds with probability at least $1- \exp(-c_3mr).$ 
\end{lem}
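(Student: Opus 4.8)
The plan is to run, in a lighter form, the two-step mechanism of Appendix~\ref{app:difference_h}: first get a \emph{uniform} bound on the number of phaseless separations $\bR_{\bu,\bv}=\{i\in[m]:\sign(|\ba_i^\top\bu|-\tau)\ne\sign(|\ba_i^\top\bv|-\tau)\}$ for all admissible pairs, then feed that sparsity level into Lemma~\ref{lem:max_ell_sum}. Since $\calC$ is a cone, $\calC_-=\calC-\calC$ is a symmetric cone with $\calC_-\cap\mathbb{B}_2^n=\calC_{(1)}$, so Lemma~\ref{lem:pro_closec} gives
\begin{align*}
\|\calP_{\calC_-}(\bh(\bu,\bv))\|_2 = \sup_{\bw\in\calC_{(1)}}\bw^\top\bh(\bu,\bv) \le \sup_{\bw\in\calC_{(1)}}\frac{1}{m}\sum_{i\in\bR_{\bu,\bv}}|\ba_i^\top\bw|,
\end{align*}
because in $\bh(\bu,\bv)=\frac{1}{2m}\sum_i[\sign(|\ba_i^\top\bu|-\tau)-\sign(|\ba_i^\top\bv|-\tau)]\sign(\ba_i^\top\bu)\ba_i$ every coefficient has modulus at most $1/m$ and vanishes outside $\bR_{\bu,\bv}$.

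Next I would control $|\bR_{\bu,\bv}|$ uniformly over $\bu,\bv\in\calC_{\alpha,\beta}$ with $\|\bu-\bv\|_2\le 3r$. This is exactly the estimate already produced inside the proof of Lemma~\ref{lem:final_hdifference} (there for pairs at distance $\le r$): apply Theorem~\ref{thm:local_embed} with $\calK=\calC_{\alpha,\beta}$ and with its parameters $(r',r)$ chosen as $\big(6r,\,C_0 r\log^{1/2}(r^{-1})\big)$ for a suitable constant $C_0$ — a choice compatible with the requirement $r'=c_1 r/\log^{1/2}(r^{-1})$ of that theorem since $\log^{1/2}\big((C_0 r\log^{1/2}(r^{-1}))^{-1}\big)\asymp\log^{1/2}(r^{-1})$. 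Its sample-complexity hypothesis for this choice is implied by $mr\gtrsim\scrH(\calC_{\alpha,\beta},r)+\omega^2(\calC_{(1)})$ (using $\calK_{(3r'/2)}=\tfrac{9r}{2}\cdot\calC_{(1)}$ and monotonicity of $\scrH$ in the radius), which holds under the standing hypotheses of Theorem~\ref{thm:raic}. Event $E_s$ then yields, with probability at least $1-C_3\exp\big(-c_4 rm\log^{1/2}(r^{-1})\big)$,
\begin{align*}
|\bR_{\bu,\bv}| = d_H\big(\sign(|\bA\bu|-\tau),\sign(|\bA\bv|-\tau)\big) \le \ell := C_5 C_0\, rm\log^{1/2}(r^{-1}),\qquad \forall\,\bu,\bv\in\calC_{\alpha,\beta}\text{ with }\dist(\bu,\bv)\le 3r,
\end{align*}
which covers our case since $\dist(\bu,\bv)\le\|\bu-\bv\|_2\le 3r$.

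With this sparsity bound, pass to the $\ell_2$-average by the power-mean inequality and invoke Lemma~\ref{lem:max_ell_sum} with $\calW=\calC_{(1)}$ (so $\rad(\calC_{(1)})=1$) at sparsity level $\ell$: with high probability,
\begin{align*}
\sup_{\bw\in\calC_{(1)}}\frac{1}{m}\sum_{i\in\bR_{\bu,\bv}}|\ba_i^\top\bw| \le \frac{\ell}{m}\sup_{\bw\in\calC_{(1)}}\max_{|\calS|\le\ell}\Big(\frac{1}{\ell}\sum_{i\in\calS}|\ba_i^\top\bw|^2\Big)^{1/2} \lesssim \frac{\ell}{m}\Big(\frac{\omega(\calC_{(1)})}{\sqrt\ell}+\sqrt{\log(em/\ell)}\Big).
\end{align*}
Since $em/\ell\asymp\big(r\log^{1/2}(r^{-1})\big)^{-1}$, we have $\log(em/\ell)\asymp\log(r^{-1})$, so the second term contributes $\frac{\ell}{m}\sqrt{\log(r^{-1})}\asymp r\log(r^{-1})$, while the first contributes $\sqrt\ell\,\omega(\calC_{(1)})/m\lesssim\sqrt{r\log^{1/2}(r^{-1})/m}\,\omega(\calC_{(1)})$, which is absorbed into $r\log(r^{-1})$ once $mr\gtrsim\omega^2(\calC_{(1)})$. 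This yields $\|\calP_{\calC_-}(\bh(\bu,\bv))\|_2\le C_2 r\log(r^{-1})$; a union bound over the two high-probability events (each of the form $1-\exp(-c\,rm\log^{1/2}(r^{-1}))$ or stronger, using that Lemma~\ref{lem:max_ell_sum} fails with probability $\le 2\exp(-c\ell\log(em/\ell))$) gives the asserted $1-\exp(-c_3 mr)$.

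The bookkeeping is routine; the only point needing care is the \textbf{parameter matching} in the second step — taking the auxiliary radius in Theorem~\ref{thm:local_embed} large enough that the clause $E_s$ covers all pairs with $\|\bu-\bv\|_2\le 3r$, while keeping the resulting bound $|\bR_{\bu,\bv}|\lesssim mr\log^{1/2}(r^{-1})$ small enough that, after the $\ell/m$ rescaling and Lemma~\ref{lem:max_ell_sum}, the final estimate is $O(r\log(r^{-1}))$ rather than a larger power of $\log(r^{-1})$.
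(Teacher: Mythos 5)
Your proposal follows essentially the same route as the paper's (deliberately terse) proof: bound $\|\calP_{\calC_-}(\bh(\bu,\bv))\|_2$ by $\sup_{\bw\in\calC_{(1)}}\frac{1}{m}\sum_{i\in\bR_{\bu,\bv}}|\ba_i^\top\bw|$, uniformly control $|\bR_{\bu,\bv}|\lesssim mr\sqrt{\log(r^{-1})}$ via event $E_s$ of Theorem~\ref{thm:local_embed}, and finish with Lemma~\ref{lem:max_ell_sum}. One small observation worth noting: both you and the paper's outline actually invoke Theorem~\ref{thm:local_embed} and Lemma~\ref{lem:max_ell_sum} in a way that also uses $mr\gtrsim\omega^2(\calC_{(1)})$, which the lemma's displayed hypothesis does not mention but which you correctly point out is always available in the only place the lemma is used (inside Theorem~\ref{thm:raic}); flagging this is fine, and the rest of the parameter-matching and probability bookkeeping is sound.
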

 \begin{proof}
 We use Lemma \ref{lem:pro_closec} and triangle inequality to get started: $$\|\calP_{\calC_-}(\bh(\bu,\bv))\|_2=\sup_{\bw\in \calC_{(1)}}\bw^\top\bh(\bu,\bv)\le \sup_{\bw\in\calC_{(1)}}\frac{1}{m}\sum_{i\in\bR_{\bu,\bv}}|\ba_i^\top\bw|.$$
The remainder of the proof is parallel to the ones for Lemmas \ref{lem:final_h2}--\ref{lem:final_hdifference}. First we use Theorem \ref{thm:local_embed} to establish $|\bR_{\bu,\bv}|\le C_1mr\sqrt{\log(r^{-1})} ~(\forall \bu,\bv\in\calC_{\alpha,\beta}\text{ obeying }\|\bu-\bv\|_2\le 3r)$ that holds with the promised probability. Second, we utilize Lemma \ref{lem:max_ell_sum} to show the claim. We omit further details. 
 \end{proof}
\subsection{The Proof of Theorem \ref{thm:raic}}\label{app:combine_conclude}
 \begin{proof}  
    Our goal is to control $\|\calP_{\calC_-}(\bu-\bv-\eta \cdot \bh(\bu,\bv))\|_2$ uniformly for $\bu,\bv\in\calC_{\alpha,\beta}=\calC\cap \mathbbm{A}_{\alpha,\beta}$ obeying $\|\bu-\bv\|_2\le\delta_4$. In the main text, we construct $\calN_r$ as a minimal $r$-net of $\calC_{\alpha,\beta}$ and let $\bu_1,\bv_1$ be the points in $\calN_r$ that are closest to $\bu,\bv$, respectively, and then arrive at   (\ref{eq:use_net1_rapp}). We shall bound $\|\calP_{\calC_-}(\bu_1-\bv_1-\eta\cdot\bh(\bu,\bv))\|_2$ separately for   large-distance regime and small-distance regime. 

    \paragraph{Large-distance regime:} Combining (\ref{eq:large_decompose}), (\ref{eq:sepa_main_side}) and (\ref{eq:p_q_hpq}) we come to 
    $$\|\calP_{\calC_-}(\bu_1-\bv_1-\eta\bh(\bu,\bv))\|_2\le T_1^{\bu_1,\bv_1}+\eta|T_2^{\bu_1,\bv_1}|+ \eta T_3^{\bu_1,\bv_1}+ \eta\|\calP_{\calC_-}(\bh_2(\bu_1,\bv_1))\|_2 + \eta\big\|\calP_{\calC_-}(\bh(\bu,\bv)-\bh(\bu_1,\bv_1))\big\|_2.$$ By Lemmas \ref{lem:final_T11}--\ref{lem:cal_Tpq} and (\ref{eq:Teta_calculate}), we have $$T_1^{\bu_1,\bv_1}\le  C_1\eta\sqrt{\frac{\|\bu_1-\bv_1\|_2\scrH(\calC_{\alpha,\beta},r)}{m}}+\|\bu_1-\bv_1\|_2\cdot\big|1-\eta\cdot[g_\eta(a,b)+c_1\sqrt{\|\bu_1-\bv_1\|_2}]\big|,$$ where $(a,b)$ depends on $\bu_1,\bv_1$ and satisfies $a^2+b^2\in[\alpha^2,\beta^2]$, holds uniformly for all $(\bu_1,\bv_1)$ under consideration with the promised probability. Similarly, with the promised probability,  combining Lemmas \ref{lem:final_bound_T2}--\ref{lem:integral2} yields the uniform bound $$\eta\cdot|T_2^{\bu_1,\bv_1}|\le C_2\eta\sqrt{\frac{\|\bu_1-\bv_1\|_2\scrH(\calC_{\alpha,\beta},r)}{m}}+\|\bu_1-\bv_1\|_2\cdot \eta\cdot |h_\eta(a,b)+c_2\|\bu_1-\bv_1\|_2|;$$   Lemma \ref{lem:T3_final} gives the uniform bound $$\eta\cdot T_3^{\bu_1,\bv_1}\le C_3\eta \sqrt{\frac{\|\bu_1-\bv_1\|_2[\omega^2(\calC_{(1)})+\scrH(\calC_{\alpha,\beta},r)]}{m}}\,;$$ under $\delta_4\le \frac{c}{\log^{1/2}(r^{-1})}$ Lemma \ref{lem:final_h2} gives the uniform bound $\eta\cdot \|\calP_{\calC_-}(\bh_2(\bu_1,\bv_1))\|_2\le C_4\eta r$;  and finally Lemma \ref{lem:final_hdifference} delivers the uniform bound $\eta\cdot \|\calP_{\calC_-}(\bh(\bu,\bv)-\bh(\bu_1,\bv_1))\|_2\le C_5\eta r\log(r^{-1})$. Combining these pieces and (\ref{eq:use_net1_rapp}), and further taking supremum on $(a,b)$, we arrive at the uniform bound for large-distance regime:
    \begin{align}\nn
        \|\calP_{\calC_-}(\bu-&\bv-\eta\cdot\bh(\bu,\bv))\|_2  \le C_6\eta \sqrt{\frac{\|\bu_1-\bv_1\|_2[\omega^2(\calC_{(1)})+\scrH(\calC_{\alpha,\beta},r)]}{m}}+C_7\eta r\log(r^{-1})\\
        &+\|\bu_1-\bv_1\|_2\left(\sup_{a^2+b^2\in[\alpha^2,\beta^2]}\sqrt{|1-\eta g_\eta (a,b)|^2+|\eta h_\eta(a,b)|^2}+c_8\eta \|\bu_1-\bv_1\|_2^{1/4}\right).\label{largedis}
    \end{align}

    \paragraph{Small-distance regime:} Since $\|\bu_1-\bv_1\|_2\le r$, we have $\|\bu-\bv\|_2\le \|\bu-\bu_1\|_2+\|\bu_1-\bv_1\|_2+\|\bv_1-\bv\|_2\le 3r$. Thus by Lemma \ref{lem:smalldis} we have $\|\calP_{\calC_-}(\bh(\bu,\bv))\|_2<C_9r\log(r^{-1})$ holds for all $\bu,\bv$ in small-distance regime. Substituting this into  (\ref{eq:small_decompose}) and combining with (\ref{eq:use_net1_rapp}), we  obtain the uniform bound for small-distance regime: $$\|\calP_{\calC_-}(\bu-\bv-\eta\cdot\bh(\bu,\bv))\|_2\le C_{10}r\log(r^{-1}).$$

    \paragraph{Concluding:} We combine the bounds in the two regimes, together with triangle inequality, $\eta=\calO(1)$ and $mr\gtrsim \omega^2(\calC_{(1)})+\scrH(\calC_{\alpha,\beta},r)$, to come to the following final bound: for all $\bu,\bv\in\calC_{\alpha,\beta}$ obeying $\|\bu-\bv\|_2\le\delta_4\le \frac{c}{\log^{1/2}(r^{-1})}$, we have $$\|\calP_{\calC_-}(\bu-\bv-\eta\cdot\bh(\bu,\bv))\|_2\le \delta_1\|\bu-\bv\|_2+\sqrt{\delta_2\|\bu-\bv\|_2}+\delta_3$$ with the promised probability, where $$\delta_1 = \sup_{a^2+b^2\in[\alpha^2,\beta^2]}\sqrt{|1-\eta g_\eta(a,b)|^2+|\eta h_\eta(a,b)|^2}+\frac{c'}{\log^{1/8}(r^{-1})},$$ $\delta_2= \frac{C_{11}[\omega^2(\calC_{(1)})+\scrH(\calC_{\alpha,\beta},r)]}{m}$ and $\delta_3=C_{12}r\log(r^{-1})$. This establishes the claim. 
 \end{proof}

\section{Technical Lemmas}\label{app:tools}
%We list the technical lemmas and useful facts used along our theoretical analysis. A proof will be included if needed. 
\subsection{Auxiliary Facts}

\begin{lem}
 \label{lem:norm_equa} For any $(\bu,\bv)\in \mathbbm{A}_{\alpha,\beta}$  we have $$
        \max\{\alpha\dist_{\rm d}(\bu,\bv), \dist_{\rm n}(\bu,\bv)\}\le \dist(\bu,\bv) \le \beta\dist_{\rm d}(\bu,\bv)+\dist_{\rm n}(\bu,\bv),$$
  which implies $$\dist(\bu,\bv)\asymp \dist_{\rm d}(\bu,\bv)+ \dist_{\rm n}(\bu,\bv).$$
\end{lem}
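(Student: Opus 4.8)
\textbf{Proof proposal for Lemma \ref{lem:norm_equa}.}

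The plan is to prove the two-sided inequality directly, splitting by whether $\|\bu-\bv\|_2 \le \|\bu+\bv\|_2$ or the reverse, and in each case reducing everything to the normalized vectors plus a scalar norm comparison. First I would fix notation: write $a = \|\bu\|_2$, $b = \|\bv\|_2$, and $\hat{\bu} = \bu/a$, $\hat{\bv} = \bv/b$, so that $\dist_{\rm d}(\bu,\bv) = \min\{\|\hat{\bu}-\hat{\bv}\|_2, \|\hat{\bu}+\hat{\bv}\|_2\}$ and $\dist_{\rm n}(\bu,\bv) = |a-b|$. Since all three quantities $\dist$, $\dist_{\rm d}$, $\dist_{\rm n}$ are invariant under $\bv \mapsto -\bv$, I may assume without loss of generality that $\|\bu-\bv\|_2 \le \|\bu+\bv\|_2$, hence $\dist(\bu,\bv) = \|\bu-\bv\|_2$ and also $\bu^\top\bv \ge 0$, so $\dist_{\rm d}(\bu,\bv) = \|\hat{\bu}-\hat{\bv}\|_2$ as well (the $-$ sign is the minimizing one for the normalized pair too, since $\hat{\bu}^\top\hat{\bv}$ has the same sign as $\bu^\top\bv$).

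For the upper bound, I would use the telescoping identity $\bu - \bv = a\hat{\bu} - b\hat{\bv} = a(\hat{\bu}-\hat{\bv}) + (a-b)\hat{\bv}$, and apply the triangle inequality together with $\|\hat{\bv}\|_2 = 1$ and $a \le \beta$:
\begin{align*}
	\dist(\bu,\bv) = \|\bu-\bv\|_2 \le a\|\hat{\bu}-\hat{\bv}\|_2 + |a-b| \le \beta\,\dist_{\rm d}(\bu,\bv) + \dist_{\rm n}(\bu,\bv).
\end{align*}
For the lower bound I would establish the two pieces separately. The bound $\dist_{\rm n}(\bu,\bv) \le \dist(\bu,\bv)$ is just the reverse triangle inequality $|\,\|\bu\|_2 - \|\bv\|_2\,| \le \|\bu-\bv\|_2$. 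For $\alpha\,\dist_{\rm d}(\bu,\bv) \le \dist(\bu,\bv)$, I would run the telescoping identity the other way: $a(\hat{\bu}-\hat{\bv}) = (\bu-\bv) - (a-b)\hat{\bv}$, so $a\|\hat{\bu}-\hat{\bv}\|_2 \le \|\bu-\bv\|_2 + |a-b|$. This is not quite what I want because of the extra $|a-b|$ term; instead I would use a cleaner route. Decompose $\hat{\bu}-\hat{\bv}$ using $\hat{\bu} = \bu/a$, $\hat{\bv} = \bv/b$, common-denominator it as $\hat{\bu}-\hat{\bv} = \frac{b\bu - a\bv}{ab} = \frac{b(\bu-\bv) + (b-a)\bv}{ab}$; then $\|b(\bu-\bv) + (b-a)\bv\|_2 \le b\|\bu-\bv\|_2 + |a-b|\,b = b(\|\bu-\bv\|_2 + |a-b|)$, giving $\|\hat{\bu}-\hat{\bv}\|_2 \le \frac{1}{a}(\|\bu-\bv\|_2 + |a-b|) \le \frac{2}{\alpha}\dist(\bu,\bv)$ using the already-proven $|a-b|\le \dist(\bu,\bv)$ and $a \ge \alpha$. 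This yields $\dist_{\rm d}(\bu,\bv) \le \frac{2}{\alpha}\dist(\bu,\bv)$, which suffices for the claimed order-equivalence $\dist(\bu,\bv)\asymp \dist_{\rm d}(\bu,\bv) + \dist_{\rm n}(\bu,\bv)$ even if the constant is not exactly $\alpha$; if the sharp constant $\alpha$ is genuinely needed, I would instead argue geometrically that the chord $\|\bu-\bv\|_2$ of the secant joining two points on spheres of radii $a,b$ is at least $\min(a,b) \ge \alpha$ times the chord of the corresponding unit-sphere points, obtained by radial projection, which only decreases Euclidean distance when projecting outward onto a smaller sphere.

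The main obstacle is the lower bound $\alpha\,\dist_{\rm d}(\bu,\bv) \le \dist(\bu,\bv)$ with the stated constant: getting $\alpha$ rather than a weaker constant like $\alpha/2$ requires the correct geometric comparison (radial projection onto the smaller of the two spheres is $1$-Lipschitz-ish in the right sense), rather than a naive triangle-inequality split which leaks an extra additive term. Everything else — the upper bound, the norm-part lower bound, and deducing the $\asymp$ statement — is routine once the reduction to $\hat{\bu},\hat{\bv},a,b$ and the WLOG sign normalization are in place. I would present the geometric projection argument carefully for the sharp constant and leave the rest as short computations.
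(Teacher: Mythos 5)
Your proposal is correct, and for the key lower bound $\alpha\,\dist_{\rm d}(\bu,\bv)\le\dist(\bu,\bv)$ it takes a genuinely different route than the paper. The paper proves this bound purely algebraically: writing $\dist^2(\bu,\bv)=\|\bu\|_2^2+\|\bv\|_2^2-2|\bu^\top\bv|$, it factors out $\|\bu\|_2\|\bv\|_2$ to obtain $\|\bu\|_2\|\bv\|_2\bigl(\tfrac{\|\bu\|_2}{\|\bv\|_2}+\tfrac{\|\bv\|_2}{\|\bu\|_2}-\tfrac{2|\bu^\top\bv|}{\|\bu\|_2\|\bv\|_2}\bigr)$, then applies $\|\bu\|_2\|\bv\|_2\ge\alpha^2$ together with the AM--GM inequality $\tfrac{\|\bu\|_2}{\|\bv\|_2}+\tfrac{\|\bv\|_2}{\|\bu\|_2}\ge 2$ to drop directly to $\alpha^2\,\dist_{\rm d}^2(\bu,\bv)$. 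You instead propose a geometric radial-projection argument: after WLOG normalizing the sign so $\bu^\top\bv\ge 0$, the claim $\|\bu-\bv\|_2\ge\min(a,b)\,\|\hat\bu-\hat\bv\|_2$ is that radially projecting both points onto the smaller of the two spheres does not increase distance. Your claim is correct (the algebraic check is $\|\bu-\bv\|_2^2-\min(a,b)^2\|\hat\bu-\hat\bv\|_2^2=(b-a)\bigl[(a+b)-2\min(a,b)\hat\bu^\top\hat\bv\bigr]\ge 0$ under $a\le b$), though I'd tighten the wording --- ``projecting outward onto a smaller sphere'' reads as a contradiction in terms; what you mean is that projecting both points inward onto the sphere of radius $\min(a,b)$ is distance-nonincreasing. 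The upper bound via telescoping $\bu-\bv=a(\hat\bu-\hat\bv)+(a-b)\hat\bv$ and the reverse triangle inequality for $\dist_{\rm n}\le\dist$ match the paper essentially verbatim. You also correctly observe that your first telescoping attempt only gives $\dist_{\rm d}\le\tfrac{2}{\alpha}\dist$, which suffices for the $\asymp$ claim but not for the lemma's stated constant; the paper's AM--GM factorization and your projection argument are both short routes to the sharp $\alpha$, with the algebraic one being one line and the geometric one being more visual but requiring the WLOG sign reduction up front.
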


\begin{lem}
    \label{lem:equ_dist_dmat}For any $(\bu,\bv)\in \mathbbm{A}_{\alpha,\beta}$ we have 
   $$  \frac{1}{2\beta}\|\bu\bu^\top-\bv\bv^\top\|_{\rm F}\le  \dist(\bu,\bv)\le \frac{1}{\alpha}\|\bu\bu^\top-\bv\bv^\top\|_{\rm F}.$$ Note that $\|\bu\bu^\top-\bv\bv^\top\|_{\rm op}\le\|\bu\bu^\top-\bv\bv^\top\|_{\rm F}\le \sqrt{2}\|\bu\bu^\top-\bv\bv^\top\|_{\rm op}$. 
\end{lem}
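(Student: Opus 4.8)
The plan is to reduce the statement to one algebraic identity together with elementary norm estimates. The crux is the identity
$$\bu\bu^\top - \bv\bv^\top = \tfrac12\big[(\bu+\bv)(\bu-\bv)^\top + (\bu-\bv)(\bu+\bv)^\top\big],$$
which follows by expanding the right-hand side. Setting $\bp=\bu+\bv$, $\bq=\bu-\bv$ and using $\tr(\bp\bq^\top\bq\bp^\top)=\|\bp\|_2^2\|\bq\|_2^2$ and $\tr(\bp\bq^\top\bp\bq^\top)=(\bp^\top\bq)^2$, one obtains the closed form
$$\|\bu\bu^\top-\bv\bv^\top\|_{\rm F}^2 = \tfrac12\Big(\|\bu+\bv\|_2^2\,\|\bu-\bv\|_2^2 + (\|\bu\|_2^2-\|\bv\|_2^2)^2\Big),$$
since $\bp^\top\bq=\|\bu\|_2^2-\|\bv\|_2^2$. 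Everything then comes down to bounding the two terms on the right.

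First I would observe that both $\|\bu\bu^\top-\bv\bv^\top\|_{\rm F}$ and $\dist(\bu,\bv)$ are invariant under $\bv\mapsto-\bv$, so we may assume $\bu^\top\bv\ge0$, i.e.\ $\|\bu-\bv\|_2\le\|\bu+\bv\|_2$ and $\dist(\bu,\bv)=\|\bu-\bv\|_2$. For the lower bound I would discard the second term in the closed form and use $\|\bu+\bv\|_2^2=\|\bu\|_2^2+2\bu^\top\bv+\|\bv\|_2^2\ge 2\alpha^2$ (the sign condition $\bu^\top\bv\ge0$ is exactly what makes this step valid), giving $\|\bu\bu^\top-\bv\bv^\top\|_{\rm F}^2\ge\alpha^2\|\bu-\bv\|_2^2$. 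For the upper bound I would use $\|\bu+\bv\|_2^2\le(\|\bu\|_2+\|\bv\|_2)^2\le 4\beta^2$ and $(\|\bu\|_2^2-\|\bv\|_2^2)^2=(\|\bu\|_2-\|\bv\|_2)^2(\|\bu\|_2+\|\bv\|_2)^2\le 4\beta^2\|\bu-\bv\|_2^2$ (the last step by the reverse triangle inequality), so that each of the two terms is $\le 4\beta^2\|\bu-\bv\|_2^2$ and hence $\|\bu\bu^\top-\bv\bv^\top\|_{\rm F}^2\le 4\beta^2\|\bu-\bv\|_2^2$. Taking square roots yields both displayed inequalities.

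For the parenthetical comparison between $\|\cdot\|_{\rm op}$ and $\|\cdot\|_{\rm F}$, I would simply note that $\bu\bu^\top-\bv\bv^\top$ lies in the linear span of $\{\bu\bu^\top,\bv\bv^\top\}$ and hence has rank at most $2$; then the general bounds $\|\bM\|_{\rm op}\le\|\bM\|_{\rm F}\le\sqrt{\rank\bM}\,\|\bM\|_{\rm op}$ give the claim. I do not expect a genuine obstacle here; the only point requiring care is organizing the $\bv\mapsto-\bv$ reduction so that the inequality $\|\bu+\bv\|_2^2\ge2\alpha^2$ needed for the lower bound is legitimately available.
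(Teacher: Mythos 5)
Your proof is correct. The paper expands $\|\bu\bu^\top-\bv\bv^\top\|_{\rm F}^2 = \|\bu\|_2^4+\|\bv\|_2^4-2(\bu^\top\bv)^2$ directly, compares it to $\dist^2(\bu,\bv)=\|\bu\|_2^2+\|\bv\|_2^2-2|\bu^\top\bv|$, and for the lower bound replaces $\|\bu\|_2^4+\|\bv\|_2^4$ by $\tfrac12(\|\bu\|_2^2+\|\bv\|_2^2)^2$ before factoring as a difference of squares; you instead derive the equivalent closed form $\tfrac12\big(\|\bu+\bv\|_2^2\|\bu-\bv\|_2^2+(\|\bu\|_2^2-\|\bv\|_2^2)^2\big)$ via the rank-one symmetric decomposition $\bu\bu^\top-\bv\bv^\top=\tfrac12[(\bu+\bv)(\bu-\bv)^\top+(\bu-\bv)(\bu+\bv)^\top]$, and then bound each of the two pieces. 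The two expressions are algebraically identical, so the arguments are parallel in substance; what your route buys is that the factorization $\|\bu+\bv\|_2^2\|\bu-\bv\|_2^2$ appears for free (no need to recognize a difference of squares), at the modest cost of a WLOG step ($\bv\mapsto-\bv$ so that $\bu^\top\bv\ge0$ and $\dist(\bu,\bv)=\|\bu-\bv\|_2$), which the paper sidesteps by carrying $|\bu^\top\bv|$ throughout. Your rank-$2$ observation for the operator/Frobenius comparison is also fine. The constants $\alpha^2\dist^2\le\|\bu\bu^\top-\bv\bv^\top\|_{\rm F}^2\le 4\beta^2\dist^2$ agree with the claimed bounds.
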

\begin{lem}
    [A Useful Parameterization] \label{lem:parameterization}Given  two different points $\bu,\bv$ ($\bu\neq \bv$) in $\mathbbm{A}_{\alpha,\beta}$, we let $\bbeta_1= \frac{\bu-\bv}{\|\bu-\bv\|_2}$ and can further find $\bbeta_2$ satisfying $\|\bbeta_2\|_2=1$ and $\langle \bbeta_1,\bbeta_2\rangle=0$, such that $$
             \bu = u_1\bbeta_1 + u_2 \bbeta_2\quad\text{and}\quad\bv = v_1 \bbeta_1 + u_2 \bbeta_2$$
    hold for some coordinates $(u_1,u_2,v_1)$ satisfying $u_1>v_1$ and $u_2\ge 0$. Note that the coordinates $(u_1,u_2,v_1)$ are uniquely determined by $(\bu,\bv)$ through  
    \begin{align}
\label{eq:explicit_uuv}
             u_1 = \frac{\langle\bu,\bu-\bv\rangle}{\|\bu-\bv\|_2}, ~
            v_1 = \frac{\langle \bv,\bu-\bv\rangle}{\|\bu-\bv\|_2},  ~
            u_2 = \frac{\sqrt{\|\bu\|_2^2\|\bv\|_2^2-(\bu^\top\bv)^2}}{\|\bu-\bv\|_2}.
     \end{align}
    More specifically, we have $v_1=-u_1$ when $\|\bu\|_2=\|\bv\|_2$ holds. %we can further ensure $v_1=-u_1$ for some $u_1>0$. 
\end{lem}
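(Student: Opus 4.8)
The plan is to decompose $\bu$ and $\bv$ into components parallel and orthogonal to the fixed direction $\bbeta_1 := \frac{\bu-\bv}{\|\bu-\bv\|_2}$, and to read off the coordinates from that decomposition. First I would record the key structural observation: writing $\bu = \langle\bu,\bbeta_1\rangle\bbeta_1 + \bu^\perp$ and $\bv = \langle\bv,\bbeta_1\rangle\bbeta_1 + \bv^\perp$ with $\bu^\perp,\bv^\perp$ orthogonal to $\bbeta_1$, and subtracting, we get $\bu-\bv = (\langle\bu,\bbeta_1\rangle-\langle\bv,\bbeta_1\rangle)\bbeta_1 + (\bu^\perp-\bv^\perp)$; since $\bu-\bv = \|\bu-\bv\|_2\,\bbeta_1$ is parallel to $\bbeta_1$, the orthogonal parts must cancel, so $\bu^\perp = \bv^\perp =: \bw$. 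Thus the two points share the same orthogonal component $\bw$, which is exactly what makes the common coefficient $u_2$ appear in both expansions.

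Next I would set $u_1 := \langle\bu,\bbeta_1\rangle$ and $v_1 := \langle\bv,\bbeta_1\rangle$; when $\bw\ne\bzero$ put $\bbeta_2 := \bw/\|\bw\|_2$ and $u_2 := \|\bw\|_2 > 0$, and when $\bw=\bzero$ put $u_2 := 0$ and take $\bbeta_2$ to be any unit vector orthogonal to $\bbeta_1$. By construction $\bbeta_1,\bbeta_2$ are orthonormal, $\bu = u_1\bbeta_1+u_2\bbeta_2$, $\bv = v_1\bbeta_1+u_2\bbeta_2$, and $u_2\ge 0$; moreover $u_1 - v_1 = \langle\bu-\bv,\bbeta_1\rangle = \|\bu-\bv\|_2 > 0$, giving $u_1>v_1$. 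For the explicit formulas, $u_1 = \langle\bu,\bbeta_1\rangle = \langle\bu,\bu-\bv\rangle/\|\bu-\bv\|_2$ and likewise for $v_1$ are immediate, while from orthonormality $u_2^2 = \|\bu\|_2^2 - u_1^2$; multiplying by $\|\bu-\bv\|_2^2$, substituting $\|\bu-\bv\|_2^2 = \|\bu\|_2^2-2\bu^\top\bv+\|\bv\|_2^2$ and $\langle\bu,\bu-\bv\rangle = \|\bu\|_2^2-\bu^\top\bv$, and simplifying yields Lagrange's identity $\|\bu-\bv\|_2^2\,u_2^2 = \|\bu\|_2^2\|\bv\|_2^2 - (\bu^\top\bv)^2$, so $u_2 = \sqrt{\|\bu\|_2^2\|\bv\|_2^2-(\bu^\top\bv)^2}/\|\bu-\bv\|_2$ (the radicand being nonnegative by Cauchy–Schwarz). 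Uniqueness of $(u_1,u_2,v_1)$ then follows: once $\bbeta_1$ is fixed, $u_1$ and $v_1$ are forced as the stated inner products, and $u_2 = \|\bu-u_1\bbeta_1\|_2$ is forced by $u_2\ge0$ and $\|\bbeta_2\|_2=1$. Finally, if $\|\bu\|_2=\|\bv\|_2$ then $u_1^2+u_2^2 = v_1^2+u_2^2$ forces $u_1^2=v_1^2$, hence $u_1=\pm v_1$; since $u_1>v_1$ excludes $u_1=v_1$, we obtain $v_1=-u_1$.

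I do not expect a genuine obstacle here — the argument is elementary linear algebra. The only point needing a little care is the degenerate case $\bu\parallel\bv$ (equivalently $\bw=\bzero$, i.e.\ equality in Cauchy–Schwarz): there the auxiliary direction $\bbeta_2$ is not unique, but the corresponding coefficient is $u_2=0$, so the triple $(u_1,u_2,v_1)$ is still uniquely determined, which is all the statement asserts.
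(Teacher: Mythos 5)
Your proof is correct, and it takes a somewhat cleaner route than the paper's. The paper writes down an explicit formula for $\bbeta_2$ as a normalized combination $\langle\bv,\bv-\bu\rangle\bu + \langle\bu,\bu-\bv\rangle\bv$, then verifies orthogonality to $\bbeta_1$ by direct computation, and separately checks the sign of $\langle\bu,\bbeta_2\rangle$; the degenerate parallel case is handled on the side. You instead start from the orthogonal decomposition $\bu=\langle\bu,\bbeta_1\rangle\bbeta_1+\bu^\perp$, $\bv=\langle\bv,\bbeta_1\rangle\bbeta_1+\bv^\perp$ and observe that, because $\bu-\bv\parallel\bbeta_1$, the orthogonal parts must coincide ($\bu^\perp=\bv^\perp=:\bw$); this single observation immediately explains \emph{why} the two expansions share a common coefficient $u_2$, and then $\bbeta_2=\bw/\|\bw\|_2$ (or arbitrary when $\bw=\bzero$) is forced rather than guessed. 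The remaining steps — $u_1-v_1=\|\bu-\bv\|_2>0$, the inner-product formulas, and $u_2$ via $u_2^2=\|\bu\|_2^2-u_1^2$ together with Lagrange's identity — match the paper, and your uniqueness and $v_1=-u_1$ arguments agree with the paper's. The net effect is the same parameterization; your derivation is a bit more conceptual and unifies the parallel/non-parallel cases more naturally, while the paper's is more of a verify-the-formula proof.
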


\begin{lem}
   \label{lem:maximum}   
    We define     $F(\eta,a,b)=\sqrt{|1-\eta g_\eta(a,b)|^2+|\eta h_\eta (a,b)|^2}$ over $\eta>0$ and $a^2+b^2\in [\alpha^2,\beta^2]$ for some $\beta\ge\alpha>0$, where $g_\eta(a,b)$ and $h_\eta(a,b)$ are  the functions introduced in Lemma \ref{lem:cal_Tpq} and Lemma \ref{lem:integral2}. Then we have 
        \begin{align*}
          \sup_{a^2+b^2\in[\alpha^2,\beta^2]}F\Big(\sqrt{\frac{\pi e}{2}}\tau,a,b\Big)= \max \left\{\max_{w\in[\frac{\tau}{\beta},\frac{\tau}{\alpha}]}\Big|1- w^3\exp\Big(\frac{1-w^2}{2}\Big)\Big|,\max_{w\in[\frac{\tau}{\beta},\frac{\tau}{\alpha}]}\Big|1- w \exp \Big(\frac{1-w^2}{2}\Big)\Big|\right\}.
        \end{align*}
\end{lem}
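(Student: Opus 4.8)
The plan is to substitute $\eta = \sqrt{\pi e/2}\,\tau$ into the explicit forms of $g_\eta$ and $h_\eta$ from Lemmas~\ref{lem:cal_Tpq} and~\ref{lem:integral2}, reparametrize in terms of the single scalar $w := \tau/\sqrt{a^2+b^2}$ together with an angular variable, and show that the supremum collapses to the two one-variable maxima in the claimed formula. Writing $s^2 := a^2+b^2 \in [\alpha^2,\beta^2]$, so that $w = \tau/s \in [\tau/\beta,\tau/\alpha]$, and setting $a = s\cos\phi$, $b = s\sin\phi$, a direct computation gives
\[
\eta\, g_\eta(a,b) = \sqrt{\tfrac{\pi e}{2}}\,\tau \cdot \sqrt{\tfrac{2}{\pi}}\, e^{-w^2/2}\,\frac{\tau^2 a^2 + b^2 s^2}{s^5}
= w\, e^{(1-w^2)/2}\,\frac{w^2 a^2 + b^2}{s^2}
= w\, e^{(1-w^2)/2}\bigl(w^2\cos^2\phi + \sin^2\phi\bigr),
\]
and similarly
\[
\eta\, h_\eta(a,b) = w\, e^{(1-w^2)/2}\,\frac{ab(s^2-\tau^2)}{s^2}\cdot\frac{1}{s}\cdot s
= w\, e^{(1-w^2)/2}(1-w^2)\cos\phi\sin\phi .
\]
(The $\tau$ in $\eta$ cancels the $\tau^{-?}$ scaling precisely because of the homogeneity of $g_\eta,h_\eta$; I would verify the exponents carefully here.) Thus, abbreviating $\kappa := w\, e^{(1-w^2)/2}$ and $t := \cos^2\phi \in [0,1]$, we have $\eta g_\eta = \kappa\bigl(w^2 t + (1-t)\bigr) = \kappa\bigl(1 - (1-w^2)t\bigr)$ and $\eta h_\eta = \kappa(1-w^2)\sqrt{t(1-t)}$.

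The second step is to maximize $F^2 = (1-\eta g_\eta)^2 + (\eta h_\eta)^2$ over $t\in[0,1]$ for fixed $w$. Substituting and letting $\mu := 1-w^2$, one gets
\[
F^2(t) = \bigl(1 - \kappa + \kappa\mu t\bigr)^2 + \kappa^2\mu^2 t(1-t)
= (1-\kappa)^2 + 2\kappa\mu(1-\kappa)t + \kappa^2\mu^2 t^2 + \kappa^2\mu^2 t - \kappa^2\mu^2 t^2,
\]
which simplifies to $F^2(t) = (1-\kappa)^2 + \kappa\mu\bigl(2(1-\kappa) + \kappa\mu\bigr)t$ — a \emph{linear} function of $t$. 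Therefore the maximum over $t\in[0,1]$ is attained at an endpoint, $t=0$ or $t=1$. At $t=0$ we get $F^2 = (1-\kappa)^2$, i.e. $F = |1-\kappa| = |1 - w\,e^{(1-w^2)/2}|$. At $t=1$ we get $F^2 = (1-\kappa)^2 + \kappa\mu(2(1-\kappa)+\kappa\mu) = (1-\kappa+\kappa\mu)^2 = (1 - \kappa w^2)^2$, i.e. $F = |1 - \kappa w^2| = |1 - w^3 e^{(1-w^2)/2}|$. Hence for each fixed $w$,
\[
\sup_{\phi} F\Bigl(\sqrt{\tfrac{\pi e}{2}}\tau, s\cos\phi, s\sin\phi\Bigr)
= \max\Bigl\{\,\bigl|1 - w^3 e^{(1-w^2)/2}\bigr|,\ \bigl|1 - w\, e^{(1-w^2)/2}\bigr|\,\Bigr\},
\]
and taking the supremum over $w\in[\tau/\beta,\tau/\alpha]$ yields exactly the claimed identity.

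The main obstacle I anticipate is purely bookkeeping: getting the homogeneity/degree cancellations right when plugging $\eta=\sqrt{\pi e/2}\,\tau$ into $g_\eta$ and $h_\eta$, since these functions carry factors of $\tau^2$, $s^{-5}$, and a Gaussian $e^{-\tau^2/2s^2}$ that must combine cleanly into functions of $w = \tau/s$ alone. Once that reduction is correct, the key structural observation — that $F^2$ is \emph{affine} in $t=\cos^2\phi$, so only the endpoints $\phi \in \{0,\pi/2\}$ matter — does all the real work, and the rest is immediate. One minor point to check is that $t=\cos^2\phi$ indeed ranges over all of $[0,1]$ as $(a,b)$ ranges over the circle of radius $s$ (it does, including the degenerate cases $a=0$ or $b=0$ which correspond to $u_2=0$ or $v_1=-u_1$ in Lemma~\ref{lem:parameterization}), so no constraint is lost in the reparametrization.
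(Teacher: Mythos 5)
Your proposal is correct and follows essentially the same route as the paper: substitute the explicit forms of $g_\eta, h_\eta$, pass to polar coordinates so that $F^2$ depends only on $w=\tau/\sqrt{a^2+b^2}$ and a single angular quantity, observe that $F^2$ is affine in that quantity, and conclude the maximum sits at the endpoints $|1-w e^{(1-w^2)/2}|$ and $|1-w^3 e^{(1-w^2)/2}|$. The paper writes the affine variable as $\sin^2\theta$ where you write $t=\cos^2\phi$ (the same thing up to a $\pi/2$ shift), and your bookkeeping of the exponents checks out, so the two arguments are word-for-word equivalent.
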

 
\begin{lem}
    [Gaussian Tail Bound] \label{lem:gaussian_tail} Let $X\sim \calN(0,1)$, then for any $t>0$ we have $
        \frac{1}{4}\exp(-t^2)\le\mathbbm{P}(X\ge t)\le \frac{1}{2}\exp(-\frac{t^2}{2}).$
\end{lem}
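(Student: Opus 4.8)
The plan is to derive each of the two inequalities by reducing it to the nonnegativity of an explicit single-variable function on $(0,\infty)$ and checking this by elementary calculus; no probabilistic machinery beyond the density of $\calN(0,1)$ is needed.

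For the upper bound, I would set $g(t):=\frac12 e^{-t^2/2}-\mathbbm{P}(X\ge t)$, so that $g(0)=\frac12-\frac12=0$ and $\lim_{t\to\infty}g(t)=0$. Differentiating, $g'(t)=-\frac{t}{2}e^{-t^2/2}+\frac{1}{\sqrt{2\pi}}e^{-t^2/2}=e^{-t^2/2}\bigl(\frac{1}{\sqrt{2\pi}}-\frac{t}{2}\bigr)$, which is positive for $t<\sqrt{2/\pi}$ and negative for $t>\sqrt{2/\pi}$. Hence $g$ increases from $g(0)=0$ on $[0,\sqrt{2/\pi}]$ and then decreases monotonically to $0$ on $[\sqrt{2/\pi},\infty)$; a function with this shape is nonnegative everywhere, which gives $\mathbbm{P}(X\ge t)\le\frac12 e^{-t^2/2}$. (Alternatively one can use the classical estimate $\mathbbm{P}(X\ge t)\le\frac{1}{t\sqrt{2\pi}}e^{-t^2/2}$, obtained from $e^{-x^2/2}\le e^{t^2/2-tx}$ for $x\ge t$, which already suffices when $t\ge\sqrt{2/\pi}$, together with a crude bound for small $t$; the monotonicity argument simply handles all $t$ at once.)

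For the lower bound, I would set $h(t):=\mathbbm{P}(X\ge t)-\frac14 e^{-t^2}$, so $h(0)=\frac12-\frac14>0$ and $\lim_{t\to\infty}h(t)=0$. The key computation is $h'(t)=-\frac{1}{\sqrt{2\pi}}e^{-t^2/2}+\frac{t}{2}e^{-t^2}=e^{-t^2/2}\bigl(\frac{t}{2}e^{-t^2/2}-\frac{1}{\sqrt{2\pi}}\bigr)$, and I would argue this is negative for every $t>0$: since $te^{-t^2/2}$ is maximised at $t=1$ with value $e^{-1/2}$, one has $\frac{t}{2}e^{-t^2/2}\le\frac12 e^{-1/2}<\frac{1}{\sqrt{2\pi}}$, the last step being the numerical fact $e^{-1/2}<\sqrt{2/\pi}$ (i.e. $0.607<0.798$). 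Thus $h$ is strictly decreasing on $(0,\infty)$ with limit $0$, so $h(t)>0$ for all finite $t$, i.e. $\mathbbm{P}(X\ge t)\ge\frac14 e^{-t^2}$. The proof has no real obstacle; the only nontrivial ingredient is the elementary optimisation $\max_{t\ge0}te^{-t^2/2}=e^{-1/2}$ together with the comparison $e^{-1/2}<\sqrt{2/\pi}$.
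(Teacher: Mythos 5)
Your proof is correct and self-contained. A few remarks.

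The paper states Lemma \ref{lem:gaussian_tail} without proof, treating these two-sided Gaussian tail estimates as standard facts, so there is no ``paper route'' to compare against. Your monotonicity argument is a clean elementary derivation. The calculus checks out: for the upper bound, $g(0)=0$, $g(\infty)=0$, and the sign of $g'(t)=e^{-t^2/2}\bigl(\tfrac{1}{\sqrt{2\pi}}-\tfrac{t}{2}\bigr)$ changes exactly once from positive to negative, forcing $g\ge 0$. For the lower bound, $h(0)=\tfrac14>0$, $h(\infty)=0$, and $h'(t)=e^{-t^2/2}\bigl(\tfrac{t}{2}e^{-t^2/2}-\tfrac{1}{\sqrt{2\pi}}\bigr)<0$ for all $t>0$ because $\max_{t\ge 0}\,t e^{-t^2/2}=e^{-1/2}$ and $\tfrac12 e^{-1/2}\approx 0.303<\tfrac{1}{\sqrt{2\pi}}\approx 0.399$, so $h$ decreases strictly to $0$ and is therefore positive. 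The only stylistic point is that your parenthetical ``alternative'' for the upper bound (via the classical $t^{-1}e^{-t^2/2}/\sqrt{2\pi}$ estimate) is not actually needed and slightly muddies the exposition; the monotonicity argument already covers all $t>0$ uniformly, so you could simply drop it.

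One minor sharpening worth noting: the lower bound $\tfrac14 e^{-t^2}$ is quite loose (the exponent is $-t^2$ rather than the sharp $-t^2/2$), and your argument actually shows the stronger conclusion $h(t)>0$ strictly, not merely $h(t)\ge 0$. Neither observation changes the validity; the looseness of the constant and the exponent is exactly what makes the single-variable monotonicity approach work without any delicate estimates.
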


      \begin{lem}
    [See, e.g., Lemma 16 in \cite{oymak2017sharp}] \label{lem:pro_closec}Let $\calD\subset \mathbb{R}^n$ be a closed cone, then for any $\bv\in \mathbb{R}^n$ we have 
    \begin{align*}
        \|\calP_{\calD}(\bv)\|_2 = \sup_{\bu\in \calD\cap \mathbb{B}_2^n}\bu^\top\bv.
    \end{align*}
    In particular, for $\bv,\bv'\in\mathbb{R}^n$ and $\lambda\ge 0$, this implies  $ \|\calP_{\calD}(\bv+\bv')\|_2 \le \|\calP_{\calD}(\bv)\|_2+\|\calP_{\calD}(\bv')\|_2$, $ \|\calP_{\calD}(\lambda \bv)\|_2 = \lambda\|\calP_{\calD}(\bv)\|_2$ and $\|\calP_{\calD}(\bv)\|_2\le \|\bv\|_2.$
 \end{lem}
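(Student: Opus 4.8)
The plan is to establish the displayed identity directly from the defining minimality property of a closest point, using only that $\calD$ is a closed cone — crucially, \emph{without} assuming convexity, since in our applications $\calD = \Sigma^n_{2k}$ is not convex. First I would fix $\bv \in \mathbb{R}^n$ and pick any closest point $\bp \in \calD$ to $\bv$, which exists because $\calD$ is nonempty and closed; note that $\bzero \in \calD$ as $\calD$ is a nonempty closed cone. A byproduct of the argument below is that $\|\bp\|_2$ is the same for every closest point, so $\|\calP_{\calD}(\bv)\|_2$ on the left-hand side is well defined even when the minimizer is not unique.

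The two workhorse facts are: (i) $\langle \bv, \bp\rangle = \|\bp\|_2^2$; and (ii) $\langle \bv, \bu\rangle \le \|\bp\|_2\,\|\bu\|_2$ for every $\bu \in \calD$. For (i), since $t\bp \in \calD$ for all $t \ge 0$ by the cone property, the scalar map $t \mapsto \|\bv - t\bp\|_2^2$ is a strictly convex quadratic (when $\bp \ne \bzero$; the case $\bp = \bzero$ is trivial) minimized over $[0,\infty)$ at $t = 1$, which forces its unconstrained minimizer $\langle\bv,\bp\rangle/\|\bp\|_2^2$ to equal $1$. For (ii), minimality gives $\|\bv - \bp\|_2^2 \le \|\bv - t\bu\|_2^2$ for all $t \ge 0$, i.e. (using (i) on the left-hand side) $\|\bv\|_2^2 - \|\bp\|_2^2 \le \|\bv\|_2^2 - 2t\langle\bv,\bu\rangle + t^2\|\bu\|_2^2$; rearranging and optimizing over $t > 0$ via the AM--GM inequality yields (ii).

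With these in hand the identity splits into two one-line bounds. For ``$\le$'': every $\bu \in \calD \cap \mathbb{B}_2^n$ satisfies $\langle\bv,\bu\rangle \le \|\bp\|_2\|\bu\|_2 \le \|\bp\|_2$ by (ii), so the supremum is at most $\|\bp\|_2$. For ``$\ge$'': if $\bp = \bzero$ the supremum is $\ge \langle\bv,\bzero\rangle = 0 = \|\bp\|_2$; otherwise $\bp/\|\bp\|_2 \in \calD\cap\mathbb{B}_2^n$ by the cone property and $\langle\bv,\bp/\|\bp\|_2\rangle = \|\bp\|_2$ by (i), so the supremum is at least $\|\bp\|_2$. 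Hence $\|\calP_{\calD}(\bv)\|_2 = \sup_{\bu\in\calD\cap\mathbb{B}_2^n}\bu^\top\bv$. The ``in particular'' claims are then immediate: the right-hand side is the support-function-type quantity $\bv \mapsto \sup_{\bu\in\calD\cap\mathbb{B}_2^n}\langle\bu,\bv\rangle$, which is subadditive and positively homogeneous in $\bv$ (giving the triangle inequality and $\|\calP_{\calD}(\lambda\bv)\|_2 = \lambda\|\calP_{\calD}(\bv)\|_2$ for $\lambda\ge 0$), and it is bounded above by $\sup_{\bu\in\mathbb{B}_2^n}\langle\bu,\bv\rangle = \|\bv\|_2$ by Cauchy--Schwarz.

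As for difficulty: this is essentially routine, and there is no serious obstacle. The only point requiring a little care is to resist invoking the obtuse-angle variational inequality for projections onto convex sets (which fails here), and instead to test minimality only along the rays $\{t\bp : t\ge 0\}$ and $\{t\bu : t \ge 0\}$ lying inside $\calD$; one should also state explicitly that the closest point need not be unique, but since the resulting formula for $\|\calP_{\calD}(\bv)\|_2$ makes no reference to the minimizer, this is harmless.
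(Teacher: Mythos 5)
Your proof is correct. The paper does not supply a proof of this lemma --- it simply cites Lemma 16 of \cite{oymak2017sharp} --- so there is no in-paper argument to compare against, but your derivation is the standard one: test minimality of $t\mapsto\|\bv-t\bp\|_2^2$ and $t\mapsto\|\bv-t\bu\|_2^2$ along the rays $\{t\bp\}$, $\{t\bu\}$ inside the cone to get $\langle\bv,\bp\rangle=\|\bp\|_2^2$ and $\langle\bv,\bu\rangle\le\|\bp\|_2\|\bu\|_2$, then read off the support-function identity. Two small points worth flagging: you correctly avoid the obtuse-angle variational inequality (which requires convexity and fails for $\Sigma^n_{2k}$), and your remark that the right-hand side is independent of the choice of minimizer is exactly what makes $\|\calP_{\calD}(\bv)\|_2$ well defined when $\calD$ is nonconvex --- a point the paper glosses over but uses implicitly throughout.
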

\begin{lem}
    [See, e.g., Lemma 18 in \cite{oymak2017sharp}] \label{lem:pro_contain}Let $\calD$ be a closed and non-empty set that contains $0$, and $\calC$ be a closed cone such that $\calD\subset \calC$. Then for any $\bv\in \mathbb{R}^n$ we have $\|\calP_{\calD}(\bv)\|_2\le 2\|\calP_{\calC}(\bv)\|_2.$
\end{lem}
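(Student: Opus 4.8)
The plan is to reduce the inequality to one–dimensional minimizations along the two rays through $\bp:=\calP_{\calD}(\bv)$ and $\bz:=\calP_{\calC}(\bv)$, using only that nearest points exist (both sets are closed and nonempty, so this is fine even without convexity), that $\calC$ is a cone (hence contains the whole rays $\{t\bp:t\ge0\}$ and $\{t\bz:t\ge0\}$, since $\bp\in\calD\subseteq\calC$ and $\bz\in\calC$), and that $0\in\calD$. First I would dispose of the trivial case $\bp=\bzero$, where the claimed bound is immediate, and assume $\bp\ne\bzero$ from now on.

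The first substantive step uses $0\in\calD$: since $\bp$ is a nearest point of $\calD$ to $\bv$ and $0\in\calD$, we have $\|\bv-\bp\|_2\le\|\bv\|_2$, and expanding the square gives $\|\bp\|_2^2\le 2\langle\bv,\bp\rangle$; in particular $\langle\bv,\bp\rangle\ge\|\bp\|_2^2/2>0$. The second step uses the cone property along $\bp$: because $t\bp\in\calC$ for all $t\ge0$ and $\bz$ is a nearest point of $\calC$, we get $\|\bv-\bz\|_2^2\le\min_{t\ge0}\|\bv-t\bp\|_2^2=\|\bv\|_2^2-\langle\bv,\bp\rangle^2/\|\bp\|_2^2$, where the closed form for the minimum uses $\langle\bv,\bp\rangle>0$ from the first step. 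Note this also forces $\bz\ne\bzero$: otherwise the left-hand side would be $\|\bv\|_2^2$, forcing $\langle\bv,\bp\rangle=0$, contradicting the first step. The third step uses the cone property along $\bz$: since $t=1$ minimizes the (strictly convex, as $\bz\ne\bzero$) parabola $t\mapsto\|\bv-t\bz\|_2^2$ over $[0,\infty)$, its vertex must sit at $t=1$, i.e. $\langle\bv,\bz\rangle=\|\bz\|_2^2$, whence $\|\bv-\bz\|_2^2=\|\bv\|_2^2-\|\bz\|_2^2$.

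Finally I would combine: equating the two expressions for $\|\bv-\bz\|_2^2$ from the last two steps yields $\langle\bv,\bp\rangle^2/\|\bp\|_2^2\le\|\bz\|_2^2$, i.e. $\langle\bv,\bp\rangle\le\|\bz\|_2\|\bp\|_2$; plugging this into $\|\bp\|_2^2\le 2\langle\bv,\bp\rangle$ from the first step gives $\|\bp\|_2^2\le 2\|\bz\|_2\|\bp\|_2$, and dividing by $\|\bp\|_2>0$ gives $\|\calP_{\calD}(\bv)\|_2=\|\bp\|_2\le 2\|\bz\|_2=2\|\calP_{\calC}(\bv)\|_2$, as desired.

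There is no real analytic obstacle here; the only thing requiring care is bookkeeping in the degenerate cases ($\bp=\bzero$, and the possibility $\bz=\bzero$, which I rule out as above) and phrasing the elementary optimality conditions correctly when $\calD$ and $\calC$ need not be convex — the point being that we never use convexity of either set, only closedness, nonemptiness, the conic structure of $\calC$, and $0\in\calD$. Lemma~\ref{lem:pro_closec} is not needed for this argument, though one could alternatively package Steps~2--3 through the cone-projection identities it records.
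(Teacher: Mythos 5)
Your proof is correct. The paper itself gives no proof of this lemma (it only cites Lemma 18 of \cite{oymak2017sharp}), and your argument is the standard one: the inequality $\|\bp\|_2^2\le 2\langle\bv,\bp\rangle$ from $0\in\calD$, combined with $\langle\bv,\bp\rangle\le\|\bp\|_2\,\|\calP_{\calC}(\bv)\|_2$, which you derive by hand via the two ray minimizations but which is exactly the content of Lemma \ref{lem:pro_closec} applied to $\bp/\|\bp\|_2\in\calC\cap\mathbb{B}_2^n$. Your handling of the degenerate cases $\bp=\bzero$ and $\bz=\bzero$ is also correct.
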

\begin{lem}
    [Local Binary Embedding, Theorem 3.2 in \cite{oymak2015near}]\label{lem:binaryembed} Suppose that $\ba_1,\cdots,\ba_m$ are i.i.d. $\calN(0,\bI_n)$. Given an arbitrary set $\calK$ contained in $\mathbb{S}^{n-1}$ and some sufficiently small $r$ satisfying $rm \ge C_0$ for some  constant $C _0\ge 1$, we let $r'=\frac{c_1r}{\log ^{-1/2}(r^{-1})}$ for some small enough $c_1$.   If 
    $$
        m\ge C_2\Big(\frac{\omega^2(\calK_{(r')})}{r^3}+\frac{\scrH(\calK,r')}{r}\Big)$$
    holds for some sufficiently large $C_2$, then with probability at least $1-C_3\exp(-c_4rm)$, for any $\bu,\bv\in \calK$ with $\|\bu-\bv\|_2\le r'$ we have $$d_H\big(\sign(\bA\bu),\sign(\bA\bv)\big)\le C_5 rm. $$
\end{lem}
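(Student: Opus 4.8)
The plan is to obtain this as a strict simplification of the argument already used to establish the small-distance event $E_s$ in Theorem \ref{thm:local_embed}, with the phaseless hyperplane $\calH_{|\ba_i|}$ replaced by the ordinary hyperplane $\calH_{\ba_i,0}$; indeed, the present lemma is exactly Theorem 3.2 of \cite{oymak2015near}, and I include a sketch only for completeness. First I would record the elementary deterministic fact (the linear analogue of the first bullet of Lemma \ref{lem:sepa_im_sepa}): if $\sign(\ba^\top\bu)\ne\sign(\ba^\top\bv)$, then one of $\ba^\top\bu,-\ba^\top\bv$ is nonnegative and the other nonpositive, hence $|\ba^\top\bu|\le|\ba^\top(\bu-\bv)|$. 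Fix a minimal $r'$-net $\calN_{r'}\subset\calK$ with $\log|\calN_{r'}|=\scrH(\calK,r')$, and for $\bu,\bv\in\calK$ with $\|\bu-\bv\|_2\le r'$ let $\bp\in\calN_{r'}$ be the closest net point to $\bu$. Introducing a threshold $t\asymp r$, the event ``$\ba_i$ separates $\bu,\bv$'' together with $|\ba_i^\top\bp|>t$ forces $|\ba_i^\top(\bu-\bv)|+|\ba_i^\top(\bu-\bp)|\ge|\ba_i^\top\bp|>t$; since $\bu-\bv,\bu-\bp\in\calK_{(r')}$, this yields the deterministic bound
\[
d_H\big(\sign(\bA\bu),\sign(\bA\bv)\big)\le \underbrace{\sup_{\bp\in\calN_{r'}}\sum_{i=1}^m\mathbbm{1}\big(|\ba_i^\top\bp|\le t\big)}_{:=\breve{T}_1}+2\underbrace{\sup_{\bw\in\calK_{(r')}}\sum_{i=1}^m\mathbbm{1}\big(|\ba_i^\top\bw|>t/2\big)}_{:=\breve{T}_2}.
\]

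It then remains to bound $\breve{T}_1$ and $\breve{T}_2$ verbatim as in the proof of Theorem \ref{thm:local_embed}. For $\breve{T}_1$: for each fixed $\bp\in\calN_{r'}$ one has $\sum_i\mathbbm{1}(|\ba_i^\top\bp|\le t)\sim\mathrm{Bin}(m,\mathbbm{P}(|\calN(0,1)|\le t))$ with $\mathbbm{P}(|\calN(0,1)|\le t)\le\sqrt{2/\pi}\,t$, so the weakened Chernoff bound (Lemma \ref{lem:chernoff}) gives $\mathbbm{P}(\cdot\gtrsim mt)\le\exp(-cmt)$; a union bound over $\calN_{r'}$ combined with $m\gtrsim\scrH(\calK,r')/r$ yields $\breve{T}_1\lesssim mr$ with probability at least $1-\exp(-c'mr)$. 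For $\breve{T}_2$: apply Lemma \ref{lem:max_ell_sum} with $\calW=\calK_{(r')}$ and $\ell\asymp rm$; the $\ell$-th largest entry of $|\bA\bw|$ is uniformly at most a constant times $\omega(\calK_{(r')})/\sqrt{rm}+r'\sqrt{\log(e/r)}$, which is $\lesssim r$ once $m\gtrsim\omega^2(\calK_{(r')})/r^3$ and $r'=c_1 r/\log^{1/2}(r^{-1})$ with $c_1$ small enough. Hence no more than $\ell\asymp rm$ coordinates of $|\bA\bw|$ can exceed $t/2\asymp r$, i.e.\ $\breve{T}_2\lesssim rm$ with probability at least $1-\exp(-c''rm\log(e/r))$. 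Combining the two estimates gives $d_H(\sign(\bA\bu),\sign(\bA\bv))\le C_5rm$ simultaneously for all admissible $(\bu,\bv)$, with probability at least $1-C_3\exp(-c_4rm)$.

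The only genuinely delicate point — and the reason for the scaling $r'=c_1 r/\log^{1/2}(r^{-1})$ — is the passage from the net $\calN_{r'}$ back to all of $\calK$ in the control of $\breve{T}_2$: the entrywise bound furnished by Lemma \ref{lem:max_ell_sum} carries an unavoidable $\sqrt{\log(e/r)}$ factor, so the covering radius must be shrunk by exactly $\log^{-1/2}(r^{-1})$ for that term to stay below the threshold $t\asymp r$. This is precisely the logarithmic-gap device that underlies Theorem \ref{thm:local_embed}; apart from it, the proof is a routine Chernoff-plus-Gaussian-width argument. (Equivalently, one may simply invoke Theorem 3.2 of \cite{oymak2015near}.)
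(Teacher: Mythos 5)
Your sketch is correct, and it is essentially the linear-case specialization of the argument the paper gives for event $E_s$ in Theorem~\ref{thm:local_embed} (which, as the authors note, itself follows \cite{oymak2015near}). The paper does not re-prove this lemma — it is quoted verbatim as Theorem~3.2 of \cite{oymak2015near} — so there is no internal proof to compare against; what you have written reproduces that cited argument faithfully.

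One small remark on your decomposition: the proof in Appendix~\ref{app:prove_thm1} passes through a hamming-distance triangle inequality, bounding $d_H(\bu,\bv)$ via $d_H(\bu,\bp)+d_H(\bv,\bp)$ for a single net point $\bp$, which is why its $\breve{T}_2$-type supremum runs over $\calK_{(3r'/2)}$ rather than $\calK_{(r')}$. Your version instead uses the elementary fact that separation of $\bu,\bv$ forces $|\ba_i^\top\bu|\le|\ba_i^\top(\bu-\bv)|$, then triangulates $|\ba_i^\top\bp|$ directly; this is a slight streamlining that avoids the hamming triangle inequality and keeps the relevant localization set at $\calK_{(r')}$. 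Both routes are standard and equivalent up to constants, and your identification of the $r'\asymp r/\log^{1/2}(r^{-1})$ shrinkage as the device that absorbs the $\sqrt{\log(e/r)}$ factor from Lemma~\ref{lem:max_ell_sum} is exactly right. Finally, note that the lemma as typeset in the paper writes $r'=c_1 r/\log^{-1/2}(r^{-1})$, which would make $r'$ \emph{larger} than $r$; this is evidently a sign typo for $\log^{1/2}$ (compare Theorem~\ref{thm:local_embed}), and you have implicitly corrected it — worth flagging.
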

\subsection{Concentration Inequalities}
\begin{lem}
    [Chernoff Bound, e.g., Section 4.1 in \cite{motwani1995randomized}] \label{lem:chernoff} 
    Suppose that $X\sim \text{\rm Bin}(m,q)$ for some $q\in (0,1)$, then we have the following   bounds: 
    \begin{itemize}
    [leftmargin=2ex,topsep=0.25ex]
\setlength\itemsep{-0.1em}
        \item (Chernoff Bound)  For any $\delta>0$ we have $$\mathbbm{P}\big(X \ge (1+\delta)mq\big)\le \exp \Big(-mq \big((1+\delta)\log(1+\delta)-\delta\big)\Big);$$  
        We also have $$  \mathbbm{P}\big(X\le (1-\delta)mq\big)  \le \exp\Big(-mq \big((1-\delta)\log(1-\delta)+\delta\big)\Big)$$ for $\delta\in (0,1);$ 
        \item (Weakened Chernoff Bound)  For any $\delta \in (0,1)$, we have $$ \mathbbm{P}\big(X\ge (1+\delta)mq\big) \le \exp \Big(-\frac{\delta^2 mq}{3}\Big)\quad \text{and} \quad \mathbbm{P}\big(X\le (1-\delta)mq\big) \le \exp \Big(-\frac{\delta^2mq}{3}\Big).$$ 
    \end{itemize}
\end{lem}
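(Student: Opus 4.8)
The plan is to use the classical exponential-moment (Chernoff) method. I would write $X=\sum_{i=1}^m X_i$ with the $X_i$ i.i.d.\ $\mathrm{Bern}(q)$, so that the moment generating function factorizes as $\mathbbm{E}[e^{tX}]=(1-q+qe^t)^m$ for every $t\in\mathbb{R}$. For the upper tail I would fix $\delta>0$ and any $t>0$ and apply Markov's inequality to $e^{tX}$, which gives $\mathbbm{P}(X\ge(1+\delta)mq)\le e^{-t(1+\delta)mq}(1-q+qe^t)^m$. Using $1+u\le e^u$ with $u=q(e^t-1)$ bounds $(1-q+qe^t)^m\le \exp\!\big(mq(e^t-1)\big)$, hence $\mathbbm{P}(X\ge(1+\delta)mq)\le \exp\!\big(mq(e^t-1)-t(1+\delta)mq\big)$. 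The exponent, as a function of $t>0$, is convex and minimized at $e^t=1+\delta$, i.e.\ $t=\log(1+\delta)>0$; substituting yields exactly $\exp\!\big(-mq((1+\delta)\log(1+\delta)-\delta)\big)$, which is the first Chernoff bound.

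The lower-tail bound is entirely symmetric. For $\delta\in(0,1)$ and any $t>0$ I would apply Markov's inequality to $e^{-tX}$ to get $\mathbbm{P}(X\le(1-\delta)mq)\le e^{t(1-\delta)mq}\,\mathbbm{E}[e^{-tX}]\le \exp\!\big(t(1-\delta)mq+mq(e^{-t}-1)\big)$, again via $1+u\le e^u$. Minimizing the exponent over $t>0$ gives $e^{-t}=1-\delta$, i.e.\ $t=-\log(1-\delta)>0$ (legitimate because $0<1-\delta<1$), and substituting produces $\exp\!\big(-mq((1-\delta)\log(1-\delta)+\delta)\big)$, the second Chernoff bound.

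For the weakened forms it only remains to lower-bound the two rate functions $\phi_+(\delta):=(1+\delta)\log(1+\delta)-\delta$ and $\phi_-(\delta):=(1-\delta)\log(1-\delta)+\delta$ by $\delta^2/3$ on $(0,1)$. For $\phi_-$ the elementary power-series identity $\phi_-(\delta)=\sum_{k\ge 2}\delta^k/(k(k-1))\ge \delta^2/2\ge\delta^2/3$ does the job. For $\phi_+$ I would set $h(\delta):=\phi_+(\delta)-\delta^2/3$, observe $h(0)=h'(0)=0$ with $h'(\delta)=\log(1+\delta)-2\delta/3$, and check via $h''(\delta)=1/(1+\delta)-2/3$ that $h'$ increases on $(0,\tfrac12)$ and decreases on $(\tfrac12,1)$ while remaining nonnegative at $\delta=1$ (since $\log 2>2/3$); hence $h'\ge 0$ and $h\ge 0$ on $(0,1)$. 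Combining with the two Chernoff bounds gives $\mathbbm{P}(X\ge(1+\delta)mq)\le e^{-\delta^2 mq/3}$ and $\mathbbm{P}(X\le(1-\delta)mq)\le e^{-\delta^2 mq/3}$ for $\delta\in(0,1)$. There is no real obstacle in this proof; the only mildly technical point is the calculus verification that $\phi_+(\delta)\ge\delta^2/3$ on $(0,1)$, which could also simply be quoted from standard references such as \cite{motwani1995randomized}.
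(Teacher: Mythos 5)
Your proof is correct: the paper does not prove this lemma but simply cites it as the standard Chernoff bound from \cite{motwani1995randomized}, and your argument is precisely the classical exponential-moment derivation (Markov applied to $e^{\pm tX}$, the bound $1+u\le e^u$, optimization at $t=\pm\log(1\pm\delta)$) together with a valid verification that both rate functions dominate $\delta^2/3$ on $(0,1)$. Nothing is missing.
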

\begin{comment}
\begin{lem}
    [Bernstein's Inequality, e.g., \cite{vershynin2018high}] \label{lem:bernstein} Let $X_1,...,X_m$ be independent, zero-mean, sub-exponential random variables. Then for some absolute constant $c$ and any $t\ge 0$, we have 
    \begin{align}
        \mathbbm{P} \left(\Big|\frac{1}{m}\sum_{i=1}^m X_i\Big| \ge t\right) \le 2\exp \left(-c m \min\Big\{\frac{t^2}{K^2},\frac{t}{K}\Big\}\right),
    \end{align}
    where $K= \max_i \|X_i\|_{\psi_1}$. 
\end{lem}
\end{comment}
\begin{lem}[See, e.g., Exercise 8.6.5 in \cite{vershynin2018high}]
    \label{lem:tala}
  Let $(R_{\bu})_{\bu\in\calW}$ be a random process (not necessarily zero-mean) on a subset $\calW\subset \mathbb{R}^n$. Assume that $R_{0}=0$, and  $\|R_{\bu}-R_{\bv}\|_{\psi_2}\leq K\|\bu-\bv\|_2$ holds for all $\bu,\bv\in\calW\cup\{0\}$. Then, for every $t\geq 0$, the event 
    $$
        \sup_{\bu\in \calW}\big|R_{\bu}\big|\leq CK \big(\omega(\calW)+t\cdot \rad(\calW)\big)$$ 
    with probability at least $1-2\exp(-t^2)$. 
\end{lem}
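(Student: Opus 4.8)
The statement is the standard combination of Talagrand's comparison inequality with a concentration-of-suprema estimate, and the plan is to deduce it from the general tail bound for processes with sub-Gaussian increments (e.g., Exercise 8.6.5 in \cite{vershynin2018high}) after checking that our hypotheses translate correctly. First I would pass to the augmented index set $\calW_0 := \calW \cup \{0\}$. Since $R_0 = 0$, for every $\bu \in \calW$ we have $|R_\bu| = |R_\bu - R_0| \le \sup_{\bu,\bv \in \calW_0}|R_\bu - R_\bv|$, so it suffices to bound the supremum of the increments over $\calW_0$; by hypothesis the process $(R_\bu)_{\bu\in\calW_0}$ has sub-Gaussian increments with respect to the Euclidean metric, $\|R_\bu - R_\bv\|_{\psi_2} \le K\|\bu - \bv\|_2$.

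Next I would control the mean. Introduce the canonical Gaussian process $Y_\bu = \langle \bg, \bu\rangle$ with $\bg \sim \calN(0,\bI_n)$, whose increments satisfy $\|Y_\bu - Y_\bv\|_{L^2} = \|\bu - \bv\|_2$, so the increments of $(R_\bu)$ are dominated, up to the factor $K$, by those of $(Y_\bu)$. Talagrand's comparison theorem then gives $\mathbbm{E}\sup_{\bu,\bv \in \calW_0}|R_\bu - R_\bv| = \mathbbm{E}\sup_{\bu,\bv\in\calW_0}(R_\bu - R_\bv) \le C_0 K\, \mathbbm{E}\sup_{\bu\in\calW_0}Y_\bu$ up to absolute constants. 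Since $0$ contributes $0$ to a supremum of absolute values and $\calW$ is nonempty, $\mathbbm{E}\sup_{\bu\in\calW_0}Y_\bu \le \mathbbm{E}\sup_{\bu\in\calW_0}|Y_\bu| = \mathbbm{E}\sup_{\bu\in\calW}|Y_\bu| = \omega(\calW)$ in the paper's notation; hence $\mathbbm{E}\sup_{\bu\in\calW}|R_\bu| \le C_0 K\, \omega(\calW)$, with the understanding that this $\gamma_2$-type bound is exactly the chaining content being borrowed from the reference.

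For the high-probability statement I would invoke the deviation half of the sub-Gaussian-process tail estimate: the supremum of the increments of $(R_\bu)$ over $\calW_0$ concentrates around its mean with sub-Gaussian tails of variance proxy of order $K^2\,\diam(\calW_0)^2$, i.e. for every $s \ge 0$,
\[
\mathbbm{P}\Big(\sup_{\bu,\bv\in\calW_0}|R_\bu - R_\bv| \ge \mathbbm{E}\sup_{\bu,\bv\in\calW_0}|R_\bu - R_\bv| + s\Big) \le 2\exp\!\Big(-\frac{c\,s^2}{K^2\,\diam(\calW_0)^2}\Big).
\]
(If one prefers, this can be obtained directly for $\sup_\bu R_\bu$, whose fluctuations have variance proxy $\sup_\bu\|R_\bu - R_0\|_{\psi_2}^2 \le K^2\,\rad(\calW)^2$, and then combined with the same estimate for $-R_\bu$ by a union bound.) Since $\diam(\calW_0) \le 2\,\rad(\calW)$ — every point of $\calW_0$ lies within $\rad(\calW)$ of the origin — choosing $s = C_1 K t\,\rad(\calW)$ and combining with the mean bound from the previous step yields $\sup_{\bu\in\calW}|R_\bu| \le \sup_{\bu,\bv\in\calW_0}|R_\bu - R_\bv| \le C K\big(\omega(\calW) + t\,\rad(\calW)\big)$ with probability at least $1 - 2\exp(-t^2)$, after absorbing absolute constants, which is the claim.

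The main obstacle is that the two inputs I rely on — Talagrand's comparison inequality for the expectation and the concentration of the supremum around its mean — are themselves nontrivial chaining results; fortunately both are packaged together in the cited tail bound for sub-Gaussian processes, so once the hypotheses are matched ($R_0 = 0$; sub-Gaussian increments in the Euclidean metric; $\gamma_2(\calW_0) \lesssim \omega(\calW)$; $\diam(\calW_0) \le 2\,\rad(\calW)$), the lemma follows directly. Were a self-contained proof required instead, the work would amount to redoing a Dudley-type chaining argument carrying both an expectation term and a deviation term — routine but lengthy — so citing the standard result is the cleanest route.
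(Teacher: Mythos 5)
The paper gives no proof of this lemma: it is quoted as a standard result, with a pointer to Exercise~8.6.5 in \cite{vershynin2018high}, so there is no in-paper argument to compare against. Your derivation is essentially the intended one behind that exercise, namely combining Talagrand's comparison theorem (to pass from $\gamma_2$ to the Gaussian width) with the tail version of the generic chaining bound, and your reductions ($\calW_0=\calW\cup\{0\}$ so that $|R_\bu|=|R_\bu-R_0|$; $\omega(\calW_0)=\omega(\calW)$; $\diam(\calW_0)\le 2\rad(\calW)$) are correct and exactly what is needed to match the hypotheses.

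One caution on presentation: you package the deviation term as ``the supremum concentrates around its mean with variance proxy $K^2\diam(\calW_0)^2$,'' as if it were a black-box concentration-of-suprema inequality. For genuinely Gaussian processes that is Borell--TIS (with variance proxy $\sup_\bu\var(Y_\bu)$, not $\diam^2$), but for a general sub-Gaussian process there is no such clean concentration-of-maxima result to invoke separately; the $t\cdot\diam$ term and the $\gamma_2$ term both fall out of a single chaining argument, which is what the cited exercise really encapsulates. Similarly, your parenthetical alternative bounding the fluctuations of $\sup_\bu R_\bu$ by $\sup_\bu\|R_\bu-R_0\|_{\psi_2}^2$ is not a valid shortcut, since pointwise sub-Gaussianity does not control the fluctuations of a supremum. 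Since you ultimately lean on the full sub-Gaussian tail bound as a package, the conclusion is fine, but it would be cleaner to cite the chaining tail bound (Theorem~8.5.5 plus its tail-bound form, or directly Exercise~8.6.5) in one stroke rather than implying a two-step ``mean bound plus concentration'' decomposition that does not literally exist for sub-Gaussian processes.
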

\begin{lem}
    [Concentration of Product Process, \cite{mendelson2016upper} (see also \cite{genzel2023unified})] \label{lem:product_process}Let $\{g_{\bu}\}_{\bu\in \calU}$ and $\{h_{\bv}\}_{\mathbf{b}\in\calV}$ be stochastic processes indexed by two sets $\calU\subset \mathbb{R}^{p}$ and $\calV\subset \mathbb{R}^q$, both defined on a common probability space $(\Omega,A,\mathbbm{P})$. We assume that there exist $K_{\calU},K_{\calV},r_{\calU},r_{\calV}\geq 0$ such that $\|g_{\bu}-g_{\bu'}\|_{\psi_2}\leq K_{\calU}\|\bu-\bu'\|_2,~\|g_{\bu}\|_{\psi_2} \leq r_{\calU}~(\forall\,\bu,\bu'\in \calU)$, and that $\|h_{\bv}-h_{\bv'}\|_{\psi_2} \leq {K}_{\calV}\|\bv-\bv'\|_2,~\|h_{\bv}\|_{\psi_2}\leq r_{\calV}~(\forall\,\bv,\bv'\in \calV)$. Suppose that $\ba_1,...,\ba_m$ are independent copies of a random variable $\ba\sim \mathbbm{P}$, then for every $t\geq 1$ the event\begin{equation}
   \begin{aligned}\nonumber
        &\sup_{\bu\in\calU}\sup_{\bv\in\calV} ~\Big| \frac{1}{m}\sum_{i=1}^m g_{\bu}(\ba_i)h_{\bv}(\ba_i)-\mathbbm{E}\big[g_{\bu}(\ba_i)h_{\bv}(\ba_i)\big]\Big|\\
        &\leq C\left(\frac{(K_{\calU}\cdot\omega(\calU)+t\cdot r_{\calU})\cdot (K_{\calV}\cdot \omega(\calV)+t\cdot r_{\calV})}{m}+\frac{r_{\calU}\cdot K_{\calV}\cdot \omega(\calV)+r_{\calV}\cdot K_{\calU}\cdot \omega(\calU)+t\cdot r_{\calU}r_{\calV}}{\sqrt{m}}\right)
   \end{aligned}
\end{equation}
holds with probability at least $1-2\exp(-ct^2)$. 
\end{lem}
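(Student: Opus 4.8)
The statement is the product-process estimate of \cite{mendelson2016upper} in the form recorded in \cite{genzel2023unified}, and the plan is to follow that argument; I only sketch it here. First I would set $f_{\bu,\bv}(\ba):=g_\bu(\ba)h_\bv(\ba)$ and $X_{\bu,\bv}:=\frac1m\sum_{i=1}^m\big(f_{\bu,\bv}(\ba_i)-\mathbbm{E}f_{\bu,\bv}(\ba)\big)$, indexed by $\mathcal{T}:=\calU\times\calV$, fix reference points $\bu_0\in\calU,\bv_0\in\calV$, and use the identity $f_{\bu,\bv}=f_{\bu_0,\bv_0}+g_{\bu_0}(h_\bv-h_{\bv_0})+(g_\bu-g_{\bu_0})h_{\bv_0}+(g_\bu-g_{\bu_0})(h_\bv-h_{\bv_0})$ to split $X_{\bu,\bv}$ into four pieces, which I would control separately.

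The single structural input is that the product of two sub-Gaussian variables is sub-exponential, $\|XY\|_{\psi_1}\le\|X\|_{\psi_2}\|Y\|_{\psi_2}$, so $\|f_{\bu,\bv}\|_{\psi_1}\le r_\calU r_\calV$ and, writing $f_{\bu,\bv}-f_{\bu',\bv'}=g_\bu(h_\bv-h_{\bv'})+(g_\bu-g_{\bu'})h_{\bv'}$, one gets $\|f_{\bu,\bv}-f_{\bu',\bv'}\|_{\psi_1}\le\rho((\bu,\bv),(\bu',\bv'))$ with $\rho((\bu,\bv),(\bu',\bv')):=r_\calU K_\calV\|\bv-\bv'\|_2+r_\calV K_\calU\|\bu-\bu'\|_2$; centering via \cite[Lem.~2.6.8]{vershynin2018high} preserves both bounds up to absolute constants. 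For the reference piece $\frac1m\sum_i(f_{\bu_0,\bv_0}(\ba_i)-\mathbbm{E}f_{\bu_0,\bv_0})$, which is a normalized sum of i.i.d.\ centered sub-exponentials with $\psi_1$-norm $\lesssim r_\calU r_\calV$, Bernstein's inequality gives $\lesssim r_\calU r_\calV(\sqrt{t/m}+t/m)$ with probability $\ge1-2e^{-ct^2}$ (after renaming $t$). For the two ``multiplier'' pieces $\sup_\bv|\frac1m\sum_i(g_{\bu_0}(h_\bv-h_{\bv_0})(\ba_i)-\mathbbm{E})|$ and its counterpart, I would apply the mixed-tail generic chaining bound (Bernstein on the increments of $X$ yields the mixed tail $\exp(-cm\min\{s^2/\rho^2,s/\rho\})$), which contributes $\lesssim\frac{r_\calU K_\calV\gamma_2(\calV,\|\cdot\|_2)+t\,r_\calU r_\calV}{\sqrt m}$ plus a lower-order $1/m$ term, and symmetrically with $\calU$ and $\calV$ exchanged. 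For the cross piece I would bound $\frac1m\sum_i(g_\bu-g_{\bu_0})(\ba_i)(h_\bv-h_{\bv_0})(\ba_i)$ by Cauchy--Schwarz through $\big(\tfrac1m\sum_i(g_\bu-g_{\bu_0})^2\big)^{1/2}\big(\tfrac1m\sum_i(h_\bv-h_{\bv_0})^2\big)^{1/2}$ and use a uniform concentration of the squared sub-Gaussian process on each factor (each square root being $\lesssim r+\tfrac{K\omega}{\sqrt m}+(\text{tail}\cdot t)$), so that multiplying the two factors produces precisely $\tfrac1m(K_\calU\omega(\calU)+t r_\calU)(K_\calV\omega(\calV)+t r_\calV)$.

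Finally I would replace the chaining functionals by Gaussian widths: by Talagrand's majorizing-measure theorem $\gamma_2(\calU,\|\cdot\|_2)\lesssim\omega(\calU)$ and $\gamma_2(\calV,\|\cdot\|_2)\lesssim\omega(\calV)$ (the $\omega$ of the paper, which carries an absolute value, dominates the one-sided Gaussian complexity), and $\diam_\rho(\mathcal{T})\lesssim r_\calU r_\calV$ up to the width terms. Collecting the four contributions and absorbing constants then gives exactly the claimed inequality with probability $1-2\exp(-ct^2)$.

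The main obstacle will be the multiplicative structure of the $1/m$ term: it does not come from a $\gamma_1$-functional of the product index set $\mathcal{T}$ but from the cross term $(g_\bu-g_{\bu_0})(h_\bv-h_{\bv_0})$, and making the Cauchy--Schwarz split rigorous requires the two uniform squared-process bounds $\sup_\bu\tfrac1m\sum_i(g_\bu-g_{\bu_0})(\ba_i)^2$ and $\sup_\bv\tfrac1m\sum_i(h_\bv-h_{\bv_0})(\ba_i)^2$ (themselves instances of quadratic empirical-process concentration), after which the rest is bookkeeping. Since the statement is quoted verbatim from \cite{mendelson2016upper,genzel2023unified}, the cleanest route in the paper is to cite those references directly.
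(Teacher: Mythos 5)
The paper itself offers no proof of this lemma: it is imported verbatim from \cite{mendelson2016upper,genzel2023unified} and used as a black box, so your closing remark --- that the cleanest route is to cite those references --- is exactly what the paper does. Your reconstruction of the external argument, however, has a genuine gap at the step you yourself flag as the main obstacle. The Cauchy--Schwarz treatment of the cross term cannot yield the $1/m$ scaling. Writing $\phi=g_{\bu}-g_{\bu_0}$ and $\psi=h_{\bv}-h_{\bv_0}$, each factor $\big(\tfrac1m\sum_i\phi(\ba_i)^2\big)^{1/2}$ concentrates around $\|\phi\|_{L^2}$, which is of order $K_{\calU}\diam(\calU)\lesssim K_{\calU}\,\omega(\calU)$ (or $r_{\calU}$) and does not decay with $m$; the product of the two factors is therefore $O\big(K_{\calU}K_{\calV}\,\omega(\calU)\,\omega(\calV)\big)$ --- an $m$-independent quantity --- not $\tfrac{1}{m}(K_{\calU}\omega(\calU)+t r_{\calU})(K_{\calV}\omega(\calV)+t r_{\calV})$ as you assert. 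Subtracting $\mathbbm{E}[\phi\psi]$ does not rescue this, since Cauchy--Schwarz bounds the empirical average and the expectation separately, and their difference is then still only controlled at the $O(1)$ level.

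The actual content of Mendelson's theorem is precisely how to avoid this loss: the centered cross process is handled by a genuinely two-dimensional chaining in which, at each pair of scales $(s,s')$, the increment $\Delta_s\phi\cdot\Delta_{s'}\psi$ is sub-exponential with $\psi_1$-norm at most $\|\Delta_s\phi\|_{\psi_2}\|\Delta_{s'}\psi\|_{\psi_2}$, and Bernstein's inequality over the $2^{2^s}\cdot 2^{2^{s'}}$ pairs of chain links makes the sub-exponential (i.e.\ $1/m$) part of the tail contribute $2^{s/2}\|\Delta_s\phi\|_{\psi_2}\cdot 2^{s'/2}\|\Delta_{s'}\psi\|_{\psi_2}/m$; summing over $s,s'$ factorizes into $\gamma_2(\calU)\,\gamma_2(\calV)/m\lesssim\omega(\calU)\,\omega(\calV)/m$ via majorizing measures. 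A similar caveat applies to your multiplier pieces: naive mixed-tail chaining produces a $\gamma_1/m$ term that is not bounded by $\omega$ in general, and Mendelson's multiplier theorem is needed to remove it. So while your overall four-term decomposition and the sub-exponential bookkeeping are sound, the sketch as written would not close; for the purposes of this paper, citing the references is the correct and intended ``proof.''
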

\begin{lem}
    [See, e.g., \cite{dirksen2021non}] \label{lem:max_ell_sum} Let $\ba_1,...,\ba_m$ be independent   random vectors in $\mathbb{R}^n$ satisfying $\mathbbm{E}(\ba_i\ba_i^\top)=\bI_n$ and  $\max_i\|\ba_i\|_{\psi_2}\leq L$. For some given given $\calW\subset \mathbbm{R}^n$ and  $1\leq \ell\leq m$, there exist  constants $C_1,c_2$ depending only on $L$ such that   the event 
    \begin{equation*}     \sup_{\bx\in\calW}\max_{\substack{I\subset [m]\\|I|\leq \ell}}\Big(\frac{1}{\ell}\sum_{i\in I}|\langle\ba_i,\bx\rangle|^2\Big)^{1/2}\leq C_1\Big(\frac{\omega(\calW)}{\sqrt{\ell}}+\rad(\calW)\sqrt{\log\Big(\frac{em}{\ell}\Big)}\Big)
    \end{equation*}
    holds with probability at least $1-2\exp(-c_2\ell\log(\frac{em}{\ell}))$. 
\end{lem}
\section{Proofs with Standard Arguments} \label{app:standard}
This appendix collects the proofs with standard or elementary arguments, including the ones for Theorems \ref{thm:SI_low}--\ref{thm:SI_high}, Lemma \ref{lem:intersect_affine}, Lemmas \ref{lem:sepa_im_sepa}--\ref{lem:Pthetauv} and Lemmas \ref{lem:norm_equa}--\ref{lem:maximum}. 
\subsection{Spectral Initialization (Theorems \ref{thm:SI_low}--\ref{thm:SI_high})}

\paragraph{Population level:} We calculate $\mathbbm{E}(\hat{\bS}_{\bx})$ where $\hat{\bS}_{\bx}=\frac{1}{m}\sum_{i=1}^m y_i\ba_i\ba_i^\top$. 
\begin{lem}
	[Expectation of $\hat{\bS}_{\bx}$] \label{lem:ESx}For any $\bx\in \mathbbm{A}_{\alpha,\beta}$, we let $\lambda_{\bx}=\|\bx\|_2$ and $g\sim \calN(0,1)$, then for $\hat{\bS}_{\bx}=\frac{1}{m}\sum_{i=1}^my_i \ba_i\ba_i^\top$ we have $$\mathbbm{E}(\hat{\bS}_{\bx})= a_{\bx}\bx\bx^\top + b_{\bx} \bI_n$$ where $$a_{\bx} =  \lambda_{\bx}^{-2}\cdot \mathbbm{E}(\sign(\lambda_{\bx}|g|-\tau)(g^2-1))\quad\text{and}\quad b_{\bx} = \mathbbm{E}(\sign(\lambda_{\bx}|g|-\tau)).$$ 
	Moreover, there exists some   constant $c_0>0$   only depending on $(\alpha,\beta,\tau)$ such that
	$$\inf_{\bx\in \mathbbm{A}_{\alpha,\beta}}a_{\bx}\ge c_0.$$ 
	Denote the first two eigenvalues of $\mathbbm{E}(\hat{\bS}_{\bx} )$ by $\lambda_1(\mathbbm{E}\hat{\bS}_{\bx})$ and $\lambda_2(\mathbbm{E}\hat{\bS}_{\bx} )$, then we come to $$\lambda_1(\mathbbm{E}\hat{\bS}_{\bx})-\lambda_2(\mathbbm{E}\hat{\bS}_{\bx})=a_{\bx}\|\bx\|_2^2\ge c_1:=\alpha^2c_0\,,\qquad\forall\,\bx\in \mathbbm{A}_{\alpha,\beta}.$$ %uniformly for all $\bx\in \mathbbm{A}_{\alpha,\beta}.$
\end{lem}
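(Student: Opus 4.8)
\textbf{Plan for the proof of Lemma~\ref{lem:ESx}.} The key computation is the expectation of $\hat{\bS}_{\bx}=\frac1m\sum_{i=1}^m y_i\ba_i\ba_i^\top$, which reduces by i.i.d.-ness to evaluating $\mathbbm{E}\big(\sign(|\ba^\top\bx|-\tau)\,\ba\ba^\top\big)$ for a single $\ba\sim\calN(0,\bI_n)$. First I would exploit rotational symmetry: write $\bx=\lambda_{\bx}\be_1$ after an orthogonal change of basis (this is legitimate since $\ba$ is rotationally invariant and the final answer is an isotropic expression plus a rank-one term along $\bx$). Decomposing $\ba=(a_1,\ba_\perp)$ with $a_1\sim\calN(0,1)$ independent of $\ba_\perp\sim\calN(0,\bI_{n-1})$, the weight $\sign(\lambda_{\bx}|a_1|-\tau)$ depends only on $a_1$. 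Hence $\mathbbm{E}(\sign(\lambda_{\bx}|a_1|-\tau)\,\ba_\perp\ba_\perp^\top)=\mathbbm{E}(\sign(\lambda_{\bx}|a_1|-\tau))\cdot\bI_{n-1}$, the off-diagonal cross terms vanish by oddness in $\ba_\perp$, and the $(1,1)$ entry is $\mathbbm{E}(\sign(\lambda_{\bx}|a_1|-\tau)\,a_1^2)$. Writing $a_1^2=(a_1^2-1)+1$ and assembling the block-diagonal matrix back in the original coordinates yields $\mathbbm{E}(\hat{\bS}_{\bx})=a_{\bx}\bx\bx^\top+b_{\bx}\bI_n$ with $a_{\bx}=\lambda_{\bx}^{-2}\,\mathbbm{E}(\sign(\lambda_{\bx}|g|-\tau)(g^2-1))$ and $b_{\bx}=\mathbbm{E}(\sign(\lambda_{\bx}|g|-\tau))$, exactly as claimed.

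Next I would establish the uniform lower bound $\inf_{\bx\in\mathbbm{A}_{\alpha,\beta}}a_{\bx}\ge c_0>0$. Setting $w=\tau/\lambda_{\bx}\in[\tau/\beta,\tau/\alpha]$, we have $\lambda_{\bx}^2 a_{\bx}=\mathbbm{E}\big(\sign(|g|-w)(g^2-1)\big)=\mathbbm{E}\big((g^2-1)\mathbbm{1}(|g|>w)\big)-\mathbbm{E}\big((g^2-1)\mathbbm{1}(|g|\le w)\big)=2\,\mathbbm{E}\big((g^2-1)\mathbbm{1}(|g|>w)\big)$, using $\mathbbm{E}(g^2-1)=0$. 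I would then show $\phi(w):=\mathbbm{E}\big((g^2-1)\mathbbm{1}(|g|>w)\big)>0$ for every $w>0$: this follows because $(g^2-1)\mathbbm{1}(|g|>w)$ differs from $(g^2-1)\mathbbm{1}(|g|>1)\ge 0$ only on the region $\{w<|g|\le 1\}$ (where $g^2-1\le 0$ and we are subtracting, so it only decreases) when $w<1$, so instead I would argue directly via the closed form $\phi(w)=2w\varphi(w)+2(1-\Phi(w))-2(1-\Phi(w))\cdot 0\ldots$; more cleanly, integration by parts gives $\mathbbm{E}\big((g^2-1)\mathbbm{1}(g>w)\big)=w\varphi(w)>0$ for all $w>0$, where $\varphi,\Phi$ are the standard normal density and CDF, hence $\phi(w)=2w\varphi(w)$ and $\lambda_{\bx}^2 a_{\bx}=4w\varphi(w)>0$. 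Since $w$ ranges over the compact interval $[\tau/\beta,\tau/\alpha]\subset(0,\infty)$ and $w\mapsto w\varphi(w)$ is continuous and strictly positive there, it attains a positive minimum; dividing by $\lambda_{\bx}^2\le\beta^2$ gives a constant $c_0=c_0(\alpha,\beta,\tau)>0$ with $a_{\bx}\ge c_0$ uniformly.

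Finally, the eigenvalue statement is immediate: $\mathbbm{E}(\hat{\bS}_{\bx})=a_{\bx}\bx\bx^\top+b_{\bx}\bI_n$ has eigenvalue $a_{\bx}\|\bx\|_2^2+b_{\bx}$ in the direction $\bx$ (with multiplicity one, since $a_{\bx}>0$) and eigenvalue $b_{\bx}$ on the orthogonal complement, so $\lambda_1(\mathbbm{E}\hat{\bS}_{\bx})-\lambda_2(\mathbbm{E}\hat{\bS}_{\bx})=a_{\bx}\|\bx\|_2^2\ge c_0\alpha^2=:c_1$, uniformly over $\mathbbm{A}_{\alpha,\beta}$. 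I do not anticipate a genuine obstacle here; the only point requiring slight care is the sign computation $\mathbbm{E}\big((g^2-1)\mathbbm{1}(g>w)\big)=w\varphi(w)$ (obtained from $\int_w^\infty(t^2-1)\varphi(t)\,dt$ using $\tfrac{d}{dt}(-t\varphi(t))=(t^2-1)\varphi(t)$), and being careful that the rank-one term is genuinely nonnegative so that $\lambda_1-\lambda_2$ equals $a_{\bx}\|\bx\|_2^2$ rather than its absolute value.
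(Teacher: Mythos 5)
Your proposal is correct and follows essentially the same route as the paper: exploit rotational invariance to reduce to the coordinate system with $\bx=\lambda_{\bx}\be_1$, observe that the off-diagonal cross terms vanish by oddness, compute the $(1,1)$ entry and the orthogonal block separately, and rewrite via $a_1^2=(a_1^2-1)+1$. The one place where you go beyond the paper is in the positivity of $a_{\bx}$: the paper reduces to $\min_{t\in[\tau/\beta,\tau/\alpha]} f_1(t)$ with $f_1(t)=2\int_t^\infty\frac{a^2-1}{\sqrt{2\pi}}e^{-a^2/2}\,da$ and then merely asserts ``it is easy to see'' that this minimum is positive, whereas you close the loop by computing $\int_w^\infty(t^2-1)\varphi(t)\,dt=w\varphi(w)$ via the antiderivative $-t\varphi(t)$, giving the explicit value $\lambda_{\bx}^2 a_{\bx}=4w\varphi(w)$. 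That explicit evaluation is cleaner and makes the uniform lower bound immediate; you should simply delete the half-abandoned monotonicity argument in the middle of that paragraph (the one comparing $\mathbbm{1}(|g|>w)$ with $\mathbbm{1}(|g|>1)$), since as you yourself note it runs in the wrong direction for $w<1$ and you do not use it. The eigenvalue-gap claim is handled identically to the paper.
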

\begin{proof}
 Note that $\mathbbm{E}\bS_{\bx}=\mathbbm{E}(y_i\ba_i\ba_i^\top)= \mathbbm{E}(\sign(|\ba_i^\top\bx|-\tau)\ba_i\ba_i^\top)$. We let $\lambda_\bx := \|\bx\|_2$, then there exists some orthogonal matrix $\bP\in \mathbb{R}^{n\times n}$ such that $\bP\bx = \lambda_{\bx} \be_1$. Further, we let $\tilde{\ba}_i = \bP\ba_i= (\tilde{a}_{i1},...,\tilde{a}_{in})^\top\sim\calN(0,\bI_n)$ and  proceed as 
         \begin{align*}
       & \mathbbm{E}\bS_\bx = \mathbbm{E}\big[\sign(|\ba_i^\top\bx|-\tau)\ba_i\ba_i^\top\big]  \\
       &= \bP^\top \mathbbm{E}\big[\sign(|\tilde{\ba}_i^\top \lambda_{\bx}\be_1|-\tau)\tilde{\ba}_i\tilde{\ba}_i^\top\big]\bP \\
       & = \bP^\top \mathbbm{E}\big[\sign(|\lambda_{\bx}\tilde{a}_{i1}|-\tau)\tilde{\ba}_i\tilde{\ba}_i^\top\big]\bP \\
       &= \bP^\top\begin{bmatrix}
           t_{0,\bx} & 0 \\0 & t_{1,\bx}\bI_{n-1} 
       \end{bmatrix}\bP   \\
       & = \bP^\top\big((t_{0,\bx}-t_{1,\bx})\be_1\be_1^\top+t_{1,\bx}\bI_n\big)\bP \\&= \frac{(t_{0,\bx}-t_{1,\bx})\bx\bx^\top}{\lambda_{\bx}^2}+ t_{1,\bx}\bI_n,  
    \end{align*} 
where we define $ t_{0,\bx}:= \mathbbm{E} (\sign(|\lambda_{\bx}g|-\tau)g^2)$ and $ t_{1,\bx}:=\mathbbm{E} (\sign(|\lambda_{\bx}g|-\tau))$ with $g\sim\calN(0,1)$. Observe that     $a_{\bx}:=\frac{t_{0,\bx}-t_{1,\bx}}{\lambda_{\bx}^2}\ge \frac{t_{0,\bx}-t_{1,\bx}}{\beta^2}$, we further establish a lower bound on $t_{0,\bx}-t_{1,\bx}$: 
    \begin{align*}
&t_{0,\bx} -t_{1,\bx}=\mathbbm{E}_{g\sim \calN(0,1)} \Big[\sign\Big(|g|- \frac{\tau}{\lambda_{\bx}}\Big) (g^2-1)\Big] \\&= \mathbbm{E}_{g\sim \calN(0,1)} \Big[\Big(2\mathbbm{1}\Big(|g|\ge \frac{\tau}{\lambda_{\bx}}\Big)-1\Big)(g^2-1)\Big]  \\  
& =2\mathbbm{E}_{g\sim\calN(0,1)}\Big[(g^2-1) \mathbbm{1}\Big(|g|\ge \frac{\tau}{\lambda_{\bx}}\Big)\Big] \\&\ge 2\min_{t\in [\frac{\tau}{\beta},\frac{\tau}{\alpha}]}\int_{|a|>t} (a^2-1)\cdot\frac{1}{\sqrt{2\pi}}\exp\Big(-\frac{a^2}{2}\Big) ~\text{d}a\\:&=2 \min_{t\in [\frac{\tau}{\beta},\frac{\tau}{\alpha}]} f_1(t),
\end{align*} 
where  in the last equality we take the minimum over $\frac{\tau}{\lambda_{\bx}}\in [\frac{\lambda}{\beta},\frac{\lambda}{\alpha}]$ and introduce the shorthand $$f_1(t)=\int_{|a|>t}\frac{a^2-1}{\sqrt{2\pi}}\exp(-\frac{a^2}{2})~\text{d}a=2 \int_{t}^\infty \frac{a^2-1}{\sqrt{2\pi}}\exp(-\frac{a^2}{2})~\text{d}a.$$ It is easy to see that $\min_{t\in[\frac{\tau}{\beta},\frac{\tau}{\alpha}]}f_1(t)$  is a positive constant only depending on $(\alpha,\beta,\tau)$, as desired. 
\end{proof} 
\paragraph{Concentration bounds:} The main ingredients for proving Theorems \ref{thm:SI_low}--\ref{thm:SI_high} are a set of (uniform) concentration bounds. 
\begin{lem}
    [Concentration of $\hat{\bS}_{\bx}$] \label{lem:con_Sx_low}For a fixed $\bx\in \mathbbm{A}_{\alpha,\beta}$ and any $t\ge 1$, the non-uniform bound  $$\|\hat{\bS}_{\bx}-\mathbbm{E}\hat{\bS}_{\bx}\|_{\rm op}\le C_1 \Big[\frac{n+t}{m}+ \sqrt{\frac{n+t}{m}}\Big]$$
    holds with probability at least $1-\exp(-c_2 t)$. If $m\ge C_3 n $ for some large enough $C_3$, then     the uniform bound  $$ \sup_{\bx\in \mathbbm{A}_{\alpha,\beta}}  \|\hat{\bS}_{\bx}-\mathbbm{E}\hat{\bS}_{\bx}\|_{\rm op}\le \sqrt{\frac{C_4 n}{m}\log(\frac{m}{n})}$$
    holds with probability at least $1-\exp(-c_5n\log(\frac{m}{n}))$. 
\end{lem}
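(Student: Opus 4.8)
\textbf{Proof plan for Lemma \ref{lem:con_Sx_low}.}
The plan is to treat $\hat{\bS}_{\bx} = \frac{1}{m}\sum_{i=1}^m y_i \ba_i\ba_i^\top$ as a sum of independent random matrices and apply a matrix Bernstein / net argument, distinguishing the fixed-$\bx$ and uniform cases. First I would handle the non-uniform bound. Write $\hat{\bS}_{\bx} - \mathbbm{E}\hat{\bS}_{\bx} = \frac{1}{m}\sum_{i=1}^m \bM_i$ with $\bM_i = y_i\ba_i\ba_i^\top - \mathbbm{E}(y_i\ba_i\ba_i^\top)$. Since $|y_i| = 1$, each $\bM_i$ is (up to centering) a sub-exponential random matrix of the same type as $\ba_i\ba_i^\top - \bI_n$; the operator norm of such a centered sample covariance concentrates as $\calO(\sqrt{(n+t)/m} + (n+t)/m)$ with probability $1 - \exp(-ct)$ by the standard covariance estimation bound (e.g.\ \cite[Thm.~4.6.1]{vershynin2018high}) applied to the rows $\sqrt{|y_i|}\,\ba_i = \ba_i$, after noting the sign $y_i$ only changes the target matrix, not the sub-Gaussian norm of the summands. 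Concretely, one passes to a $\tfrac14$-net $\calN$ of $\mathbb{S}^{n-1}$, bounds $\sup_{\bv\in\calN}|\bv^\top(\hat{\bS}_{\bx}-\mathbbm{E}\hat{\bS}_{\bx})\bv|$ via Bernstein's inequality for the sub-exponential variables $y_i(\ba_i^\top\bv)^2 - \mathbbm{E}[y_i(\ba_i^\top\bv)^2]$, and takes a union bound over $|\calN| \le 9^n$; this yields the claimed bound with $t$ absorbing the $n\log 9$ term.

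For the uniform bound the plan is a second net argument, this time over $\bx$. The subtlety is that $y_i = \sign(|\ba_i^\top\bx| - \tau)$ is a \emph{discontinuous} function of $\bx$, so one cannot naively use Lipschitzness in $\bx$. Instead I would bound $\sup_{\bx\in\mathbbm{A}_{\alpha,\beta}}\|\hat{\bS}_{\bx} - \mathbbm{E}\hat{\bS}_{\bx}\|_{\rm op}$ by splitting into (i) a uniform concentration of $\bx \mapsto \mathbbm{E}\hat{\bS}_{\bx}$ being Lipschitz in $\bx$ (from Lemma \ref{lem:ESx}, $\mathbbm{E}\hat{\bS}_{\bx} = a_{\bx}\bx\bx^\top + b_{\bx}\bI_n$ with $a_{\bx}, b_{\bx}$ smooth in $\|\bx\|_2$, hence genuinely Lipschitz), and (ii) controlling the fluctuation $\hat{\bS}_{\bx} - \mathbbm{E}\hat{\bS}_{\bx}$ itself uniformly. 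For (ii), pick a minimal $\rho$-net $\calN_\rho$ of $\mathbbm{A}_{\alpha,\beta}$ with $\log|\calN_\rho| \lesssim n\log(1/\rho)$; for each fixed $\bp\in\calN_\rho$ apply the non-uniform bound with $t \asymp n\log(1/\rho)$ so that the failure probability $\exp(-cn\log(1/\rho))$ survives the union bound. Then for arbitrary $\bx$, pick the nearest $\bp\in\calN_\rho$ and write
\[
\hat{\bS}_{\bx} - \hat{\bS}_{\bp} = \frac{1}{m}\sum_{i=1}^m (y_i^{\bx} - y_i^{\bp})\ba_i\ba_i^\top,
\]
where $y_i^{\bx} - y_i^{\bp} \ne 0$ only for those $i$ whose phaseless hyperplane separates $\bx$ and $\bp$; by Theorem \ref{thm:local_embed} (event $E_s$, applied with $\calK = \mathbbm{A}_{\alpha,\beta}$ and $r \asymp n\log(m/n)/m$), the number of such indices is $\calO(rm)$ uniformly, and by Lemma \ref{lem:max_ell_sum} the corresponding partial sum $\frac{1}{m}\sup_{\bv}\sum_{i\in\text{sep}}(\ba_i^\top\bv)^2$ is $\calO(r\log(1/r)) = \calO(\frac{n}{m}\log^2(\frac{m}{n}))$, which is dominated by $\sqrt{\frac{n}{m}\log(\frac{m}{n})}$ when $m \gtrsim n$. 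Choosing $\rho \asymp r$ matches the net and discretization scales. Combining the net bound, the discretization bound, and the Lipschitz control of $\mathbbm{E}\hat{\bS}_{\bx}$ gives the stated uniform rate $\sqrt{\frac{C_4 n}{m}\log(\frac{m}{n})}$ with probability $1 - \exp(-c_5 n\log(\frac{m}{n}))$.

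The main obstacle I expect is exactly step (ii): making the discretization error rigorous in the presence of the discontinuous sign pattern. The clean way out is to reuse Theorem \ref{thm:local_embed}'s event $E_s$ together with Lemma \ref{lem:max_ell_sum} — this is the same device the paper uses repeatedly (e.g.\ in Lemmas \ref{lem:final_h2} and \ref{lem:final_hdifference}) to count non-zero contributors to a gradient-like sum and bound their contribution — so the burden is essentially checking that the sample-size hypothesis $m\ge C_1 n$ suffices to make $r \asymp \frac{n}{m}\log\frac{m}{n}$ small enough for $E_s$ to apply and that the resulting $\calO(r\log r^{-1})$ term is lower order. The fixed-$\bx$ bound and the Lipschitz estimate on $\mathbbm{E}\hat{\bS}_{\bx}$ are routine.
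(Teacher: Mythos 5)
Your plan is correct and essentially mirrors the paper's proof: the paper also handles the fixed-$\bx$ bound by viewing $\sup_{\bu\in\mathbb{S}^{n-1}}|\frac{1}{m}\sum_i y_i(\ba_i^\top\bu)^2 - \mathbbm{E}[y_i(\ba_i^\top\bu)^2]|$ as a quadratic empirical process (it applies the product-process inequality of Lemma~\ref{lem:product_process} rather than your Bernstein--plus--$\tfrac14$-net argument, but the two are interchangeable here), and for the uniform bound it covers $\mathbbm{A}_{\alpha,\beta}$ by a $\gamma$-net, applies the non-uniform bound with a union bound, controls $\|\hat{\bS}_{\bx}-\hat{\bS}_{\bx_1}\|_{\rm op}$ via Theorem~\ref{thm:local_embed} (event $E_s$) together with Lemma~\ref{lem:max_ell_sum}, and bounds $\|\mathbbm{E}\hat{\bS}_{\bx}-\mathbbm{E}\hat{\bS}_{\bx_1}\|_{\rm op}$ by the Lipschitzness of $a_{\bx},b_{\bx}$ from Lemma~\ref{lem:ESx}, with the same choice $\gamma\asymp\frac{n}{m}\sqrt{\log(m/n)}$ (so $r\asymp\frac{n}{m}\log(m/n)$) and $t\asymp n\log(m/n)$ that you identify.
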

\begin{proof}
    We first establish the {\it non-uniform bound} and then strengthen it to a {\it uniform bound} by covering argument. In the covering argument, we would use Theorem \ref{thm:local_embed} to overcome the discontinuity of the quantizer; we will only prove the non-uniform bounds for the remaining concentration lemmas if their uniform counterparts can be obtained by similar technique. 
    
   \noindent{\it Non-uniform bound:}
   % It is standard to show the non-uniform concentration (\ref{eq:non_uni_Sx_low}) by a covering argument (e.g., \cite{vershynin2018high}), whereas it is more convenience to accomplish the goal by the in-depth result of concentration of product process (Lemma \ref{lem:product_process}).
   To get started, we write $$    \|\hat{\bS}_{\bx}- \mathbbm{E}\hat{\bS}_{\bx}\|_{\rm op}  = \sup_{\bu\in \mathbb{S}^{n-1}}\big|\bu^\top \hat{\bS}_{\bx}\bu - \mathbbm{E}(\bu^\top \hat{\bS}_{\bx}\bu)\big| = \sup_{\bu\in\mathbb{S}^{n-1}}\Big|\frac{1}{m}\sum_{i=1}^n y_i(\ba_i^\top\bu)^2-\mathbbm{E}\big[y_i(\ba_i^\top\bu)^2\big]\Big|.$$ Note that this can be viewed as a product process with two factors being $g_{\bu}(\ba_i)=y_i \ba_i^\top \bu$ and $h_{\bu}(\ba_i)=\ba_i^\top\bu$, thus the   non-uniform bound follows from a straightforward application of Lemma \ref{lem:product_process}.  

\noindent{\it Uniform   bound:} 
We shall precisely write $y_i=\sign(|\ba_i^\top\bx|-\tau)$ to indicate its dependence on $\bx$. For some sufficiently small  $\gamma$ to be chosen, we let $\calN_\gamma$  be a minimal $\gamma$-net of  $\mathbbm{A}_{\alpha,\beta}$ with cardinality $|\calN_\gamma|\le(\frac{C}{\gamma})^n$ for some $C$ that only depends on $(\alpha,\beta)$. We invoke the non-uniform bound  over all $\bx\in \calN_\gamma$ and then take a union bound, yielding that the event $$\sup_{\bx\in\calN_\gamma}\|\hat{\bS}_{\bx}-\mathbbm{E}\hat{\bS}_{\bx}\|_{\rm op}\le C_1 \Big[\frac{n+t}{m}+ \sqrt{\frac{n+t}{m}}\Big]$$
holds with probability at least $1-\exp(n\log(\frac{C}{\gamma})-c_2t)$. 
 For any $\bx\in \mathbbm{A}_{\alpha,\beta}$ we let $\bx_1 = \text{arg}\min_{\bu\in \calN_\gamma}\|\bu-\bx\|_2$ and then $\|\bx_1-\bx\|_2\le \gamma$, and moreover  triangle inequality yields $$\|\hat{\bS}_{\bx}-\mathbbm{E}\hat{\bS}_{\bx}\|_{\rm op}\le \|\hat{\bS}_{\bx} - \hat{\bS}_{\bx_1}\|_{\rm op} + \|\hat{\bS}_{\bx_1}-\mathbbm{E}\hat{\bS}_{\bx_1}\|_{\rm op} + \|\mathbbm{E}\hat{\bS}_{\bx_1}- \mathbbm{E}\hat{\bS}_{\bx}\|_{\rm op}.$$ Note that we have controlled $\|\hat{\bS}_{\bx_1}-\mathbbm{E}\hat{\bS}_{\bx_1}\|_{\rm op}$, and we will need to further bound $\|\hat{\bS}_{\bx}-\hat{\bS}_{\bx_1}\|_{\rm op}$ and $\|\mathbbm{E}\hat{\bS}_{\bx_1}-\mathbbm{E}\hat{\bS}_{\bx}\|_{\rm op}$.

 To uniformly bound $\|\hat{\bS}_{\bx}-\hat{\bS}_{\bx_1}\|_{\rm op}$, we first invoke Theorem \ref{thm:local_embed}  with $\calK=\mathbbm{A}_{\alpha,\beta}$ and $(r,r')=(C_3\gamma\sqrt{\log \gamma^{-1}},2\gamma)$ for some large enough $C_3$, yielding the following: if $\gamma\gtrsim \frac{n}{m}\sqrt{\log(\frac{m}{n})}$, then the event $$  d_H\big(\sign(|\bA\bu|-\tau),\sign(|\bA\bv|-\tau)\big)\le C_4\gamma\sqrt{\log\gamma^{-1}}m,\quad \forall\, \bu,\bv\in \mathbbm{A}_{\alpha,\beta}\,\text{ obeying }\,\|\bu-\bv\|_2\le \gamma$$
holds with probability at least $1-\exp(-c_5\gamma m)$. Due to $\|\bx-\bx_1\|_2\le \gamma$, the  event implies  $$ d_H\big(\sign(|\bA\bx|-\tau),\sign(|\bA\bx_1|-\tau)\big)\le C_4\gamma\sqrt{\log\gamma^{-1}}m,\qquad\forall \bx\in\mathbbm{A}_{\alpha,\beta}.$$ 
Now we are able to obtain  
    \begin{align*}
        &\|\hat{\bS}_{\bx}-\hat{\bS}_{\bx_1}\|_{\rm op} = \Big\|\frac{1}{m}\sum_{i=1}^m \big[\sign(|\ba_i^\top\bx|-\tau)-\sign(|\ba_i^\top\bx_1|-\tau)\big]\ba_i\ba_i^\top\Big\|_{\rm op} \\
        &=  \sup_{\bu\in \mathbb{S}^{n-1}} \Big|\frac{1}{m}\sum_{i=1}^m \big[\sign(|\ba_i^\top\bx|-\tau)-\sign(|\ba_i^\top\bx_1|-\tau)\big](\ba_i^\top\bu)^2\Big|\\  
        &\le \sup_{\bu\in \mathbb{S}^{n-1}}\sup_{\substack{\calS\subset [m]\\|\calS|\le C_4\gamma\sqrt{\log\gamma^{-1}}m}}\frac{2}{m}\sum_{i\in \calS}(\ba_i^\top\bu)^2 \le C_5 \Big(\frac{n}{m}+\gamma\log^{3/2}\Big(\frac{1}{\gamma}\Big)\Big)
    \end{align*}
where  the last inequality holds with probability $1-\exp(-c_6\gamma\log^{3/2}(\gamma^{-1})m)$ due to Lemma \ref{lem:max_ell_sum}.

 To uniformly bound $\|\mathbbm{E}\hat{\bS}_{\bx}-\mathbbm{E}\hat{\bS}_{\bx_1}\|_{\rm op}$, we use Lemma \ref{lem:ESx} to obtain  $\mathbbm{E}\hat{\bS}_{\bx}=a_{\bx}\bx\bx^\top+b_{\bx}\bI_n$ and then utilize triangle inequality to obtain  \begin{align*}
     &\|\mathbbm{E}\hat{\bS}_{\bx}-\mathbbm{E}\hat{\bS}_{\bx_1}\|_{\rm op} = \|a_{\bx}\bx\bx^\top-a_{\bx_1}\bx_1\bx_1^\top+ (b_{\bx}-b_{\bx_1})\bI_n\|_{\rm op}\\&\le |a_{\bx}-a_{\bx_1}|\|\bx\bx^\top\|_{\rm op}+ |a_{\bx_1}|\|\bx\bx^\top-\bx_1\bx_1^\top\|_{\rm op} + |b_{\bx}-b_{\bx_1}| \\&\lesssim |a_{\bx}-a_{\bx_1}|+ \|\bx\bx^\top-\bx_1\bx_1^\top\|_{\rm op} +|b_{\bx}-b_{\bx_1}|.
 \end{align*} 
By the explicit formulas for $a_{\bx}$ and $b_{\bx}$ in Lemma \ref{lem:ESx}, we let   $g\sim \calN(0,1)$ and have the bound 
    \begin{align*}
   & |a_{\bx}-a_{\bx_1}| = \Big|\frac{\mathbbm{E}(\sign(\|\bx\|_2|g|-\tau)(g^2-1))}{\|\bx\|_2^2} - \frac{\mathbbm{E}(\sign(\|\bx_1\|_2|g|-\tau)(g^2-1))}{\|\bx_1\|_2^2} \Big|\\
   &\lesssim \Big|\frac{1}{\|\bx\|_2^2}-\frac{1}{\|\bx_1\|_2^2}\Big|+\Big|\mathbbm{E}\Big(\big[\sign(\|\bx\|_2|g|-\tau)-\sign(\|\bx_1\|_2|g|-\tau)\big](g^2-1)\Big)\Big|,  
\end{align*}  
Furthermore, we have $$  |\|\bx\|_2^{-2}-\|\bx_1\|_2^{-2}|\le \frac{|\|\bx_1\|_2^2-\|\bx\|_2^2|}{\alpha^4} \le \frac{\|\bx-\bx_1\|_2\|\bx+\bx_1\|_2}{\alpha^4} \le \frac{2\beta \gamma}{\alpha^4}$$ and 
\begin{align*}
    &\big|\mathbbm{E}\big(\big[\sign(\|\bx\|_2|g|-\tau)-\sign(\|\bx_1\|_2|g|-\tau)\big](g^2-1)\big)\big| \\
    &\le 2 \mathbbm{E} \left(|g^2-1|\mathbbm{1}\Big(\frac{\tau}{\max\{\|\bx\|_2,\|\bx_1\|_2\}}\le|g|\le \frac{\tau}{\min\{\|\bx\|_2,\|\bx_1\|_2\}}\Big)\right)\\&\lesssim \big|\frac{\tau}{\|\bx\|_2}-\frac{\tau}{\|\bx_1\|_2}\big| \lesssim \gamma
\end{align*}
Similarly we can show $\sup_{\bx\in \mathbbm{A}_{\alpha,\beta}}|b_{\bx}-b_{\bx_1}|\lesssim\gamma$. All that remains is to  bound $\|\bx\bx^\top-\bx_1\bx_1^\top\|_{\rm op}$: $$   \|\bx\bx^\top-\bx_1\bx_1^\top\|_{\rm op}\le \|\bx(\bx-\bx_1)^\top\|_{\rm op}+\|(\bx-\bx_1)\bx_1^\top\|_{\rm op}=\|\bx-\bx_1\|_2 (\|\bx\|_2+\|\bx_1\|_2) \le2\beta\gamma.$$
Note that these arguments are uniform for 
$\bx\in \mathbbm{A}_{\alpha,\beta}$, thus we obtain the uniform bound $\|\mathbbm{E}\hat{\bS}_{\bx}-\mathbbm{E}\hat{\bS}_{\bx_1}\|_{\rm op}\lesssim \gamma$.

We combine everything to obtain the following:
 if $\gamma\gtrsim \frac{n}{m}\sqrt{\log(\frac{m}{n})}$, then uniformly for all $\bx\in\mathbbm{A}_{\alpha,\beta}$, the bound $$  \|\hat{\bS}_{\bx}-\mathbbm{E}\hat{\bS}_{\bx}\|_{\rm op} \le C_7\Big[\frac{n+t}{m}+\sqrt{\frac{n+t}{m}}+\gamma\log^{3/2}(\gamma^{-1})\Big]$$ 
holds with probability at least $1-\exp(n\log\frac{C}{\gamma}-c_2t)-\exp(-c_8\gamma m)$. Therefore, by setting $\gamma=\frac{C_9n}{m}\sqrt{\log(\frac{m}{n})}$  and $t=C_{10}n\log (\frac{m}{n})$  with large enough $C_9,C_{10}$, the desired statement follows. 
\end{proof}

For a matrix $\bM\in \mathbb{R}^{n\times n}$ and any $\calS\subset [n]$, recall that $[\bM]_{\calS}\in \mathbb{R}^{n\times n}$ is obtained from $\bM$ by setting the rows and columns not in  $\calS$ to zero. 
\begin{lem}
    {\rm (Concentration  of $[\hat{\bS}_{\bx}]_{\calS}$)}  \label{lem:con_Sx_high} For a fixed $\bx\in \Sigma^n_k\cap\mathbbm{A}_{\alpha,\beta}$  and any $t\ge 1$, the non-uniform bound  
    $$
        \sup_{ \calS\subset [n],|\calS|=k}\big\|[\hat{\bS}_{\bx}]_{\calS}-\mathbbm{E}[\hat{\bS}_{\bx}]_{\calS}\big\|_{\rm op} \le C_1 \Big[\frac{k\log (\frac{en}{k})+t}{m}+ \sqrt{\frac{k\log(\frac{en}{k})+t}{m}}\Big]$$ holds with probability at least $1-\exp(-c_2t)$. If $m\ge C_3 k \log(\frac{en}{k})$ for some large enough $C_3$, then   the uniform bound  
    $$
        \sup_{\bx\in \Sigma^n_k\cap \mathbbm{A}_{\alpha,\beta}} \sup_{ \calS\subset [n],|\calS|=k} \big\|[\hat{\bS}_{\bx}]_{\calS}-\mathbbm{E}[\hat{\bS}_{\bx}]_{\calS}\big\|_{\rm op} \le \sqrt{\frac{C_4k}{m}\log(\frac{mn}{k^2})}$$ holds with probability at least $1-\exp(-c_5k\log(\frac{en}{k}))$. 
\end{lem}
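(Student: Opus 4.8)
\textbf{Proof proposal for Lemma~\ref{lem:con_Sx_high} (Concentration of $[\hat{\bS}_{\bx}]_{\calS}$).}

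The plan is to reduce the problem to the support-restricted version of Lemma~\ref{lem:con_Sx_low}, and then upgrade from the non-uniform to the uniform bound by a covering argument over $\Sigma^n_k\cap\mathbbm{A}_{\alpha,\beta}$ that parallels the one given in the proof of Lemma~\ref{lem:con_Sx_low}, using Theorem~\ref{thm:local_embed} to handle the discontinuity of the quantizer. First I would fix a support $\calS\subset[n]$ with $|\calS|=k$ and observe that $[\hat{\bS}_{\bx}]_\calS - \mathbbm{E}[\hat{\bS}_{\bx}]_\calS$, when restricted to the coordinate subspace $\mathbb{R}^\calS\cong\mathbb{R}^k$, is exactly an empirical matrix of the form treated in Lemma~\ref{lem:con_Sx_low} but in dimension $k$; writing $\|[\hat{\bS}_{\bx}]_\calS-\mathbbm{E}[\hat{\bS}_{\bx}]_\calS\|_{\rm op}=\sup_{\bu\in\mathbb{S}^{k-1}_\calS}|\frac1m\sum_i y_i(\ba_i^\top\bu)^2-\mathbbm{E}[y_i(\ba_i^\top\bu)^2]|$, this is a product process with factors $g_{\bu}(\ba_i)=y_i\ba_i^\top\bu$ and $h_{\bu}(\ba_i)=\ba_i^\top\bu$ over $\bu$ in the $(k-1)$-sphere of a fixed $k$-dimensional subspace, so Lemma~\ref{lem:product_process} gives $\|[\hat{\bS}_{\bx}]_\calS-\mathbbm{E}[\hat{\bS}_{\bx}]_\calS\|_{\rm op}\lesssim \frac{k+s}{m}+\sqrt{\frac{k+s}{m}}$ with probability $\ge 1-\exp(-cs)$ for a fixed $\calS$. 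Taking a union bound over the $\binom nk\le(\frac{en}{k})^k$ choices of $\calS$, absorbing the $\log\binom nk\lesssim k\log(\frac{en}{k})$ entropy into the exponent via $s\mapsto s+C k\log(\frac{en}{k})$, yields the stated non-uniform bound $\sup_{|\calS|=k}\|[\hat{\bS}_{\bx}]_\calS-\mathbbm{E}[\hat{\bS}_{\bx}]_\calS\|_{\rm op}\lesssim \frac{k\log(\frac{en}{k})+t}{m}+\sqrt{\frac{k\log(\frac{en}{k})+t}{m}}$ with probability at least $1-\exp(-c_2 t)$.

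For the uniform bound I would follow the covering-plus-Theorem~\ref{thm:local_embed} scheme in the proof of Lemma~\ref{lem:con_Sx_low}. Let $\calN_\gamma$ be a minimal $\gamma$-net of $\Sigma^n_k\cap\mathbbm{A}_{\alpha,\beta}$; since this set is a union of $\binom nk$ pieces each sitting in a $k$-dimensional annulus, one has $|\calN_\gamma|\le\binom nk(\frac{C}{\gamma})^k$, hence $\log|\calN_\gamma|\lesssim k\log(\frac{Cn}{k\gamma})$. Apply the non-uniform bound at every $\bx\in\calN_\gamma$ with $t\asymp k\log(\frac{n}{k\gamma})$ and union bound; then for general $\bx\in\Sigma^n_k\cap\mathbbm{A}_{\alpha,\beta}$ pick the nearest net point $\bx_1$ and split
\[
\big\|[\hat{\bS}_{\bx}]_\calS-\mathbbm{E}[\hat{\bS}_{\bx}]_\calS\big\|_{\rm op}\le \big\|[\hat{\bS}_{\bx}-\hat{\bS}_{\bx_1}]_\calS\big\|_{\rm op}+\big\|[\hat{\bS}_{\bx_1}]_\calS-\mathbbm{E}[\hat{\bS}_{\bx_1}]_\calS\big\|_{\rm op}+\big\|\mathbbm{E}[\hat{\bS}_{\bx_1}-\hat{\bS}_{\bx}]_\calS\big\|_{\rm op}.
\]
The middle term is controlled by the net bound. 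For the first term, note $[\hat{\bS}_{\bx}-\hat{\bS}_{\bx_1}]_\calS=\frac1m\sum_i(\sign(|\ba_i^\top\bx|-\tau)-\sign(|\ba_i^\top\bx_1|-\tau))[\ba_i\ba_i^\top]_\calS$, and Theorem~\ref{thm:local_embed} applied with $\calK=\Sigma^n_k\cap\mathbbm{A}_{\alpha,\beta}$ and covering radius $\asymp\gamma$ (so $r\asymp\gamma\sqrt{\log\gamma^{-1}}$) gives $d_H(\sign(|\bA\bx|-\tau),\sign(|\bA\bx_1|-\tau))\lesssim \gamma\sqrt{\log\gamma^{-1}}\,m$ uniformly, provided $\gamma m\gtrsim \scrH(\Sigma^n_k\cap\mathbbm{A}_{\alpha,\beta},\gamma)+\omega^2(\Sigma^n_k\cap\mathbb{B}^n_2)\asymp k\log(\frac{n}{k})$; then $\|[\hat{\bS}_{\bx}-\hat{\bS}_{\bx_1}]_\calS\|_{\rm op}\le \sup_{\bu\in\mathbb{S}^{k-1}_\calS}\frac2m\sum_{i\in\calS_0}(\ba_i^\top\bu)^2$ over index sets $\calS_0$ of size $\lesssim\gamma\sqrt{\log\gamma^{-1}}\,m$, which is bounded by Lemma~\ref{lem:max_ell_sum} (with $\calW$ the $k$-dimensional sphere, so $\omega(\calW)\lesssim\sqrt k$, and an extra $\sup$ over $\calS$ absorbing $k\log(\frac nk)$ into the exponent). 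For the third (deterministic) term, use Lemma~\ref{lem:ESx}: $\mathbbm{E}\hat{\bS}_{\bx}=a_{\bx}\bx\bx^\top+b_{\bx}\bI_n$, so $[\mathbbm{E}\hat{\bS}_{\bx}-\mathbbm{E}\hat{\bS}_{\bx_1}]_\calS$ differs by $|a_\bx-a_{\bx_1}|\,\|[\bx\bx^\top]_\calS\|_{\rm op}+|a_{\bx_1}|\,\|[\bx\bx^\top-\bx_1\bx_1^\top]_\calS\|_{\rm op}+|b_\bx-b_{\bx_1}|\lesssim\gamma$ (note $\|[\bM]_\calS\|_{\rm op}\le\|\bM\|_{\rm op}$, and $a_\bx,b_\bx$ are $\lesssim 1$-Lipschitz in $\|\bx\|_2$ exactly as in Lemma~\ref{lem:con_Sx_low}). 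Finally, choosing $\gamma=\frac{C k}{m}\log(\frac{mn}{k^2})$ and $t\asymp k\log(\frac{mn}{k^2})$, the $\gamma\log^{3/2}(\gamma^{-1})$ and $\frac km$ and $\sqrt{\frac{k\log(\frac{en}{k})+t}{m}}$ contributions all collapse into $\sqrt{\frac{C_4 k}{m}\log(\frac{mn}{k^2})}$, and the failure probability is $\le\exp(-c_5 k\log(\frac{en}{k}))$, as claimed.

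The main obstacle I anticipate is the same one flagged in the proof of Lemma~\ref{lem:con_Sx_low}: controlling $\|[\hat{\bS}_{\bx}-\hat{\bS}_{\bx_1}]_\calS\|_{\rm op}$ uniformly across the continuum of $\bx$, because the quantized sign factor is discontinuous in $\bx$ so one cannot simply Lipschitz-bound the empirical operator; the resolution is to bound the \emph{number} of flipped bits via Theorem~\ref{thm:local_embed} and then only the magnitude of the corresponding few rank-one terms via the $\ell_2$-bound for largest sub-sums (Lemma~\ref{lem:max_ell_sum}). A secondary bookkeeping point is to make sure the entropy $\log\binom nk$ coming from the free support $\calS$, plus $\log|\calN_\gamma|$, plus the event probabilities in Lemma~\ref{lem:max_ell_sum} and Theorem~\ref{thm:local_embed}, are all simultaneously dominated by $\exp(-c_5 k\log(\frac{en}{k}))$ under the stated sample size $m\ge C_3 k\log(\frac{en}{k})$ — this just requires choosing the constants in $\gamma$ and $t$ large enough, as in the unstructured case. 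The remaining steps are routine given the lemmas already available.
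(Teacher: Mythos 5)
Your argument is correct and, for the non-uniform bound, it is essentially the same as the paper's, differing only in bookkeeping. The paper writes $\sup_{|\calS|=k}\|[\hat{\bS}_{\bx}]_{\calS}-\mathbbm{E}[\hat{\bS}_{\bx}]_{\calS}\|_{\rm op}=\sup_{\bu\in\Sigma^{n,*}_k}|\frac{1}{m}\sum_i y_i(\ba_i^\top\bu)^2-\mathbbm{E}[y_i(\ba_i^\top\bu)^2]|$ and applies Lemma~\ref{lem:product_process} once with index set $\Sigma^{n,*}_k$, whose Gaussian width is $\omega(\Sigma^{n,*}_k)\asymp\sqrt{k\log(en/k)}$; you instead apply Lemma~\ref{lem:product_process} on each fixed $k$-dimensional coordinate sphere (width $\asymp\sqrt{k}$) and then take an explicit union over the $\binom{n}{k}$ supports, shifting $t\mapsto t+Ck\log(en/k)$. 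The $k\log(en/k)$ factor thus enters through the width in the paper and through the union bound in your proof, and the two give identical statements. For the uniform bound the paper explicitly omits the proof and refers the reader to the template of Lemma~\ref{lem:con_Sx_low}; your sketch is a faithful rendering of that template for the sparse case. Two small points worth tightening: (i) for the first term $\|[\hat{\bS}_{\bx}-\hat{\bS}_{\bx_1}]_\calS\|_{\rm op}$ you can just take $\calW=\Sigma^{n,*}_k$ directly in Lemma~\ref{lem:max_ell_sum} rather than doing a per-$\calS$ application and a further union, since the lemma's bound is already simultaneous in $\calW$ and in $I$ and its failure probability $\exp(-c\ell\log(em/\ell))$ with $\ell\asymp\gamma\sqrt{\log\gamma^{-1}}\,m$ dominates $k\log(en/k)$ once $\gamma\asymp\tfrac{k}{m}\log(mn/k^2)$; (ii) one should take the $\gamma$-net inside $\Sigma^n_k\cap\mathbbm{A}_{\alpha,\beta}$ so that the pair $(\bx,\bx_1)$ lies in $\calK$ and Theorem~\ref{thm:local_embed} applies — this is fine since a minimal internal net has comparable cardinality. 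Neither point affects the conclusion.
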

\begin{proof}
   As mentioned, we will only prove the non-uniform bound and leave the uniform bound to avid readers. 
    For a fixed $\bx\in\Sigma^n_k\cap \mathbbm{A}_{\alpha,\beta}$ we write $$
        \sup_{ \calS\subset [n],|\calS|=k}\big\|[\hat{\bS}_{\bx}]_{\calS}-\mathbbm{E}[\hat{\bS}_{\bx}]_{\calS}\big\|_{\rm op}= \sup_{\bu\in \Sigma^{n,*}_k}\big|\frac{1}{m}\sum_{i=1}^my_i (\ba_i^\top\bu)^2-\mathbbm{E}\big[y_i(\ba_i^\top \bu)^2\big]\big|.$$
    Then  a straightforward application of Lemma \ref{lem:product_process} yields the non-uniform concentration bound.
\end{proof}
For the sparse case we need to additionally evaluate the quality of the support estimate, that is $\calS_{\bx}$ in Algorithm \ref{alg:SI_high},  measured by $\|\bx-\bx_{\calS_{\bx}}\|_2$. 
\begin{lem}
    [Support estimate bound] \label{lem:supp_est} 
    Let $\calS_{\bx}$ be the estimate for $\supp(\bx)$ as per Algorithm \ref{alg:SI_high}, and suppose that $m\ge C_0 k\log(\frac{en}{k})$ for some large enough $C_0$.
    For a fixed $\bx\in \Sigma^n_k \cap \mathbbm{A}_{\alpha,\beta}$, the non-uniform bound  
  $$
        \|\bx - \bx_{\calS_{\bx}}\|_2^2 \le C_1 \sqrt{\frac{k^2\log n}{m}}$$
    holds with probability at least $1-n^{-10}$. Moreover, the uniform bound 
   $$
        \sup_{\bx\in \Sigma^n_k \cap \mathbbm{A}_{\alpha,\beta}}\|\bx-\bx_{\calS_{\bx}}\|_2^2 \le C_2 \sqrt{\frac{k^3}{m} \log(\frac{mn}{k^2})}$$ 
holds with probability at least $1- \exp(-c_3 k\log(\frac{en}{k}))$. 
\end{lem}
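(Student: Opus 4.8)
\textbf{Proof plan for Lemma \ref{lem:supp_est}.}
The plan is to bound the energy missed by the support estimate $\calS_{\bx}$ in terms of the operator-norm deviation of the diagonal of $\hat{\bS}_{\bx}$ from that of $\mathbbm{E}\hat{\bS}_{\bx}$. First I would recall from Lemma \ref{lem:ESx} that $\mathbbm{E}\hat{\bS}_{\bx}=a_{\bx}\bx\bx^\top+b_{\bx}\bI_n$ with $a_{\bx}\ge c_0>0$, so the diagonal entries of $\mathbbm{E}\hat{\bS}_{\bx}$ are $a_{\bx}x_j^2+b_{\bx}$. Hence the $k$ largest diagonal entries of $\mathbbm{E}\hat{\bS}_{\bx}$ sit exactly on $\supp(\bx)$, and the ``gap'' between an on-support coordinate with value $x_j^2$ and an off-support one is $a_{\bx}x_j^2$. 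Writing $d_j=[\hat{\bS}_{\bx}]_{jj}$ for the observed diagonal, a standard greedy/selection argument shows that any index $j\in\supp(\bx)\setminus\calS_{\bx}$ must be ``beaten'' by some $j'\in\calS_{\bx}\setminus\supp(\bx)$ with $d_{j'}\ge d_j$, which forces $a_{\bx}x_j^2\le 2\max_\ell|d_\ell-(\mathbbm{E}\hat{\bS}_{\bx})_{\ell\ell}|$. Summing over the (at most $k$) such missed indices gives
\[
\|\bx-\bx_{\calS_{\bx}}\|_2^2=\sum_{j\in\supp(\bx)\setminus\calS_{\bx}}x_j^2\le \frac{2k}{c_0}\,\max_{\ell\in[n]}\big|[\hat{\bS}_{\bx}]_{\ell\ell}-(\mathbbm{E}\hat{\bS}_{\bx})_{\ell\ell}\big|.
\]

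Next I would control the diagonal deviation $\max_{\ell}|[\hat{\bS}_{\bx}]_{\ell\ell}-(\mathbbm{E}\hat{\bS}_{\bx})_{\ell\ell}|$. Each diagonal entry is $\frac{1}{m}\sum_{i=1}^m y_i a_{i\ell}^2$, a sum of i.i.d.\ sub-exponential random variables with $\calO(1)$ sub-exponential norm; Bernstein's inequality gives a deviation of order $\sqrt{\log n/m}+\log n/m$ for a fixed $\ell$ with failure probability $n^{-11}$, and a union bound over $\ell\in[n]$ yields $\max_\ell|\cdots|\lesssim\sqrt{\log n/m}$ with probability $1-n^{-10}$ (using $m\gtrsim k\log(en/k)\ge\log n$ so the $\log n/m$ term is dominated). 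Plugging into the display above gives the non-uniform bound $\|\bx-\bx_{\calS_{\bx}}\|_2^2\le C_1\sqrt{k^2\log n/m}$. Alternatively, since $|[\hat{\bS}_{\bx}]_{\ell\ell}-(\mathbbm{E}\hat{\bS}_{\bx})_{\ell\ell}|=|\be_\ell^\top(\hat{\bS}_{\bx}-\mathbbm{E}\hat{\bS}_{\bx})\be_\ell|$, one could instead cite Lemma \ref{lem:con_Sx_high} with the $k=1$ selection set, but the direct Bernstein argument is cleaner for the diagonal.

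For the uniform bound I would run a covering argument over $\bx\in\Sigma^n_k\cap\mathbbm{A}_{\alpha,\beta}$, exactly parallel to the uniform half of the proof of Lemma \ref{lem:con_Sx_low}: net the set at scale $\gamma$, apply the non-uniform diagonal bound on the net (here the net has $\binom{n}{k}(C/\gamma)^k\le\exp(Ck\log(n/(k\gamma)))$ points), and then transfer to an arbitrary $\bx$ via triangle inequality. The two transfer terms are $\max_\ell|[\hat{\bS}_{\bx}]_{\ell\ell}-[\hat{\bS}_{\bx_1}]_{\ell\ell}|$, which is bounded using Theorem \ref{thm:local_embed} (a small number of sign flips times a uniform bound on the largest few $a_{i\ell}^2$, controlled by Lemma \ref{lem:max_ell_sum}), and $\max_\ell|(\mathbbm{E}\hat{\bS}_{\bx})_{\ell\ell}-(\mathbbm{E}\hat{\bS}_{\bx_1})_{\ell\ell}|$, which is $\lesssim\gamma$ by the Lipschitz estimates on $a_{\bx},b_{\bx}$ and $\|\bx\bx^\top-\bx_1\bx_1^\top\|_{\rm op}$ already established there. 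Choosing $\gamma\asymp\frac{k}{m}\sqrt{\log(mn/k^2)}$ and optimizing the deviation level $t\asymp k\log(en/k)$ produces $\sup_{\bx}\|\bx-\bx_{\calS_{\bx}}\|_2^2\le C_2\sqrt{(k^3/m)\log(mn/k^2)}$ with the stated probability. The main obstacle is the uniform step: exactly as in Lemma \ref{lem:con_Sx_low}, one must handle the discontinuity of $\bx\mapsto\sign(|\bA\bx|-\tau)$ when passing from the net to a general $\bx$, which is why Theorem \ref{thm:local_embed} (controlling the number of phaseless-hyperplane separations between nearby points) is indispensable; the rest is routine.
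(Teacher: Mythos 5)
Your proposal is correct and the core selection argument (any missed on-support index $j$ must be beaten by some off-support $j'\in\calS_{\bx}$, hence $a_{\bx}x_j^2\lesssim\max_\ell|[\hat{\bS}_{\bx}]_{\ell\ell}-\mathbbm{E}[\hat{\bS}_{\bx}]_{\ell\ell}|$, summed over at most $k$ indices) is exactly the paper's argument in display (\ref{eq:bound_supp_err}). The differences lie in the concentration tools. For the non-uniform diagonal bound the paper treats $\max_j|\frac{1}{m}\sum_i y_i a_{ij}^2-\mathbbm{E}(\cdot)|$ as a supremum of a product process over $\calU=\{\be_1,\dots,\be_n\}$ and applies Lemma \ref{lem:product_process}, whereas you propose Bernstein's inequality on each sub-exponential diagonal entry followed by a union bound over $\ell\in[n]$; these give the same $\sqrt{\log n/m}$ rate and are interchangeable. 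The more substantive divergence is in the uniform bound: you propose running a fresh covering argument (net over $\Sigma^n_k\cap\mathbbm{A}_{\alpha,\beta}$, Theorem \ref{thm:local_embed} to control the sign flips, Lipschitz bounds on $\mathbbm{E}\hat{\bS}_{\bx}$), while the paper simply observes that $\max_j|[\hat{\bS}_{\bx}]_{jj}-(\mathbbm{E}\hat{\bS}_{\bx})_{jj}|$ is dominated by $\max_{|\calS|=k}\|[\hat{\bS}_{\bx}]_\calS-\mathbbm{E}[\hat{\bS}_{\bx}]_\calS\|_{\rm op}$ (since each singleton $\{j\}$ sits inside some $k$-subset) and then directly cites the already-proved uniform bound of Lemma \ref{lem:con_Sx_high}. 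Your route would work, but it re-proves a result the paper has already made available; noticing that Lemma \ref{lem:con_Sx_high} already covers the entry-wise maximum is a cheaper and cleaner way to close the uniform case.
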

\begin{proof}
   Before bounding $\|\bx-\bx_{\calS_{\bx}}\|_2$, we first establish the entry-wise concentration of the diagonal of $\hat{\bS}_{\bx}$. 
    We consider a fixed $\bx\in \calK:= \Sigma^n_k\cap \mathbbm{A}_{\alpha,\beta}$ with $\by = \sign(|\bA\bx|-\tau)$. 
     We  let $\calU = \{\be_1,\be_2,...,\be_n\}$ and write 
    $$\max _{j\in [n]} \Big|\frac{1}{m}\sum_{i=1}^m y_ia_{ij}^2 - \mathbbm{E}\big(\frac{1}{m}\sum_{i=1}^m y_ia_{ij}^2\big)\Big| \le \sup_{\bu\in \calU}\Big|\frac{1}{m}\sum_{i=1}^my_i (\ba_i^\top\bu)^2-\mathbbm{E}\big[y_i(\ba_i^\top \bu)^2\big]\Big|$$
    and then use Lemma \ref{lem:product_process} to obtain 
 $$
         \max _{j\in [n]} \Big|\frac{1}{m}\sum_{i=1}^m y_ia_{ij}^2 - \mathbbm{E}\Big(\frac{1}{m}\sum_{i=1}^m y_ia_{ij}^2\Big)\Big| \le C_1 \Big[\frac{t+\log n}{m}+\sqrt{\frac{t+\log n}{m}}\Big] $$
holds with probability at least $1-\exp(-c_2t)$. Setting $t\asymp \log n$ with large implied constant and substituting $\mathbbm{E}(y_ia_{ij}^2)=a_{\bx}x_j^2+b_{\bx}$ (see Lemma \ref{lem:ESx}) yields $$\max _{j\in [n]} \Big|\frac{1}{m}\sum_{i=1}^m y_ia_{ij}^2 -  \left(a_{\bx}x_j^2+b_{\bx}\right)\Big| \le C_2\sqrt{\frac{\log n}{m}}$$ with probability at least $1-n^{-10}$. 
    For the uniform bound we directly utilize Lemma \ref{lem:con_Sx_high}. Specifically, if $m\ge C_3 k\log\frac{en}{k}$ for sufficiently large $C_3$, then with probability at least $1-\exp(-c_4k\log(\frac{en}{k}))$  the uniform bound in Lemma \ref{lem:con_Sx_high} holds and evidently implies 
    $$\max _{j\in [n]} \Big|\frac{1}{m}\sum_{i=1}^m y_ia_{ij}^2 - \left(a_{\bx}x_j^2+b_{\bx}\right)\Big| \le  \sqrt{\frac{C_5k}{m}\log(\frac{mn}{k^2})},\qquad\forall\,\bx\in\Sigma^n_k\cap \mathbbm{A}_{\alpha,\beta}$$
     where $y_i = \sign(|\ba_i^\top\bx|-\tau)$ depends on $\bx$.

      Now we are ready to show the desired statements. For a fixed $\bx\in\calK$ we proceed as  
     \begin{subequations}\label{eq:bound_supp_err}
          \begin{align}\nn
          &\|\bx - \bx_{\calS_{\bx}}\|_2 ^2 = \sum_{j\in \supp(\bx)\setminus \calS_{\bx}} x_j^2 = \frac{1}{a_{\bx}^2}\sum_{j\in \supp(\bx)\setminus \calS_{\bx}} a_{\bx}^2x^2_j\\\nn
              &\le \frac{1}{a_{\bx}^2}\sum_{j\in \supp(\bx)\setminus \calS_{\bx}} \Big\{ \Big(a_{\bx}^2x_j^2+b_{\bx}- \frac{1}{m}\sum_{i=1}^m y_i a_{ij}^2 \Big) \\&\quad\quad\quad\quad\quad\quad\quad\quad -\Big(a_{\bx}^2x_{j'}^2+b_{\bx}- \frac{1}{m}\sum_{i=1}^m y_i a_{ij'}^2 \Big)\Big\}\label{eq:intro_jpai}\\
              &\le \frac{2}{a_{\bx}^2}\sum_{j\in \supp(\bx)\setminus \calS_{\bx}} \max_{\ell\in [n]} \Big|\frac{1}{m}\sum_{i=1}^m y_ia_{i\ell}^2 - \left(a_{\bx}x_\ell^2+b_{\bx}\right)\Big| \nn\\\label{eq:count_support}
              &\lesssim  k \cdot \max_{\ell\in [n]} \Big|\frac{1}{m}\sum_{i=1}^m y_ia_{i\ell}^2 - \left(a_{\bx}x_\ell^2+b_{\bx}\right)\Big|,
      \end{align} 
     \end{subequations}
     where  in (\ref{eq:intro_jpai}) we introduce $j'\in\calS_{\bx} \setminus \supp(\bx)$ \footnote{Without loss of generality, we assume $\supp(\bx)\setminus \calS_{\bx}\ne\varnothing$, then by $|\calS_{\bx}|=k$ such $j'$ exists.} satisfying $x_{j'}=0$ (since $j'\notin\supp(\bx)$) and $\frac{1}{m}\sum_{i=1}^m y_i a_{ij'}^2\ge \frac{1}{m}\sum_{i=1}^m y_ia_{ij}^2$ (due to the selection criterion of $\calS_{\bx}$); (\ref{eq:count_support}) holds because $|\supp(\bx)\setminus \calS_{\bx}|\le |\supp(\bx)|\le k$. 
          Then substituting the (non-uniform and uniform) bounds on $ \max_{\ell\in [n]} \big|\frac{1}{m}\sum_{i=1}^m y_ia_{i\ell}^2 - \left(a_{\bx}x_\ell^2+b_{\bx}\right)\big|$ completes the proof. 
\end{proof}
The concentration of $\hat{\lambda}_{\bx} = \frac{1}{m}\sum_{i=1}^m \mathbbm{1}(y_i = 1)$ will prove useful in bounding the norm estimation error.  
\begin{lem}
    [Concentration of $\hat{\lambda}_{\bx}$]  \label{lem:con_lambda_x} 
    For a fixed $\bx\in \mathbbm{A}_{\alpha,\beta}$, for any $t\ge 0$ we have the non-uniform bound $$\mathbbm{P}\left(\big|\hat{\lambda}_{\bx}-2\Phi\big(-\frac{\tau}{\|\bx\|_2}\big)\big|\ge t\right)\le 2\exp(-c_1mt^2).$$ 
    If $m\ge C_2n $ for some large enough $C_2$, then with probability at least $1-\exp(-c_3n\log(\frac{m}{n}))$ we have the uniform bound over $ \mathbbm{A}_{\alpha,\beta}$: $$\sup_{\bx \in \mathbbm{A}_{\alpha,\beta}}\big|\hat{\lambda}_{\bx}-2\Phi\big(-\frac{\tau}{\|\bx\|_2}\big)\big| \le \sqrt{\frac{C_4n}{m}\log(\frac{m}{n})}.$$
    If $m\ge C_5 k\log(\frac{en}{k})$ for some large enough $C_5$, then with probability at least $1-\exp(-c_6k\log(\frac{en}{k}))$ we have the uniform bound over $\Sigma^n_k\cap \mathbbm{A}_{\alpha,\beta}$: $$ \sup_{\bx\in \Sigma^n_k\cap \mathbbm{A}_{\alpha,\beta}}\big|\hat{\lambda}_{\bx}-2\Phi\big(-\frac{\tau}{\|\bx\|_2}\big)\big| \le \sqrt{\frac{C_7k }{m}\log(\frac{mn}{k^2})}.$$
\end{lem}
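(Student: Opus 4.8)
The statement is a concentration bound for the scalar statistic $\hat{\lambda}_{\bx}=\frac{1}{m}\sum_{i=1}^m\mathbbm{1}(y_i=1)$, where $y_i=\sign(|\ba_i^\top\bx|-\tau)$, around its mean $\mathbbm{E}\hat{\lambda}_{\bx}=\mathbbm{P}(|\ba_i^\top\bx|\ge\tau)=2\Phi(-\tau/\|\bx\|_2)$. The plan is to first dispatch the non-uniform bound by a standard Hoeffding/Chernoff argument, and then obtain each uniform bound by a net argument over the relevant signal set, with the key subtlety — the discontinuity of the indicator $\mathbbm{1}(y_i=1)$ as a function of $\bx$ — handled by the phaseless hyperplane tessellation result, Theorem \ref{thm:local_embed}.

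\textbf{Step 1 (non-uniform bound).} For fixed $\bx$, the quantities $\mathbbm{1}(y_i=1)$ are i.i.d.\ Bernoulli$(2\Phi(-\tau/\|\bx\|_2))$, so $\hat{\lambda}_{\bx}$ is a normalized sum of i.i.d.\ bounded random variables. Hoeffding's inequality (or the weakened Chernoff bound, Lemma \ref{lem:chernoff}, after a routine rescaling) immediately gives $\mathbbm{P}(|\hat{\lambda}_{\bx}-2\Phi(-\tau/\|\bx\|_2)|\ge t)\le 2\exp(-c_1 m t^2)$ for an absolute constant $c_1$. This is the first claim.

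\textbf{Step 2 (uniform bounds via a net plus tessellation).} For the uniform bound over $\calK$ ($=\mathbbm{A}_{\alpha,\beta}$ or $\Sigma^n_k\cap\mathbbm{A}_{\alpha,\beta}$), fix a small $\gamma$ and let $\calN_\gamma$ be a minimal $\gamma$-net of $\calK$, whose log-cardinality is $\calO(n\log\gamma^{-1})$ in the unstructured case and $\calO(k\log\frac{n}{k}+k\log\gamma^{-1})$ in the sparse case. Union-bounding the Step 1 estimate over $\calN_\gamma$ with $t\asymp\sqrt{\scrH(\calK,\gamma)/m}$ controls $\sup_{\bx\in\calN_\gamma}|\hat{\lambda}_{\bx}-2\Phi(-\tau/\|\bx\|_2)|$. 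To pass from the net to all of $\calK$, write $\bx_1=\arg\min_{\bw\in\calN_\gamma}\|\bw-\bx\|_2$ and estimate, via triangle inequality,
\begin{align*}
|\hat{\lambda}_{\bx}-2\Phi(-\tau/\|\bx\|_2)|
&\le |\hat{\lambda}_{\bx}-\hat{\lambda}_{\bx_1}|+|\hat{\lambda}_{\bx_1}-2\Phi(-\tau/\|\bx_1\|_2)|\\
&\quad+|2\Phi(-\tau/\|\bx_1\|_2)-2\Phi(-\tau/\|\bx\|_2)|.
\end{align*}
The deterministic term is $\calO(\gamma)$ since $w\mapsto 2\Phi(-\tau/w)$ is Lipschitz on $[\alpha,\beta]$ and $|\|\bx\|_2-\|\bx_1\|_2|\le\gamma$. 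For the random discrepancy, note $|\hat{\lambda}_{\bx}-\hat{\lambda}_{\bx_1}|\le\frac{1}{m}d_H(\sign(|\bA\bx|-\tau),\sign(|\bA\bx_1|-\tau))$; invoking Theorem \ref{thm:local_embed} (event $E_s$) with $\calK$ as here and radii chosen so that $r'\asymp\gamma$, $r\asymp\gamma\sqrt{\log\gamma^{-1}}$, this Hamming distance is $\calO(\gamma\sqrt{\log\gamma^{-1}}\,m)$ uniformly, provided $m\gtrsim\scrH(\calK,\gamma)/\gamma+\omega^2(\calK_{(3\gamma/2)})/\gamma^3$. Collecting terms, $\sup_{\bx\in\calK}|\hat{\lambda}_{\bx}-2\Phi(-\tau/\|\bx\|_2)|\lesssim\sqrt{\scrH(\calK,\gamma)/m}+\gamma\sqrt{\log\gamma^{-1}}$ on the intersection of the two high-probability events. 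Finally, optimizing $\gamma$: in the unstructured case take $\gamma\asymp\frac{n}{m}\sqrt{\log\frac{m}{n}}$, giving the rate $\sqrt{\frac{n}{m}\log\frac{m}{n}}$ with failure probability $\exp(-c n\log\frac{m}{n})$; in the sparse case take $\gamma\asymp\frac{k}{m}\log\frac{mn}{k^2}$, giving $\sqrt{\frac{k}{m}\log\frac{mn}{k^2}}$ with failure probability $\exp(-c k\log\frac{en}{k})$, using the standard bounds $\omega^2(\mathbb{B}_2^n)\lesssim n$, $\omega^2(\Sigma^n_k\cap\mathbb{B}_2^n)\lesssim k\log\frac{en}{k}$, and $\scrH(\Sigma^n_k\cap\mathbbm{A}_{\alpha,\beta},\gamma)\lesssim k\log\frac{n}{k\gamma}$.

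\textbf{Main obstacle.} The only genuine difficulty is the discontinuity of $\bx\mapsto\hat{\lambda}_{\bx}$: naive Lipschitz control fails because flipping a single bit changes $\hat{\lambda}_{\bx}$ by $1/m$ and there is no a priori bound on how many bits flip between nearby signals. This is precisely what Theorem \ref{thm:local_embed} resolves, and the bookkeeping there (matching the net radius $\gamma$ to the tessellation parameters $r,r'$ and checking the sample-complexity hypothesis is implied by the one in the statement) is the part that needs care; everything else is routine. This is the same device already used to prove the uniform parts of Lemmas \ref{lem:con_Sx_low}--\ref{lem:supp_est}, so the argument is essentially a transcription of that template to the scalar statistic $\hat{\lambda}_{\bx}$.
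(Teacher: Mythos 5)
Your proposal matches the paper's intended proof: the paper proves the non-uniform bound exactly as you do (Hoeffding/sub-Gaussian tail on the Bernoulli average) and then states that the uniform bounds "follow from a covering argument," explicitly deferring to the same net-plus-Theorem~\ref{thm:local_embed} template it spells out in the proof of Lemma~\ref{lem:con_Sx_low}, which is the template you have transcribed. The bookkeeping with $r'\asymp\gamma$, $r\asymp\gamma\sqrt{\log\gamma^{-1}}$, and the Lipschitz control of $w\mapsto 2\Phi(-\tau/w)$ on $[\alpha,\beta]$ are exactly the right moves, and your choices of $\gamma$ recover the stated rates.
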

 \begin{proof}
We first establish the non-uniform   bound. 
    Noticing $$  \mathbbm{E}\hat{\lambda}_{\bx}= \mathbbm{P}\big(|\ba_i^\top\bx|>\tau\big)=\mathbbm{P}\big(|\calN(0,1)|>\frac{\tau}{\|\bx\|_2}\big)=2\Phi\big(-\frac{\tau}{\|\bx\|_2}\big)\in\big [2\Phi(-\frac{\tau}{\beta}),2\Phi(-\frac{\tau}{\alpha})]$$
    and $\|\hat{\lambda}_{\bx}-\mathbbm{E}\hat{\lambda}_{\bx}\|_{\psi_2}=\calO(m^{-1/2})$, thus the sub-Gaussian tail bound gives $$
        \mathbbm{P}\big(|\hat{\lambda}_{\bx}-\mathbbm{E}\hat{\lambda}_{\bx}|\ge t\big)\le 2\exp\big(-c_1mt^2\big)$$ for a fixed $\bx\in \calK$, yielding the non-uniform bound. The uniform bounds follow from a covering argument over $\mathbbm{A}_{\alpha,\beta}$ and $\Sigma^n_k\cap \mathbbm{A}_{\alpha,\beta}$, which we leave for avid readers.  
\end{proof}

We now briefly outline the main ideas in the proofs of Theorems \ref{thm:SI_low}--\ref{thm:SI_high}.
By Lemma \ref{lem:norm_equa} we can first decompose the estimation error into 
\begin{align}
	\dist(\hat{\bx}_{\rm SI}, \bx) \le \beta  \dist_{\rm d}(\hat{\bx}_{\rm SI}, \bx)+\dist_{\rm n}(\hat{\bx}_{\rm SI}, \bx).\label{eq:divide_dir_nor}
\end{align}
Note that in Algorithms \ref{alg:SI_low}--\ref{alg:SI_high}  we have $
		\dist_{\rm d}(\hat{\bx}_{\rm SI},\bx) = \dist\Big(\hat{\bv}_{\rm SI},\frac{\bx}{\|\bx\|_2}\Big)~~\text{and}~~
		\dist_{\rm n}(\hat{\bx}_{\rm SI}, \bx) = |\hat{\lambda}_{\rm SI}-\|\bx\|_2|$.  The proof mechanism is standard in the literature (e.g., \cite{chen2017solving,zhang2017nonconvex,cai2016optimal,jagatap2019sample}). 
Up to multiplicative constant, bounding $\dist(\hat{\bv}_{\rm SI}, \bx/\|\bx\|_2)$ amounts to controlling $\|\hat{\bv}_{\rm SI}\hat{\bv}_{\rm SI}^\top - \|\bx\|_2^{-2}\bx\bx^\top\|_{\rm op}$, and it is standard to achieve the (latter) goal by applying   Davis-Kahan sin$\Theta$ theorem in conjunction with the operator norm concentration of certain data matrix (i.e., Lemmas \ref{lem:con_Sx_low}--\ref{lem:con_Sx_high}). On the other hand, we bound $|\hat{\lambda}_{\rm SI}-\|\bx\|_2|$ by Lemma \ref{lem:con_lambda_x} and the Lipschitzness of $\Phi^{-1}$  over a closed interval contained in $(0,1)$.
\subsubsection{The Proof of Theorem \ref{thm:SI_low}}
\begin{proof} 
    We first establish a deterministic bound  on $\dist(\hat{\bx}_{\rm SI},\bx)$. By (\ref{eq:divide_dir_nor}) we only need to bound $\dist_{\rm d}(\hat{\bx}_{\rm SI},\bx)$ and $\dist_{\rm n}(\hat{\bx}_{\rm SI},\bx)$ separately. 
  Note that $\hat{\bv}_{\rm SI}$ is the normalized leading eigenvector of $\hat{\bS}_{\bx}$, $\frac{\bx}{\|\bx\|_2}$ is the normalized leading eigenvector of $\mathbbm{E}\hat{\bS}_{\bx}$, and $\mathbbm{E}\hat{\bS}_{\bx}$ possesses an eigengap uniformly bounded away from zero  (see Lemma \ref{lem:ESx}). Thus, provided  $\|\hat{\bS}_{\bx}-\mathbbm{E}\hat{\bS}_{\bx}\|_{\rm op}$ being small enough, Davis-Kahan sin$\bTheta$ Theorem (e.g., \cite{davis1970rotation,chi2019nonconvex}) and Lemma \ref{lem:equ_dist_dmat}  give 
$$
    \dist\Big(\hat{\bv}_{\rm SI},\frac{\bx}{\|\bx\|_2}\Big) \le \Big\|\hat{\bv}_{\rm SI}\hat{\bv}_{\rm SI}^\top-\frac{\bx\bx^\top}{\|\bx\|_2^2}\Big\|_{\rm F}\le \sqrt{2}\Big\|\hat{\bv}_{\rm SI}\hat{\bv}_{\rm SI}^\top-\frac{\bx\bx^\top}{\|\bx\|_2^2}\Big\|_{\rm op} \le \frac{2}{c_0}\|\hat{\bS}_{\bx}-\mathbbm{E}\hat{\bS}_{\bx}\|_{\rm op}.$$ Next, we bound  $\dist_{\rm n}(\hat{\bx}_{\rm SI},\bx)=|\hat{\lambda}_{\rm SI}-\|\bx\|_2|$. 
 Since $\Phi(-\frac{\tau}{\|\bx\|_2})\in [\Phi(-\frac{\tau}{\alpha}),\Phi(-\frac{\tau}{\beta})]$, if additionally we have $$
    \Big|\frac{\hat{\lambda}_{\bx}}{2}-\Phi\Big(-\frac{\tau}{\|\bx\|_2}\Big)\Big|\le \min\Big\{\frac{1}{2}\Phi\Big(-\frac{\tau}{\alpha}\Big),\frac{1}{2}\Big(\frac{1}{2}-\Phi\Big(-\frac{\tau}{\beta}\Big)\Big)\Big\},$$ then  it holds that  $
    \frac{\hat{\lambda}_{\bx}}{2}\in \Big[\frac{1}{2}\Phi\Big(-\frac{\tau}{\alpha}\Big),\frac{1}{2}\Big(\frac{1}{2}+\Phi\Big(-\frac{\tau}{\beta}\Big)\Big)\Big].$ Note that on this compact interval contained in $(0,1)$, $\Phi^{-1}$ has first order derivative bounded by some $C_1$ only   only on $(\alpha,\beta,\gamma)$, thus yielding 
    \begin{align}
\label{eq:before_inverse}\Big|\Phi^{-1}\Big(\frac{\hat{\lambda}_{\bx}}{2}\Big)+\frac{\tau}{\|\bx\|_2}\Big| =\Big|\Phi^{-1}\Big(\frac{\hat{\lambda}_{\bx}}{2}\Big)-\Phi^{-1}\Big(\Phi\Big(-\frac{\tau}{\|\bx\|_2}\Big)\Big)\Big|\le C_1\Big|\frac{\hat{\lambda}_{\bx}}{2}-\Phi\Big(-\frac{\tau}{\|\bx\|_2}\Big)\Big|.
\end{align} 
If additionally we have $|\frac{\hat{\lambda}_{\bx}}{2}-\Phi(-\frac{\tau}{\|\bx\|_2})|\le \frac{1}{2C_1}\frac{\tau}{\beta},$ then (\ref{eq:before_inverse}) gives $\Phi^{-1}(\frac{\hat{\lambda}_{\bx}}{2})\le-\frac{\tau}{2\beta}$,   yielding  $$
   \big|\hat{\lambda}_{\rm SI}-\|\bx\|_2\big|=\frac{\|\bx\|_2}{|\Phi^{-1}(\frac{\hat{\lambda}_{\bx}}{2})|}\Big|\Phi^{-1}\Big(\frac{\hat{\lambda}_{\bx}}{2}\Big)+\frac{\tau}{\|\bx\|_2}\Big|\le \frac{2\beta^2C_1}{\tau}\Big|\frac{\hat{\lambda}_{\bx}}{2}-\Phi\Big(-\frac{\tau}{\|\bx\|_2}\Big)\Big|$$  
where in the last inequality we use (\ref{eq:before_inverse}) again.  Combining the above two bounds we arrive at this deterministic bound: if $\|\hat{\bS}_{\bx}-\mathbbm{E}\hat{\bS}_{\bx}\|_{\rm op}$ and $|\frac{\hat{\lambda}_{\bx}}{2}-\Phi(-\frac{\tau}{\|\bx\|_2})|$ are   smaller than some constant depending only on $(\alpha,\beta,\gamma)$, then  $$\dist(\hat{\bx}_{\rm SI}, \bx) \lesssim \big\|\hat{\bS}_{\bx}-\mathbbm{E}\hat{\bS}_{\bx}\big\|_{\rm op}+\Big| \hat{\lambda}_{\bx} -2\Phi\Big(-\frac{\tau}{\|\bx\|_2}\Big)\Big|.$$ We now establish the non-uniform final bound. Recall $m\ge C_2n$ for large enough $C_2$.  For a fixed $\bx\in \mathbbm{A}_{\alpha,\beta}$, we set $t=n$ in the non-uniform bound of Lemma \ref{lem:con_Sx_low} and $t=\sqrt{\frac{n}{m}}$ in the non-uniform bound of Lemma \ref{lem:con_lambda_x} to obtain $$\|\hat{\bS}_{\bx}-\mathbbm{E}\hat{\bS}_{\bx}\|_{\rm op} =\calO\Big(\sqrt{\frac{n}{m}}\Big)\quad\text{and}\quad \Big|\hat{\lambda}_{\bx}-2\Phi\Big(-\frac{\tau}{\|\bx\|_2}\Big)\Big|=\calO\Big(\sqrt{\frac{n}{m}}\Big)$$
    that 
  hold with probability at least $1-\exp(-c_3n)$. Combining with the deterministic bound yields the claim. For the uniform final bound, we only  need to invoke the uniform bounds in Lemma \ref{lem:con_Sx_low} and Lemma \ref{lem:con_lambda_x} instead.
\end{proof}
\subsubsection{The Proof of Theorem \ref{thm:SI_high}}
\begin{proof}
    We first establish a deterministic bound on $\dist(\hat{\bx}_{\rm SI},\bx)$.  By (\ref{eq:divide_dir_nor}) we only need to bound $\dist_{\rm d}(\hat{\bx}_{\rm SI},\bx)$ and $\dist_{\rm n}(\hat{\bx}_{\rm SI},\bx)$ separately.  Compared to the proof of Theorem \ref{thm:SI_low}, bounding $\dist_{\rm d}(\hat{\bx}_{\rm SI},\bx)=\dist(\hat{\bv}_{\rm SI},\frac{\bx}{\|\bx\|_2})$  requires  additional attempts.  We need to first  decompose the error into $$
     \dist\Big(\hat{\bv}_{\rm SI},\frac{\bx}{\|\bx\|_2}\Big)\le \dist\Big(\hat{\bv}_{\rm SI},\frac{\bx_{\calS_{\bx}}}{\|\bx_{\calS_{\bx}}\|_2}\Big)+\dist\Big(\frac{\bx_{\calS_{\bx}}}{\|\bx_{\calS_{\bx}}\|_2},\frac{\bx}{\|\bx\|_2}\Big).$$  
For bounding $\dist(\hat{\bv}_{\rm SI},\frac{\bx_{\calS_{\bx}}}{\|\bx_{\calS_{\bx}}\|_2})$,  we note that $\hat{\bv}_{\rm SI}$ and $\frac{\bx_{\calS_{\bx}}}{\|\bx_{\calS_{\bx}}\|_2}$ are the normalized leading eigenvectors of $[\hat{\bS}_{\bx}]_{\calS_{\bx}}$ and $ [\mathbbm{E}\hat{\bS}_{\bx}]_{\calS_{\bx}}=\mathbbm{E}[\hat{\bS}_{\bx}]_{\calS_{\bx}}$, respectively.  Lemma \ref{lem:ESx} delivers $$[\mathbbm{E}\hat{\bS}_{\bx}]_{\calS_{\bx}}=a_{\bx}[\bx\bx^\top]_{\calS_{\bx}} + b_{\bx}[\bI_n]_{\calS_{\bx}} = a_{\bx}\bx_{\calS_{\bx}}\bx_{\calS_{\bx}}^\top+ b_{\bx}[\bI_n]_{\calS_{\bx}}$$
 which exhibits an eigengap $$\lambda_1\big([\mathbbm{E}\hat{\bS}_{\bx}]_{\calS_{\bx}}\big)-\lambda_2\big([\mathbbm{E}\hat{\bS}_{\bx}]_{\calS_{\bx}}\big)=a_{\bx}\|\bx_{\calS_{\bx}}\|_2^2\gtrsim \|\bx_{\calS_{\bx}}\|_2^2.$$ Thus, if we have $\|\bx-\bx_{\calS_{\bx}}\|_2\le \frac{\alpha}{2}$, then $\|\bx_{\calS_{\bx}}\|_2\ge \|\bx\|_2-\|\bx-\bx_{\calS_{\bx}}\|_2\ge \frac{\alpha}{2}$, which further implies $   \lambda_1\big([\mathbbm{E}\hat{\bS}_{\bx}]_{\calS_{\bx}}\big)-\lambda_2\big([\mathbbm{E}\hat{\bS}_{\bx}]_{\calS_{\bx}}\big)\ge c_0$ for some $c_0>0$ depending only on $(\alpha,\beta,\tau)$.  Moreover, if additionally we have $
     \max_{\calS:|\calS|=k}\big\|[\hat{\bS}_{\bx}]_{\calS}-\mathbbm{E}[\hat{\bS}_{\bx}]_{\calS}\big\|_{\rm op}\le \frac{c_0}{2}$ that  ensures $\|[\hat{\bS}_{\bx}]_{\calS_{\bx}}-\mathbbm{E}[\hat{\bS}_{\bx}]_{\calS_{\bx}}\|_{\rm op}\le \frac{c_0}{2}$, then   Davis-Kahan sin$\bTheta$ Theorem and Lemma \ref{lem:equ_dist_dmat} yield $$ \dist\Big(\hat{\bv}_{\rm SI},\frac{\bx}{\|\bx\|_2}\Big)\le \Big\|\hat{\bv}_{\rm SI}\hat{\bv}_{\rm SI}^\top-\frac{\bx\bx^\top}{\|\bx\|_2^2}\Big\|_{\rm F}\le\sqrt{2}\Big\|\hat{\bv}_{\rm SI}\hat{\bv}_{\rm SI}^\top-\frac{\bx\bx^\top}{\|\bx\|_2^2}\Big\|_{\rm op}\le \frac{2}{c_0}\max_{{\calS\subset [n],|\calS|=k}} \big\|[\hat{\bS}_{\bx}]_\calS-\mathbbm{E}[\hat{\bS}_{\bx}]_\calS\big\|_{\rm op}.$$ Next, let us Bound $\dist\big(\frac{\bx_{\calS_{\bx}}}{\|\bx_{\calS_{\bx}}\|_2},\frac{\bx}{\|\bx\|_2}\big)$, which follows from some algebra: 
$$    \dist\left(\frac{\bx_{\calS_{\bx}}}{\|\bx_{\calS_{\bx}}\|_2},\frac{\bx}{\|\bx\|_2}\right) = \Big\|\frac{\bx_{\calS_{\bx}}}{\|\bx_{\calS_{\bx}}\|_2}-\frac{\bx}{\|\bx\|_2}\Big\|_2   \le \left\|\frac{\bx}{\|\bx\|_2}-\frac{\bx_{\calS_{\bx}}}{\|\bx\|_2}\right\|_{2} +\left\|\frac{\bx_{\calS_{\bx}}}{\|\bx\|_2}-\frac{\bx_{\calS_{\bx}}}{\|\bx_{\calS_{\bx}}\|_2}\right\|_2\le  \frac{2\|\bx-\bx_{\calS_{\bx}}\|_2}{\alpha}.$$ We shall now turn to bounding   $\dist_{\rm n}(\hat{\bx}_{\rm SI},\bx)=|\hat{\lambda}_{\rm SI}-\|\bx\|_2|$. 
 This can be done via exactly the same arguments as  (\ref{eq:before_inverse}), thus we     state the conclusion without re-iterating the technical details: if $|\hat{\lambda}_{\bx}-2\Phi(-\frac{\tau}{\|\bx\|_2})|\le c_1$ for some $c_1$ only depending on $(\alpha,\beta,\tau)$, then we have $$
     \big|\hat{\lambda}_{\rm SI}-\|\bx\|_2\big| \le C_2 \Big|\frac{\hat{\lambda}_{\bx}}{2}-\Phi\Big(-\frac{\tau}{\|\bx\|_2}\Big)\Big|.$$ Combining these pieces, we obtain the deterministic bound: we have 
 \begin{align}\label{eq:deter_bound_SI_high}
     \dist(\hat{\bx}_{\rm SI},\bx) \lesssim\max_{\substack{\calS\subset [n]\\|\calS|=k}} \big\|[\hat{\bS}_{\bx}]_\calS-\mathbbm{E}[\hat{\bS}_{\bx}]_\calS\big\|_{\rm op}+\big\|\bx-\bx_{\calS_{\bx}}\big\|_2 + \Big|\frac{\hat{\lambda}_{\bx}}{2}-\Phi\Big(-\frac{\tau}{\|\bx\|_2}\Big)\Big|, 
 \end{align}
 provided that each term on the right-hand side of (\ref{eq:deter_bound_SI_high}) is smaller than some   constant depending on $(\alpha,\beta,\tau)$. Now we establish the non-uniform final bound. 
  Suppose $m\gtrsim k\log(\frac{en}{k})$ with large enough implied constant.  
 For a fixed $\bx\in \Sigma^n_k\cap \mathbbm{A}_{\alpha,\beta}$, we set $t=k\log(\frac{en}{k})$ in  the non-uniform bounds of Lemma \ref{lem:con_Sx_high} and Lemma \ref{lem:con_lambda_x} to obtain 
 $$
     \max_{{\calS\subset [n],|\calS|=k}} \big\|[\hat{\bS}_{\bx}]_\calS-\mathbbm{E}[\hat{\bS}_{\bx}]_\calS\big\|_{\rm op}+ \Big|\frac{\hat{\lambda}_{\bx}}{2}-\Phi\Big(-\frac{\tau}{\|\bx\|_2}\Big)\Big|\lesssim 
 \sqrt{\frac{k\log(\frac{en}{k})}{m}}$$  that holds with probability at least $1-\exp(-c_2 k\log(\frac{en}{k}))$. Then, the non-uniform bound in Lemma \ref{lem:supp_est} gives $
   \big\|\bx-\bx_{\calS_{\bx}}\big\|_2\lesssim \big(\frac{k^2\log n}{m}\big)^{1/4}$ that holds with probability at least $1-n^{-10}$. Therefore, if $m\ge C_3 k^2\log n$ with large enough $C_3$, then the terms in the right-hand side of (\ref{eq:deter_bound_SI_high})   are small enough, and substituting these bounds into (\ref{eq:deter_bound_SI_high}) yields  non-uniform final bound.  For the uniform final bound, we can similarly apply the uniform bounds in Lemmas \ref{lem:con_Sx_high}--\ref{lem:con_lambda_x}. The proof is complete. 
\end{proof}

 \subsection{Other Standard Proofs}\label{appother}

 \subsubsection{The Proof of Lemma \ref{lem:intersect_affine}}
\begin{proof}
       We first claim this: $I_{m,k}$ can be equivalently interpreted as the maximal number of (open) regions that $\mathbb{R}^k$ can be decomposed by $m$ ($(k-1)$-dimensional) affine hyperplanes. To see this, let $\calV$ be a $k$-dimensional affine space embedded in $\mathbb{R}^m$ not contained in any of the following hyperplanes: $
        \{\bu\in \mathbb{R}^m:u_i=a\},~i\in[m],~a\in \mathbb{R},$ 
    %$\{\bu\in \mathbb{R}^m:u_i=a\}$ with $i\in [m]$ and $a\in \mathbb{R}$, 
    then the orthants intersected by $\calV$ stand in one-to-one corrspondence with the regions of $\calV$ (that is now viewed as $\mathbb{R}^k$) decomposed by the $m$ affine hyperplanes $\{\bu\in \mathbb{R}^m:u_i=a\}$ with $i\in[m]$. Such rephrasing lowers the problem dimension  from $m$ to $k$. It is easy to see $I_{1,k}=2$ since  $1$ affine hyperplane separates $\mathbb{R}^k$ into $2$ regions, and  $I_{m,1}=m+1$ since $m$ distinct points separate $\mathbb{R}$ into $m+1$ intervals. 
    Next, let us establish a recursive inequality
    \begin{align}\label{eq:recurine}
        I_{m,k}\le I_{m-1,k}+I_{m-1,k-1},\quad\forall m,k\ge 2.
    \end{align}
    Let $\calH_1,...,\calH_m$ be $m$ affine hyperplanes in $\mathbb{R}^k$ that decompose $\mathbb{R}^k$ into $I_{m,k}$ regions, then it can be computed by adding the following two quantities together:
    \begin{subequations}
        \begin{gather}
            R_{1:m-1} := \text{the number of regions separated by the first $m-1$ hyperplanes $\calH_1,...,\calH_{m-1}$},\label{eq:R1m-1}\\
            R_m:= \text{the number of the $R_{1:m-1}$ regions in (\ref{eq:R1m-1}) intersected by the $m$-th hyperplane $\calH_{m}$} . 
        \end{gather}
    \end{subequations}
    %``the number of regions separated by the first $m-1$ hyperplanes $\calH_1,...,\calH_{m-1}$'', denoted $R_{1:m-1}$, and 
  %  ``the number of these $R_{1:m-1}$ regions intersected by the $m$-th hyperplane $\calH_{m}$'', denoted $R_m$. 
  We thus obtain $I_{m,k} = R_{1:m-1}+R_m,$
and all that remains is to bound $R_{1:m-1}$ and $R_m.$ It is evident that $R_{1:m-1}\le I_{m-1,k}$. To bound $R_{m}$, we again utilize the dimension reduction argument at the beginning of the proof. Specifically, due to one-to-one correspondence, $R_m$ can be equivalently interpreted as the number of regions that $\calH_m$ is decomposed by its $m-1$ affine hyperplanes $\calH_1\cap \calH_m,~\calH_2\cap \calH_m,~...,~\calH_{m-1}\cap \calH_m$. Identifying $\calH_m$ with $\mathbb{R}^{k-1}$  immediately yields the bound $R_m\le I_{m-1,k-1}$. Taken collectively, we   arrive at (\ref{eq:recurine}). We define $\{\tilde{I}_{m,k}\}_{k,m\ge 1}$ by the recursion and the initial values  as follows: 
\begin{gather*}
    \tilde{I}_{m,k}= \tilde{I}_{m-1,k}+\tilde{I}_{m-1,k-1},\quad\forall\, m,k\ge 2;\\
    \tilde{I}_{1,k}=2,~\tilde{I}_{m,1}=m+1,\quad\forall\,m,k\ge 1.
\end{gather*} 
Then, some algebra finds a closed form formula $$
     \tilde{I}_{m,k}=\sum_{i=0}^{\min\{k, m\}}\binom{m}{i}.$$ 
 Therefore, an upper bound on the $\{I_{m,k}\}_{k,m\ge 1}$ of interest  follows: $$
     I_{m,k}\le \tilde{I}_{m,k}\le \sum_{i=0}^{\min\{k,m\}}\binom{m}{i}.$$ 
 Hence, it holds for $k\le m$ that $$
     I_{m,k}\le \sum_{i=0}^k\binom{m}{i}\le \big(\frac{em}{k}\big)^k,$$ where the last inequality can be shown by induction or the existing tighter bound \cite[Lem. 16.19]{flum2006subexponential}.  
\end{proof}

\subsubsection{The Proof of Lemma \ref{lem:sepa_im_sepa}}
 \begin{proof}
     We first show the statement in the first dot point of Lemma \ref{lem:sepa_im_sepa}, and then show that it immediately implies the second dot point therein. If $\bu\in \calH_{|\ba|}^+$, then $|\ba^\top\bu|\ge\tau$, which along with our assumption gives $
     |\ba^\top\bu|>\tau+2\min\big\{|\ba^\top(\bp-\bu)|,|\ba^\top(\bp+\bu)|\big\}.$ 
 Moreover, triangle inequality yields $
         |\ba^\top\bp| \ge |\ba^\top\bu|-\min\big\{|\ba^\top(\bp-\bu)|,|\ba^\top(\bp+\bu)|\big\} >\tau,$ 
 hence we also have $\bp\in \calH_{|\ba|}^+$. Analogously,  $\bu\in\calH_{|\ba|}^{-}$ and our assumption taken collectively yield $\bp\in\calH_{|\ba|}^{-}$.

 We then prove the second statement. 
By $\calH_{|\ba|}$ $\theta$-well-separates $\bu$ and $\bv$, we have $||\ba^\top\bu|-\tau|\ge \theta\dist(\bu,\bv)$ and $||\ba^\top\bv|-\tau|\ge\theta\dist(\bu,\bv)$, which along with our assumption imply $||\ba^\top\bu|-\tau|>2 \min\big\{|\ba^\top(\bp-\bu)|,|\ba^\top(\bp+\bu)|\big\}$ and $||\ba^\top\bv|-\tau|>2\min\big\{|\ba^\top(\bq-\bv)|,|\ba^\top(\bq+\bv)|\big\}.$ 
     Therefore, by the first statement that we proved, we have that $\bu$ and $\bp$ live in the same side of $\calH_{|\ba|}$, and that $\bv$ and $\bq$ also live in the same side of $\calH_{|\ba|}$. Because $\theta$-well-separation implies separation, $\bu$ and $\bv$ live in different sides of $\calH_{|\ba|}$ (i.e., they are separated by $\calH_{|\ba|}$). Taken collectively, it follows that $\bp$ and $\bq$ live in different sides of $\calH_{|\ba|}$, which completes the proof. 
 \end{proof}

 \subsubsection{The Proof of Lemma \ref{lem:Pthetauv}}
\begin{proof} 
     $\sfP_{\theta,\bu,\bv}\le C_3\dist(\bu,\bv)$ is directly implied by Lemma \ref{lem:Puv} due to $\sfP_{\theta,\bu,\bv}\le\sfP_{\bu,\bv}$, hence we only need to prove the converse $\sfP_{\theta,\bu,\bv}\ge c_2\dist(\bu,\bv)$. We accomplish this by proceeding as follows:
    \begin{subequations}
        \begin{align}\nn
        &\sfP_{\theta,\bu,\bv}=\mathbbm{P}\big(\calH_{|\ba|}\text{ separates }\bu\text{ and }\bv,~(\ref{eq:bound_away_tau})\text{ holds}\big) \\\label{eq:union1}
        &\ge \mathbbm{P}\big(\calH_{|\ba|}\text{ separates }\bu\text{ and }\bv\big)-\mathbbm{P}((\ref{eq:bound_away_tau})\text{ fails})\\\label{eq:union2}
        &\ge \sfP_{\bu,\bv}-\mathbbm{P}(||\ba^\top\bu|-\tau|<\theta\dist(\bu,\bv))-\mathbbm{P}(||\ba^\top\bv|-\tau|<\theta\dist(\bu,\bv))\\\label{eq:usePuv}
        &\ge c_{2}'\dist(\bu,\bv)-\frac{4\theta\cdot \dist(\bu,\bv)}{\sqrt{2\pi}\alpha}-\frac{4\theta\cdot \dist(\bu,\bv)}{\sqrt{2\pi}\alpha} \\&\ge c_2 \dist(\bu,\bv), \label{eq:smallou}
    \end{align}
    \end{subequations}
    where (\ref{eq:union1}) and (\ref{eq:union2}) are due to union bound; in (\ref{eq:usePuv}) we use $\sfP_{\bu,\bv}\ge c_2'\dist(\bu,\bv)$ for some constant $c_2'$ from Lemma \ref{lem:Puv}, and then  apply $$ \mathbbm{P}\big(||\ba^\top\bu|-\tau|<\theta\dist(\bu,\bv)\big)\le \mathbbm{P}\left(\Big||\calN(0,1)|-\frac{\tau}{\|\bu\|_2}\Big|<\frac{\theta}{\alpha}\dist(\bu,\bv)\right)\le \frac{4\theta\dist(\bu,\bv)}{\sqrt{2\pi}\alpha}$$
    and the same upper bound to $\mathbbm{P}(||\ba^\top\bv|-\tau|<\theta\dist(\bu,\bv))$;  (\ref{eq:smallou}) holds with $c_2=\frac{c_2'}{2}$ as long as $\theta<\frac{\sqrt{2\pi}c_2'\alpha}{16}$. The proof is complete. 
\end{proof}
\subsubsection{The Proof of Lemma \ref{lem:norm_equa}}
\begin{proof} 
    The second inequality follows from \begin{align*}
        &\dist(\bu,\bv) = \|\bu\|_2\cdot \dist \left(\frac{\bu}{\|\bu\|_2},\frac{\bv}{\|\bu\|_2}\right)
        \\&\le \|\bu\|_2 \cdot \left[\dist \left(\frac{\bu}{\|\bu\|_2},\frac{\bv}{\|\bv\|_2}\right)+\dist\left(\frac{\bv}{\|\bv\|_2},\frac{\bv}{\|\bu\|_2}\right)\right]\\&\le \beta \dist_{\rm d}(\bu,\bv) +  \|\bu\|_2\cdot \left\|\frac{\bv}{\|\bv\|_2}- \frac{\bv}{\|\bu\|_2}\right\|_2  \\
        &= \beta \dist_{\rm d}(\bu,\bv) + \dist_{\rm n}(\bu,\bv).
    \end{align*} 
   Also, for any $\bu,\bv\in \mathbbm{A}_{\alpha,\beta}$ we have \begin{align*}
       &\dist(\bu,\bv) = \sqrt{\min\big\{\|\bu-\bv\|_2^2,\|\bu+\bv\|_2^2\big\}}  \\
       &= \sqrt{\|\bu\|_2^2 + \|\bv\|_2^2 -2|\bu^\top\bv|}  \\&= \sqrt{\|\bu\|_2\|\bv\|_2}\cdot\sqrt{\frac{\|\bu\|_2}{\|\bv\|_2}+\frac{\|\bv\|_2}{\|\bu\|_2}-\frac{2|\bu^\top\bv|}{\|\bu\|_2\|\bv\|_2}} \\&\ge\alpha \sqrt{2-\frac{2|\bu^\top\bv|}{\|\bu\|_2\|\bv\|_2}}\\&= \alpha \dist_{\rm d}(\bu,\bv).
   \end{align*}
  Moreover,  $\dist(\bu,\bv)$ is evidently lower bounded by $\dist_{\rm n}(\bu,\bv)$, since we can  apply Cauchy-Schwarz to obtain
  $$\dist(\bu,\bv) = \sqrt{\|\bu\|_2^2+\|\bv\|_2^2 - 2|\bu^\top\bv|}  \ge \sqrt{\|\bu\|_2^2+\|\bv\|_2^2- 2\|\bu\|_2\|\bv\|_2} = \big|\|\bu\|_2-\|\bv\|_2\big| = \dist_{\rm n}(\bu,\bv).$$
   The proof is complete.
\end{proof}
\subsubsection{The Proof of Lemma \ref{lem:equ_dist_dmat}}
\begin{proof} 
     The result follows from some simple algebra: 
        \begin{align*}
            & \|\bu\bu^\top-\bv\bv^\top\|_{\rm F}^2 = \|\bu\|_2^4+\|\bv\|_2^4 - 2 (\bu^\top\bv)^2  \\
            &\ge \frac{1}{2}\left[(\|\bu\|_2^2+\|\bv\|_2^2)^2-4(\bu^\top\bv)^2\right]\\
            &= \frac{1}{2}\left[\|\bu\|_2^2+\|\bv\|_2^2-2|\bu^\top\bv|\right]\left[\|\bu\|_2^2+\|\bv\|_2^2+2|\bu^\top\bv|\right]\\&\ge \alpha^2 \left[\|\bu\|_2^2+\|\bv\|_2^2-2|\bu^\top\bv|\right]  =\alpha^2\dist^2(\bu,\bv);
            \end{align*}
            and
            \begin{align*}
            & \|\bu\bu^\top-\bv\bv^\top\|_{\rm F}^2 = \|\bu\|_2^4+\|\bv\|_2^4 - 2 (\bu^\top\bv)^2  \\&
            =( \|\bu\|_2^2 + \|\bv\|_2^2 )^2-4(\bu^\top\bv)^2+2(\bu^\top\bv)^2 - 2\|\bu\|_2^2\|\bv\|_2^2 \\
            &\le ( \|\bu\|_2^2 + \|\bv\|_2^2 )^2-4(\bu^\top\bv)^2   \\&= \left[\|\bu\|_2^2+\|\bv\|_2^2 -2|\bu^\top\bv|\right]\left[\|\bu\|_2^2+\|\bv\|_2^2 +2|\bu^\top\bv|\right]\\&\le 4\beta^2\dist^2(\bu,\bv). 
        \end{align*}
        The proof is complete.
\end{proof}
\subsubsection{The Proof of Lemma \ref{lem:parameterization}}
\begin{proof}
    This lemma consists of some  simple facts in linear algebra.   We separately deal with the two cases:   $\bu,\bv$ are not parallel; $\bu = \lambda\bv$ for some $\lambda\ne0$. Note that  the last statement follows immediately from (\ref{eq:explicit_uuv}): 
    If  $\|\bu\|_2=\|\bv\|_2$ holds, it is easy to verify $v_1=-u_1$ by $u_1=\frac{\|\bu\|_2^2-\bu^\top\bv}{\|\bu-\bv\|_2}$ and $v_1=\frac{\bu^\top\bv-\|\bv\|_2^2}{\|\bu-\bv\|_2}$, then combining  with the condition $u_1>v_1$ gives $u_1>0$.   
    \paragraph{(Case 1: Non-Parallel $\bu,\bv$)}
    In this case, $\bu\ne \lambda\bv$ for any $\lambda\neq 0$, then we consider $$\bbeta_2 = \frac{\langle \bv,\bv-\bu\rangle \bu+ \langle \bu,\bu-\bv\rangle\bv}{\|\langle \bv,\bv-\bu\rangle \bu+ \langle \bu,\bu-\bv\rangle\bv\|_2},$$ which is meaningful since some algebra verifies $ \|\langle \bv,\bv-\bu\rangle \bu+ \langle \bu,\bu-\bv\rangle\bv\|_2^2 = \|\bu-\bv\|_2^2\cdot \big[\|\bu\|_2^2\|\bv\|_2^2-(\bu^\top\bv)^2\big]>0.$ 
    It is   straightforward to verify the orthogonality between   $\bbeta_2$ and  $\bbeta_1 = \frac{\bu-\bv}{\|\bu-\bv\|_2}$: $$\big\langle \bu-\bv,\langle \bv,\bv-\bu\rangle \bu+\langle \bu,\bu-\bv\rangle \bv\big\rangle = 0.$$
    Therefore, $(\bbeta_1,\bbeta_2)$ is an orthogonal basis for $\text{\rm span}(\bu,\bv)$, which leads to $$\bu = \langle \bu ,\bbeta_1\rangle \bbeta_1+ \langle \bu,\bbeta_2\rangle \bbeta_2:=u_1\bbeta_1+u_2\bbeta_2\quad\text{and}\quad\bv = \langle\bv, \bbeta_1\rangle\bbeta_1 + \langle \bv,\bbeta_2\rangle \bbeta_2:=v_1\bbeta_1 + v_2\bbeta_2$$ for some coordinates $(u_1,u_2,v_1,v_2)$. 
    Note that $\langle\bbeta_2,\bu-\bv\rangle=0$ implies $v_2=u_2$.  Moreover, by 
   $
        \big\langle\bu,\langle\bv,\bv-\bu\rangle\bu+\langle \bu,\bu-\bv\rangle\bv\big\rangle = \|\bu\|_2^2\|\bv\|_2^2-(\bu^\top\bv)^2>0$ we obtain $u_2>0$. It remains to verify $u_1>v_1$, and this follows from $
        u_1-v_1 = \langle \bu-\bv,\bbeta_1\rangle=\|\bu-\bv\|_2>0.$ 
   For this case all that remains is to show  (\ref{eq:explicit_uuv}). We get started from $  u_1 = \langle\bu,\bbeta_1\rangle =\frac{\langle \bu,\bu-\bv\rangle}{\|\bu-\bv\|_2}$ and $v_1= \langle\bv,\bbeta_1\rangle=\frac{\langle\bv,\bu-\bv\rangle}{\|\bu-\bv\|_2}$, which are uniquely determined by $(\bu,\bv)$. 
 We also have $u_2 = \langle\bu,\bbeta_2\rangle$, and we write    $$
     \bbeta_2 = \frac{\langle \bv,\bv-\bu\rangle\bu+\langle\bu,\bu-\bv\rangle\bv}{\|\bu-\bv\|_2\cdot (\|\bu\|_2^2\|\bv\|_2^2-(\bu^\top\bv)^2)^{1/2}}$$ to obtain 
$$
        u_2= \langle\bu,\bbeta_2\rangle = \frac{\|\bu\|_2^2\|\bv\|_2^2-(\bu^\top\bv)^2}{\|\bu-\bv\|_2\sqrt{\|\bu\|_2^2\|\bv\|_2^2-(\bu^\top\bv)^2}}= \frac{\sqrt{\|\bu\|_2^2\|\bv\|_2^2-(\bu^\top\bv)^2}}{\|\bu-\bv\|_2}.$$
  Since $u_1,v_1$ are uniquely determined by $(\bu,\bv)$, the uniqueness of $u_2$ also follows from the equations: $ u_1^2+u_2^2=\|\bu\|_2^2,~u_2\ge 0,~u_1=\frac{\langle\bu,\bu-\bv\rangle}{\|\bu-\bv\|_2}.$

    \paragraph{(Case 2: Parallel $\bu,\bv$)} Since   $\bu,\bv\in \mathbbm{A}_{\alpha,\beta}$ and $\bu\neq \bv$, in this case we have $\bv=\lambda_0\bu$ for some non-zero $\lambda_0\ne 1$. We simply select a specific $\bbeta_2\in \mathbb{R}^n$ (which is not unique when $n>2$) satisfying $\|\bbeta_2\|_2=1$ and $\langle\bbeta_1,\bbeta_2\rangle=0$, and note that we can simply use $
        \bbeta_1=\frac{\bu-\bv}{\|\bu-\bv\|_2}=\frac{\sign(1-\lambda_0)\cdot\bu}{\|\bu\|_2}$ to linearly express $(\bu,\bv)$ as $$
    \bu = \sign(1-\lambda_0)\|\bu\|_2\bbeta_1 \quad\text{and}\quad  \bv=\lambda_0\bu = \lambda_0\sign(1-\lambda_0)\|\bu\|_2\bbeta_1.$$ 
     Thus we have $u_2=0$, and $(u_1,v_1)=(\sign(1-\lambda_0)\|\bu\|_2,\lambda_0\sign(1-\lambda_0)\|\bu\|_2)$ that satisfies $u_1-v_1=|1-\lambda_0|\|\bu\|_2>0$. %By expanding 
     %$(\bbeta_1,\bbeta_2)$ to an orthonormal basis in $\mathbb{R}^n$,
     % (\ref{eq:para_orthomatrix}) immediately follows from (\ref{eq:para_decom}). 
  Recall that $u_1$ and $v_1$ are already uniquely determined by $(\bu,\bv)$ as $u_1=\frac{\langle\bu,\bu-\bv\rangle}{\|\bu-\bv\|_2}$ and $v_1=\frac{\langle\bv,\bu-\bv\rangle}{\|\bu-\bv\|_2}$. Also we have shown $u_2=0=\frac{(\|\bu\|_2^2\|\bv\|_2^2-(\bu^\top\bv)^2)^{1/2}}{\|\bu-\bv\|_2}$ (the second equality is due to $\bv=\lambda_0\bu$), hence the formula for $u_2$ in (\ref{eq:explicit_uuv}) remains valid. Again, $u_2$ is uniquely determined by $(\bu,\bv)$ due to  the arguments used in Case 1.
       The proof is complete.
\end{proof}
\subsubsection{The Proof of Lemma \ref{lem:maximum}}
\begin{proof} 
    Using $\eta = \sqrt{\frac{\pi e}{2}}\tau$ and the definitions of $g_\eta(a,b),h_\eta(a,b)$, we have 
     $$\eta g_\eta (a,b) = \exp\big(\frac{a^2+b^2-\tau^2}{2(a^2+b^2)}\big)\frac{\tau(\tau^2a^2+b^2(a^2+b^2))}{(a^2+b^2)^{5/2}}$$ and $$\eta h_\eta(a,b)=   \exp\big(\frac{a^2+b^2-\tau^2}{2(a^2+b^2)}\big)\frac{\tau ab(a^2+b^2-\tau^2)}{(a^2+b^2)^{5/2}}.$$   
     Substituting them into $F(\eta,a,b)$ and then using the polar coordinates $u_1=\rho\cos(\theta),u_2=\rho\sin(\theta)$ with the feasible domain $\rho\in[\alpha,\beta],\theta\in[0,2\pi]$,  we proceed as 
    \begin{subequations}\label{eq:cal_fixedeta}
        \begin{align}
&F^2\Big(\eta,a,b\Big)\nn\\\nn&= \left(1- \exp\Big(\frac{a^2+b^2-\tau^2}{2(a^2+b^2)}\Big)\frac{\tau[\tau^2a^2+b^2(a^2+b^2)]}{(a^2+b^2)^{5/2}}\right)^2 +\exp \Big(\frac{a^2+b^2-\tau^2}{a^2+b^2}\Big)\frac{\tau^2a^2b^2(a^2+b^2-\tau^2)^2}{(a^2+b^2)^5}\\\nn
&=1-2\exp\Big(\frac{a^2+b^2-\tau^2}{2(a^2+b^2)}\Big)\cdot\frac{\tau[\tau^2a^2+b^2(a^2+b^2)]}{(a^2+b^2)^{5/2}}\\\nn
&\quad+\exp \Big(\frac{a^2+b^2-\tau^2}{a^2+b^2}\Big)\frac{\tau^2[a^2b^2(a^2+b^2-\tau^2)^2+(\tau^2a^2+b^2(a^2+b^2))^2]}{(a^2+b^2)^5}\\\nn
& = 1-2\exp\Big(\frac{1}{2}-\frac{\tau^2}{2\rho^2}\Big)\Big(\Big(\frac{\tau}{\rho}\Big)^3\cos^2(\theta)+\frac{\tau}{\rho}\sin^2(\theta)\Big) +^2\exp\Big(1-\frac{\tau^2}{\rho^2}\Big)\Big(\Big(\frac{\tau}{\rho}\Big)^6\cos^2(\theta)+\Big(\frac{\tau}{\rho}\Big)^2\sin^2(\theta)\Big) \\
& = \Big[1- w^3\exp\Big(\frac{1-w^2}{2}\Big)\Big]^2 +  \exp\Big(\frac{1-w^2}{2}\Big)w(1-w^2)\Big[(w+w^3)\exp\Big(\frac{1-w^2}{2}\Big)-2\Big]\sin^2(\theta)\label{eq:uequaltaurho}\\
&\le \left(\max\Big\{\Big|1- w^3\exp\Big(\frac{1-w^2}{2}\Big)\Big|,1- w\cdot\exp\Big(\frac{1-w^2}{2}\Big)\Big\}\right)^2,  \label{eq:opti_theta_3}
\end{align}
    \end{subequations}
    where in (\ref{eq:uequaltaurho}) we  let $w=\frac{\tau}{\rho}\in [\frac{\tau}{\beta},\frac{\tau}{\alpha}]$, use $\cos^2(\theta)=1-\sin^2(\theta)$ and then perform some rearrangement; also, since (\ref{eq:uequaltaurho}) is linear on $\sin^2(\theta)\in[0,1]$, its maximum is either attained at $\sin^2(\theta)=0$ or $\sin^2(\theta)=1$, and further note that  (\ref{eq:uequaltaurho}) becomes $[1- w^3\exp(\frac{1-w^2}{2})]^2$ when $\sin^2(\theta)=0$, or $(1- w\exp(\frac{1-w^2}{2}))^2$ when $\sin^2(\theta)=1$, hence (\ref{eq:opti_theta_3}) follows.  
    It is evident that (\ref{eq:opti_theta_3}) is attainable. 
    By further optimizing $w\in[\frac{\tau}{\beta},\frac{\tau}{\alpha}]$, (\ref{eq:cal_fixedeta}) yields  the desired claim. 
\end{proof}

\end{appendix}

%\end{localsize}
\end{document}